\newtheorem{theorem}{Theorem}[section]
\newtheorem{lemma}[theorem]{Lemma}
\newtheorem{definition}[theorem]{Definition}
\newtheorem{fact}[theorem]{Fact}
\newcommand{\cL}{\mathcal{L}}
\newcommand{\cC}{\mathcal{C}}
\newcommand{\cP}{\mathcal{P}}
\newcommand{\DP}{\textsc{DP}}
\newcommand{\onev}{\mathbf{1}}
\newcommand{\zerov}{\mathbf{0}}
\newcommand{\tO}[1]{\widetilde{O}\left(#1\right)}
\newcommand{\tOm}[1]{\widetilde{\Omega}\left(#1\right)}
\newcommand{\tSC}{\widetilde{SC}}
\newcommand{\hG}{\widehat{G}}
\newcommand{\cV}{\mathcal{V}}
\newcommand{\hT}{\widehat{T}}
\newcommand{\eps}{\varepsilon}
\newcommand{\tf}{\widetilde{f}}
\newcommand{\of}{\bar{f}}
\newcommand{\os}{\bar{s}}
\newcommand{\hr}{\widehat{r}}
\newcommand{\hmu}{\hat{\mu}}
\newcommand{\heps}{\widehat{\eps}}
\newcommand{\vphi}{\boldsymbol{\mathit{\phi}}}
\newcommand{\vpsi}{\boldsymbol{\mathit{\psi}}}
\newcommand{\ovphi}{\boldsymbol{\mathit{\bar{\phi}}}}
\newcommand{\vrho}{\boldsymbol{\mathit{\rho}}}
\newcommand{\vDelta}{\boldsymbol{\mathit{\Delta}}}
\newcommand{\tvDelta}{\boldsymbol{\mathit{\widetilde{\Delta}}}}
\newcommand{\vpi}{\boldsymbol{\mathit{\pi}}}
\newcommand{\vPi}{\boldsymbol{\mathit{\Pi}}}
\newcommand{\tvpi}{\widetilde{\boldsymbol{\mathit{\pi}}}}
\newcommand{\tpi}{\widetilde{\mathit{\pi}}}
\newcommand{\hp}{\widehat{\mathit{p}}}
\newcommand{\ophi}{\bar{\phi}}
\newcommand{\cc}{\boldsymbol{\mathit{c}}}
\newcommand{\dd}{\boldsymbol{\mathit{d}}}
\newcommand{\ff}{\boldsymbol{\mathit{f}}}
\newcommand{\tff}{\boldsymbol{\mathit{\widetilde{f}}}}
\newcommand{\tR}{{\widetilde{R}}}
\newcommand{\hR}{{\widehat{R}}}
\newcommand{\off}{\boldsymbol{\mathit{\bar{f}}}}
\newcommand{\hh}{\boldsymbol{\mathit{h}}}
\newcommand{\pp}{\boldsymbol{\mathit{p}}}
\newcommand{\qq}{\boldsymbol{\mathit{q}}}
\newcommand{\rr}{\boldsymbol{\mathit{r}}}
\newcommand{\hrr}{\boldsymbol{\mathit{\widehat{r}}}}
\newcommand{\orr}{\boldsymbol{\mathit{\bar{r}}}}
\renewcommand{\ss}{\boldsymbol{\mathit{s}}}
\newcommand{\oss}{\boldsymbol{\bar{\mathit{s}}}}
\newcommand{\uu}{\boldsymbol{\mathit{u}}}
\newcommand{\vv}{\boldsymbol{\mathit{v}}}
\newcommand{\xx}{\boldsymbol{\mathit{x}}}
\newcommand{\BB}{\boldsymbol{\mathit{B}}}
\newcommand{\CC}{\boldsymbol{\mathit{C}}}
\newcommand{\GG}{\boldsymbol{\mathit{G}}}
\newcommand{\oHH}{\boldsymbol{\overline{\mathit{H}}}}
\newcommand{\II}{\boldsymbol{\mathit{I}}}
\newcommand{\LL}{\boldsymbol{\mathit{L}}}
\newcommand{\QQ}{\boldsymbol{\mathit{Q}}}
\newcommand{\RR}{\boldsymbol{\mathit{R}}}
\renewcommand{\epsilon}{\varepsilon}
\newcommand{\ha}{{\mathit{\widehat{a}}}}
\title{Faster Sparse Minimum Cost Flow by Electrical Flow Localization}
\author{
Kyriakos Axiotis\thanks{MIT, \tt{kaxiotis@mit.edu}}
\and
Aleksander M\k{a}dry\thanks{MIT, {\tt madry@mit.edu}} 
\and
Adrian Vladu\thanks{CNRS \& IRIF, Universit\'{e} de Paris, \tt{vladu@irif.fr}}
}
\newcommand{\e}{\epsilon_{\mathrm{step}}}
\newcommand{\es}{\epsilon_{\mathrm{solve}}}
\newcommand{\bc}{\beta_{\textsc{Checker}}}
\newtheorem*{rep@theorem}{\rep@title}
\newcommand{\newreptheorem}[2]{%
\newenvironment{rep#1}[1]{%
 \def\rep@title{#2 \ref{##1}}%
 \begin{rep@theorem}}%
 {\end{rep@theorem}}}
\date{}
\begin{document}
\maketitle

\begin{abstract}
We give an $\widetilde{O}({m^{3/2 - 1/762} \log (U+W))}$ time algorithm for minimum cost flow
with capacities bounded by $U$ and costs bounded by $W$. For sparse graphs with general capacities, this is the first algorithm
 to improve over the $\widetilde{O}({m^{3/2} \log^{O(1)} (U+W)})$ running time 
obtained by an appropriate instantiation of an interior point method [Daitch-Spielman, 2008].

Our approach is extending the framework put forth in [Gao-Liu-Peng, 2021] for computing the maximum flow in graphs with large capacities and, in particular, demonstrates how to reduce the problem of computing an electrical flow with general demands to the same problem on a sublinear-sized set of vertices---even if the demand is supported on the entire graph.
Along the way, we develop new machinery to assess the importance of the graph's edges at each phase of the interior point method optimization process. This capability relies on establishing a new connections between the electrical flows arising inside that optimization process and vertex distances in the corresponding effective resistance metric.
\end{abstract}

\newpage
\tableofcontents

\section{Introduction}
In the last decade, continuous optimization has proved to be an invaluable tool for designing graph algorithms, often leading to significant
improvements over the best known combinatorial algorithms. This has been particularly true in the context of flow problems---arguably, some of the most prominent graph problems~\cite{daitch2008faster, christiano2011electrical,lee2013new, madry2013navigating, sherman2013nearly, kelner2014almost, lee2014path,peng2016approximate, madry2016computing, cmsv17,sherman2017generalized, sherman2017area, sidford2018coordinate, liu2019faster, liu2020faster, axiotis2020circulation, van2020bipartite, van2021minimum}. %
Indeed, these developments have brought a host of remarkable improvements in a variety of regimes, such as when seeking only approximate solutions, or when the underlying graph is dense. However, most of these improvements did not fully address the challenge of seeking \emph{exact} solutions in \emph{sparse} graphs. Fortunately, the improvements for that  regime eventually emerged~\cite{madry2013navigating,madry2016computing,cmsv17,liu2019faster,liu2020faster,axiotis2020circulation}. They still suffered though from an important shortcoming: they all had a polynomial running time dependency in the graph's capacities, and hence---in contrast to the classical combinatorial algorithms---they did not yield efficient algorithms in the presence of arbitrary capacities.
Recently, Gao, Liu and Peng \cite{gao2021fully} have finally changed this state of affairs by providing the first \emph{exact} maximum flow algorithm to break the $\tO{m^{3/2}\log^{O(1)} U}$ barrier for sparse graphs with \emph{general} capacities (bounded by $U$). Their approach, however, crucially relies on a preconditioning technique that is specific to the maximum flow problem and, in particular, having an $s$-$t$ demand---rather than a general one. As a result, the corresponding improvement held only for that particular problem. 

In this paper, we demonstrate how to circumvent these limitations and provide the first algorithm that breaks the 
$\tO{m^{3/2}\log^{O(1)} (U+W)}$ barrier for the \emph{minimum cost flow} problem
in \emph{sparse} graphs with general demands, capacities (bounded by $U$), and costs (bounded by $W$).
This algorithm runs in time $\tO{m^{3/2 - 1/762}\log (U+W)}$.

\subsection{Previous work}
In 2013, M\k{a}dry~\cite{madry2013navigating}
presented the first running time improvement to the maximum flow problem
since the $\tO{m\sqrt{n} \log U}$ algorithm of~\cite{goldberg1998beyond} in the regime of sparse graphs with small capacities. To this end, he presented an algorithm that runs in time
$\tO{m^{10/7}\mathrm{poly}(U)}$,
where $U$ is a bound on edge capacities, breaking past the $\tO{m^{3/2}}$ running time barrier that has for decades resisted improvement attempts. The main idea in that work was to use an interior point method with
an improved number of iterations guarantee that was delivered via use of an adaptive re-weighting of 
the central path and careful perturbations of the problem instance. 
Building on this framework, a series of subsequent works~\cite{madry2016computing,liu2019faster,liu2020faster}
has brought the runtime of sparse max flow down to $\tO{m^{4/3}\mathrm{poly}(U)}$. (With the most recent of these works crucially relying on nearly-linear time $\ell_p$ flows~\cite{kyng2019flows}.)
In parallel~\cite{cmsv17,axiotis2020circulation}, the running time of the 
more general minimum cost flow problem was reduced to $\tO{m^{4/3}\mathrm{poly}(U)\log W}$,
where $W$ is a bound on edge costs.

However, even though these algorithms offer a significant improvement when
$U$ is relatively small, the question of whether there exists an algorithm
faster than $\tO{m^{3/2} \log^{O(1)} U}$ for sparse graphs with general  capacities remained open. In fact,
a polynomial dependence on capacities or costs seems inherent in the 
central path re-weighting technique used in all the aforementioned works.
Recently, \cite{gao2021fully} finally made progress on this question by developing an algorithm for the maximum flow problem that
runs in time $\tO{m^{3/2 - 1/328}\log U}$. The source of improvement here was different
from previous works, in the sense that it was not based on decreasing the number of iterations
of the interior point method. Instead,
it was based on devising a data structure 
to solve the dynamically changing Laplacian system required by the interior point method in sublinear
time per iteration.

The new approach put forth by \cite{gao2021fully}, despite being quite different to the prior ones, still leaned on 
the preconditioning approach of~\cite{madry2016computing}, as well as on other properties that
are specific to the maximum flow problem. For this reason, this improvement did not extend
to the minimum cost flow problem with general capacities, for which the fastest known runtime
was still
$\tO{m \log(U+W)+ n^{1.5} \log^2 (U+ W)}$~\cite{van2021minimum} and $\widetilde{O}({m^{3/2} \log^{O(1)} (U+W)})$~\cite{daitch2008faster} in the sparse regime.

\subsection{Our result}

In this work, we give an algorithm for the minimum cost flow problem with a running time of $\tO{m^{3/2 - 1/762}\log (U+W)}$. This is the first improvement for sparse graphs with general capacities
over~\cite{daitch2008faster},
which runs in time $\tO{m^{3/2}\log^{O(1)} (U+W)}$.
Specifically, we prove that:

\begin{theorem}
\sloppy Given a graph $G(V,E)$ with edge costs $\cc\in\mathbb{Z}_{[-W,W]}^m$,
a demand $\dd\in\mathbb{R}^n$, and capacities $\uu\in\mathbb{Z}_{(0,U]}^m$,
there exists an algorithm that with high probability
runs in time $\widetilde{O}\left(m^{3/2-1/762}\log (U+W)\right)$ and
returns a flow $\ff\in[\zerov,\uu]$ in $G$ such that
$\ff$ routes the demand $\dd$
and the cost $\langle \cc, \ff\rangle$ is minimized.

\label{thm:main}
\end{theorem}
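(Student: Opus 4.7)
The plan is to speed up the standard interior point method (IPM) for minimum cost flow, which in the Daitch--Spielman instantiation takes $\widetilde{O}(\sqrt{m})$ iterations, each requiring the solution of a Laplacian system corresponding to an electrical flow problem with a \emph{general} demand vector. A naive implementation solves each such system in $\widetilde{O}(m)$ time, yielding the $\widetilde{O}(m^{3/2})$ barrier. Our goal is to show that, on average, each iteration can be implemented in $\widetilde{O}(m^{1-1/762})$ time, by maintaining the electrical flow implicitly and recomputing it only on the sublinear set of edges whose resistance (or whose contribution to the flow) has appreciably changed since the last explicit solve.

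First I would formulate the IPM so that the step along the central path is stable in a strong sense: between two consecutive iterations only few edges experience large multiplicative changes in resistance, and the required progress is robust to an $\ell_\infty$ or $\ell_4$ approximation error in the electrical flow. This is essentially the Gao--Liu--Peng template, but whereas their preconditioning collapses the demand to a single $s$--$t$ pair, here the demand is genuinely supported on all vertices. To cope with this, I would develop a localization primitive: given the current resistances and a general demand, an additive update to the electrical flow that is accurate on a prescribed sublinear set of edges can be computed by solving the problem on a correspondingly sublinear-sized vertex set, plus appropriate boundary terms derived from the effective resistance metric.

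Next I would glue the pieces together. Concretely, (i) the IPM provides $\widetilde{O}(\sqrt{m})$ iterations and a budget of ``heavy'' edges per iteration; (ii) a dynamic data structure maintains an approximation of the electrical flow, invoking the localization primitive in amortized sublinear time whenever only a few resistances have changed; and (iii) a periodic \emph{checker} routine recomputes the flow from scratch in $\widetilde{O}(m)$ time to prevent accumulation of error. Choosing the checker frequency and the thresholds for ``heavy'' edges optimally produces the exponent $3/2 - 1/762$ claimed in Theorem~\ref{thm:main}, while the $\log(U+W)$ factor is inherited from the IPM's dependence on the bit complexity of the input.

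The main obstacle will be establishing the localization primitive for general demands. The challenge is that, unlike the $s$--$t$ case, the electrical potentials of a general demand need not be concentrated near any small ``core'', so a priori any local update could propagate globally. The key structural insight to prove is that, along the central path, the flow increments arising in the IPM are effectively supported near a small set of vertices that are ``close'' in the effective resistance metric to the edges with large resistance change, and that the contribution of faraway vertices can be captured by a sparse Schur-complement-like reduction. Proving this connection rigorously---and turning it into a data structure with the right update and query times---is the technically hardest component of the argument, and is where the bulk of the new machinery developed in the paper is required.
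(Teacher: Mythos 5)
Your outline matches the paper's actual proof architecture: an IPM that tolerates $\ell_\infty$-approximate electrical flow steps over $\widetilde{O}(\sqrt{m})$ iterations, a sublinear-time locator that reduces the general-demand Laplacian solve to a congestion-reduction vertex subset via demand projections, Schur complements, and effective-resistance-based localization to ``important'' edges, a checker that validates candidate congested edges, and a final parameter balance producing the exponent $3/2-1/762$. You also correctly pinpoint the central new difficulty---that general demands cannot be preconditioned down to an $s$--$t$ pair, so the error must be controlled in energy norm and the demand pushed onto the sparsifier---which is exactly where the paper's new machinery (congestion reduction subsets, the localization lemma, and the projection bounds in terms of $R_{eff}$) is deployed.
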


\subsection{High level overview of our approach}
As we build on the approach presented in~\cite{gao2021fully},
we first briefly overview some of the key ideas introduced there that will also be relevant for our discussion.
The maximum flow interior point method by \cite{madry2016computing} works by, repeatedly
over $\tO{\sqrt{m}}$ steps,
taking an
electrical flow step that is a multiple of 
\[ \tff = \RR^{-1} \BB \LL^+ \BB^\top \onev_{st}\,, \]
where $\LL = \BB^\top \RR^{-1} \BB$ is a Laplacian matrix and $\rr$ are resistances that
change per step. 
However, $\tff$ has $m$ entries and takes $\tO{m}$ to compute, which gives the standard $\tO{m^{3/2}}$
bound. To go beyond this,
\cite{gao2021fully} show that it suffices to compute $\tff$ for only a \emph{sublinear} number 
of high-congestion entries of $\tff$, where congestion is defined as $\vrho=\sqrt{\rr}\tff$. By known 
linear sketching
results, these edges can be detected by computing the inner product $\langle\qq,\vrho\rangle$
for a small number of randomly chosen vectors $\qq\in\mathbb{R}^m$. Crucially, 
given a vertex subset $C\subseteq V$ of sublinear size
that contains $s$ and $t$, 
this inner product can be equivalently 
written as the following sublinear-sized inner product
\begin{align}
\left\langle \qq,\vrho\right\rangle = \left\langle \vpi^C\left(\BB^\top \frac{\qq}{\sqrt{\rr}}\right), 
SC^+ \dd\right\rangle \,, \label{eq:identity}
\end{align}
where $SC := SC(G,C)$ is the \emph{Schur complement} of $G$ onto $C$,
$\dd$ is equal to $\BB^\top \onev_{st}$, and 
$\vpi^C\left(\cdot\right)$ is a \emph{demand projection} onto $C$.
Therefore, the problem is reduced to maintaining two
quantities: $\vpi^C\left(\BB^\top \frac{\qq}{\sqrt{\rr}}\right)$
and $(SC(G,C))^+ \dd$ in sublinear time per operation. The latter is computed by
using the dynamic Schur complement data structure of~\cite{durfee2019fully},
and the former can be maintained by a careful use of random walks.

We now describe our approach. Instead of using the 
interior point method formulation of~\cite{madry2016computing} which only applies to the 
maximum flow problem,
we use the one by~\cite{axiotis2020circulation} for the, more general, minimum cost flow problem.

There are now several obstacles to making this approach work
by maintaining the quantity $\langle \qq,\vrho\rangle$:

\paragraph{Preconditioning}
A significant difference between~\cite{madry2016computing} and~\cite{axiotis2020circulation} is that while the former is able to guarantee 
that the magnitude of the electrical potentials computed in each step is inversely proportional to the duality gap,
meaning that a large duality gap implies potential embeddings of low stretch,
no such preconditioning method is known for minimum cost flow. In fact, \cite{axiotis2020circulation} used demand perturbations to show that
a \emph{weaker} bound on the potentials can be achieved, which was still sufficient for their purposes.
Unfortunately, this bound is not strong enough to be used in the analysis of~\cite{gao2021fully}.

In order to alleviate this issue, we completely remove preconditioning from the picture by only requiring a bound on the \emph{energy} of the
electrical potentials (instead of their magnitude). 
In particular, given an approximate demand projection ${\tvpi}^{C}\left(\BB^\top \frac{\qq}{\sqrt{\rr}}\right)$,
identity
\eqref{eq:identity} is used to detect congested edges.
In~\cite{gao2021fully}, there is a uniform upper bound on
the entries of the potential embedding 
$\vphi = SC^+ \dd$
because of preconditioning, thus the error in $(\ref{eq:identity})$ can be bounded by
\begin{align*}
\left\|\tvpi^C\left(\BB^\top \frac{\qq}{\sqrt{\rr}}\right) - \vpi^C\left(\BB^\top \frac{\qq}{\sqrt{\rr}}\right)\right\|_1 \left\|\vphi\right\|_\infty\,.
\end{align*}
As we do not have a good bound on $\left\|\vphi\right\|_\infty$,
we instead use an alternative upper bound on the error:
\begin{align*}
\sqrt{\mathcal{E}_{\rr}\left(\tvpi^C\left(\BB^\top \frac{\qq}{\sqrt{\rr}}\right) - \vpi^C\left(\BB^\top \frac{\qq}{\sqrt{\rr}}\right)\right)} \sqrt{E_{\rr}\left(\vphi\right)}\,,
\end{align*}
where $\mathcal{E}_{\rr}(\cdot)$ gives the energy to route a demand with resistances $\rr$,
and $E_{\rr}(\cdot)$ gives the energy of a potential embedding with resistances $\rr$.
As the standard interior point method step
satisfies $E_{\rr}(\vphi) \leq 1$, all our efforts focus on ensuring that
\begin{align}
\sqrt{\mathcal{E}_{\rr}\left(\tvpi^C\left(\BB^\top \frac{\qq}{\sqrt{\rr}}\right) - \vpi^C\left(\BB^\top \frac{\qq}{\sqrt{\rr}}\right)\right)} \leq \eps
\label{eq:energy_error}
\end{align}
for some error parameter $\eps$.
One issue is the fact that the energy depends on the current resistances, 
therefore even if at some point the error of the demand projection is low,
after a few iterations it might increase because of resistance changes.
We deal with this issue by taking the stability of resistances along the central path
into account.
This allows us to upper bound how much this error increases after a number of iterations. 
The resistance stability lemma is a generalization of the one used in~\cite{gao2021fully}.

Unfortunately, even though (\ref{eq:energy_error}) seems like the right type of guarantee,
it is unclear how to ensure that it is always true.
Specifically, it involves efficiently computing the hitting probabilities from some vertex
$v$ to $C$ in an appropriate norm, which ends up being non-trivial.
Instead, we show that the following \emph{weaker} error bound can be ensured with high probability:
\begin{align}
\left|\left\langle \tvpi^C\left(\BB^\top \frac{\qq}{\sqrt{\rr}}\right) - \vpi^C\left(\BB^\top \frac{\qq}{\sqrt{\rr}}\right), \vphi\right\rangle\right| \leq \eps\,,
\label{eq:potential_error}
\end{align}
where $\vphi$ is a \emph{fixed} potential vector with $E_{\rr}(\vphi) \leq 1$.
Interestingly, this guarantee is still sufficient for our purposes.

\paragraph{Costs and general demand}
There is a fundamental obstacle to using the approach of~\cite{gao2021fully} once edge costs are introduced. In particular, for the maximum flow
problem, the demand pushed by the electrical flow in each iteration is an $s$-$t$ demand, so---up to scaling---it is always constant.
In minimum cost flow on the other hand, the augmenting flow is a multiple of 
$\cc - \RR^{-1} \BB \LL^+ \BB^\top \cc$. Here it is not possible to locate a sublinear number of congested edges just by looking at
the electrical flow term $\RR^{-1} \BB \LL^+ \BB^\top \cc$, as there might be significant cancellations with $\cc$. 
We instead use the following equivalent form:
$\frac{\frac{1}{\ss^+} - \frac{1}{\ss^-}}{\rr} - \RR^{-1} \BB \LL^+ \BB^\top \frac{\frac{1}{\ss^+} - \frac{1}{\ss^-}}{\rr}$,
which allows us to ignore the first term 
because it is small and concentrate on the electrical flow term.
One issue that arises is the fact that the demand vector $\BB^\top \frac{\frac{1}{\ss^+}-\frac{1}{\ss^-}}{\rr}$ now depends on slacks, and as a result changes
throughout the interior point method. This issue can be handled relatively easily. 

A more significant issue concerns the vertex sparsifier. In fact, the vertex sparsifier framework around which~\cite{gao2021fully} is based only 
accepts demands that are supported on the vertex set $C$ of the sparsifier. As $|C|$ is sublinear in $n$, this only captures demands with sublinear support,
one such example being max flow with support $2$. However, our demand vector $\BB^\top \frac{\frac{1}{\ss^+} - \frac{1}{\ss^-}}{\rr}$ in general will be supported
on $n$ vertices. Even though it might seem impossible to get around this issue, we show that the special structure of $C$ allows us to 
push the demand to a small number of vertices. More specifically, we show that
if one projects all of the demand onto $C$, the flow induced by this new demand will not differ much from the one with the original demand.
Concretely, given a Laplacian system $\LL \vphi = \dd$, we decompose it into two systems
$\LL \vphi^{(1)} = \vpi^C(\dd)$ and
$\LL \vphi^{(2)} = \dd - \vpi^C(\dd)$, where $\vpi^C(\dd)$ is the projection of $\dd$ onto $C$.
Intuitively, the latter system computes the electrical flow to push all demands to $C$, and the former to serve this $C$-supported demand. 
We show that, as long as $C$ is a \emph{congestion reduction subset} (as it is also the case in~\cite{gao2021fully}),
$\vphi^{(2)}$ has negligible contribution in the electrical flow, thus it can be ignored.
More specifically, in Section~\ref{sec:Fsystem} we prove the following lemma:
\begin{replemma}{lem:non-projected-demand-contrib}
Consider a graph $G(V,E)$ with resistances $\rr$ and Laplacian $\LL$, a $\beta$-congestion reduction subset $C$,
and a demand $\dd = \delta \BB^\top \frac{\qq}{\sqrt{\rr}}$ for some $\delta > 0$ and
$\qq \in[-1,1]^m$.
Then, the potential embedding defined as
\begin{align*}
\vphi = \LL^+ \left(\dd - \vpi^C\left(\dd\right)\right)
\end{align*}
has congestion $\delta \cdot \tO{1/\beta^2}$, i.e. %
$\left\|\frac{\BB \vphi}{\sqrt{\rr}}\right\|_\infty \leq \delta \cdot \tO{1/\beta^2}$. %
\end{replemma}
Now, for computing $\vphi^{(1)}$, we need to get an approximate estimate of $\vpi^C(\dd)$.
Even though the most natural approach would be to try to maintain $\vpi^C(\dd)$ under 
vertex insertions to $C$, this approach has issues related to the fact
that our error guarantee is based
on a \emph{fixed} potential vector. In particular, if we used an estimate of $\vpi^C(\dd)$, then
the potential vector in (\ref{eq:potential_error}) would depend on the randomness of this
estimate, and as a result the high probability guarantee would not work.

Instead, we show that it is not even neccessary to maintain $\vpi^C(\dd)$ very accurately.
In fact, it suffices to \emph{exactly} compute it only every few iterations of the algorithm,
and use this estimate for the calculation. What allows us to do this is the following lemma,
which bounds the change of $\vpi^C(\dd)$ measured in energy, after a sequence of 
vertex insertions and resistance changes.
\begin{replemma}{lem:old_projection_approximate}
Consider a graph $G(V,E)$ with resistances $\rr^0$, $\qq^0\in[-1,1]^m$, a $\beta$-congestion
reduction subset $C^0$, %
and a fixed sequence of updates,
where the $i$-th update $i\in\{0,T-1\}$ is of the following form:
\begin{itemize}
    \item {\textsc{AddTerminal}($v^i$): Set $C^{i+1} = C^{i} \cup \{v^i\}$ for some $v^i\in V\backslash C^i$, $q_e^{i+1} = q_e^{i}, r_e^{i+1} = r_e^{i}$}
    \item {\textsc{Update}($e^i,\qq,\rr$): Set $C^{i+1} = C^{i}$, $q_e^{i+1} = q_e$ $r_e^{i+1} = r_e$, where $e^i\in E(C^{i})$}
\end{itemize}
Then, with high probability,
\[
\sqrt{\mathcal{E}_{\rr^T} \left(\vpi^{C^0,\rr^0}\left(\BB^\top \frac{\qq_S^0}{\sqrt{\rr^0}}\right) - \vpi^{C^T,\rr^T}\left(\BB^\top \frac{\qq_S^T}{\sqrt{\rr^T}}\right)\right)}
\leq \tO{\max_{i\in\{0,\dots,T-1\}} \left\|\frac{\rr^T}{\rr^i}\right\|_\infty^{1/2} \beta^{-2}} \cdot T\,.
\]
\end{replemma}

If we call this demand projection estimate $\vpi_{old}$, 
the quantity that we would like to maintain (\ref{eq:identity})
now becomes
\begin{align*}
\left\langle\qq,\vrho\right\rangle \approx \left\langle \vpi^C\left(\BB^\top \frac{\qq}{\sqrt{\rr}}\right), SC^+ \vpi_{old}\right\rangle \,.
\end{align*}
Therefore all that's left is to efficiently maintain approximations to demand projections of the form
$\vpi^C\left(\BB^\top \frac{\qq}{\sqrt{\rr}}\right)$.

\paragraph{Bounding demand projections.}
An important component for showing that demand projections can be updated efficiently
is bounding the magnitude of an entry 
$\pi_v^{C\cup\{v\}}(\BB^\top \frac{\onev_{e}}{\sqrt{r_e}})$ 
of the projection, for some fixed edge $e=(u,w)$.
This is apparent in the following identity which shows how a demand 
projection changes after inserting a vertex:
 \begin{align}
 \vpi^{C\cup\{v\}}\left(\dd\right) = \vpi^C\left(\dd\right) + \pi_v^{C\cup\{v\}}\left(\dd\right)\cdot \left(\onev_v - \vpi^C(\onev_v)\right)\,.
 \label{eq:addonevertex}
 \end{align}
In~\cite{gao2021fully} 
this projection entry is upper bounded by
$(p_v^{C\cup\{v\}}(u)+p_v^{C\cup\{v\}}(w)) \cdot \frac{1}{\sqrt{r_e}}$, where
$p_v^{C\cup\{v\}}(u)$ is the probability that a random walk starting at $u$ hits 
$v$ before $C$.
This bound can be very bad as $r_e$ can be arbitrarily small, although in the particular
case of max flow it is possible to show that such low-resistance edges 
cannot get congested and thus are not of interest.

In order to overcome this issue, we provide a different bound, which
in contrast works best when $r_e$ is small.

\begin{replemma}{st_projection1}
Consider a graph $G(V,E)$ with resistances $\rr$ and a subset of vertices $C\subseteq V$.
For any vertex $v\in V\backslash C$
we have that
\begin{align*}
\left|\pi_v^{C\cup\{v\}}\left(\BB^\top \frac{\onev_e}{\sqrt{\rr}}\right)\right| \leq (p_v^{C\cup\{v\}}(u) + p_v^{C\cup\{v\}}(w)) \cdot \frac{\sqrt{r_e}}{R_{eff}(v,e)} \,.
\end{align*}
\end{replemma}

Here $R_{eff}(v,e)$ is the effective resistance between $v$ and $e$.
In fact, together with the other upper bound mentioned above, this implies that 
\[
\left|\pi_v^{C\cup\{v\}}\left(\BB^\top \frac{\onev_e}{\sqrt{\rr}}\right)\right| \leq (p_v^{C\cup\{v\}}(u) + p_v^{C\cup\{v\}}(w)) \cdot \frac{1}{\sqrt{R_{eff}(v,e)}} \,,
\] 
which no longer depends on the value of the resistance $r_e$.

As we will see, it suffices to approximate
$\pi_v^{C\cup\{v\}}\left(\BB^\top \frac{\onev_e}{\sqrt{\rr}}\right)$ up
to additive accuracy roughly 
$\heps \cdot (p_v^{C\cup\{v\}}(u) + p_v^{C\cup\{v\}}(w)) / \sqrt{R_{eff}(C,v)}$
for some error parameter $\heps > 0$.
Thus, Lemma~\ref{st_projection1} immediately implies that for any edge $e$ such that
$R_{eff}(v,e) \gg R_{eff}(C,v)$, this term is small enough to begin with,
and thus can be ignored.

\paragraph{Important edges.}

In order to ensure that the demand projection can be updated efficiently, we focus
only on the demand coming from a special set of edges, which we call
\emph{important}. These are the edges
that are close (in effective resistance metric) to $C$ relative to their own resistance $r_e$.
In fact, the farther an edge is from $C$ in this sense, 
the smaller its worst-case congestion, and so non-important edges
do not influence the set of congested edges that we are looking for.
At a high level, %
this is because parts of the graph that are very far 
in the potential embedding have minimal interactions with each other.

\begin{definition}[Important edges]
An edge $e\in E$ is called $\eps$-\emph{important} (or just \emph{important}) if 
$R_{eff}(C,e) \leq r_e / \eps^2$.
\end{definition}
Based on the above discussion, we seek to find congested edges \emph{only} among important edges.
\begin{replemma}{lem:important_edges}[Localization lemma]
Let $\vphi^*$ be any solution of
\begin{align*}
\LL \vphi^* = \delta \cdot \vpi^C\left(\BB^\top \frac{\pp}{\sqrt{\rr}} \right)\,,
\end{align*}
where $\rr$ are any resistances, $\pp\in[-1,1]^m$, and $C\subseteq V$.
Then,
for any $e\in E$ that is not $\eps$-important
we have
$\left|\frac{\BB\vphi^*}{\sqrt{\rr}}\right|_e \leq 12\eps$.
\end{replemma}
One issue is that the set of important edges changes whenever $C$ changes. However, we show that,
because of the stability of resistances along the central path, the set of important edges only
needs to be updated once every few iterations.

\section{Preliminaries}
\subsection{General}
For any $k\in \mathbb{Z}_{\geq 0}$, we denote
$[k] = \{1,2,\dots,k\}$.

For any $x\in\mathbb{R}^n$ and $C\subseteq [n]$, we 
denote by $x_C\in\mathbb{R}^{|C|}$ the restriction of $x$
to the entries in $C$.
Similarly for a matrix $A$, subset of rows $C$, and subset of columns $F$,
we denote by $A_{CF}$ the submatrix that results
from keeping the rows in $C$ and the columns in $F$.

When not ambiguous, we use the corresponding uppercase symbol 
to a symbol denoting a vector,
to denote the diagonal matrix of that vector. In other
words $\RR = \mathrm{diag}(\rr)$.

Given $x,y\in\mathbb{R}$ and $\alpha\in\mathbb{R}_{\geq 1}$, 
we say that $x$ and $y$ $\alpha$-approximate each other 
and write $x \approx_{\alpha} y$
if
$\alpha^{-1} \leq x/y \leq \alpha$.

When a graph $G(V,E)$ is clear from context, we will
use $n = |V|$ and $m=|E|$.
We use $\BB\in\mathbb{R}^{m\times n}$ to denote the 
edge-vertex incidence matrix of $G$ and, given some resistances
$\rr$, we use $\LL\in\mathbb{R}^{n\times n}$
to denote the Laplacian $\LL = \BB^\top \RR^{-1} \BB$.

\subsection{Minimum cost flow}

Given a directed graph $G(V,E)$ with costs $\cc\in\mathbb{R}^m$, demands $\dd\in\mathbb{R}^n$, and capacities 
$\uu\in\mathbb{R}_{> 0}^m$, 
the \emph{minimum cost flow problem} asks to compute a flow $\ff$ that
\begin{itemize}
\item{routes the demand: $\BB^\top \ff = \dd$}
\item{respects the capacities: $\zerov \leq \ff \leq \uu$, and}
\item{minimizes the cost: $\langle \cc, \ff\rangle$.}
\end{itemize}
We will denote such an instance of the minimum cost flow by the tuple 
$(G(V,E),\cc,\dd,\uu)$.

\subsection{Electrical flows}
\begin{definition}[Energy of a potential embedding]
Consider a graph $G(V,E)$ with resistances $\rr$
and a potential embedding $\vphi$. We denote
by 
\[ E_{\rr}(\vphi) = \sum\limits_{e\in E} \frac{(B\vphi)_e^2}{r_e} \] 
the total energy of the electrical flow induced by $\vphi$. %
\end{definition}
\begin{definition}[Energy to route a demand]
Consider a graph $G(V,E)$ with resistances $\rr$, and a vector $\dd\in\mathbb{R}^n$. %
If $\dd$ is a demand ($\langle 1,d\rangle = 0$), we denote
by 
\[ \mathcal{E}_{\rr}(\dd) = \underset{\vphi:\, B^\top \frac{B\vphi}{\rr} = \dd}{\min}\, E_{\rr}(\vphi) \]
the total energy %
that is required to route the demand $\dd$ with resistances $\rr$.
We extend this definition for a $\dd$ that is not a demand vector ($\langle\onev, \dd\rangle \neq 0$), as
$\mathcal{E}_{\rr}(\dd) = \mathcal{E}_{\rr}\left(\dd - \frac{\langle \onev, \dd\rangle}{n} \cdot \onev\right)$. 
\end{definition}
\begin{fact}[Energy statements]
Consider a graph $G(V,E)$ with resistances $\rr$.
\begin{itemize}
\item {For any $x,y\in \mathbb{R}^n$, we have $\sqrt{\mathcal{E}_{\rr}(x+y)}
\leq \sqrt{\mathcal{E}_{\rr}(x)}
+ \sqrt{\mathcal{E}_{\rr}(y)}
$.}
\item{
and a vector $\dd\in\mathbb{R}^n$.
Then,
\[ \underset{\phi: E_{\rr}(\vphi) \leq 1}{\max}\, \langle \dd, \vphi\rangle = 
\sqrt{\mathcal{E}_{\rr}(\dd)}\,. \]}
\item{
For any resistances $\rr' \leq \alpha \rr$ for some $\alpha \geq 1$
and any $\dd\in\mathbb{R}^n$, we have
$\mathcal{E}_{\rr'}(\dd) \leq \alpha \cdot \mathcal{E}_{\rr}(\dd)$
}
\end{itemize}
\end{fact}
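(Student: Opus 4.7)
The plan is to prove the three bullet points separately, each via a short reduction to standard convex/quadratic-form identities for the pseudoinverse $\LL^+$. Throughout, I will work with the identity $\mathcal{E}_{\rr}(\dd) = \dd^\top \LL^+ \dd$ valid for any demand $\dd \in \ker(\onev^\top) \cap \mathrm{range}(\LL)$, which follows from the fact that the energy-minimizing flow routing $\dd$ is $\ff_{\dd} = \RR^{-1}\BB\LL^+\dd$ (the electrical flow), with corresponding potential $\vphi_{\dd} = \LL^+\dd$ satisfying $E_{\rr}(\vphi_{\dd}) = \mathcal{E}_{\rr}(\dd)$. The extension to non-demand $\dd$ is then handled by the projection built into the definition.

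For the triangle inequality, I would observe that $\sqrt{\mathcal{E}_{\rr}(\cdot)}$ is a seminorm, namely the seminorm induced by the PSD form $\LL^+$. Concretely, let $\ff_x$ and $\ff_y$ be the electrical flows routing $x$ and $y$ respectively under resistances $\rr$. Then $\ff_x + \ff_y$ routes $x+y$, so by definition
\[
\sqrt{\mathcal{E}_{\rr}(x+y)} \;\le\; \sqrt{\sum_e r_e (\ff_x + \ff_y)_e^2}.
\]
Applying Minkowski's inequality to the resistance-weighted $\ell_2$ norm of flows, the right-hand side is at most $\sqrt{\sum_e r_e (\ff_x)_e^2} + \sqrt{\sum_e r_e (\ff_y)_e^2} = \sqrt{\mathcal{E}_{\rr}(x)} + \sqrt{\mathcal{E}_{\rr}(y)}$, giving the claim. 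If either $x$ or $y$ fails to be a demand, I would first subtract its $\onev$-component as in the definition and then observe the same argument goes through verbatim since $x+y$ gets shifted by the sum of the two shifts.

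For the variational characterization, I would use Cauchy--Schwarz in the $\LL$-inner product. Writing $\vphi_{\dd} = \LL^+\dd$ so that $\dd = \LL\vphi_{\dd}$ (modulo the kernel of $\LL$, which is irrelevant since $\vphi$ is tested against a demand), we have
\[
\langle \dd,\vphi\rangle \;=\; \vphi_{\dd}^\top \LL \vphi \;\le\; \sqrt{\vphi_{\dd}^\top \LL \vphi_{\dd}}\,\sqrt{\vphi^\top \LL \vphi} \;=\; \sqrt{\mathcal{E}_{\rr}(\dd)}\,\sqrt{E_{\rr}(\vphi)} \;\le\; \sqrt{\mathcal{E}_{\rr}(\dd)},
\]
which gives the $\le$ direction. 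For the matching lower bound I would exhibit the maximizer $\vphi^\star = \vphi_{\dd}/\sqrt{\mathcal{E}_{\rr}(\dd)}$, which is feasible since $E_{\rr}(\vphi^\star) = \mathcal{E}_{\rr}(\dd)/\mathcal{E}_{\rr}(\dd) = 1$, and attains $\langle \dd,\vphi^\star\rangle = \mathcal{E}_{\rr}(\dd)/\sqrt{\mathcal{E}_{\rr}(\dd)} = \sqrt{\mathcal{E}_{\rr}(\dd)}$. (In the degenerate case $\mathcal{E}_{\rr}(\dd) = 0$ both sides vanish.)

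For the resistance monotonicity statement, the argument is essentially one line: if $\ff^\star$ is the energy-minimizing flow routing $\dd$ under resistances $\rr$, then it still routes $\dd$ under $\rr'$ (feasibility only depends on $\BB^\top \ff = \dd$), and so
\[
\mathcal{E}_{\rr'}(\dd) \;\le\; \sum_e r'_e (\ff^\star)_e^2 \;\le\; \alpha \sum_e r_e (\ff^\star)_e^2 \;=\; \alpha\,\mathcal{E}_{\rr}(\dd),
\]
using $\rr' \le \alpha \rr$ coordinatewise. The non-demand extension is again immediate because the mean-subtraction in the definition does not interact with the resistances. Of the three parts, the variational characterization is the only one with any subtlety, and even there the only mild issue is that $\vphi_{\dd}$ is defined modulo the kernel of $\LL$ (the constants); this is harmless because $\langle \dd, \onev\rangle = 0$ for a demand, so shifting $\vphi_{\dd}$ by a constant does not affect $\langle \dd, \vphi\rangle$. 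I do not anticipate any genuine obstacle.
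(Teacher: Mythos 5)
Your proof is correct, but for the first and third bullets you argue on the flow side while the paper argues purely on the matrix-norm side: the paper writes $\mathcal{E}_{\rr}(\cdot)=\left\|\cdot\right\|_{\LL^+}^2$ and gets the triangle inequality directly from the seminorm $\left\|\cdot\right\|_{\LL^+}$, the second bullet from duality of $\left\|\cdot\right\|_{\LL}$ and $\left\|\cdot\right\|_{\LL^+}$ (your Cauchy--Schwarz-plus-maximizer argument is just this duality spelled out), and the third bullet from the Loewner inequality $\left(\BB^\top \RR'^{-1}\BB\right)^+ \preceq \alpha\left(\BB^\top \RR^{-1}\BB\right)^+$. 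Your route instead superposes electrical flows and applies Minkowski for bullet one, and re-routes the $\rr$-optimal flow under $\rr'$ for bullet three; this is more hands-on and arguably more intuitive, at the cost of implicitly invoking Thomson's principle, since the paper defines $\mathcal{E}_{\rr}(\dd)$ as a minimum over \emph{potential-induced} flows only. For bullet one this is harmless even under the paper's definition, because $\ff_x+\ff_y$ is itself induced by the potential $\LL^+x+\LL^+y$; for bullet three, however, the $\rr$-optimal flow is generally \emph{not} potential-induced with respect to $\rr'$, so your one-liner needs the standard fact that the electrical flow minimizes energy among \emph{all} flows routing the demand (or, equivalently, you could fall back on the paper's operator inequality). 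Since that fact is classical and you state the key identity $\mathcal{E}_{\rr}(\dd)=\dd^\top\LL^+\dd$ up front, this is a presentational caveat rather than a gap; your handling of the mean-subtraction for non-demands and of the degenerate case $\mathcal{E}_{\rr}(\dd)=0$ is fine.
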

\begin{proof}
We let $\LL = \BB^\top \RR^{-1} \BB$ be the Laplacian of $G$ with resistances $\rr$.
For the first one,
we have
$\sqrt{\mathcal{E}_{\rr}(x+y)}
= \left\|x+y\right\|_{\LL^+}
\leq \left\|x\right\|_{\LL^+}+\left\|y\right\|_{\LL^+}
= \sqrt{\mathcal{E}_{\rr}(x)}
+ \sqrt{\mathcal{E}_{\rr}(y)}
$,
where we used the triangle inequality.

The second one
follows since 
$E_{\rr}(\vphi) = \left\|\vphi\right\|_{\LL}^2$
and $\mathcal{E}_{\rr}(\dd) = \left\|\dd\right\|_{\LL^+}^2$
and the norms 
$\left\|\cdot\right\|_{\LL}$ and
$\left\|\cdot\right\|_{\LL^+}$ are dual.

For the third one, we note that
$\left(\BB^\top \RR'^{-1} \BB\right)^+ \preceq \alpha \cdot \left(\BB^\top \RR^{-1} \BB\right)^+$,
and so 
$\mathcal{E}_{\rr'}(\dd)
 = \left\|\dd\right\|_{(\BB^\top \RR'^{-1} \BB)^+}^2
 \leq \alpha \left\|\dd\right\|_{(\BB^\top \RR^{-1} \BB)^+}^2
 = \alpha \cdot \mathcal{E}_{\rr}(\dd)$.
\end{proof}

\begin{definition}[Effective resistances]
Consider a graph $G(V,E)$ with resistances $\rr$
and any pair of vertices $u,v\in V$. We denote by
$R_{eff}(u,v)$ the energy required to route $1$ unit of flow
from $u$ to $v$, 
i.e. $R_{eff}(u,v) = \mathcal{E}_{\rr}\left(\onev_u - \onev_v\right)$.
This is called the \emph{effective resistance between $u$ and $v$}.
We extend this definition to work with vertex subsets $X,Y\subseteq V$, such that
$R_{eff}(X,Y)$ is the effective resistance between the vertices $x,y$ that result from
contracting $X$ and $Y$. When used as an argument of $R_{eff}$,
an edge $e=(u,v)\in E$ is treated as the vertex subset $\{u,v\}$.
\end{definition}

\begin{definition}[Schur complement]
Given a graph $G(V,E)$ with Laplacian $\LL\in\mathbb{R}^{n\times n}$ and a vertex subset $C\subseteq V$
as well as $F = V\backslash C$, 
$SC(G,C) := \LL_{CC} - \LL_{CF}\LL_{FF}^{-1}\LL_{FC}$ (or just $SC$) is called
the \emph{Schur complement of $G$ onto $C$}.
\end{definition}

\begin{fact}[Cholesky factorization]
Given a matrix $\LL\in\mathbb{R}^{n\times n}$, a subset $C\subseteq[n]$,
and $F = [n] \backslash C$, we have
\begin{align*}
\LL^+ = 
\begin{pmatrix}
\II & -\LL_{FF}^{-1} \LL_{FC}\\
\zerov & \II
\end{pmatrix}
\begin{pmatrix}
\LL_{FF}^{-1} & \zerov\\
\zerov & SC(\LL,C)^+
\end{pmatrix}
\begin{pmatrix}
\II & \zerov\\
-\LL_{CF} \LL_{FF}^{-1} & \II
\end{pmatrix}\,.
\end{align*}
\end{fact}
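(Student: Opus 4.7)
My plan is to prove this via a block LDU factorization of $\LL$ with respect to the partition $[n] = F \sqcup C$.

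First I would verify by direct block multiplication that
\begin{align*}
\LL = \begin{pmatrix} \II & \zerov \\ \LL_{CF}\LL_{FF}^{-1} & \II \end{pmatrix} \begin{pmatrix} \LL_{FF} & \zerov \\ \zerov & SC(\LL,C) \end{pmatrix} \begin{pmatrix} \II & \LL_{FF}^{-1}\LL_{FC} \\ \zerov & \II \end{pmatrix}\,.
\end{align*}
The off-diagonal blocks of the product match the corresponding blocks of $\LL$ immediately, and the only non-trivial entry is the $(C,C)$ block, which equals $\LL_{CF}\LL_{FF}^{-1}\LL_{FC} + SC(\LL,C) = \LL_{CC}$ by the very definition of the Schur complement. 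This step relies on the invertibility of $\LL_{FF}$, which holds for a connected-graph Laplacian whenever $C \neq \emptyset$, since the corresponding grounded Laplacian is then positive definite.

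Next I would invert each of the three factors. Each of the two outer unit triangular matrices has an inverse obtained by flipping the sign of its off-diagonal block, and these inverses are exactly the leftmost and rightmost matrices appearing in the claimed formula -- a one-line block-multiplication check. The middle block-diagonal factor $D := \mathrm{diag}(\LL_{FF}, SC(\LL,C))$ has pseudoinverse $D^+ = \mathrm{diag}(\LL_{FF}^{-1}, SC(\LL,C)^+)$, since $\LL_{FF}$ is genuinely invertible and $SC(\LL,C)$ is itself the Laplacian of the Schur complement graph on $C$, for which the pseudoinverse is well-defined. Multiplying the three inverted factors in reverse order reproduces the right-hand side of the claim; call this product $M$.

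Finally, to identify $M$ with $\LL^+$, I would verify the generalized-inverse identities $\LL M \LL = \LL$ and $M \LL M = M$; both telescope down to $D D^+ D = D$ for the block-diagonal middle factor. The subtle point, and the main obstacle, is that $M$ is a priori only a generalized inverse and not the Moore-Penrose pseudoinverse: one can check that $\LL M$ need not be symmetric, because the null spaces of $\LL$ (spanned by $\onev_n$) and of $D$ (spanned by $(\zerov_F, \onev_C)^\top$) sit in different coordinate directions. However, any two generalized inverses of $\LL$ differ by an operator whose image lies in $\mathrm{span}(\onev_n)$, and since $\LL^+$ is applied in the paper only to mean-zero demands and only in expressions of the form $\BB \LL^+ \dd$ (where $\BB$ annihilates $\onev_n$), the identity $\LL^+ = M$ is valid for every purpose it is put to in the sequel.
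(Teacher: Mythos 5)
The paper states this as a \emph{Fact} with no accompanying proof, so there is nothing to compare against; your block-LDU argument is the standard derivation and is essentially correct. Your verification of $\LL = LDU$, the inversion of the unit-triangular factors, and the reduction of $\LL M\LL=\LL$ and $M\LL M=M$ to $DD^+D=D$ are all sound, and you are right to flag the genuine subtlety: $M$ is symmetric but $\LL M = \II - \frac{1}{|C|}\onev_C\onev_n^\top$ (in block form) is not, so $M$ is only a reflexive generalized inverse and the stated identity is, strictly speaking, an abuse of notation for the Moore--Penrose pseudoinverse. One sentence in your last paragraph is loose, though: it is not true that \emph{any} two generalized inverses of $\LL$ differ by an operator with image in $\mathrm{span}(\onev_n)$ (e.g.\ $\LL^+ + \uu\onev_n^\top$ is a $\{1\}$-inverse for arbitrary $\uu$). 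The correct statement, which is all you need, is that for any $\{1\}$-inverse $G$ and any mean-zero $\dd$ one has $\LL(G\dd)=\LL G\LL(\LL^+\dd)=\dd$, hence $G\dd-\LL^+\dd\in\ker\LL=\mathrm{span}(\onev_n)$; since the paper only ever applies $\LL^+$ to mean-zero vectors inside expressions that are invariant under adding multiples of $\onev_n$ (such as $\BB\LL^+\dd$ or inner products with mean-zero vectors), the factorization is valid in every use. With that one fix the argument is complete.
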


\subsection{Random walks}

\begin{definition}[Hitting probabilities]
Consider a graph $G(V,E)$ with resistances $\rr$. For any 
$u,v\in V$, $C\subseteq V$, we denote
by $p_v^{C,\rr}(u)$ the probability that for random walk that 
starts from $u$ and uses edges with probability proportional to $\frac{1}{\rr}$,
the first vertex of $C$ to be visited is $v$.
When not ambiguous, we will use the notation $p_v^C(u)$.
\label{def:hitting_probabilities}
\end{definition}

\begin{definition}[Demand projection]
Consider a graph $G(V,E)$ and a demand vector $\dd$. 
For any $v\in V$, $C\subseteq V$, we 
define $\pi_v^{C,\rr}(\dd) = \sum\limits_{u\in V} d_u p_v^{C,\rr}(u)$ and call 
the resulting vector $\vpi^{C,\rr}(\dd)\in\mathbb{R}^{n}$ the 
\emph{demand projection of $\dd$ onto $C$}.
When not ambiguous, we will use the notation $\vpi^C(\dd)$.
\label{def:demand_projection}
\end{definition}

For convenience, when we write $\vpi^C(\dd)$ we might also 
refer to the restriction of this vector to $C$. This will be
clear from the context, and, as $\pi_v^C(\dd) = 0$ for any $v\notin C$,
no ambiguity is introduced.

\begin{fact}[\cite{gao2021fully}]
Given a graph $G(V,E)$ with Laplacian $\LL$, 
a vertex subset $C\subseteq V$, and
$\dd\in\mathbb{R}^n$, we have
\[ \vpi^C(\dd) = \dd_{C} - \LL_{CF}\LL_{FF}^{-1} \dd_{F}\\
\,. \]
Additionally,
\[ \left[\LL^+\dd\right]_C = 
 SC^+\vpi^C(\dd) \,,\]
 where $SC$ is the Schur complement of $G$ onto $C$.
 \label{fact:demand_proj}
\end{fact}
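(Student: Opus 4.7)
The plan is to prove the two identities in sequence: first the explicit demand-projection formula $\vpi^C(\dd) = \dd_C - \LL_{CF}\LL_{FF}^{-1}\dd_F$, and then deduce the Schur-complement identity $[\LL^+\dd]_C = SC^+\vpi^C(\dd)$ by plugging Part~1 into the Cholesky factorization of $\LL^+$ stated just above.

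For Part~1, I would argue entrywise. Fix $v\in C$ and let $\hh^v\in\mathbb{R}^n$ be the vector of hitting probabilities $h^v_u := p_v^C(u)$. By Definition~\ref{def:demand_projection}, $\pi_v^C(\dd) = \langle\dd,\hh^v\rangle$, so the $C$-block of $\vpi^C(\dd)$ can be written as $\HH^\top\dd$ where $\HH\in\mathbb{R}^{n\times|C|}$ has columns $\hh^v$. The key step is to identify $\hh^v$ with a harmonic extension: for $u\in C$ the walk starts in $C$, so $h^v_u = \mathbf{1}[u=v]$, giving $\hh^v_C = \onev_v$; for $u\in F$ the one-step random-walk recursion is $h^v_u = \sum_{w\sim u} (r_{uw}^{-1}/\sum_{w'\sim u} r_{uw'}^{-1})\, h^v_w$, which after multiplying through by the weighted degree $\LL_{uu}$ is exactly $(\LL\hh^v)_u = 0$. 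Thus $\LL_{FF}\hh^v_F + \LL_{FC}\hh^v_C = 0$, yielding $\hh^v_F = -\LL_{FF}^{-1}\LL_{FC}\onev_v$. Assembling these columns gives $\HH_{CC} = \II$ and $\HH_{FC} = -\LL_{FF}^{-1}\LL_{FC}$; then, using symmetry of $\LL$,
\begin{equation*}
(\vpi^C(\dd))_C = \HH^\top\dd = \HH_{CC}^\top\dd_C + \HH_{FC}^\top\dd_F = \dd_C - \LL_{CF}\LL_{FF}^{-1}\dd_F.
\end{equation*}
Since $\pi_v^C(\dd) = 0$ for $v\notin C$, this settles Part~1.

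For Part~2, I would apply the Cholesky factorization of $\LL^+$ stated in the excerpt directly to the block-partitioned vector with $F$-block $\dd_F$ and $C$-block $\dd_C$. The rightmost unit-lower-triangular factor maps this to a vector whose $F$-block is $\dd_F$ and whose $C$-block is $\dd_C - \LL_{CF}\LL_{FF}^{-1}\dd_F$, which by Part~1 is exactly $\vpi^C(\dd)$. The middle block-diagonal factor then sends the $C$-block to $SC^+\vpi^C(\dd)$, while the leftmost unit-upper-triangular factor leaves the $C$-block untouched. Reading off the $C$-block yields $[\LL^+\dd]_C = SC^+\vpi^C(\dd)$.

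The only substantive step is identifying the hitting probabilities with the harmonic extension in Part~1; I do not anticipate any obstacle in Part~2, which reduces to block-matrix arithmetic with the already-stated factorization.
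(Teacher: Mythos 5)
Your proof is correct. Note that the paper does not actually prove this statement—it is imported as a Fact from \cite{gao2021fully}—so there is no in-paper argument to compare against; your two steps (identifying the hitting-probability vectors with the harmonic extension $\hh^v_C=\onev_v$, $\hh^v_F=-\LL_{FF}^{-1}\LL_{FC}\onev_v$, then reading off the $C$-block from the stated Cholesky factorization of $\LL^+$) are the standard route and match the style of manipulation the paper itself uses in the proofs of Lemma~\ref{lem:sc-energy-bd} and Lemma~\ref{lem:effective_hitting}. One small point worth making explicit: passing from $\HH^\top\dd$ to $\dd_C-\LL_{CF}\LL_{FF}^{-1}\dd_F$ uses both $\LL_{FC}^\top=\LL_{CF}$ and the symmetry of $\LL_{FF}^{-1}$, which you invoke correctly.
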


An important property of the demand projection is that the energy required to route it is upper bounded by the energy required to route the original demand. The proof can be found in Section~\ref{sec:aux}.
\begin{lemma}\label{lem:sc-energy-bd}
Let $\dd$ be a demand vector,  let $\rr$ be resistances, and let $C \subseteq V$ be a subset of vertices.
Then 
\[
\mathcal{E}_{\rr}\left( \vpi^C(\dd) \right) \leq \mathcal{E}_{\rr}\left( \dd \right) \,.
\]
\end{lemma}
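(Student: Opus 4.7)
The plan is to use the Cholesky factorization of $\LL^+$ together with the identity $\vpi^C(\dd) = \dd_C - \LL_{CF}\LL_{FF}^{-1}\dd_F$ from Fact~\ref{fact:demand_proj} to express both energies as explicit quadratic forms in $SC^+$ and $\LL_{FF}^{-1}$, and then observe that the difference is visibly non-negative.

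Since $\dd$ is a demand vector, $\mathcal{E}_{\rr}(\dd) = \dd^\top \LL^+ \dd$. Moreover, a random walk preserves total mass, so $\langle \onev, \vpi^C(\dd)\rangle = \langle \onev, \dd\rangle = 0$, meaning $\vpi^C(\dd)$ is also a genuine demand and $\mathcal{E}_{\rr}(\vpi^C(\dd)) = \vpi^C(\dd)^\top \LL^+ \vpi^C(\dd)$. This reduces the lemma to a purely linear-algebraic comparison of two quadratic forms in $\LL^+$.

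The heart of the argument is then a block computation. Writing $F = V \setminus C$, I would read off the $(C,C)$ block of $\LL^+$ from the Cholesky factorization, obtaining $[\LL^+]_{CC} = SC^+$. Since $\vpi^C(\dd)$ is supported on $C$, this immediately gives $\mathcal{E}_{\rr}(\vpi^C(\dd)) = \vpi^C(\dd)_C^\top SC^+ \vpi^C(\dd)_C$. Expanding $\dd^\top \LL^+ \dd$ in the same way, and using $\LL_{CF}^\top = \LL_{FC}$ together with the projection identity, the cross terms collapse cleanly into
\begin{equation*}
\mathcal{E}_{\rr}(\dd) \;=\; \dd_F^\top \LL_{FF}^{-1} \dd_F \;+\; \vpi^C(\dd)_C^\top SC^+ \vpi^C(\dd)_C \,.
\end{equation*}

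Subtracting the two identities then yields $\mathcal{E}_{\rr}(\dd) - \mathcal{E}_{\rr}(\vpi^C(\dd)) = \dd_F^\top \LL_{FF}^{-1} \dd_F \geq 0$, where the inequality uses that $\LL_{FF}$ is a principal submatrix of the Laplacian of a connected graph missing at least one vertex (assuming $C$ is nonempty), and is therefore positive definite. I don't expect a real obstacle here; the only thing to be careful about is the bookkeeping in the block expansion, ensuring that the cross terms involving $\LL_{FC}$ combine cleanly so that the Schur complement of $\dd_C - \LL_{CF}\LL_{FF}^{-1}\dd_F = \vpi^C(\dd)_C$ appears naturally in the quadratic form.
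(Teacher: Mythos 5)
Your proposal is correct and follows essentially the same route as the paper: both conjugate the quadratic form by the Cholesky/block-triangular factorization of $\LL^+$, so that $\mathcal{E}_{\rr}(\dd)$ splits as $\|\vpi^C(\dd)\|_{SC^+}^2 + \|\dd_F\|_{\LL_{FF}^{-1}}^2$ while $\mathcal{E}_{\rr}(\vpi^C(\dd)) = \|\vpi^C(\dd)\|_{SC^+}^2$, and the difference $\dd_F^\top \LL_{FF}^{-1}\dd_F \geq 0$ gives the claim. The block bookkeeping you flag as the only delicate point does indeed collapse exactly as you describe, since the lower-triangular factor applied to $\dd$ produces precisely $(\dd_F,\ \vpi^C(\dd))$.
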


The following lemma relates 
the effective resistance between a vertex and a vertex set,
to the energy to route a particular demand, based on a demand
projection.
\begin{lemma}[Effective resistance and hitting probabilities]
Given a graph $G(V,E)$ with resistances $\rr$, any vertex set $A\subseteq V$ and vertex $u\in V\backslash A$, we have
$R_{eff}(u,A) = \mathcal{E}_{\rr}(\onev_u - \vpi^A(\onev_u))$.
\label{lem:effective_hitting}
\end{lemma}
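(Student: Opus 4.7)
The plan is to exhibit an explicit potential vector $\vphi$ that simultaneously (i)\ routes the demand $\onev_u - \vpi^A(\onev_u)$ in $G$ and (ii)\ has energy exactly $R_{eff}(u,A)$. The construction is simply to take the ``natural'' potential for the $u$-to-$A$ electrical flow in the contracted graph $G/A$ and lift it to $V$ by setting it equal to the contracted vertex's potential on all of $A$. Since the potential is then constant on $A$, edges inside $A$ contribute no energy and the energy in $G$ equals the energy in $G/A$, which is $R_{eff}(u,A)$ by Definition of effective resistance.

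More concretely, write $F = V\setminus A$ and assume $u \in F$ (the case $u\in A$ is trivial since both sides are $0$). Let $\vphi^0\in\mathbb{R}^{V(G/A)}$ be the potential for the electrical flow from $u$ to the contracted vertex $a$ in $G/A$, normalized so $\vphi^0_a = 0$. Because contracting $A$ leaves $\LL_{FF}$ unchanged, this normalization forces $\vphi^0|_F = \LL_{FF}^{-1}\onev_u$. Now lift $\vphi^0$ to $\vphi\in\mathbb{R}^V$ by setting $\vphi|_F = \vphi^0|_F$ and $\vphi_v = 0$ for all $v\in A$.

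Next I would verify that $\LL\vphi = \onev_u - \vpi^A(\onev_u)$. On $F$, the identity $(\LL\vphi)|_F = \LL_{FF}\vphi|_F + \LL_{FA}\vphi|_A = \onev_u$ is immediate. On $A$, we compute $(\LL\vphi)|_A = \LL_{AF}\vphi|_F = \LL_{AF}\LL_{FF}^{-1}\onev_u$, which by Fact~\ref{fact:demand_proj} (applied with $\dd = \onev_u$ and using $(\onev_u)_A = 0$) equals $-\vpi^A(\onev_u)$. Combined, $\LL\vphi = \onev_u - \vpi^A(\onev_u)$ as required. The energy is then $E_{\rr}(\vphi) = \vphi^\top\LL\vphi = \vphi_u - \langle \vphi|_A, \vpi^A(\onev_u)\rangle = \vphi_u$, since $\vphi|_A = 0$. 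But $\vphi_u = \vphi^0_u$ is exactly the potential drop from $u$ to $a$ in $G/A$ for a unit flow, i.e. $R_{eff}(u,A)$.

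Finally, to conclude $\mathcal{E}_{\rr}(\onev_u - \vpi^A(\onev_u)) = R_{eff}(u,A)$, I invoke the standard fact that whenever $\LL\vphi = \dd$ for a demand $\dd$, one has $\mathcal{E}_{\rr}(\dd) = \vphi^\top\LL\vphi = E_{\rr}(\vphi)$ (the electrical flow is the energy-minimizing flow, and any solution $\vphi$ to $\LL\vphi=\dd$ differs from the minimizer only by a vector in $\ker\LL$, which is constants and contributes nothing to $\BB\vphi$). I expect no real obstacle; the only delicate step is the bookkeeping in step three---making sure the identification $\LL^{G/A}_{FF} = \LL_{FF}$ and the normalization $\vphi^0_a = 0$ are used consistently so that Fact~\ref{fact:demand_proj} plugs in cleanly to produce precisely $-\vpi^A(\onev_u)$ on the $A$-coordinates.
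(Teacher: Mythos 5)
Your proposal is correct and follows essentially the same route as the paper: both arguments reduce each side to the quantity $\onev_u^\top \LL_{FF}^{-1}\onev_u$, using the contracted graph $G/A$ with the contracted vertex's potential normalized to zero for the $R_{eff}(u,A)$ side. The only cosmetic difference is that you establish $\mathcal{E}_{\rr}(\onev_u - \vpi^A(\onev_u)) = \onev_u^\top\LL_{FF}^{-1}\onev_u$ by exhibiting the explicit potential $\vphi$ with $\vphi|_F = \LL_{FF}^{-1}\onev_u$, $\vphi|_A = 0$ and checking $\LL\vphi = \onev_u - \vpi^A(\onev_u)$ directly, whereas the paper obtains the same identity by plugging into the Cholesky factorization of $\LL^+$.
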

\begin{proof}
Let $\LL$ be the Laplacian of $G$ with resistances $\rr$ and $F = V\backslash A$.
We first prove that 
\[ \mathcal{E}_{\rr}(\onev_u - \vpi^A(\onev_u)) = \onev_u^\top \LL_{FF}^{-1}\onev_u \,.\]
This is because
\begin{align*}
& \mathcal{E}_{\rr}(\onev_u - \vpi^A(\onev_u)) \\
& = \langle \onev_u - \vpi^A(\onev_u), \LL^+ (\onev_u - \vpi^A(\onev_u)) \rangle\\
& = \bigg\langle \onev_u - \begin{pmatrix}\zerov & \zerov\\-\LL_{AF} \LL_{FF}^{-1} & \II\end{pmatrix} \onev_u, 
\LL^+ \left(\onev_u - \begin{pmatrix}\zerov & \zerov \\ -\LL_{AF}\LL_{FF}^{-1} & \II \end{pmatrix} \onev_u\right) \bigg\rangle\\
& = \bigg\langle\begin{pmatrix}\II & \zerov\\-\LL_{AF}\LL_{FF}^{-1} & \II\end{pmatrix} \left(\onev_u + \begin{pmatrix}\zerov\\\LL_{AF} \LL_{FF}^{-1} \onev_u\end{pmatrix}\right),\\
&\quad\quad\,\begin{pmatrix}\LL_{FF}^{-1} & \zerov \\ \zerov & SC(\LL,A)^+\end{pmatrix}
\begin{pmatrix}\II & \zerov\\-\LL_{AF}\LL_{FF}^{-1} & \II\end{pmatrix} \left(\onev_u + \begin{pmatrix}\zerov\\\LL_{AF}\LL_{FF}^{-1} \onev_u \end{pmatrix}\right) \bigg\rangle\\
& = \left\langle\onev_u,
\begin{pmatrix}\LL_{FF}^{-1} & \zerov \\ \zerov & SC(\LL,A)^+\end{pmatrix} \onev_u\right\rangle\\
&=\langle \onev_u, \LL_{FF}^{-1} \onev_u\rangle \,.
\end{align*}
On the other hand, note that $R_{eff}(u,A) = \hR_{eff}(u,\ha)$, where $\hR$ are the effective resistances in a graph $\hG$ that results after contracting
$A$ to a new vertex $\ha$. It is easy to see that the Laplacian of this new graph is
\begin{align*}
\widehat{\LL} = \begin{pmatrix}
\LL_{FF} & \LL_{FA} \onev \\
\onev^\top \LL_{AF} & \onev^\top \LL_{FA} \onev
\end{pmatrix}\,.
\end{align*}
We look at the system 
$\widehat{\LL} \begin{pmatrix}x\\a\end{pmatrix} = \onev_u - \onev_{\ha}$, where $a$ is a scalar. The solution is given by 
\begin{align*}
& \xx = \LL_{FF}^{-1} \left(\onev_u - a\cdot \LL_{FA} \onev\right)\,.
\end{align*}
However, as $\onev\in \ker(\widehat{\LL})$ by the fact that it is a Laplacian, we can assume that $a=0$ by shifting.
Therefore $\xx = \LL_{FF}^{-1} \onev_u$, and so we can conclude that
\begin{align*}
R_{eff}(u,A) 
& = \langle\onev_u - \onev_{\ha}, \widehat{\LL}^+ (\onev_u - \onev_{\ha})\rangle\\
& = \langle\onev_u, \LL_{FF}^{-1} \onev_u\rangle \,.
\end{align*}
So we have proved that $R_{eff}(u,A) = \mathcal{E}_{\rr}(\onev_u - \vpi^A(\onev_u))$ and we are done.
\end{proof}

Finally, the following lemma relates
the effective resistance between a vertex and a vertex set, to
the effective resistance between vertices.
\begin{lemma}
Given a graph $G(V,E)$ with resistances $\rr$, any vertex set $A\subseteq V$ and vertex $u\in V\backslash A$, we have
\[ \frac{1}{|A|} \cdot \underset{v\in A}{\min}\, R_{eff}(u,v)
\leq R_{eff}(u,A) \leq \underset{v\in A}{\min}\, R_{eff}(u,v)\,.\]
\label{lem:multi_effective_resistance}
\end{lemma}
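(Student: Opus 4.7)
The plan is to handle the two bounds separately. The upper bound $R_{eff}(u,A) \leq \min_{v \in A} R_{eff}(u,v)$ will follow immediately from the flow-side variational principle: taking $v^* = \arg\min_{v \in A} R_{eff}(u,v)$, the optimal unit flow from $u$ to $v^*$ (with energy $R_{eff}(u,v^*)$) is already a feasible unit flow from $u$ into $A$, and so Thomson's principle gives $R_{eff}(u,A) \leq R_{eff}(u,v^*)$; equivalently, contracting $A\setminus\{v^*\}$ into $v^*$ only lowers the effective resistance by Rayleigh monotonicity. The lower bound $R_{eff}(u,A) \geq \frac{1}{|A|} \min_v R_{eff}(u,v)$ is the nontrivial direction, since shorting $|A|$ vertices together could in principle drop the effective resistance substantially; my plan is to control this drop via a potential-side construction that glues together the individual optimal potentials (one per $v \in A$) using a pointwise minimum.

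For the lower bound I would work in the Dirichlet dual form, where $1/R_{eff}(u,A) = \min_{\vphi:\,\vphi(u)=1,\,\vphi|_A=0} E_{\rr}(\vphi)$ and $1/R_{eff}(u,v) = \min_{\vphi:\,\vphi(u)=1,\,\vphi(v)=0} E_{\rr}(\vphi)$. For each $v \in A$, let $\psi_v$ denote the minimizer of the $v$-problem; it is harmonic on $V\setminus\{u,v\}$, satisfies $\psi_v(u)=1$ and $\psi_v(v)=0$, and has $E_{\rr}(\psi_v) = 1/R_{eff}(u,v)$. The discrete maximum principle then forces $\psi_v \in [0,1]$ on all of $V$. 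I would then define $\psi(w) := \min_{v \in A} \psi_v(w)$, which preserves exactly the boundary data needed by the $R_{eff}(u,A)$ problem: $\psi(u)=1$ because every $\psi_v(u)=1$, and for each $w \in A$, the facts that $\psi_w(w)=0$ together with $\psi_v(w) \geq 0$ for all $v \neq w$ force $\psi(w)=0$.

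The final step is to bound $E_{\rr}(\psi)$. For any edge $(a,b)$, the elementary inequality $|\min_v x_v - \min_v y_v| \leq \max_v |x_v - y_v|$ combined with $\max_v c_v^2 \leq \sum_v c_v^2$ yields
\[ (\psi(a) - \psi(b))^2 \leq \sum_{v \in A} (\psi_v(a) - \psi_v(b))^2. \]
Dividing by $r_{(a,b)}$ and summing over edges gives $E_{\rr}(\psi) \leq \sum_{v \in A} E_{\rr}(\psi_v) = \sum_{v \in A} 1/R_{eff}(u,v) \leq |A|/R_{eff}(u,v^*)$. Combining with the feasibility of $\psi$ we obtain $1/R_{eff}(u,A) \leq E_{\rr}(\psi) \leq |A|/R_{eff}(u,v^*)$, which rearranges to the desired bound. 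The main obstacle is precisely this energy-of-the-pointwise-minimum step: although $\psi$ is no longer harmonic, its Dirichlet energy is still controlled by the sum of the $E_{\rr}(\psi_v)$'s via the edgewise inequality above, and the maximum principle's non-negativity guarantee is what ensures the pointwise minimum lands on the correct boundary values.
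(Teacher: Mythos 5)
Your proof is correct, but it takes a genuinely different route from the paper's. The paper works on the demand side: it invokes Lemma~\ref{lem:effective_hitting} to write $R_{eff}(u,A)=\|\LL^{+/2}(\onev_u-\vpi^A(\onev_u))\|_2^2$, expands $\onev_u-\vpi^A(\onev_u)$ as the convex combination $\sum_{v\in A}\pi_v^A(\onev_u)(\onev_u-\onev_v)$, argues via the maximum principle that all cross terms $\langle \onev_u-\onev_{v'},\LL^+(\onev_u-\onev_v)\rangle$ are nonnegative, and finishes with Cauchy--Schwarz applied to $\sum_v\pi_v^A(\onev_u)=1$. You instead work on the potential side: Dirichlet's principle for $C_{\mathrm{eff}}(u,A)=1/R_{eff}(u,A)$, gluing the individual harmonic potentials $\psi_v$ by a pointwise minimum (the maximum principle guaranteeing $\psi_v\in[0,1]$ so that the glued function has the right boundary data), and the edgewise bound $(\min_v x_v-\min_v y_v)^2\le\sum_v(x_v-y_v)^2$ to control the energy of the glued potential by $\sum_v 1/R_{eff}(u,v)\le |A|/\min_v R_{eff}(u,v)$. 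Both arguments are sound and yield the same constant $1/|A|$; the paper's fits more naturally into its surrounding machinery of demand projections and hitting probabilities, while yours is self-contained classical network theory (Thomson/Dirichlet duality plus a min-gluing) and has the small additional merit of explicitly proving the upper bound $R_{eff}(u,A)\le\min_{v\in A}R_{eff}(u,v)$ via Rayleigh monotonicity, which the paper's written proof leaves implicit. The only caveat worth a sentence in a write-up is connectivity: on components containing neither $u$ nor $v$ the Dirichlet minimizer is only determined up to a constant, so one should fix it inside $[0,1]$ there before taking the pointwise minimum.
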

\begin{proof}
Let $\LL$ be the Laplacian of $G$ with resistances $\rr$, and note that $R_{eff}(u,v) = \left\|\LL^{+/2} (\onev_u - \onev_v)\right\|_2^2$
and, by Lemma~\ref{lem:effective_hitting},
$R_{eff}(u,A) = \left\|\LL^{+/2} (\onev_u - \vpi^A(\onev_u))\right\|_2^2$.
Expanding the latter, we have
\begin{align*}
R_{eff}(u,A) 
& = 
\left\|\sum\limits_{v\in A} \pi_v^A(\onev_u)\cdot\LL^{+/2} (\onev_u -  \onev_v)\right\|_2^2\\
& = \sum\limits_{v\in A}\left(\pi_v^A(\onev_u)\right)^2 \left\| \LL^{+/2} (\onev_u - \onev_v)\right\|_2^2
+ \sum\limits_{v\in A}
 \sum\limits_{\substack{v'\in A\\v'\neq v}} \pi_v^A(\onev_u) \pi_{v'}^A(\onev_u) \langle \onev_u - \onev_{v'}, \LL^{+} (\onev_u - \onev_v)\rangle\,.
\end{align*}
Now, note that $\pi_v^A(\onev_u),\pi_{v'}^A(\onev_u) \geq 0$. Additionally, let 
$\vphi = \LL^{+} (\onev_u - \onev_v)$ be the potential embedding that induces a $1$-unit electrical flow from $v$ to $u$.
As the potential embedding stretches between $\phi_v$ and $\phi_u$, we have that
$\phi_{v'} \leq \phi_u$, so $\langle \onev_u - \onev_{v'}, \LL^+ (\onev_u - \onev_v)\rangle = \phi_u- \phi_{v'} \geq 0$. Therefore,
\begin{align*}
R_{eff}(u,A) 
& \geq
\sum\limits_{v\in A}\left(\pi_v^A(\onev_u)\right)^2 \left\| \LL^{+/2} (\onev_u - \onev_v)\right\|_2^2\\
& \geq \frac{1}{|A|} \sum\limits_{v\in A} \pi_v^A(\onev_u) \cdot R_{eff}(u,v)\\
& \geq \frac{1}{|A|}\underset{v\in A}{\min}\, R_{eff}(u,v)\,,
\end{align*}
where we used the Cauchy-Schwarz inequality and the fact that $\sum\limits_{v\in A} \pi_v^A(\onev_u) = 1$.
\end{proof}

\section{Interior Point Method with Dynamic Data Structures}
\label{sec:ipm}
The goal of this section is to show that, given a data structure for approximating
electrical flows in sublinear time,
we can execute the min cost flow interior point method with total runtime faster than
$\tO{m^{3/2}}$.

\subsection{LP formulation and background}
We present the interior point method setup that we will use, which is from~\cite{axiotis2020circulation}.
Our goal is to solve the following minimum cost flow
linear program:
\begin{align*}
& \min\, \left\langle \cc, \CC\xx\right\rangle\\
& \zerov \leq \ff^0 + \CC \xx \leq \uu\,,
\end{align*}
where $\ff^0$ is a flow with $\BB^\top \ff^0= \dd$ and
$\CC$ is an $m\times (m-n+1)$ matrix whose image is the set of circulations in $G$.

In order to use an interior point method,
the following log barrier objective is defined:
\begin{align}
\underset{\xx}{\min}\, F_{\mu}(\xx) = \left\langle \frac{\cc}{\mu}, \CC\xx \right\rangle -\sum\limits_{e\in E}\left(\log\left(\ff^0 + \CC\xx\right)_e
+ \log\left(\uu - (\ff^0 + \CC\xx)\right)_e
\right)\,.
\label{eq:logbarrier}
\end{align}
For any parameter $\mu > 0$,
the optimality condition of (\ref{eq:logbarrier}) is called the \emph{centrality condition} 
and is given by
\begin{align}
\CC^\top \left(\frac{\cc}{\mu} + \frac{1}{\ss^+} - \frac{1}{\ss^-}\right) = \zerov\,,
\label{eq:centrality}
\end{align}
where $\ff = \ff^0 + \CC\xx$, and
$\ss^+ = \uu - \ff$, $\ss^- = \ff$ are called the \emph{positive} and \emph{negative} slacks of $\ff$
respectively.

This leads us to the following definitions.
\begin{definition}[$\mu$-central flow]
Given a minimum cost flow instance with costs $\cc$, demands $\dd$ and capacities $\uu$,
as well as a parameter $\mu > 0$, 
we will say that a flow $\ff$ (and its corresponding slacks $\ss$ and resistances $\rr$)
is $\mu$-central if 
$\BB^\top \ff = \dd$, $\ss > \zerov$, and
it satisfies the centrality condition (\ref{eq:centrality}), i.e.
\begin{align*}
\CC^\top \left(\frac{\cc}{\mu} + \frac{1}{\ss^+} - \frac{1}{\ss^-}\right) = \zerov\,.
\end{align*}
Additionally, we will denote such flow by $\ff(\mu)$ (and its corresponding slacks 
and resistances by $\ss(\mu)$ and $\rr(\mu)$, respectively).
\end{definition}
\begin{definition}[$(\mu,\alpha)$-central flow]
Given parameters $\mu > 0$ and $\alpha\geq 1$,
we will say that a flow $\ff$ with resistances $\rr > \zerov$ 
is $(\mu,\alpha)$-central
if $\rr \approx_{\alpha} \rr(\mu)$. We will also call its corresponding slacks $\ss$ and
resistances $\rr$ $(\mu,\alpha)$-central.
\end{definition}

Given a $\mu$-central flow $\ff$
and some step size $\delta > 0$, the standard (Newton) step 
to obtain an approximately $\mu/(1+\delta)$-central flow $\ff' = \ff + \tff$
is given by
\begin{align*}
\tff = 
&
-\frac{\delta}{\mu} \frac{\cc}{\rr} + \frac{\delta}{\mu} \RR^{-1}\BB\LL^+ \BB^\top \frac{\cc}{\rr}\\
& = 
\delta \frac{\frac{1}{\ss^+} - \frac{1}{\ss^-}}{\rr} 
- \delta \RR^{-1}\BB\LL^+ \BB^\top \frac{\frac{1}{\ss^+} - \frac{1}{\ss^-}}{\rr}\\
& = 
\delta \cdot g(\ss) 
- \delta \RR^{-1}\BB\LL^+ \BB^\top g(\ss)
\end{align*}
where $\rr = \frac{1}{(\ss^+)^2} + \frac{1}{(\ss^-)^2}$ and 
we have denoted $g(\ss) = \frac{\frac{1}{\ss^+} - \frac{1}{\ss^-}}{\rr}$.

\begin{fact}
Using known scaling arguments, can assume that costs and capacities
are bounded by $\mathrm{poly}(m)$, while only incurring an extra logarithmic dependence in the largest network parameter~\cite{gabow1983scaling}.
\end{fact}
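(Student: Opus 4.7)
The plan is to invoke the bit-scaling framework of Gabow~\cite{gabow1983scaling}, applied separately to the cost vector $\cc$ and the capacity vector $\uu$. I would reduce the original instance, whose parameters are bounded by $U+W$, to a sequence of $O(\log(U+W))$ min cost flow subinstances each with parameters bounded by $\mathrm{poly}(m)$, and call the base algorithm on each of them. The target statement then follows by taking the sum of the per-subinstance runtimes.

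First I would handle cost scaling. Write each entry of $\cc$ in binary and partition the $O(\log W)$ bits into $O(\log W/\log m)$ blocks of $\Theta(\log m)$ bits each, processed from most significant to least. At stage $i$ I would maintain an optimal flow $\ff^{(i)}$ for the problem where costs are truncated to the top $i$ blocks; to pass from $\ff^{(i)}$ to $\ff^{(i+1)}$, I would solve a min cost circulation problem on the residual graph of $\ff^{(i)}$ whose edge costs are precisely the bits revealed in block $i+1$. By construction these residual costs fit in $\Theta(\log m)$ bits, and hence are $\mathrm{poly}(m)$. Standard optimality-gap arguments show that the required correction has bounded total cost contribution, so the demands and capacities of the residual instance can also be rounded to $\mathrm{poly}(m)$ without loss.

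Next I would apply the same bit-decomposition to $\uu$: initialize with capacities rounded to their top $\Theta(\log m)$ bits, then successively reveal lower bits and repair optimality by solving a min cost flow problem on a residual graph whose capacities are $\mathrm{poly}(m)$. Interleaving the two scalings gives $O(\log(U+W))$ invocations, each on an instance with $\mathrm{poly}(m)$-bounded cost and capacity, so the aggregate runtime is just the base runtime on such an instance multiplied by $O(\log(U+W))$.

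The main point requiring care is verifying that at every stage the residual graph has genuinely $\mathrm{poly}(m)$-bounded parameters; this is where the argument leans on Gabow's cost-scaling analysis, which bounds the correction at each refinement step in terms of the per-bit change and the combinatorial structure of the graph rather than the full magnitude $U+W$. Once this verification is quoted, no additional computation is needed and the fact follows.
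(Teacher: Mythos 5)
The paper offers no proof of this fact at all---it is stated as known and justified only by the citation to Gabow's scaling paper---and your proposal is essentially a reconstruction of that standard bit-scaling argument, so it matches the intended approach. One small imprecision: in the cost-scaling phase the residual instance's costs are not ``precisely the bits revealed in block $i+1$'' but rather the reduced costs with respect to the (scaled) optimal potentials of stage $i$ plus the newly revealed bits; it is the nonnegativity of the old reduced costs that makes the per-edge non-optimality $\mathrm{poly}(m)$, which is the property Gabow's analysis actually uses.
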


We also use the fact that the resistances in the interior point
method are never too large, which is proved in
Appendix~\ref{sec:aux}.
\begin{fact}
For any $\mu\in(1/\mathrm{poly}(m),\mathrm{poly}(m))$, we have 
$\left\|\rr(\mu)\right\|_\infty \leq m^{\tO{\log m}}$.
\end{fact}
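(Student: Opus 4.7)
The goal reduces to lower-bounding each slack $s_e^\pm(\mu)$ in the $\mu$-central flow by $m^{-\tO{\log m}}$, since $r_e = 1/(s_e^+)^2 + 1/(s_e^-)^2$ then immediately gives the claimed bound on $\|\rr(\mu)\|_\infty$. My main tool is the log-barrier optimality characterization: $\ff(\mu)$ is the unique minimizer of $F_\mu$ from~\eqref{eq:logbarrier}, so comparing its barrier value against that of a carefully chosen strictly feasible reference flow $\ff^0$ prevents the log-slacks of $\ff(\mu)$ from being too negative.

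\textbf{Reference flow.} After the scaling reduction that ensures $\|\cc\|_\infty, \|\uu\|_\infty \leq \poly(m)$, I would fix an integer optimal flow $\ff^*$ (which exists by total unimodularity of the min cost flow LP, with all slacks in $\{0\}\cup[1,U]$), and some strictly feasible $\ff^{\mathrm{int}}$ whose slacks lie in $[1/\poly(m),\poly(m)]$, obtained by routing a small edge-saturating circulation through any integer feasible flow. Setting $\ff^0 = (1-\mu)\ff^* + \mu\,\ff^{\mathrm{int}}$, one gets a feasible flow with slacks $s_e^{0,\pm} \geq \mu/\poly(m)$ and cost $\cc^\top\ff^0 \leq \cc^\top\ff^* + \mu\cdot\poly(m)$.

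\textbf{Barrier inequality.} Since $F_\mu(\ff(\mu)) \leq F_\mu(\ff^0)$ and $\cc^\top\ff(\mu) \geq \cc^\top\ff^*$ by optimality of $\ff^*$, expanding and using $\mu \in (1/\poly(m),\poly(m))$ gives
\[
-\sum_e \left[\log s_e^+(\mu) + \log s_e^-(\mu)\right] \leq \frac{\cc^\top(\ff^0-\ff(\mu))}{\mu} + 2m\log\bigl(\poly(m)/\mu\bigr) \leq \poly(m).
\]
Isolating a single term on the left, while using the trivial bound $-\log s_e^\pm(\mu) \geq -\log u_e = -O(\log m)$ on the remaining $2m-1$ terms, yields $-\log s_e^\pm(\mu) \leq \poly(m)$, and hence $r_e \leq \exp(\poly(m))$.

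\textbf{Sharpening and main obstacle.} The naive barrier bound only produces $r_e \leq \exp(\poly(m))$, which is exponentially weaker than the stated $m^{\tO{\log m}} = \exp(\tO{\log^2 m})$. To bridge this gap, I would obtain a per-edge bound via the central-path complementary slackness identities $\mu = s_e^+ \psi_e = s_e^- \rho_e$, where $\vpsi(\mu), \vrho(\mu) \geq \zerov$ are the dual slack multipliers: if $\|\vpsi(\mu)\|_\infty, \|\vrho(\mu)\|_\infty \leq m^{\tO{\log m}}$ uniformly for $\mu \in (1/\poly(m), \poly(m))$, then $s_e^\pm(\mu) \geq \mu \cdot m^{-\tO{\log m}} \geq m^{-\tO{\log m}}$ directly. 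For min cost flow with integer data of bit-complexity $O(\log m)$, the dual optimum $\yy^*$ has $\poly(m)$-bounded entries (vertex potentials of cost $\leq nW$). The main technical obstacle is the quasi-polynomial stability estimate for the central-path dual trajectory $\yy(\mu) = \mu\vphi(\mu)$, showing that it stays within $m^{\tO{\log m}}$ of $\yy^*$ throughout the full moderate-$\mu$ range; this can be handled by a standard LP sensitivity / Hoffman-type argument, with care to cover both the near-optimality and moderate-$\mu$ regimes.
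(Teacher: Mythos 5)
Your argument has a genuine gap at exactly the point where the quasi-polynomial bound has to be produced. The barrier-comparison step is fine as far as it goes, but as you concede it only yields $r_e \leq \exp(\poly(m))$. The entire content of the lemma is then pushed into the final claim that the central-path dual trajectory stays within $m^{\tO{\log m}}$ of an optimal dual throughout $\mu \in (1/\poly(m),\poly(m))$, which you assert "can be handled by a standard LP sensitivity / Hoffman-type argument." That is not a proof but a restatement of the problem in dual form: Hoffman-type bounds control the \emph{optimal} face, whereas here you need a uniform bound on the \emph{interior} trajectory $\yy(\mu)$ for all moderate $\mu$, and no standard sensitivity argument gives a quasi-polynomial (let alone polynomial) bound on that without further work. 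If such an argument did go through, it would in fact give $\|\rr(\mu)\|_\infty \leq \poly(m)$, which is strictly stronger than the stated bound --- a sign that the intended mechanism is elsewhere.

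The paper's proof is much more concrete and explains where the exponent $\tO{\log m}$ comes from. First, it exhibits an explicit approximately central point at $\mu_0 = \Theta(\|\cc\|_2) = \Theta(\poly(m))$: the flow $\ff = \uu/2$ has residual norm at most $1/10$, which forces $\min_e\{s_e(\mu_0)^+, s_e(\mu_0)^-\} \geq 1/4$ and hence $\|\rr(\mu_0)\|_\infty \leq O(1)$. Second, it invokes the central-path stability bound (Lemma~\ref{lem:resistance_stability2}): decreasing $\mu$ by a factor $(1+1/\sqrt{m})^{\sqrt{m}/10}$ (a constant factor) changes every slack by at most a multiplicative $3k^2 = O(m)$ with $k = \sqrt{m}/10$, hence every resistance by at most $O(m^2)$. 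Covering the range from $\poly(m)$ down to $1/\poly(m)$ takes $O(\log m)$ such constant-factor steps, each losing a factor $m^{O(1)}$, which compounds to exactly $m^{\tO{\log m}}$. If you want to salvage your approach, you would need to either prove the dual boundedness claim from scratch (nontrivial) or fall back on a telescoping stability argument of this kind --- which is the idea your proposal is missing.
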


\subsection{Making progress with approximate electrical flows}

The following lemma shows that we can make $k$ 
steps of the interior point method by 
computing $O(k^4)$ $(1+O(k^{-6}))$-\emph{approximate} electrical flows.
The proof is essentially the same as in~\cite{gao2021fully}, but we provide it for
completeness in Appendix~\ref{sec:proof_lem_approx_central}.

\begin{lemma}
Let $\ff^1,\dots,\ff^{T+1}$ be flows with
slacks $\ss^t$ and resistances $\rr^t$ for $t\in[T+1]$, 
where $T = \frac{k}{\e}$ for some $k\leq \sqrt{m}/10$
and $\e = 10^{-5}k^{-3}$, %
such that
\begin{itemize}
\item $\ff^1$ is $(\mu,1+\es/8)$-central for $\es = 10^{-5} k^{-3}$ %
\item {For all $t\in[T]$ and $e\in E$,
$f_e^{t+1} = \begin{cases}
f_e(\mu) + \e \sum\limits_{i=1}^{t} \tf_e^i & \text{if $\exists i\in[t]:\tf_e^{i} \neq 0$}\\
f_e^1 & \text{otherwise}
\end{cases}$, where
\[ 
\tff^{*t} = \delta g(\ss^t) - \delta (\RR^t)^{-1} \BB\left(\BB^\top (\RR^t)^{-1} \BB\right)^+ \BB^\top g(\ss^t) \] 
for $\delta = \frac{1}{\sqrt{m}}$ and
\[ \left\|\sqrt{\rr^t}\left(\tff^{*t} - \tff^{t}\right) \right\|_\infty \leq \eps \] 
for $\eps = 10^{-6} k^{-6}$.
}
\end{itemize}
Then, setting $\e = \es = 10^{-5} k^{-3}$ and
$\eps = 10^{-6} k^{-6}$
we get that
$\ss^{T+1} \approx_{1.1} \ss\left(\mu/(1+\e\delta)^{k\e^{-1}}\right)$.
\label{lem:approx_central}
\end{lemma}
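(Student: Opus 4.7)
The plan is to prove this by induction on $t\in\{1,\dots,T+1\}$, maintaining the invariant that $\ss^t$ is $(\mu_t, 1+\eta_t)$-central with $\mu_t := \mu/(1+\e\delta)^{t-1}$ and $\eta_t$ a slowly growing error parameter. The base case $t=1$ holds by hypothesis with $\eta_1 = \es/8$. For the inductive step, we decompose the per-iteration change $\ff^{t+1}-\ff^t$ into (i) a scaled exact Newton step $\e\,\tff^{*t}$ that advances the central target from $\mu_t$ to $\mu_{t+1}$, and (ii) an additive perturbation $\e(\tff^t - \tff^{*t})$ coming from the approximate electrical solve. The cost term $\cc/\mu$ specific to minimum-cost flow is absorbed into $g(\ss^t)$ via the centrality condition, so it does not introduce a new error channel in the geometric analysis.

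For (i), a standard short-step log-barrier analysis (self-concordance) shows that one exact Newton step of size $\e\delta$ starting from a $(\mu_t, 1+\eta_t)$-central iterate lands at a $(\mu_{t+1}, 1+\eta_t + O((\e\delta)^2))$-central iterate. With $\delta = 1/\sqrt{m}$ and $T = k/\e$ iterations, the accumulated curvature error is $O(T\e^2\delta^2) = O(k\e/m)$, which is negligible. For (ii), the hypothesis $\|\sqrt{\rr^t}(\tff^{*t}-\tff^t)\|_\infty \le \eps$ is precisely a bound on the perturbation in the log-barrier Hessian norm, so each iteration incurs multiplicative slack distortion $O(\eps)$; summing over $T$ iterations gives total distortion $T\eps = 10^{-1}k^{-2}$. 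The piecewise rule defining $f_e^{t+1}$ either as $f_e(\mu) + \e\sum_{i=1}^t \tf_e^i$ or as $f_e^1$ is essentially a rewriting of this cumulative sum, and one needs to check that on edges never touched by any $\tff^i$, the pinned value $f_e^1$ stays close to the drifting central value $f_e(\mu_t)$. This follows from central-path stability: over the full range $\mu \to \mu_{T+1}$ the central resistances drift by a factor $(1+\e\delta)^T = 1 + O(k/\sqrt{m})$, which is $1+o(1)$ under the constraint $k \le \sqrt{m}/10$.

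The main obstacle is ensuring that the three error sources---Newton curvature, solver approximation $\eps$, and lazy bookkeeping---do not compound beyond the $\approx_{1.1}$ window. The parameter choices $\e = \es = 10^{-5}k^{-3}$ and $\eps = 10^{-6}k^{-6}$ are calibrated so that each source contributes $o(k^{-1})$ to the multiplicative slack gap, so their sum stays comfortably under $0.1$. Once these per-source bounds are in place, translating additive flow error into a multiplicative resistance approximation is routine via the identity $r_e = 1/(s_e^+)^2 + 1/(s_e^-)^2$, and the induction closes with $\eta_{T+1} \ll 0.1$, delivering $\ss^{T+1}\approx_{1.1}\ss(\mu_{T+1})$ as claimed. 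The argument then mirrors closely the corresponding one in \cite{gao2021fully}.
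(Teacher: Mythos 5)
Your high-level shape (exact step plus perturbation, accumulated over $T$ iterations) is the right one, but the proposal has three concrete gaps, and the middle one is fatal to the error accounting.

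First, the paper does not run a direct $\ell_\infty$ induction on ``$\ss^t$ is $(\mu_t,1+\eta_t)$-central.'' It introduces an auxiliary exact-step trajectory $\ff^{*1}=\ff(\mu)$, $\ff^{*t+1}=\ff^{*t}+\e\tff^{*t}$ (with steps computed from the \emph{approximate} slacks $\ss^t$), and controls two different quantities: the dual residual $\|\CC^\top\hh^t\|_{\oHH^+}$ of $\ff^{*t}$ in the energy norm (Lemma~\ref{lem:ipm_step}, then Lemma~\ref{lem:small_residual_linfty} to convert back to $\ell_\infty$), and separately $\|\sqrt{\rr^{*t}}(\ff^t-\ff^{*t})\|_\infty$ via a first-failure-time contradiction. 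This indirection is not cosmetic: $\ell_\infty$ closeness of slacks is not a self-improving potential under Newton steps (a localized slack error perturbs the resistances, and the electrical projection can spread it), whereas the $\ell_2$ residual norm is exactly the quantity for which the projection is non-expansive. Your claim that one exact step degrades $\ell_\infty$ centrality by only $O((\e\delta)^2)$ is asserted, not proved, and it is also quantitatively wrong: the per-step residual growth in Lemma~\ref{lem:ipm_step} is $5\|\rr^{*t}/\orr\|_\infty^{1/2}\es\e+2\|\rr^{*t+1}/\orr\|_\infty^{1/2}\e^2$, i.e.\ there is a cross term $\e\es$ (from taking the step at the wrong resistances) and the whole thing is amplified by a resistance-ratio factor of order $k^2$ coming from central-path drift over the window (Lemma~\ref{lem:resistance_stability2}). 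Summing over $T=k/\e$ steps gives $O(k^3(\e+\es))$, which is precisely why $\e,\es\sim 10^{-5}k^{-3}$ are needed. Your accounting $O(T\e^2\delta^2)=O(k\e/m)$ misses both the cross term and the $k^2$ amplification, so it cannot explain (and would not force) the stated parameter choices.

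Second, your treatment of the pinned edges is based on a false premise. You argue that $f_e^1$ stays close to the drifting central value because the central resistances move by only $(1+\e\delta)^T=1+O(k/\sqrt m)$; in fact the central \emph{slacks} can change by a multiplicative factor up to $3k^2$ over this range (Lemma~\ref{lem:resistance_stability2}), so the central path does move substantially and this argument collapses. The correct reason pinned edges are safe is different: if $\tf_e^i=0$ for all $i\le t$, the approximation guarantee forces $\sqrt{r_e^t}|\tf_e^{*t}|\le\eps$ for every such $t$, so the \emph{ideal trajectory} $\ff^{*}$ itself moves by at most $\e T\eps\le k\eps$ on that edge; closeness to the central path is then inherited from the residual argument for $\ff^*$, not from any stability of $f_e(\mu_t)$ itself. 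To repair the proposal you would need to (i) replace the $\ell_\infty$ induction with the residual-norm recursion on an exact-step comparison sequence, (ii) redo the error budget with the $\e\es$ cross term and the $k^2$ resistance-ratio amplification, and (iii) fix the pinned-edge argument as above.
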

From now and for the rest of Section~\ref{sec:ipm}
we fix the values of $\e, \es, \eps$ based on this lemma.
Using this lemma together with the following recentering procedure also used in~\cite{gao2021fully}, we can 
exactly compute a $\left({\mu}/{(1+\e/\sqrt{m})^{k\e^{-1}}}\right)$-central flow.
\begin{lemma}
Given a flow $\ff$ with slacks $\ss$
such that $\ss\approx_{1.1} \ss(\mu)$ for some $\mu > 0$,
we can compute $\ff(\mu)$ in $\tO{m}$.
\label{lem:recenter}
\end{lemma}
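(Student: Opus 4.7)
I would prove this by running a standard Newton's-method recentering: starting from $\ff$, repeatedly apply the Newton step for the log-barrier objective $F_{\mu}$ with $\mu$ held \emph{fixed} (unlike the path-following steps described in the preliminaries, which advance $\mu$). Each such step reduces to a single electrical-flow computation, executable in $\tO{m}$ time using a nearly-linear-time Laplacian solver; it will then suffice to show that $O(\log\log m)$ iterations drive the iterate to $\ff(\mu)$ with precision adequate for the rest of the algorithm.

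Given current slacks $\ss$ with induced resistances $\rr = 1/(\ss^+)^2 + 1/(\ss^-)^2$ and Laplacian $\LL = \BB^\top \RR^{-1}\BB$, the Newton direction for $F_{\mu}$ is the circulation
\[
\tff \;=\; -\RR^{-1}\!\left(\frac{\cc}{\mu} + \frac{1}{\ss^+} - \frac{1}{\ss^-}\right) + \RR^{-1}\BB\LL^+\BB^\top \RR^{-1}\!\left(\frac{\cc}{\mu} + \frac{1}{\ss^+} - \frac{1}{\ss^-}\right),
\]
namely the $\RR$-orthogonal projection of $-\RR^{-1}(\cc/\mu + 1/\ss^+ - 1/\ss^-)$ onto the circulation space $\mathrm{Im}(\CC)$. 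This is computable to relative accuracy $m^{-\Theta(\log m)}$ in $\tO{m}$ time.

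Next, I would invoke the classical quadratic-convergence theory for Newton's method on self-concordant barriers. Defining the Newton decrement $\lambda(\ff) = \sqrt{\tff^\top \RR\, \tff}$, a short calculation shows that the hypothesis $\ss \approx_{1.1} \ss(\mu)$ implies $\lambda(\ff) \leq c_0$ for an absolute constant $c_0 < 1/4$, placing us inside the quadratic-convergence basin; hence $\lambda(\ff + \tff) \leq 2\lambda(\ff)^2$, and after $O(\log\log m)$ iterations we have $\lambda \leq m^{-\Theta(\log m)}$. Combined with the resistance bound $\|\rr(\mu)\|_\infty \leq m^{\tO{\log m}}$ noted in the preliminaries, this means the iterate matches $\ff(\mu)$ to precision $m^{-\tO{\log m}}$, which is more than adequate given that subsequent stages of the algorithm tolerate $1/\poly(m)$ slack errors. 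The total running time is $O(\log\log m)\cdot \tO{m} = \tO{m}$.

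The main obstacle is verifying that the $\approx_{1.1}$ slack bound indeed produces a Newton decrement below the quadratic-convergence threshold. This is a standard but somewhat delicate self-concordance computation. The key observation that simplifies it is that $F_{\mu}$ differs from the pure box-barrier $-\sum_e \log(\ff_e) - \sum_e \log(\uu_e - \ff_e)$ only by the linear term $\langle \cc/\mu, \CC\xx\rangle$, so the Hessian and hence the self-concordance parameters are those of the pure box-barrier. Under these, the coordinate-wise bound $|s_e^{\pm}/s_e^{\pm}(\mu) - 1| \leq 0.1$ translates into an $O(1)$ bound on $\|\ff - \ff(\mu)\|_{\nabla^2 F_{\mu}(\ff(\mu))}$, which is well within the standard quadratic-convergence region, and the rest follows from Nesterov--Nemirovski-type analysis applied to one Newton iterate at a time.
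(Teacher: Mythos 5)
Your algorithm (fixed-$\mu$ Newton steps, each a single Laplacian solve) is the right one, but the convergence analysis has a genuine gap: the hypothesis $\ss\approx_{1.1}\ss(\mu)$ is only a \emph{coordinate-wise} multiplicative bound, and it does not imply that the Newton decrement is an absolute constant. Writing $\gg = \frac{\cc}{\mu}+\frac{1}{\ss^+}-\frac{1}{\ss^-}$ and using centrality of $\ff(\mu)$, one gets $\CC^\top\gg = \CC^\top\bigl(\RR'(\ff-\ff(\mu))\bigr)$ with $r'_e = \frac{1}{s_e^+s_e(\mu)^+}+\frac{1}{s_e^-s_e(\mu)^-}$, so $\lambda(\ff)\asymp\|\ff-\ff(\mu)\|_{\RR}$; each coordinate contributes up to $r_e(f_e-f_e(\mu))^2\approx 0.01$, and summing over $m$ edges gives $\lambda(\ff)=\Theta(\sqrt{m})$ in the worst case. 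The same issue kills the claim that $\|\ff-\ff(\mu)\|_{\nabla^2F_\mu(\ff(\mu))}=O(1)$. So you are not in the quadratic-convergence basin, the bound $\lambda(\ff+\tff)\le 2\lambda(\ff)^2$ does not apply, and $O(\log\log m)$ iterations cannot be justified this way. (Damped Newton from $\lambda=\Theta(\sqrt m)$ with objective gap $\Theta(m)$ would need $\Theta(\sqrt m)$ iterations, which is too slow.)

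The mechanism that actually makes this work is Hessian stability rather than a small gradient. From $\CC^\top\RR\tff=-\CC^\top\gg=-\CC^\top\RR'(\ff-\ff(\mu))$ one gets, for $\ff'=\ff+\tff$,
\begin{align*}
\CC^\top\RR\left(\ff'-\ff(\mu)\right)=\CC^\top\left(\RR-\RR'\right)\left(\ff-\ff(\mu)\right)\,,
\end{align*}
and since $\rr'\approx_{1+O(0.1)}\rr$ (both are within constants of $\rr(\mu)$), testing against the circulation $\ff'-\ff(\mu)$ yields $\|\ff'-\ff(\mu)\|_{\RR}\le\|\II-\RR^{-1}\RR'\|_\infty\cdot\|\ff-\ff(\mu)\|_{\RR}\le O(0.1)\cdot\|\ff-\ff(\mu)\|_{\RR}$. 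This is a \emph{linear} contraction of the energy-norm distance, starting from $O(\sqrt m)$; the $\ell_\infty$ closeness is preserved along the way (so the contraction factor persists), and once the energy distance drops below a constant you may switch to the standard quadratic-convergence argument. This gives $O(\log(1/\epsilon))=\tO{\log m}$ Laplacian solves and hence $\tO{m}$ total time, matching the lemma; you should replace the "constant Newton decrement" step with this contraction argument.
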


\subsection{The \textsc{Locator} data structure}
\label{sec:locator_def}

From the previous lemma it becomes obvious that the only thing left is to maintain
in sublinear time an approximation to
\[ \delta g(\ss^t) - \delta (\RR^t)^{-1} \BB(\BB^\top (\RR^t)^{-1} \BB)^+ \BB^\top g(\ss^t)\,. \] 
for $\delta = 1 / \sqrt{m}$.
This is the job of the $(\alpha,\beta,\eps)$-\textsc{Locator}
data structure, which computes all the entries of this vector that have magnitude $\geq \eps$.
We note that the guarantees of this data structure
are similar to the ones in~\cite{gao2021fully}, but our locator requires 
an extra parameter $\alpha$ which is a measure of how much resistances
can deviate before a full recomputation has to be made.

\begin{definition}[$(\alpha,\beta,\eps)$-\textsc{Locator}]
An $(\alpha,\beta,\eps)$-\textsc{Locator} is a data structure that maintains valid slacks $\ss$
and resistances $\rr$, and can support the following operations against oblivious
adversaries with high probability:
\begin{itemize}
\item{$\textsc{Initialize}(\ff)$:
Set $\ss^+ = \uu-\ff$, $\ss^-=\ff$,
$\rr = \frac{1}{(\ss^+)^2} + \frac{1}{(\ss^-)^2}$.
}
\item{$\textsc{Update}(e,\ff)$: 
Set $s_e^+ = u_e-f_e$, $s_e^-=f_e$,
$r_e = \frac{1}{(s_e^+)^2} + \frac{1}{(s_e^-)^2}$.
Works under the condition 
that 
\[ r_e^{\max} / \alpha \leq r_e \leq \alpha \cdot r_e^{\min}\,, \]
where $r_e^{\max}$ and $r_e^{\min}$ are the maximum and minimum
resistance values that edge $e$ has had since the last call to 
$\textsc{BatchUpdate}$.}
\item{$\textsc{BatchUpdate}(Z,\ff)$: 
Set $s_e^+ = u_e - f_e, s_e^- = f_e, r_e = \frac{1}{(s_e^+)^2} + \frac{1}{(s_e^-)^2}$ for all $e \in Z$.

}
\item{$\textsc{Solve}()$:
Let
\begin{align}
\tff^* = \delta g(\ss) - \delta \RR^{-1} \BB(\BB^\top \RR^{-1} \BB)^+ \BB^\top g(\ss)\,,
\label{def:ele}
\end{align}
where $\delta = \frac{1}{\sqrt{m}}$.
Returns an edge set $Z$ of size $\tO{\eps^{-2}}$ that with high probability contains
all $e$ such that $\sqrt{r_e} \left|\tf_e^*\right| \geq \eps$.
}
\end{itemize}
The data structure works as long as
the total number of calls to
$\textsc{Update}$, plus the sum of $|Z|$ for all
calls to $\textsc{BatchUpdate}$ is $O(\beta m)$.
\label{def:locator}
\end{definition}

In Section~\ref{sec:locator} we will prove the following lemma, which constructs an
$(\alpha,\beta,\eps)$-\textsc{Locator} and outlines its runtime guarantees:
\begin{lemma}[Efficient $(\alpha,\beta,\eps)$-\textsc{Locator}]
For any graph $G(V,E)$ and
parameters $\alpha \geq 1$, $\beta \in(0,1)$, $\eps \geq \tOm{\beta^{-2} m^{-1/2}}$,
and 
$\heps \in \left(\tOm{\beta^{-2} m^{-1/2}},\eps\right)$, 
there exists an $(\alpha,\beta,\eps)$-\textsc{Locator} 
for $G$ with the following runtimes per operation:
\begin{itemize}
\item {$\textsc{Initialize}(\ff)$:
$\tO{m \cdot \left(\heps^{-4} \beta^{-8}
+ \heps^{-2} \eps^{-2} \alpha^2 \beta^{-4}\right)}$.}
\item{$\textsc{Update}(e,\ff)$: 
$\tO{m \cdot \frac{\heps \alpha^{1/2}}{\eps^{3}} + 
\heps^{-4} \eps^{-2} \beta^{-8} + 
\heps^{-2} \eps^{-4} \alpha^2 \beta^{-6}}$ amortized.
}
\item{$\textsc{BatchUpdate}(Z,\ff)$: 
$\tO{m \cdot \frac{1}{\eps^2} + |Z|\cdot \frac{1}{\eps^2 \beta^2} }$.
}
\item{$\textsc{Solve}()$:
$\tO{\beta m \cdot \frac{1}{\eps^2}}$.
}
\end{itemize}
\label{lem:locator}
\end{lemma}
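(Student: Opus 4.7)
The plan is to instantiate the vertex-sparsifier framework sketched in the introduction concretely for the minimum cost flow Newton step. Against the evolving slacks $\ss$ and resistances $\rr$, I maintain: a $\beta$-congestion reduction subset $C\subseteq V$ of size $\tO{\beta m}$; a dynamic Schur complement for $SC(G,C)$ in the style of \cite{durfee2019fully, gao2021fully}; an ``anchor'' demand projection $\ovpi := \vpi^C(g(\ss))$ refreshed \emph{only} during $\textsc{BatchUpdate}$; the current set $E_{\mathrm{imp}}$ of $\heps$-important edges; and, for each of $\tO{\eps^{-2}}$ random sign vectors $\qq$ supported on $E_{\mathrm{imp}}$, an energy-accurate approximation $\tvpi^C(\BB^\top\qq/\sqrt{\rr})$ of the corresponding demand projection. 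By the localization Lemma~\ref{lem:important_edges}, any edge $e$ with $\sqrt{r_e}|\tf^*_e|\geq\eps$ lies in $E_{\mathrm{imp}}$ (after folding an $O(\heps)$ slack into $\eps$), so restricting all sketches to $E_{\mathrm{imp}}$ is lossless.

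For \textsc{Solve} I apply identity~(\ref{eq:identity}) to the Newton step $\tff^* = \delta g(\ss) - \delta\RR^{-1}\BB\LL^+\BB^\top g(\ss)$, rewriting
\[
\bigl\langle \qq,\sqrt{\rr}\,\tff^*\bigr\rangle = \delta\bigl\langle\qq,\sqrt{\rr}\,g(\ss)\bigr\rangle - \delta\bigl\langle \vpi^C(\BB^\top \qq/\sqrt{\rr}),\; SC(G,C)^+\vpi^C(g(\ss))\bigr\rangle + \eta_{\qq}\,,
\]
where $\eta_{\qq}$ captures the non-$C$-supported part of $g(\ss)$ and, by Lemma~\ref{lem:non-projected-demand-contrib}, has magnitude $\delta\cdot\tO{\beta^{-2}}$ per edge and can be folded into the $\eps$ tolerance. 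I substitute the maintained approximations $\tvpi^C(\cdot)$ and $\ovpi$ for the exact projections, evaluate the $\tO{\eps^{-2}}$ resulting inner products, and feed them to the standard linear-sketch decoder to recover all edges with large $|\sqrt{r_e}\tf^*_e|$. The per-sketch approximation error is controlled by $\sqrt{\mathcal{E}_{\rr}(\tvpi^C(\cdot)-\vpi^C(\cdot))}\cdot\sqrt{E_{\rr}(SC(G,C)^+\ovpi)}$; the second factor is $O(1)$ by Lemma~\ref{lem:sc-energy-bd} and the central-path normalization $E_{\rr}(\cdot)\leq 1$, so it suffices to keep the energy error of each projection approximation below $\heps$. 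Each inner product costs $\tO{|C|} = \tO{\beta m}$ via the Schur-complement data structure, yielding the $\tO{\beta m/\eps^2}$ \textsc{Solve} bound.

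\textsc{Initialize} and \textsc{BatchUpdate} perform full rebuilds: fix $C$, build $SC(G,C)$, compute $\ovpi$ via one Laplacian solve, construct $E_{\mathrm{imp}}$ from $O(1)$-approximate effective resistances (using Lemma~\ref{lem:multi_effective_resistance} to reduce $R_{eff}(C,e)$ to vertex-to-vertex effective resistances), draw fresh sign vectors $\qq$ \emph{after} $\ovpi$ is fixed so that the fixed-potential error bound~(\ref{eq:potential_error}) applies conditionally on $\ovpi$, and build each $\tvpi^C(\BB^\top\qq/\sqrt{\rr})$ by random-walk simulation whose truncation is controlled by the new hitting-probability bound in Lemma~\ref{st_projection1}. \textsc{Update}$(e,\ff)$ applies one resistance change within its $\alpha$-window: it updates the Schur complement locally in $\tO{1/\beta^2}$ work, repairs each of the $\tO{\eps^{-2}}$ sketch projections along the sampled random walks that traversed $e$, and flips $e$ in and out of $E_{\mathrm{imp}}$ lazily only when its importance status truly changes (which is rare, by resistance stability).

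The main obstacle is proving that the cumulative energy error in $\ovpi$ and in each $\tvpi^C(\cdot)$ remains below $\heps$ across the full $O(\beta m)$ operations between batch rebuilds, despite (i) resistances drifting by factor $\alpha$, (ii) $E_{\mathrm{imp}}$ itself evolving, and (iii) $\ovpi$ being held fixed while $\rr$ moves. For $\ovpi$ this is exactly what Lemma~\ref{lem:old_projection_approximate} provides, and its $\|\rr^T/\rr^i\|_\infty^{1/2}$ factor is precisely what produces the $\alpha^{1/2}$ and $\alpha^2$ dependencies in the \textsc{Update} runtime. For the per-sketch projections, the analogous resistance-stability argument together with amortization of occasional full sketch rebuilds against \textsc{BatchUpdate} delivers the $\tO{m\heps\alpha^{1/2}/\eps^3}$ amortized term. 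A subtle but essential point is that sketch randomness must be drawn only \emph{after} $\ovpi$ is recomputed, so that we need only the attainable fixed-potential error guarantee~(\ref{eq:potential_error}) rather than the pointwise energy bound~(\ref{eq:energy_error}).
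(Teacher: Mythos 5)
Your proposal matches the paper's construction essentially point for point: the same architecture (congestion reduction subset, \textsc{DynamicSC}, $\tO{\eps^{-2}}$ demand projectors restricted to important edges, an anchor projection $\vpi_{old}$ recomputed exactly only at \textsc{BatchUpdate}), the same key lemmas (Lemmas~\ref{lem:non-projected-demand-contrib}, \ref{st_projection1_energy}, \ref{lem:important_edges}, \ref{lem:old_projection_approximate}), and crucially the same resolution of the main subtlety, namely replacing the unattainable energy-error guarantee~(\ref{eq:energy_error}) by the fixed-potential guarantee~(\ref{eq:potential_error}) against the deterministic vector $\tSC^+\vpi_{old}$. The only imprecisions are cosmetic (e.g., $E_{\rr}(SC^+\vpi_{old})=O(1)$ holds only after the $\delta=1/\sqrt{m}$ scaling, and \textsc{Update} handles a resistance change by first promoting both endpoints of $e$ to terminals rather than by re-weighting walks through $e$), and they do not affect the argument.
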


Note that even though a $\textsc{Locator}$ computes a set that contains all $\eps$-congested edges, it does not return the actual flow values.
The reason for that is that it only works against oblivious adversaries, and allowing (randomized) flow values to affect future updates
constitutes an adaptive adversary. As in~\cite{gao2021fully}, we resolve this by sanitizing the outputs through a different data structure
called $\textsc{Checker}$, which computes the flow values and works against semi-adaptive adversaries. As 
the definition and implementation of $\textsc{Checker}$ is
orthogonal to our contribution and also does not affect
the final runtime, we defer the discussion
to Appendix~\ref{sec:checker}. To simplify the presentation in this section,
we instead define the following
idealized version of it, called $\textsc{PerfectChecker}$.

\begin{definition}[$\eps$-\textsc{PerfectChecker}]
For any error $\eps > 0$, an $\eps$-$\textsc{PerfectChecker}$ is an oracle that 
given a graph $G(V,E)$, slacks $\ss$, resistances $\rr$, supports the following
operations:
\begin{itemize}
\item{\textsc{Update}$(e,\ff)$: Set $s_e^+ = u_e - f_e$, $s_e^- = f_e$, 
$r_e = \frac{1}{(s_e^+)^2}
+ \frac{1}{(s_e^-)^2}$.
}
\item{\textsc{Check}$(e)$:
Compute a flow value $\tf_e$ such that $\sqrt{r_e} \left|\tf_e - \tf_e^*\right| \leq \eps$, where
\[ \tff^* = \delta g(\ss) - \delta \RR^{-1} \BB (\BB^\top \RR^{-1} \BB)^+ \BB^\top g(\ss) \,, \]
with $\delta = 1/\sqrt{m}$.
If $\sqrt{r_e} \left|\tf_e\right| < \eps / 2$ return $0$, otherwise return $\tf_e$.

}
\end{itemize}
\label{def:perfect_checker}
\end{definition}

\subsection{The minimum cost flow algorithm}
Now, we will show how the data structure defined in
Section~\ref{sec:locator_def} can be used to make progress along the central path. 
The main lemma that analyzes the performance of the minimum cost flow
algorithm given access to an $(\alpha,\beta,\eps)$-\textsc{Locator} is Lemma~\ref{lem:mincostflow}.
Also, the skeleton of the algorithm is described in Algorithm~\ref{alg:main}.

\begin{algorithm} %
\begin{algorithmic}[1]
\caption{Minimum Cost Flow}
\Procedure{\textsc{MinCostFlow}}{$G,\cc,\dd,\uu$}
\State $\off, \mu = \textsc{Initialize}(G,\cc, \dd,\uu)$ \Comment{Lemma~\ref{lem:init}. $\off$ is $\mu$-central at all times.}
\State $i = 0$
\While {$\mu \geq m^{-10}$}
	\If {$i$ is a multiple of $\lfloor\es \sqrt{\beta m} / k\rfloor$} \Comment{Re-initialize when $|C|$ exceeds $O(\beta m)$.}
		\State $\cL = \textsc{Locator}.\textsc{Initialize}(\off)$ with error $\eps/2$
	\EndIf
	\If {$i$ is a multiple of $\lfloor\es \sqrt{\bc m} / k\rfloor$} 
		\State $\cC^i = \textsc{Checker}.\textsc{Initialize}(\off, \eps, \bc)$ for $i\in[k\e^{-1}]$
	\EndIf
	\If {$i$ is a multiple of $\lfloor 0.5\alpha^{1/4} / k - 1\rfloor$} \Comment{Update important edges when $\cL.\rr^0$ expires}
		\State $\cL.\textsc{BatchUpdate}(\emptyset)$ \label{line:batchupdate1}
	\EndIf
	\State $\off, \mu = \textsc{MultiStep}(\off, \mu)$
	\If {$i$ is a multiple of $\hT$}
		\State $Z = \emptyset$
		\For {$e \in E$}
			\State $\os_e^+ = u_e - \of_e$, $\os_e^- = \of_e$
			\If {$\os_e^+ \not\approx_{\es / 16} \cL.s_e^+$ or $\os_e^- \not\approx_{\es / 16} \cL.s_e^-$}
				\State $\cC^i.\textsc{Update}(e, \off)$ for $i\in[k\e^{-1}]$
				\State $Z = Z\cup\{e\}$
			\EndIf
		\EndFor
		\State $\cL.\textsc{BatchUpdate}(Z, \off)$ \label{line:batchupdate2}
	\Else
		\For {$e \in E$}
			\State $\os_e^+ = u_e - \of_e$, $\os_e^- = \of_e$
			\If {$\os_e^+ \not\approx_{\es / 8} \cL.s_e^+$ or $\os_e^- \not\approx_{\es / 8} \cL.s_e^-$}
				\State $\cC^i.\textsc{Update}(e, \off)$ for $i\in[k\e^{-1}]$
				\State $\cL.\textsc{Update}(e, \off)$
			\EndIf
		\EndFor
	\EndIf
	\State $i = i + 1$
\EndWhile
\State \Return $\textsc{Round}(G,\cc,\dd,\uu,\off)$ \Comment{Lemma~\ref{lem:rounding}}
\EndProcedure
\label{alg:main}
\end{algorithmic} 
\end{algorithm}

\begin{lemma}[\textsc{MinCostFlow}]
Let $\cL$ be an $(\alpha,\beta,\eps)-\textsc{Locator}$,
$\ff$ be a $\mu$-central flow
where $\mu = \mathrm{poly}(m)$, and $k\in\left[m^{1/316}\right]$,
$\beta \geq \tOm{k^3 / m^{1/4}}$,
$\hT \in\left[\tO{m^{1/2} / k}\right]$
be some parameters.
There is an algorithm that with high probability computes a $\mu'$-central flow $\ff'$, where
$\mu' \leq m^{-10}$. Additionally, the algorithm runs in time $\tO{m^{3/2} / k}$, plus
\begin{itemize}
\item $\tO{k^3 \beta^{-1/2}}$ calls to $\cL.\textsc{Initialize}$,
\item $\tO{m^{1/2} k^3}$ calls to $\cL.\textsc{Solve}$,
\item $\tO{m^{1/2} \left(k^6\hT + k^{15}\right)}$ calls to $\cL.\textsc{Update}$,
\item $\tO{m^{1/2} \alpha^{-1/4}}$ calls to $\cL.\textsc{BatchUpdate}(\emptyset)$, and
\item {
$\tO{m^{1/2} k^{-1} \hT^{-1}}$ calls to 
$\cL.\textsc{BatchUpdate}(Z,\off)$ for some $Z\neq \emptyset,\off$. Additionally,
the sum of $|Z|$ over all such calls is $\tO{mk^3\beta^{1/2}}$.}
\end{itemize}
\label{lem:mincostflow}
\end{lemma}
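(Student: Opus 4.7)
The strategy is to batch Lemma~\ref{lem:approx_central} with the $(\alpha,\beta,\eps)$-$\textsc{Locator}$ interface: each invocation of \textsc{MultiStep} executes $T = k/\e = \tO{k^4}$ Newton-like central-path steps, and the approximate electrical flow required at every such step is produced by a single call to $\cL.\textsc{Solve}$ followed by a sanitization pass through a \textsc{Checker} (or, in the idealized setting, an $\eps$-$\textsc{PerfectChecker}$) on each of the $\tO{\eps^{-2}}$ candidate edges it returns. Since one \textsc{MultiStep} advances $\mu$ by a factor of $(1+\delta)^k$ with $\delta = 1/\sqrt{m}$, reducing $\mu$ from $\poly(m)$ down to $m^{-10}$ requires
\begin{align*}
N \;=\; \tO{\sqrt{m}/k}
\end{align*}
outer iterations, which immediately yields $\tO{N T} = \tO{\sqrt m\, k^3}$ calls to $\cL.\textsc{Solve}$. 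The $\tO{m^{3/2}/k}$ additive term in the runtime will absorb the $O(m)$ bookkeeping paid once per outer iteration (the slack-deviation scans, the construction of $Z$, and the final \textsc{Round} call of Lemma~\ref{lem:rounding}).

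\textbf{Maintenance schedules.} Three independent clocks govern how $\cL$ (and the parallel \textsc{Checker} copies) are refreshed. First, to respect the $O(\beta m)$ cumulative update budget of Definition~\ref{def:locator}, we reinitialize $\cL$ every $\lfloor \es\sqrt{\beta m}/k\rfloor$ outer iterations, yielding $\tO{N\cdot k^4/\sqrt{\beta m}} = \tO{k^3 \beta^{-1/2}}$ calls to $\cL.\textsc{Initialize}$. Second, the locator's \textsc{Update} preconditions hold only while current resistances lie within an $\alpha$-factor of those stored at the last \textsc{BatchUpdate}; by the resistance stability lemma (a generalization of the one in~\cite{gao2021fully}) this ratio grows by a $\poly(k)$ factor per multi-step, so calling $\cL.\textsc{BatchUpdate}(\emptyset)$ every $\tO{\alpha^{1/4}/k}$ multi-steps keeps us within range and produces $\tO{m^{1/2}\alpha^{-1/4}}$ such calls. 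Third, every $\hT$ outer iterations a full cleanup $\cL.\textsc{BatchUpdate}(Z,\off)$ is performed, with $Z$ equal to the set of edges whose slacks have drifted by more than $\es/16$ since the previous cleanup; this contributes $N/\hT = \tO{m^{1/2}k^{-1}\hT^{-1}}$ such calls, and by the same stability lemma the cumulative $\sum|Z|$ within each reinitialization cycle is $\tO{m\sqrt{\beta}/k}$, summing to $\tO{mk^3 \beta^{1/2}}$ across all $\tO{k^3/\sqrt{\beta}}$ cycles. Between cleanups, any edge whose slack has crossed the tighter $\es/8$ threshold is forwarded via $\cL.\textsc{Update}$; amortizing the per-multi-step slack drift against the $\hT$ window length (with the same stability lemma to bound the per-edge crossing count) delivers the claimed $\tO{m^{1/2}(k^6\hT + k^{15})}$ total for $\cL.\textsc{Update}$.

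\textbf{Main obstacles.} The most delicate part is interlocking Lemma~\ref{lem:approx_central} with a noisy, randomized $\cL.\textsc{Solve}$: because $\cL$ works only against oblivious adversaries, its output cannot directly drive subsequent resistance updates, forcing every returned edge to be routed through a \textsc{Checker} that is maintained in lockstep with $\cL$ across every \textsc{Update}, \textsc{BatchUpdate}, and \textsc{Initialize}. A secondary challenge is scheduling the three clocks above so that each of $\cL$'s preconditions stays satisfied simultaneously and so that the $\sum|Z|$ bookkeeping across all $\cL.\textsc{BatchUpdate}(Z,\off)$ calls fits within the $O(\beta m)$ budget of each reinitialization cycle; this is where the resistance stability lemma does the bulk of the work, and where the particular choice of thresholds $\es/8$ (for \textsc{Update}) and $\es/16$ (for cleanup) matters, since their gap lets us charge non-cleanup updates to bona fide threshold crossings rather than to rounding noise.
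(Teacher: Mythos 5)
Your proposal follows essentially the same route as the paper's proof: the same three maintenance clocks (reinitialization every $\lfloor\es\sqrt{\beta m}/k\rfloor$ iterations, $\textsc{BatchUpdate}(\emptyset)$ on the $\alpha^{1/4}/k$ schedule, and cleanup every $\hT$ iterations with the $\es/16$ versus $\es/8$ threshold gap), the same use of the resistance stability lemma to count threshold crossings, and the same \textsc{Checker} sanitization of $\cL.\textsc{Solve}$'s output to preserve obliviousness. One small arithmetic slip: the per-cycle bound on $\sum|Z|$ should be the Locator's budget $O(\beta m)$ (which times $\tO{k^3\beta^{-1/2}}$ cycles gives the claimed $\tO{mk^3\beta^{1/2}}$), not $\tO{m\sqrt{\beta}/k}$, whose product with the cycle count would be $\tO{mk^2}$ and is inconsistent with your stated total.
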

The proof appears in Appendix~\ref{proof_lem_mincostflow}.
Its main ingredient is the following lemma, which easily follows from Lemma~\ref{lem:approx_central}
and essentially shows how $k$ steps of the interior point method can be performed in 
$\tO{m}$ instead of $\tO{m k}$. Its proof appears in Appendix~\ref{proof_lem_multistep}.

\begin{lemma}[\textsc{MultiStep}]
Let $k\in\{1,\dots,\sqrt{m}/10\}$. 
We are given $\ff(\mu)$, an $(\alpha,\beta,\eps/2)$-\textsc{Locator} $\cL$, 
and an $\eps$-\textsc{PerfectChecker} $\cC$,
such that
\begin{itemize}
\item{
$\cL.\rr = \cC.\rr$ are $(\mu,1+\es/8)$-central resistances, and 
}
\item{
$\cL.\rr^0$ are
$(\mu^0,1+\es/8)$-central resistances, where
$\mu^0 \leq \mu \cdot (1 + \e/\sqrt{m})^{\hT}$ and $\hT = (0.5\alpha^{1/4} - k)\e^{-1}$.
Additionally, for any resistances $\hrr$ that $\cL$ had at any point 
since the last call to $\cL.\textsc{BatchUpdate}$, 
$\hrr$ are $(\hmu, 1.1)$-central for some $\hmu\in[\mu,\mu^0]$.
}
\end{itemize}

Then, %
there is an algorithm that with high probability computes $\ff(\mu')$, %
where $\mu' = \mu / (1+\e/\sqrt{m})^{k\e^{-1}}$. The algorithm runs in time $\tO{m}$,
plus $O(k^{16})$ calls to $\cL.\textsc{Update}$, $O(k^4)$ calls to $\cL.\textsc{Solve}$,
and $O(k^{16})$ calls to $\cC.\textsc{Update}$ and $\cC.\textsc{Check}$.
Additionally, $\cL.\rr$ and $\cC$ are unmodified.
\label{lem:multistep}
\end{lemma}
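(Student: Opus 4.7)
The plan is to directly instantiate Lemma~\ref{lem:approx_central} with $\ff^1 = \ff(\mu)$, which is exactly $\mu$-central and hence $(\mu, 1+\es/8)$-central, and to implement each of its $T = k\e^{-1} = O(k^4)$ approximate Newton steps using the two data structures $\cL$ and $\cC$. This produces $\ff^{T+1}$ whose slacks satisfy $\ss^{T+1} \approx_{1.1} \ss(\mu')$ for $\mu' = \mu / (1+\e/\sqrt{m})^{k\e^{-1}}$, after which a single call to Lemma~\ref{lem:recenter} yields $\ff(\mu')$ exactly in $\tO{m}$ time.

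For the $t$-th Newton step, I would first call $\cL.\textsc{Solve}$ to obtain a set $Z^t \subseteq E$ of size $\tO{(\eps/2)^{-2}} = \tO{k^{12}}$ which with high probability contains every edge with $\sqrt{r_e} |\tf_e^*| \geq \eps/2$ (relative to $\cL$'s current slacks and resistances). For each $e \in Z^t$ I would then call $\cC.\textsc{Check}(e)$, which returns either $0$ or a value $\tf_e^t$ within $\eps$ of $\tf_e^*$ in the $\sqrt{r_e}$-weighted norm, and I would set $\tf_e^t = 0$ for every $e \notin Z^t$. Combining the $\eps/2$ localization guarantee of \textsc{Solve} with the $\eps$-accuracy of \textsc{Check}, and using that $\cL.\rr$ and $\cC.\rr$ are maintained within a $(1+\es/8)$ factor of the actual $\rr^t$ by issuing \textsc{Update}s on drifting edges, the composite approximation satisfies $\left\|\sqrt{\rr^t}(\tff^{*t} - \tff^t)\right\|_\infty \leq \eps$, which is exactly the hypothesis of Lemma~\ref{lem:approx_central}.

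After each step I would scan only the edges in $Z^t$ and issue $\cL.\textsc{Update}(e,\ff)$ and $\cC.\textsc{Update}(e,\ff)$ on any whose positive or negative slack has drifted outside the $(1+\es/8)$ tolerance relative to the stored values, thereby preserving the invariant $\cL.\rr = \cC.\rr$. Only edges appearing in some $Z^t$ can have their slacks altered, so the cumulative count of \textsc{Update} calls over the whole procedure is $\sum_t |Z^t| = O(k^{16})$. The $\alpha$-ratio precondition of $\cL.\textsc{Update}$ is inherited from the hypothesis that any intermediate resistances $\hrr$ of $\cL$ since the last \textsc{BatchUpdate} are $(\hmu, 1.1)$-central for some $\hmu \in [\mu, \mu^0]$: since $\mu^0/\mu \leq (1+\e/\sqrt{m})^{\hT}$ with $\hT = (0.5\alpha^{1/4} - k)\e^{-1}$, central-path resistance stability bounds the multiplicative spread of each $r_e$ to stay within the required $\alpha$ window of both $r_e^{\max}$ and $r_e^{\min}$.

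The main obstacle is the error bookkeeping: one must simultaneously track three sources of mismatch --- between $\cL.\rr$ and $\rr^t$, between $\cC.\rr$ and $\rr^t$, and the intrinsic $\eps$-errors of \textsc{Solve} and \textsc{Check} --- and verify that they combine in the $\sqrt{\rr^t}$-weighted $\ell_\infty$ sense to meet Lemma~\ref{lem:approx_central}'s hypothesis. Once this is in place, the stated runtime follows directly: $T = O(k^4)$ iterations yield $O(k^4)$ calls to $\cL.\textsc{Solve}$, the per-iteration set sizes $|Z^t| = \tO{k^{12}}$ yield $O(k^{16})$ total calls to each of $\cL.\textsc{Update}$, $\cC.\textsc{Update}$, $\cC.\textsc{Check}$, and the remaining $\tO{m}$ overhead is absorbed by the final recentering step from Lemma~\ref{lem:recenter}.
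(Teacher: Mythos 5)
Your overall architecture matches the paper's: instantiate Lemma~\ref{lem:approx_central} with $\ff^1=\ff(\mu)$, realize each of the $T=k\e^{-1}$ approximate steps via $\cL.\textsc{Solve}$ followed by $\cC.\textsc{Check}$ on the returned set $Z^t$, verify the $\alpha$-window precondition of $\cL.\textsc{Update}$ via central-path stability, and finish with Lemma~\ref{lem:recenter}. However, there is a genuine gap in your update rule. You propose to call $\cL.\textsc{Update}$ and $\cC.\textsc{Update}$ only on edges whose slacks have drifted outside a $(1+\es/8)$ tolerance, and you assert that this suffices for $\left\|\sqrt{\rr^t}(\tff^{*t}-\tff^t)\right\|_\infty\leq\eps$. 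It does not: Lemma~\ref{lem:approx_central} requires $\tff^{*t}$ to be the electrical flow with respect to the \emph{exact} resistances $\rr^t$ of the current iterate, whereas your data structures would then be answering with respect to stale resistances that differ by a factor up to $1+\es/8$. A multiplicative perturbation of the resistances by $1+\gamma$ perturbs the (rescaled) electrical flow by order $\gamma$ in $\ell_2$, hence the induced error is of order $\es/8=\Theta(k^{-3})$, which overwhelms the required accuracy $\eps=10^{-6}k^{-6}$. The paper avoids this entirely: inside \textsc{MultiStep} it calls $\cL.\textsc{Update}$ and $\cC.\textsc{Update}$ on \emph{every} edge with $\tf_e^t\neq 0$ (i.e., every edge whose flow actually changed), so that $\cL.\rr=\cC.\rr=\rr^{t+1}$ holds exactly after each step and the \textsc{Solve}/\textsc{Check} outputs satisfy the hypothesis of Lemma~\ref{lem:approx_central} with no extra mismatch term; the lazy $(1+\es/8)$ thresholding you describe belongs to the \emph{outer} loop (\textsc{MinCostFlow}), not to \textsc{MultiStep}. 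The update count is unaffected since every updated edge lies in some $Z^t$.

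Two further omissions relative to what the lemma claims: the statement requires that $\cL.\rr$ and $\cC$ be \emph{unmodified} upon return, so at the end you must roll back $\cC$ and re-issue $\cL.\textsc{Update}$ to restore its original resistances (the paper does this explicitly, and it is why Algorithm~\ref{alg:multistep} saves $\hrr=\cL.\rr$ at the start); and since $\cL$ only works against oblivious adversaries, you need to argue that the values $\tff^t$ fed back into $\cL$ are independent of $\cL$'s internal randomness --- the paper's point is that the distribution of $\tff^t$ would be unchanged if $\cC.\textsc{Check}$ were run on all edges, so the output of $\cL.\textsc{Solve}$ only affects which entries are revealed, not their values.
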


\begin{algorithm} %
\begin{algorithmic}[1]
\caption{MultiStep}
\Procedure{\textsc{MultiStep}}{$\ff, \mu$}
\Comment{Makes equivalent progress to $k$ interior point method steps}
\State $\hrr = \cL.\rr$ \Comment{Save resistances to restore later}
\For{$i=1,\dots, k\e^{-1}$}
	\State $Z = \cL.\textsc{Solve}()$
	\For {$e\in Z$} \Comment{$Z$: Set of edges with sufficiently changed flow}
		\State $\tf_e = \cC^i.\textsc{Check}(e)$
		\If {$\tf_e \neq 0$}
			\State $f_e = f_e + \e \tf_e$
			\State $\cL.\textsc{Update}(e, \ff)$
			\State $\cC^j.\textsc{TemporaryUpdate}(e, \ff)$ for $j \in \left[i+1,k\e^{-1}\right]$
		\EndIf
	\EndFor
\EndFor
\State $\mu = \mu / (1 + \e/\sqrt{m})^{k\e^{-1}}$
\State $\ff = \textsc{Recenter}(\ff, \mu)$ \Comment{Lemma~\ref{lem:recenter}}
\For {$e\in E$}
	\If {$\cL.r_e \neq \hr_e$}
		\State $\cL.\textsc{Update}(e,\hrr)$ \Comment{Return $\textsc{Locator}$ resistances to their original state}
	\EndIf
\EndFor
\State Call $\cC^i.\textsc{Rollback}()$ to undo all $\textsc{TemporaryUpdate}$s for all $\cC^i$
\State \Return $\ff, \mu$
\EndProcedure
\label{alg:multistep}
\end{algorithmic} 
\end{algorithm}

\subsection{Proof of Theorem~\ref{thm:main}}

\paragraph{Correctness.}
First of all, we apply capacity and cost scaling~\cite{gabow1983scaling} to make sure that 
$\left\|\cc\right\|_\infty, \left\|\uu\right\|_\infty = \mathrm{poly}(m)$. These incur
an extra factor of $\log(U+W)$ in the runtime.

We first get an initial solution to the interior point method by using the following lemma:
\begin{lemma}[Interior point method initialization, Appendix A in~\cite{axiotis2020circulation}]
Given a min cost flow instance $\mathcal{I} = \left(G(V,E),\cc,\dd,\uu\right)$,
there exists an algorithm that 
runs in time $O(m)$ and produces a new min cost flow instance
$\mathcal{I}'=\left(G'(V',E'),\cc',\dd',\uu'\right)$, 
where $|V'| = O(|V|)$ and $|E'|=O(|E|)$,
as well as a flow $\ff$ such that
\begin{itemize}
\item{$\ff$ is $\mu$-central for $\mathcal{I}'$ for some 
$\mu = \Theta\left(\left\|\cc\right\|_2\right)$}
\item{Given an optimal solution for $\mathcal{I}'$, 
an optimal minimum cost flow solution for $\mathcal{I}$
can be computed in $O(m)$}
\end{itemize}
\label{lem:init}
\end{lemma}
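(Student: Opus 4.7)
The plan is to use a standard interior point method initialization via graph augmentation, mirroring the construction in~\cite{axiotis2020circulation}. First, I would apply the cost/capacity scaling fact stated earlier so that all costs and capacities are $\mathrm{poly}(m)$-bounded. Then, for the augmentation, I would introduce a single auxiliary vertex $v^*$ and for each $v \in V$ add a pair of auxiliary arcs between $v$ and $v^*$. Setting $U^\star$ to be a large polynomial in $m$ bigger than any capacity in $G$ (say $U^\star = m^{\Theta(1)} \cdot \|\uu\|_\infty$), I would give these auxiliary arcs capacity $U^\star$ and route the demand $\dd$ in an explicit manner: for each vertex $v$, push exactly $d_v$ units along the appropriate $(v,v^*)$ or $(v^*,v)$ arc. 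Thus $G'$ has $|V|+1$ vertices and $O(|E|+|V|)$ arcs. On the original edges I would set $f_e = u_e/2$, so that the log-barrier gradient $1/s_e^+ - 1/s_e^-$ vanishes on the original edges; on the auxiliary arcs the gradient can be computed explicitly as a function of $d_v$ and $U^\star$.

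Next I would pick $\mu = \Theta(\|\cc\|_2)$ and set modified costs on $\mathcal{I}'$. Keep the original costs $c_e$ on each $e \in E$, and choose the cost on each auxiliary arc so that the centrality condition
\[
\CC'^\top\!\left(\frac{\cc'}{\mu} + \frac{1}{\ss^+} - \frac{1}{\ss^-}\right) = \zerov
\]
holds exactly for the flow above. Concretely, letting $\gg = \frac{1}{\ss^+} - \frac{1}{\ss^-}$ be the (explicit) barrier gradient of the chosen initial flow, I would set $\cc' = -\mu \gg + \BB \vphi$ for some vertex potential $\vphi$ — since $\CC'^\top \BB = \zerov$ (circulations are orthogonal to potential differences), this automatically kills the right-hand side. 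Because $\gg$ is zero on original edges (flow at midpoint) and $O(1/U^\star)$ on auxiliary arcs, choosing $\vphi$ to agree with the original $c_e$ on $E$ and to be anything convenient on $v^*$ leaves the original costs essentially unchanged, with only $O(\mu/U^\star)$ additive corrections that can be absorbed into the auxiliary arcs and, by the choice $U^\star$ large relative to $\|\cc\|_2$, make the auxiliary arcs carry cost dominating any plausible benefit of routing through them.

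Finally, I would argue correctness of the reduction. The cost of the auxiliary arcs is large enough that any optimal primal solution of $\mathcal{I}'$ uses only the minimum amount of auxiliary flow forced by feasibility; since $\dd$ is already feasibly routable in the original $G$, that minimum is zero, so the restriction of an optimal $\ff'$ to $E$ is a minimum cost flow for $\mathcal{I}$ (and all preprocessing/postprocessing is trivially $O(m)$). The only genuinely delicate step is the balancing in the previous paragraph: ensuring simultaneously that (a) the explicit initial flow is $\mu$-central with $\mu = \Theta(\|\cc\|_2)$, (b) the modified costs on original edges coincide with $\cc$ after rounding/scaling so that optimal flows of $\mathcal{I}'$ truly solve $\mathcal{I}$, and (c) the auxiliary arcs remain prohibitively expensive. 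This is handled by taking $U^\star$ a sufficiently large polynomial in $m$ relative to $\|\cc\|_\infty$ and $\|\uu\|_\infty$, which is the main place where the polynomial bounds from cost/capacity scaling are invoked.
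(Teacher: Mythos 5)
There is a genuine gap in the exact-centering step. You want the midpoint flow $f_e = u_e/2$ on the original edges to be \emph{exactly} $\mu$-central, which means the vector $\frac{\cc'}{\mu} + \frac{\onev}{\ss^+} - \frac{\onev}{\ss^-}$ must lie in $\ker(\CC^\top)$, i.e.\ equal $\BB\vphi$ for some vertex potential $\vphi$. Since the barrier gradient vanishes on the original edges at the midpoint, this forces $c'_e = \mu(\BB\vphi)_e$ there. Your phrase ``choosing $\vphi$ to agree with the original $c_e$ on $E$'' does not typecheck: $\vphi$ is a vertex vector, and $(\BB\vphi)_e = c_e$ for all $e\in E$ would require $\cc$ restricted to $E$ to be a potential difference, which fails for general costs. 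If you instead accept $\cc'|_E = \BB\vphi|_E$, the objective on the original edges becomes a pure gradient, and a gradient cost is constant ($\langle\vphi,\dd\rangle$) over all flows routing a fixed demand --- so every feasible flow of $\mathcal{I}'$ restricted to $E$ would be ``optimal'' and the reduction to $\mathcal{I}$ breaks. Adding the gradient to $\cc$ rather than replacing it restores the objective but destroys exact centrality. This tension is not a presentational issue; it is why the cited construction does \emph{not} attempt exact centrality of the explicit point.

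What Appendix~A of~\cite{axiotis2020circulation} (and this paper's own use of it in Appendix~\ref{sec:aux}) actually establishes is weaker and suffices: with the unmodified costs on $E$ and a star of huge-capacity, huge-cost auxiliary arcs routing the \emph{residual} demand $\dd - \BB^\top(\uu/2)$ (not $\dd$ itself --- a second slip in your routing), the point $\ff=\uu/2$ satisfies $\bigl\|\CC^\top\bigl(\frac{\cc}{\mu_0}+\frac{\onev}{\ss^+}-\frac{\onev}{\ss^-}\bigr)\bigr\|_{(\CC^\top\RR\CC)^+}\le 1/10$ for $\mu_0=\Theta(\|\cc\|_2)$, because the gradient is zero on $E$ and $O(1/U^\star)$ on the auxiliary arcs while $\|\cc/\mu_0\|$ is small in the relevant energy norm. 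Exact centrality is then obtained by recentering (Lemma~\ref{lem:small_residual_linfty} plus Lemma~\ref{lem:recenter}), not by cost surgery. You should also guard strict interiority on the auxiliary arcs when the residual demand at a vertex is zero (e.g.\ push a positive base amount on an antiparallel pair). Your final paragraph on why auxiliary arcs are unused at optimality is fine.
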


Therefore we now have a $\mathrm{poly}(m)$-central solution for an instance $\mathcal{I}$.
We can now apply Lemma~\ref{lem:mincostflow} to get a $\mu'$-central solution with $\mu'\leq m^{-10}$.
Then we can apply the following lemma to round the solution, which follows from Lemma 5.4 in~\cite{axiotis2020circulation}.
\begin{lemma}[Interior point method rounding]
Given a min cost flow instance $\mathcal{I}$ and a $\mu$-central flow $\ff$
for $\mu \leq m^{-10}$, there is an algorithm that runs in time $\widetilde{O}(m)$ and returns an optimal integral flow.
\label{lem:rounding}
\end{lemma}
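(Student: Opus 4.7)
The strategy is to exploit the fact that, after the scaling of \cite{gabow1983scaling}, costs and capacities are $\poly(m)$-bounded integers, so the LP has an integral optimum whose objective value is separated from that of any other vertex of the feasible polytope by at least $1/\poly(m)$. Since $\ff$ is $\mu$-central with $\mu \leq m^{-10}$, a standard interior point duality-gap calculation gives $\langle \cc, \ff\rangle - \mathrm{OPT} = O(m\mu) \leq O(m^{-9})$, which is much smaller than this vertex separation. Consequently $\ff$ is forced into a tiny neighborhood of the (essentially unique) optimal integer vertex, and the only remaining task is to round it deterministically.

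First, I would extract approximate dual node potentials $\yy$ from the centrality condition $\CC^\top\bigl(\cc/\mu + 1/\ss^+ - 1/\ss^-\bigr) = \zerov$: since the kernel of $\CC^\top$ is the image of $\BB$, there exists $\yy$ with $\cc/\mu + 1/\ss^+ - 1/\ss^- = \BB\yy$. Reading off reduced costs $\hat{c}_e = c_e - \mu (\BB\yy)_e = \mu\bigl(1/s^-_e - 1/s^+_e\bigr)$ together with the magnitudes of the slacks, I would classify each edge into three sets: edges that must take value $0$ in the optimum (flagged by $s^-_e = f_e$ being extremely small with the correct reduced cost sign), edges that must take value $u_e$ (flagged analogously via $s^+_e$), and a set $F$ of ``free'' edges on which both slacks are not too small. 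The tininess of $\mu$ compared to $1/\poly(m)$ ensures this commitment matches the unique optimum.

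Second, I would argue that $F$ forms a forest in $G$. If $F$ contained a cycle, one could circulate a small amount of flow around it while preserving capacities (because both $\ss^{\pm}$ are bounded away from $0$ on $F$); either the cycle has nonzero cost, allowing a cost improvement larger than the duality gap bound — a contradiction — or it has zero cost, which by the integrality and $\poly(m)$-boundedness of $\cc$ together with a standard perturbation argument again contradicts the uniqueness of the optimal vertex. With $F$ acyclic, the committed edges induce integral residual demands at the leaves of $F$, so leaf-stripping produces the unique integer flow on $F$ routing those demands in $O(m)$ time; capacity feasibility is automatic from the classification, and optimality follows because the duality gap already bounds the objective discrepancy and collapses to zero under integrality.

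The main obstacle is quantitatively pinning down the thresholds in the classification step and the forest argument: one must confirm that the $O(m\mu)$ gap suffices to separate the optimum from every other feasible integer vertex under the polynomial bounds on $\cc, \uu$, and that the combinatorial rounding on $F$ does not reintroduce infeasibility. These details follow exactly the proof of Lemma 5.4 in \cite{axiotis2020circulation}, to which the lemma appeals; the resulting algorithm is dominated by an $O(m)$ scan to classify edges and an $O(m)$ forest leaf-strip, yielding the claimed $\widetilde{O}(m)$ bound.
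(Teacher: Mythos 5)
There is a genuine gap in your argument: it hinges on the optimal vertex being (essentially) unique, and nothing in the algorithm guarantees this. The instance is only scaled so that $\cc,\uu$ are $\poly(m)$-bounded integers; no cost perturbation is ever performed, so the optimal face of the LP can have positive dimension. In that case the central path converges to the analytic center of the optimal face, not to a vertex, and your two key steps break down. First, the classification of edges into ``must be $0$'', ``must be $u_e$'', and ``free'' does not identify \emph{the} optimum, because different optimal solutions disagree on which edges are tight. Second, and more concretely, the claim that the free edges form a forest is false in general: take any instance with a zero-cost cycle all of whose edges sit strictly between $0$ and capacity at optimality (e.g.\ a max-flow-type instance where $\cc=\zerov$ on a cycle). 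Your dichotomy ``nonzero-cost cycle contradicts the gap bound, zero-cost cycle contradicts uniqueness via a standard perturbation argument'' invokes a perturbation that the algorithm never made, so the second branch has nothing to contradict. Once $F$ can contain cycles, leaf-stripping no longer determines the flow on $F$.

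The route the paper relies on (Lemma 5.4 of \cite{axiotis2020circulation}, in the tradition of \cite{daitch2008faster}) needs no uniqueness. One uses only the part of your argument that is correct: $\mu$-centrality gives duality gap $O(m\mu)\le O(m^{-9})<1$, so $\langle\cc,\ff\rangle<\mathrm{OPT}+1$ where $\mathrm{OPT}$ is attained by an integral flow. Then one applies combinatorial \emph{flow rounding}: the fractional part of $\ff$ routes a zero demand (since $\BB^\top\ff=\dd$ is integral), so the fractionally-valued edges decompose into cycles (bit by bit, via an Euler-partition argument), and pushing flow around each cycle in the non-cost-increasing direction yields an integral feasible flow of cost at most $\langle\cc,\ff\rangle<\mathrm{OPT}+1$, hence optimal by integrality of costs. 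This runs in $\widetilde{O}(m)$ and is agnostic to degeneracy of the optimal face. I would recommend replacing your identification-and-forest argument with this cycle-rounding argument, or else adding an explicit cost perturbation step that makes the optimum unique---but the latter is not what the paper does and would need its own justification.
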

By Lemma~\ref{lem:init}, this solution can be turned into an exact solution for the original
instance.
As Lemma~\ref{lem:mincostflow} succeeds with high probability, the whole algorithm does too.

\paragraph{Runtime.}

To determine the final runtime, we analyze each operation in Algorithm~\ref{alg:main} separately.

The \textsc{Initialize} (Lemma~\ref{lem:init}) and \textsc{Round} (Lemma~\ref{lem:rounding}) 
operations take time $\tO{m}$.
Now, the runtime of Lemma~\ref{lem:mincostflow} is $\tO{m^{3/2} / k}$ plus the runtime incurred because of
calls to the locator $\cL$. 
We will use the runtimes per operation from Lemma~\ref{lem:locator}.

{\bf $\cL$.\textsc{Solve}}: This operation is run $\tO{m^{1/2} k^3}$ times, and each of these costs
$\tO{\frac{\beta m}{\eps^2}}
= \tO{m k^{12} \beta}$. Therefore in total
$\tO{m^{3/2} k^{15} \beta}$.

We pick $\beta$ by 
$m^{3/2} k^{15} \beta \leq m^{3/2} /k$ as 
$\beta = k^{-16}$, so the runtime is
$\tO{m^{3/2} / k}$. Note that this satisfies the constraint
$\beta \geq \tOm{k^3 / m^{1/4}}$ as long as $k\leq \tOm{m^{1/76}}$.

{\bf $\cL$.\textsc{BatchUpdate}}: This is run
$\tO{m^{1/2} / \alpha^{1/4}}$ times with empty arguments,
each of which takes time 
$\tO{m /\eps^2} = \tO{m k^{12}}$.
The total runtime because of these is 
$\tO{m^{3/2} k^{12} \alpha^{-1/4}}$.
As we 
need this to be below $\tO{m^{3/2} / k}$, we set 
$\alpha = k^{52}$.

This operation is also run
$\tO{m^{1/2} k^{-1} \hT^{-1}}
$
times with 
some non-empty argument $Z$, each of which takes time 
$\tO{m/\eps^2 + |Z|/(\eps^2\beta^2)} = \tO{m k^{12} + k^{44}|Z|}$.
As by Lemma~\ref{lem:mincostflow} the total sum
of $|Z|$ %
over all calls is $\tO{mk^3 \beta^{1/2}} = \tO{mk^{-5}}$,
we get a runtime of
\[ \tO{m^{1/2} k^{-1} \hT^{-1} \cdot mk^{12} + k^{44} \cdot mk^{-5}}
= \tO{m^{3/2} k^{11} \hT^{-1} + mk^{39}}\,.\]
In order to set the first term to be at most $\tO{m^{3/2} / k}$, we set
$\hT = k^{12}$.

Therefore the total runtime of this operation is 
$\tO{m^{3/2} / k + m k^{39}}$.

{\bf $\cL$.\textsc{Update}}: 
This is run 
$\tO{m^{1/2} \left(k^6 \hT + k^{15}\right)}
=\tO{m^{1/2} k^{18}}$
times
and the amortized cost per operation is
\begin{align*}
& \tO{m \cdot \frac{\heps \alpha^{1/2}}{\eps^{3}} + 
\heps^{-4} \eps^{-2} \beta^{-8} + 
\heps^{-2} \eps^{-4} \alpha^2 \beta^{-6}}\\
& =\tO{m \cdot k^{44} \heps + k^{140} \heps^{-4} + 
k^{224} \heps^{-2}}\,,
\end{align*}
so in total
\[ m^{3/2} k^{62} \heps + m^{1/2} k^{158} \heps^{-4} + m^{1/2} k^{242} \heps^{-2}\,. \]
As we need the first term to be $\tO{m^{3/2} / k}$, we set
$\heps = k^{-63}$.
Therefore the total runtime is 
\[ \tO{m^{3/2} / k + m^{1/2} k^{410} + m^{1/2} k^{368}} = \tO{m^{3/2} / k + m^{1/2} k^{410}}\,. \]

{\bf $\cL$.\textsc{Initialize}}: This is run 
$\tO{k^{3} \beta^{-1/2}} = k^{11}$ times
in total, and the runtime for each run is 
\[ \tO{m \cdot \left(\heps^{-4} \beta^{-8} +
\heps^{-2} \eps^{-2} \alpha^2 \beta^{-4}}\right)
= \tO{m \cdot \left(k^{380} +
k^{306}\right)}
= \tO{m\cdot k^{380}} \,,\]
so in total $\tO{m k^{380}}$.

Therefore, for the whole algorithm, we get
$\tO{m^{3/2} / k + m^{1/2} k^{410} + mk^{380}}$
which after balancing gives $k=m^{1/762}$.

\section{An Efficient $(\alpha,\beta,\eps)$-\textsc{Locator}}
\label{sec:locator}

In this section we will show how to implement an 
$(\alpha,\beta,\eps)$-\textsc{Locator}, as defined in Definition~\ref{def:locator}.
In order to maintain the approximate electrical flow $\tff$
required by Lemma~\ref{lem:locator}
we will keep a vertex sparsifier in the form of a sparsified Schur complement onto
some vertex set $C$. As in~\cite{gao2021fully}, we choose $C$ to be a \emph{congestion reduction subset}.
\begin{definition}[Congestion reduction subset~\cite{gao2021fully}]\label{def:cong_red}
Given a graph $G(V,E)$ with resistances $\rr$ and any parameter $\beta\in(0,1)$,
a vertex subset $C\subseteq V$ is called a 
\emph{$\beta$-congestion reduction subset} (or just \emph{congestion reduction subset}) if:
\begin{itemize}
\item{$|C| \leq O(\beta m)$}
\item{For any $u\in V$, a random walk starting from $u$ that visits $\tOm{\beta^{-1} \log n}$ distinct vertices
hits $C$ with high probability}
\item{
If we generate $\mathrm{deg}(u)$ random walks from each $u\in V\backslash C$, the expected number of
these that hit some fixed $v\in V\backslash C$ before $C$ is $\tO{1/\beta^2}$. Concretely:
\begin{align}
\sum\limits_{u\in V} \mathrm{deg}(u) \cdot p_v^{C\cup\{v\}}(u) \leq \tO{1/\beta^2}\,.
\label{eq:cong_red}
\end{align}
}
\end{itemize}
\end{definition}
The following lemma shows that such a vertex subset can be constructed efficiently:
\begin{lemma}[Construction of congestion reduction subset~\cite{gao2021fully}]
Given a graph $G(V,E)$ with resistances $\rr$ and a parameter $\beta\in(0,1)$,
there is an algorithm that generates a $\beta$-congestion reduction subset
in time $\tO{m/\beta^2}$.
\label{lem:cong_red}
\end{lemma}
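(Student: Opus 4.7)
My plan is to construct $C$ by degree-aware random vertex sampling: include each vertex $v$ in $C$ independently with probability $p_v = \min(1, c\beta \max(1, \deg(v)))$ for a suitable constant $c$, where $\deg(v) = \sum_{e \ni v} 1/r_e$ is the weighted degree. (Purely uniform vertex sampling would fail on graphs with skewed degree distributions---for instance, a star graph, where the center is essentially the only correct choice for $C$---so biasing by degree is essential for the third bullet of Definition~\ref{def:cong_red}.)

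Once $C$ is sampled in $O(m)$ time, the size and distinct-vertex hitting bullets follow cleanly. The size bullet is a Chernoff bound: $|C| \leq \sum_v p_v \leq \tO{\beta m}$ with high probability. For the hitting bullet, for any fixed starting vertex $u$ and any trajectory of a walk from $u$ visiting a set $S$ of $k$ distinct vertices, the conditional probability over the sampling that $S \cap C = \emptyset$ is $\prod_{v\in S}(1-p_v) \leq e^{-\Theta(\beta)\sum_{v\in S}\deg(v)} \leq e^{-\Theta(\beta k)}$, using $\deg(v) \geq 1$. Choosing $k = \tOm{\beta^{-1}\log n}$ makes this $n^{-\Omega(1)}$, and a union bound over starting vertices handles all $u$.

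The main obstacle is the third bullet $\sum_u \deg(u)\, p_v^{C\cup\{v\}}(u) \leq \tO{1/\beta^2}$ for every $v \notin C$. My plan uses reversibility of the weighted random walk killed on $C$: for $u, v \notin C$,
\[
p_v^{C\cup\{v\}}(u) = \frac{G^C(u,v)}{G^C(v,v)}, \qquad \deg(u)\, G^C(u,v) = \deg(v)\, G^C(v,u),
\]
where $G^C$ is the Green's function of the killed chain. Summing over $u$ and applying the classical identity $G^C(v,v) = \deg(v)\cdot R_{\mathrm{eff}}(v,C)$ yields $\sum_u \deg(u)\, p_v^{C\cup\{v\}}(u) = \mathbb{E}_v[\tau_C]/R_{\mathrm{eff}}(v,C)$. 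Bullet~2 supplies $\tO{\beta^{-1}}$ as an upper bound on the distinct vertices visited from $v$ before hitting $C$, and an iterated application of bullet~2 to suffix walks bounds the expected number of visits to each such vertex by another $\tO{\beta^{-1}}$ factor, giving $\mathbb{E}_v[\tau_C] = \tO{\beta^{-2}}$. The $R_{\mathrm{eff}}(v,C)$ denominator is controlled because the degree-aware sampling places $v$'s highest-conductance neighbors into $C$ with high probability, so it cannot shrink the ratio by more than polylog factors; carefully tracking this cancellation is the technically delicate piece of the proof.

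For the $\tO{m/\beta^2}$ runtime, beyond the $O(m)$ sampling step I would certify bullet~3 explicitly (and resample if a witnessed violation occurs) by simulating $\tO{\deg(u)/\beta}$ random walks of length $\tO{\beta^{-1}}$ from each $u$, for a total of $\tO{\sum_u \deg(u)/\beta^2} = \tO{m/\beta^2}$, matching the target.
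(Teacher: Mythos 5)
There are genuine gaps here, concentrated in your treatment of the third bullet of Definition~\ref{def:cong_red}. First, you conflate two different degrees. The sum to be bounded, $\sum_u \deg(u)\,p_v^{C\cup\{v\}}(u)$, uses the \emph{combinatorial} degree (it counts $\deg(u)$ random walks launched from $u$, so it must be an integer), whereas your Green's-function identity $d(u)G^C(u,v)=d(v)G^C(v,u)$ holds for the \emph{weighted} degree $d_w(u)=\sum_{e\ni u}1/r_e$, which is the stationary measure of the walk. With weighted degrees your identity does give $\sum_u d_w(u)p_v^{C\cup\{v\}}(u)=\mathbb{E}_v[\tau_C]/R_{eff}(v,C)$, but that is the wrong quantity, and it is genuinely unbounded: take $v$ joined to a single neighbor $x$ by an edge of resistance $\eps$ and $x$ joined to $C$ by resistance $1$; then $\mathbb{E}_v[\tau_C]\approx 2/\eps$ while $R_{eff}(v,C)\approx 1$. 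This also kills your claim that $\mathbb{E}_v[\tau_C]=\tO{\beta^{-2}}$: bullet~2 controls the number of \emph{distinct} vertices visited, not the number of \emph{steps}, and a single low-resistance edge can trap the walk for arbitrarily many steps while the distinct-vertex count stays at $2$. (The same steps-versus-distinct-vertices confusion undermines your runtime claim of simulating walks ``of length $\tO{\beta^{-1}}$''.) Finally, using weighted degree in the sampling probability $p_v=\min(1,c\beta\max(1,\deg(v)))$ breaks the size bullet, since $\sum_v d_w(v)$ is not $O(m)$ when resistances are small.

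For reference, the construction in \cite{gao2021fully} (which this paper relies on structurally, e.g.\ Lemma~\ref{lem:small-neighborhood2} explicitly assumes $C$ contains the endpoints of $\beta m$ \emph{uniformly random edges}) is uniform edge sampling, which is equivalent to vertex sampling proportional to combinatorial degree; your size and hitting bullets go through in that setting essentially as you wrote them. But the third property is proved there by a reversibility/counting argument over the $2m$ launched walks combined with the distinct-vertex bound of bullet~2 --- not by bounding the expected hitting \emph{time} $\mathbb{E}_v[\tau_C]$, precisely because that quantity admits no bound in terms of $\beta$ alone. As written, your proof of the key third property does not close.
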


Intuitively, (\ref{eq:cong_red}) says that ``not too many'' random walks go through a given
vertex before reaching $C$.
This property is crucial for
ensuring that when inserting a new vertex into $C$, the data structure
will not have to change too much. 
As we will see in Section~\ref{sec:Fsystem}, this property plays an even more central role
when general demands are introduced, as 
it allows us to show that the demands outside $C$
can be pushed to $C$.
Additionally, in Section~\ref{sec:important} we will use it to show that 
edges that are too far from $C$ in effective resistance metric are not \emph{important},
in the sense that neither can they get congested, nor can their demand congest anything else.

\subsection{Moving demands to the sparsifier}
\label{sec:Fsystem}

The goal of this section is to show that
if $C$ is a congestion reduction subset, then
any demand of the form
$\dd = \BB^\top \frac{\qq}{\sqrt{\rr}}$
for some $\qq\in[-1,1]^m$
can be approximated by 
$\vpi^C(\dd)$, i.e.
its demand projection 
onto $C$ (Definition~\ref{def:demand_projection}).
This allows us to move all demands to the sublinear-sized $C$ and thus enables us to work
with the Schur complement of $G$ onto $C$.
\begin{lemma}\label{lem:non-projected-demand-contrib}
Consider a graph $G(V,E)$ with resistances $\rr$ and Laplacian $\LL$, a $\beta$-congestion reduction subset $C$,
and a demand $\dd = \delta \BB^\top \frac{\qq}{\sqrt{\rr}}$ for some $\delta > 0$ and
$\qq \in[-1,1]^m$.
Then, the potential embedding defined as
\begin{align*}
\vphi = \LL^+ \left(\dd - \vpi^C\left(\dd\right)\right)
\end{align*}
has congestion $\delta \cdot \tO{1/\beta^2}$, i.e. %
$\left\|\frac{\BB \vphi}{\sqrt{\rr}}\right\|_\infty \leq \delta \cdot \tO{1/\beta^2}$. %
\end{lemma}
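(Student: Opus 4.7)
The plan is to first observe that $\vphi$ vanishes on $C$, which reduces the analysis to a Laplacian system on $F = V \setminus C$; then to bound the voltage drop across any single edge $e' = (a,b)$ via a reciprocity argument; and finally to apply the random walk interpretation of electrical flows together with the $\beta$-congestion reduction property~(\ref{eq:cong_red}).

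First I would show that $\vphi|_C = \zerov$. Using Fact~\ref{fact:demand_proj}, $[\LL^+ \dd]_C = SC^+\, \vpi^C(\dd)$, and because $\vpi^C(\dd)$ is already supported on $C$ we have $\vpi^C(\vpi^C(\dd)) = \vpi^C(\dd)$, so $[\LL^+ \vpi^C(\dd)]_C = SC^+\, \vpi^C(\dd)$ as well. The two expressions cancel, giving $\vphi|_C = \zerov$ and hence $\vphi|_F = \LL_{FF}^{-1}\, \dd|_F$. Consequently edges entirely inside $C$ carry no flow and need not be checked.

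Next, to bound the congestion on an edge $e' = (a,b)$ with at least one endpoint in $F$, I would introduce the reciprocal potential $\tilde\psi^{(e')}$ defined by $\LL_{FF}\tilde\psi^{(e')}|_F = (\onev_a - \onev_b)|_F$ and $\tilde\psi^{(e')}|_C = \zerov$, which encodes unit $a$-to-$b$ current with $C$ held at ground. Since $\tilde\psi^{(e')}|_C = \zerov$, the $\vpi^C(\dd)$ term drops out of $\langle \tilde\psi^{(e')}, \dd - \vpi^C(\dd)\rangle$, yielding
\[ \phi_a - \phi_b = \langle \tilde\psi^{(e')}, \dd \rangle = \delta \sum_e \frac{q_e}{\sqrt{r_e}}\bigl(\tilde\psi^{(e')}_{e^+} - \tilde\psi^{(e')}_{e^-}\bigr). \]
Together with $|q_e| \leq 1$, this gives
\[ \frac{|\phi_a - \phi_b|}{\sqrt{r_{e'}}} \leq \delta \sum_e \sqrt{r_e/r_{e'}}\,\bigl|\tilde\rho^{(e')}_e\bigr|, \]
where $\tilde\rho^{(e')} = \RR^{-1}\BB\tilde\psi^{(e')}$ is the corresponding unit electrical flow; by linearity $\tilde\rho^{(e')} = \rho^{(a)} - \rho^{(b)}$, where $\rho^{(v)}$ is the unit $v$-to-$C$ flow.

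The remaining task is to bound $\sum_e \sqrt{r_e/r_{e'}}\,|\rho^{(v)}_e|$ for $v\in\{a,b\}$. I would use the standard random-walk expected-crossings bound $|\rho^{(v)}_e| \leq (G_{FF}(v, e^+) + G_{FF}(v, e^-))/r_e$ together with the Green's function identity $G_{FF}(v,u) = R_{eff}(u, C)\,p_u^{C \cup \{u\}}(v)$ (implicit in Lemma~\ref{lem:effective_hitting}), then apply the congestion reduction bound $\sum_u \deg(u)\, p_v^{C\cup\{v\}}(u) \leq \tO{1/\beta^2}$ via a weighted Cauchy--Schwarz that correctly absorbs the $\sqrt{r_e}$ factors. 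The hard part will be this last combining step: a naive Cauchy--Schwarz via the energy $R^{\mathrm{ground}}_{eff}(a,b) \leq r_{e'}$ only gives $\sqrt{m\,r_{e'}}$, losing a factor of $\sqrt{m}$. The refined bound must crucially exploit the structural fact that random walks from any $v \in F$ hit $C$ after visiting only $\tO{\beta^{-1}\log n}$ distinct vertices, so that the flow $\rho^{(v)}$ is sufficiently localized around $v$ for the congestion reduction inequality to produce the desired $\tO{1/\beta^2}$ factor rather than a $\sqrt{m}$ factor.
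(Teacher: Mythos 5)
Your setup is sound: the observation that $\vphi$ vanishes on $C$ (so $\vphi_F = \LL_{FF}^{-1}\dd_F$), the reciprocity identity $\phi_a - \phi_b = \langle \tilde\psi^{(e')}, \dd\rangle$, and the reduction to bounding $\sum_e \sqrt{r_e/r_{e'}}\,|\rho^{(v)}_e|$ for the unit grounded flows out of $v\in\{a,b\}$ are all correct, and they constitute a legitimate dual formulation of the paper's argument (the paper instead fixes the congested edge and decomposes the \emph{demand} into single edges $(s,t)$, bounding the net crossing of $e'$ by random walks launched from $s$ and $t$).

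However, the proof has a genuine gap exactly where you flag "the hard part": the final combining step is not carried out, and the ingredients you list do not suffice to carry it out. Plugging $G_{FF}(v,u) = R_{\mathrm{eff}}(v,C)\,p_v^{C\cup\{v\}}(u)$ into your bound leaves you with $\frac{R_{\mathrm{eff}}(v,C)}{\sqrt{r_e\, r_{e'}}}$ as a per-edge weight multiplying the hitting probabilities $p_v^{C\cup\{v\}}(e^{\pm})$, and this weight is not uniformly bounded: $r_e$ can be arbitrarily small relative to $R_{\mathrm{eff}}(v,C)$, so the congestion reduction inequality $\sum_u \deg(u)\,p_v^{C\cup\{v\}}(u)\leq\tO{1/\beta^2}$ cannot be applied directly, and the "walks visit only $\tO{\beta^{-1}\log n}$ distinct vertices" property says nothing about these resistance ratios. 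The missing idea is the mechanism by which the paper cancels the ratio $\sqrt{r_{e}/r_{e'}}$ against the probabilities: in the proof of Lemma~\ref{lem:F_system1} the congested edge is subdivided at a midpoint $w$ and the escape-probability bound of Lemma~\ref{lem:escape_prob} gives $\hp_w^{\{t,w\}}(s)\leq\min\{1,\,4 r_{st}/r_{uv}\}\leq 2\sqrt{r_{st}/r_{uv}}$, which exactly absorbs the resistance ratio and leaves a pure sum of hitting probabilities to which \eqref{eq:cong_red} applies. Without an analogue of this step (a conditional-escape or $\min\{1,\cdot\}$ truncation argument in your dual picture), the proposal does not yield the claimed $\delta\cdot\tO{1/\beta^2}$ bound.
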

We first prove a restricted version of the lemma where 
$\dd$ is an $s-t$ demand. Then, Lemma~\ref{lem:non-projected-demand-contrib} follows trivially
by applying (\ref{eq:cong_red}).
\begin{lemma}
Consider a graph $G(V,E)$ with resistances $\rr$ and Laplacian $\LL$, a $\beta$-congestion reduction subset $C$,
and a demand $\dd = \delta \BB^\top \frac{\onev_{st}}{\sqrt{\rr}}$ for some $\delta > 0$ and
$(s,t)\in E\backslash E(C)$.
Then, for the potential embedding defined as
\begin{align*}
\vphi = \LL^+ \left(\dd - \vpi^C\left(\dd\right)\right)
\end{align*}
it follows that for any $e=(u,v)\in E$ we have
\begin{align*}
\left|\frac{(\BB \vphi)_e}{\sqrt{r_e}}\right| \leq 
2 \delta \cdot 
\left(p_u^{C\cup\{u\}}(s) + p_v^{C\cup\{v\}}(s) + p_u^{C\cup\{u\}}(t) + p_v^{C\cup\{v\}}(t)
\right)\,.
\end{align*}
\label{lem:F_system1}
\end{lemma}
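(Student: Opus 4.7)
My approach is to reduce the claim to a statement about a Dirichlet Green's function on $F = V \setminus C$, then invoke the random-walk representation of that Green's function together with Lemma~\ref{st_projection1} applied to the edge $(s,t)$ itself.

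First, applying the block Cholesky factorization of $\LL^+$ together with Fact~\ref{fact:demand_proj}, a direct computation of $\LL^+$ applied to $\dd - \vpi^C(\dd)$ (noting that $(\dd - \vpi^C(\dd))_F = \dd_F$ and $(\dd - \vpi^C(\dd))_C = \LL_{CF}\LL_{FF}^{-1}\dd_F$) yields $\vphi_C = \zerov$ and $\vphi_F = \LL_{FF}^{-1}\dd_F$. In the generic case $s,t \in F$ this specializes to $\vphi_F = (\delta/\sqrt{r_{st}})\,\LL_{FF}^{-1}(\onev_s - \onev_t)$; the cases where $s$ or $t$ lies in $C$ are strictly easier variants in which the offending endpoint drops out of $\dd_F$. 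Next, using the standard random-walk identity $[\LL_{FF}^{-1}]_{vs} = p_v^{C\cup\{v\}}(s)\, R_{eff}(v, C) = p_s^{C\cup\{s\}}(v)\, R_{eff}(s, C)$ for the Dirichlet Green's function killed on $C$, I rewrite $\vphi$ as
\[
\vphi_v \;=\; \frac{\delta}{\sqrt{r_{st}}}\, R_{eff}(v, C)\cdot\bigl(p_v^{C\cup\{v\}}(s) - p_v^{C\cup\{v\}}(t)\bigr) \qquad (v \in F),
\]
with $\vphi_v = 0$ for $v \in C$.

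The key step is to bound the difference of hitting probabilities above by invoking Lemma~\ref{st_projection1} with the edge $(s,t) \in E$ (of resistance $r_{st}$) and parameter vertex $w \in \{u, v\}$, which gives
\[
\bigl|p_w^{C\cup\{w\}}(s) - p_w^{C\cup\{w\}}(t)\bigr| \;\le\; \bigl(p_w^{C\cup\{w\}}(s) + p_w^{C\cup\{w\}}(t)\bigr)\cdot\frac{r_{st}}{R_{eff}(w,(s,t))}\,.
\]
Substituting this into the expression above, expanding $(\BB\vphi)_e = \vphi_u - \vphi_v$ by the triangle inequality, and dividing by $\sqrt{r_e}$ produces an upper bound of precisely the desired form $\delta \cdot (p_u^{C\cup\{u\}}(s) + p_v^{C\cup\{v\}}(s) + p_u^{C\cup\{u\}}(t) + p_v^{C\cup\{v\}}(t))$ multiplied by a residual geometric factor $R_{eff}(w, C)\, r_{st}/(\sqrt{r_e r_{st}}\,R_{eff}(w,(s,t)))$ for $w \in \{u,v\}$.

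The main obstacle is showing that this residual factor is bounded by the constant $2$ appearing in the statement. Neither the ratio $R_{eff}(w, C)/R_{eff}(w,(s,t))$ nor the factor $\sqrt{r_{st}/r_e}$ is bounded on its own: the former can be large or small depending on the relative placement of $\{s,t\}$ and $C$ relative to $w$, and the latter depends on whether $r_e < r_{st}$ or $r_e > r_{st}$. The hypothesis $(s,t) \in E$ is essential here, since it forces $R_{eff}^G(s,t) \le r_{st}$ and hence identifies the combined potential $\vphi^s - \vphi^t$ (with $\vphi^x = \LL^+(\onev_x - \vpi^C(\onev_x))$) with the unit $s$-to-$t$ electrical flow in the contracted graph $G/C$, whose total energy is at most $r_{st}$. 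This furnishes a complementary uniform bound of the form $|(\BB\vphi)_e/\sqrt{r_e}| \le \delta$, and a case split on the relative sizes of $r_e$, $r_{st}$, $R_{eff}(w, C)$, and $R_{eff}(w,(s,t))$ should let us combine this global energy estimate with the localized Lipschitz bound from Lemma~\ref{st_projection1} to close the constant.
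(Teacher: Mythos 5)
Your reduction to the Dirichlet Green's function is correct: one does get $\vphi_C = \zerov$, $\vphi_F = \LL_{FF}^{-1}\dd_F$, and the identity $[\LL_{FF}^{-1}]_{vs} = p_v^{C\cup\{v\}}(s)\,R_{eff}(v,C)$ gives a valid closed form for $\vphi_v$. The global energy estimate is also right: $E_{\rr}(\vphi) = \dd_F^\top \LL_{FF}^{-1}\dd_F \le \frac{\delta^2}{r_{st}} R_{eff}(s,t) \le \delta^2$, so $|(\BB\vphi)_e|/\sqrt{r_e} \le \delta$ for every edge. But the final step — closing the constant by a case split between the localized bound and the uniform bound $\delta$ — cannot work, and this is a genuine gap rather than a routine omission. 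Your localized bound has the form $\delta \sum_{w\in\{u,v\}}\bigl(p_w^{C\cup\{w\}}(s)+p_w^{C\cup\{w\}}(t)\bigr)\cdot \frac{R_{eff}(w,C)\sqrt{r_{st}}}{\sqrt{r_e}\,R_{eff}(w,(s,t))}$, and for the case split to succeed you would need: whenever the residual factor exceeds $2$, the four hitting probabilities sum to at least a constant (so that the uniform bound $\delta$ is within a constant of the target). This fails. Take a long unit-resistance cycle through $s,t,u,v$ with $(s,t)$ and $(u,v)$ on opposite sides, set $r_{uv}=\eps \to 0$, and attach the cycle to $C=\{c\}$ by a single edge of resistance $\eta \ll 1$ at $s$. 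Then all four probabilities are $\Theta(\eta/m)$ (so the target is $\Theta(\delta\eta/m)$ and the uniform bound $\delta$ is far too weak), while $R_{eff}(w,C)\approx R_{eff}(w,(s,t))\approx m/2$ and $r_{st}=1$, so the residual factor is $\approx 1/\sqrt{\eps}$, unbounded. Both of your bounds exceed the target by arbitrary factors simultaneously, so no case split between them can recover the lemma (which is nonetheless true here: the actual congestion is $\Theta(\sqrt{\eps}\,\delta/m)$).

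The root cause is that you bound $|\vphi_u|$ and $|\vphi_v|$ separately and apply the triangle inequality. A very low-resistance edge forces $\vphi_u \approx \vphi_v$, and it is exactly this cancellation across $e$ — not the cancellation between $s$ and $t$, which Lemma~\ref{st_projection1} does capture — that controls $(\BB\vphi)_e/\sqrt{r_e}$ when $r_e$ is small. The paper's proof works with the \emph{net} number of crossings of $e$ by the random walks from $s$ and from $t$: it subdivides $e$ at a midpoint $w$, conditions the walk from $s$ on whether it reaches $w$ before $C\cup\{t\}$, and bounds the escape probability by $\min\{1, 4r_{st}/r_{uv}\} \le 2\sqrt{r_{st}/r_{uv}}$, which exactly cancels the $\sqrt{r_{uv}/r_{st}}$ prefactor. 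That $\min$ with $1$ is the mechanism your argument is missing; to salvage your route you would need a pointwise bound on the \emph{difference} $|\vphi_u - \vphi_v|$ (e.g., via the Green's function of the edge-subdivided graph, or by bounding the current on $e$ directly by an escape probability), not on the two potentials separately.
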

The proof of Lemma~\ref{lem:F_system1} appears in Appendix~\ref{proof_lem_F_system1}.

\subsection{$\eps$-Important edges}
\label{sec:important}

In this section we will show that the effect of edges that are ``far'' from the congestion reduction subset $C$
is negligible, as both their congestion and the congestion incurred because of their demands are small.
More specifically, given a demand 
$\dd$ supported on $C$ with energy $\leq 1$, i.e. $\mathcal{E}_{\rr}\left(\dd\right) \leq 1$,
the congestion $\vrho = \RR^{-1/2} \BB\LL^+ \dd_C$ that it induces satisfies:
\begin{align*}
\left|\rho_e\right| 
&= \left|\left\langle \onev_e, \RR^{-1/2}\BB \LL^+ \dd \right\rangle\right| \\
& = \left|\left\langle\BB^\top \frac{\onev_e}{\sqrt{\rr}}, \LL^+ \dd\right\rangle\right|\\
& = \left|\left\langle\vpi^C\left(\BB^\top \frac{\onev_e}{\sqrt{\rr}}\right), SC^+ \dd_C\right\rangle\right|\\
&\leq \sqrt{\mathcal{E}_{\rr}\left(\vpi^C\left(\BB^\top \frac{\onev_e}{\sqrt{\rr}}\right)\right) \mathcal{E}_{\rr}\left(\dd\right)} \\
& \leq \sqrt{\mathcal{E}_{\rr}\left(\vpi^C\left(\BB^\top \frac{\onev_e}{\sqrt{\rr}}\right)\right)}\,.
\end{align*}
For the last equality we used Fact~\ref{fact:demand_proj},
for the first inequality we applied Cauchy-Schwarz, 
and for the second one we used the upper bound on the energy required to route $\dd$.
Therefore, if we bound the energy of the projection of $\BB^\top\frac{\onev_e}{\sqrt{\rr}}$ onto $C$, we can
also bound the congestion of $e$. This is done in the following lemma, whose proof appears
in Appendix~\ref{proof_st_projection1_energy}.

\begin{lemma}
Consider a graph $G(V,E)$ with resistances $\rr$ and $C\subseteq V$. %
Then, for all $e\in E\backslash E(C)$ we have  
\[ \sqrt{\mathcal{E}_r\left(\vpi^C\left(\BB^\top \frac{\onev_e}{\sqrt{r_e}}\right)\right)} \leq 
6\cdot \sqrt{\frac{r_e}{R_{eff}(C,e)}}\,. \]
\label{st_projection1_energy}
\end{lemma}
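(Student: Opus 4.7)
The plan is to express the energy of the projection as a coordinate-wise sum over $C$, and then combine (i) a per-entry bound coming from the already-established Lemma~\ref{st_projection1} with (ii) a uniform flatness bound on the induced potential embedding $\Phi^\ast = \LL^+ \dd$, where $\dd = \BB^\top \onev_e/\sqrt{r_e}$.

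Let $e=(u,w)$; we may assume $u,w \notin C$, since otherwise $R_{eff}(C,e)=0$ and the bound is vacuous. Because $\vpi^C(\dd)$ is supported on $C$, the Schur-complement identity in Fact~\ref{fact:demand_proj} gives $\Phi^\ast_C = SC^+ \vpi^C(\dd)$, so
\[
\mathcal{E}_{\rr}(\vpi^C(\dd)) \;=\; \vpi^C(\dd)^\top SC^+ \vpi^C(\dd) \;=\; \sum_{v \in C} \pi_v^C(\dd)\, \Phi^\ast_v.
\]
Since $\sum_{v \in C} \pi_v^C(\dd) = \sum_u d_u \sum_v p_v^C(u) = \sum_u d_u = 0$, we may replace each $\Phi^\ast_v$ by $\Phi^\ast_v - c$ for any constant $c$; take $c = (\Phi^\ast_u + \Phi^\ast_w)/2$. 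Using the standard polarization identity $(\onev_a - \onev_b)^\top \LL^+ (\onev_c - \onev_d) = \tfrac{1}{2}(R_{eff}(a,d) + R_{eff}(b,c) - R_{eff}(a,c) - R_{eff}(b,d))$, a short computation yields
\[
\Phi^\ast_v - c \;=\; \frac{R_{eff}(v,w) - R_{eff}(u,v)}{2\sqrt{r_e}},
\]
and the triangle inequality for the resistance metric (combined with $R_{eff}(u,w) \leq r_e$) gives $|\Phi^\ast_v - c| \leq \sqrt{r_e}/2$ uniformly in $v$.

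For the coordinate-wise bound, I would apply Lemma~\ref{st_projection1} once for each $v \in C$, with the ``$C$'' in that lemma replaced by $C \setminus \{v\}$ and the ``$v$'' kept as our $v$; since $(C \setminus \{v\}) \cup \{v\} = C$, this produces $|\pi_v^C(\dd)| \leq (p_v^C(u) + p_v^C(w))\,\sqrt{r_e}/R_{eff}(v,e)$. Because $\{v\} \subseteq C$, contraction monotonicity of effective resistance yields $R_{eff}(v,e) \geq R_{eff}(C,e)$, and summing using $\sum_{v \in C}(p_v^C(u) + p_v^C(w)) = 2$ gives
\[
\sum_{v \in C} |\pi_v^C(\dd)| \;\leq\; \frac{2\sqrt{r_e}}{R_{eff}(C,e)}.
\]
Combining the flatness and coordinate bounds, $\mathcal{E}_{\rr}(\vpi^C(\dd)) \leq (\sqrt{r_e}/2) \cdot (2\sqrt{r_e}/R_{eff}(C,e)) = r_e/R_{eff}(C,e)$, hence $\sqrt{\mathcal{E}_{\rr}(\vpi^C(\dd))} \leq \sqrt{r_e/R_{eff}(C,e)}$, well within the stated constant $6$.

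The main obstacle is to translate the apparently rigid Schur-complement energy $\vpi^C(\dd)^\top SC^+ \vpi^C(\dd)$ into a coordinate-wise sum amenable to Lemma~\ref{st_projection1}; the key step is the Schur-complement identity $\Phi^\ast_C = SC^+ \vpi^C(\dd)$ together with the zero-sum property of $\vpi^C(\dd)$, which lets us absorb all ``global'' behavior of $\Phi^\ast$ into a single subtracted constant and reduce the entire estimate to a coordinate-wise comparison.
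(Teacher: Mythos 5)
Your proof is correct, and it takes a genuinely different route from the paper's. The paper first disposes of the case $R_{eff}(C,e) < 36\, r_e$ via $\mathcal{E}_{\rr}(\vpi^C(\dd)) \le \mathcal{E}_{\rr}(\dd) \le 1$, and in the remaining case decomposes $\vpi^C(\dd) = \sum_{v\in C}\pi_v^C(\dd)(\onev_v-\onev_w)$ (routing every projected unit to the endpoint $w$), applies subadditivity of $\sqrt{\mathcal{E}_{\rr}(\cdot)}$ to get $\sum_v |\pi_v^C(\dd)|\sqrt{R_{eff}(v,w)}$, and then spends several triangle-inequality manipulations converting $\sqrt{R_{eff}(v,w)}\cdot \sqrt{r_e}/R_{eff}(v,e)$ into $O(\sqrt{r_e/R_{eff}(v,e)})$ before summing against the hitting probabilities; this is where the constant $6$ comes from. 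You instead use the exact identity $\mathcal{E}_{\rr}(\vpi^C(\dd)) = \langle \vpi^C(\dd), [\LL^+\dd]_C\rangle$ (via Fact~\ref{fact:demand_proj} and the Schur-complement factorization), exploit the zero-sum property to recenter the potential, and observe via polarization that the recentered potential is uniformly bounded by $\sqrt{r_e}/2$ on all of $V$; an $\ell_1$--$\ell_\infty$ pairing with the per-coordinate bound from Lemma~\ref{st_projection1} then finishes with no case split and no lossy subadditivity step. Both arguments lean on Lemma~\ref{st_projection1}, the monotonicity $R_{eff}(v,e)\ge R_{eff}(C,e)$ for $v\in C$, and $\sum_{v\in C}(p_v^C(u)+p_v^C(w))=2$; yours is shorter, sharper (constant $1$ rather than $6$), and the recentering idea cleanly isolates why only the \emph{local} behavior of $\LL^+\dd$ near $e$ matters. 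The only point worth making explicit is the degenerate case where an endpoint of $e$ lies in $C$, which you correctly dismiss as vacuous.
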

This is the consequence of the following lemma, which bounds the magnitude of the projection
on a specific vertex, based on its effective resistance distance from $e$, as well as hitting
probabilities from $e$ to $C$.
The proof appears in Appendix~\ref{proof_st_projection1}.
\begin{lemma}
Consider a graph $G(V,E)$ with resistances $\rr$ and a subset of vertices $C\subseteq V$.
For any vertex $v\in V\backslash C$
we have that
\begin{align*}
\left|\pi_v^{C\cup\{v\}}\left(\BB^\top \frac{\onev_e}{\sqrt{\rr}}\right)\right| \leq (p_v^{C\cup\{v\}}(u) + p_v^{C\cup\{v\}}(w)) \cdot \frac{\sqrt{r_e}}{R_{eff}(v,e)} \,.
\end{align*}
\label{st_projection1}
\end{lemma}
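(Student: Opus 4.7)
The plan is to reinterpret the projection as an electrical voltage drop across $e$ and then control this drop via a reciprocity argument that brings in $R_{eff}(v,e)$. By the definition of the demand projection together with the Dirichlet interpretation of hitting probabilities, if $\vphi$ denotes the unique harmonic function on $V$ with boundary data $\phi_v = 1$ and $\phi|_C = 0$, then $\phi_x = p_v^{C \cup \{v\}}(x)$ for all $x$, and in particular
\[
\pi_v^{C \cup \{v\}}\!\bigl(\BB^\top \onev_e/\sqrt{r_e}\bigr) \;=\; (\phi_u - \phi_w)/\sqrt{r_e}.
\]
By Ohm's law $\phi_u - \phi_w = r_e \cdot i_e$, where $i_e$ is the current crossing $e$ in the unit $v$-to-$C$ Dirichlet flow. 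The target inequality then reduces to establishing
\[
|i_e| \cdot R_{eff}(v, e) \;\le\; \phi_u + \phi_w .
\]

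To extract $R_{eff}(v, e)$, I would contract the pair $\{u, w\}$ into a single vertex $\bar e$ and consider the unit $v$-to-$\bar e$ current flow $\vpsi$ in the contracted graph, whose voltage drop from $v$ to $\bar e$ is exactly $R_{eff}(v, e)$. Lifting $\vpsi$ back to $G$ by setting $\psi_u = \psi_w = 0$ and applying the reciprocity identity $\langle \dd^{\vphi}, \vpsi\rangle = \langle \dd^{\vpsi}, \vphi\rangle$ yields an equation relating $R_{eff}(v, e)\cdot i_e$ (on the $\vphi$-side, obtained from the Ohm's-law reading above) to an expression of the form $\alpha \phi_u + (1-\alpha)\phi_w$ minus a boundary correction coming from the sinks of $\vphi$ on $C$, where $\alpha \in [0, 1]$ is the fraction of the contracted unit current extracted at $u$ rather than at $w$.

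The remaining task is to absorb the boundary correction on $C$. I would invoke the maximum principle for $\vpsi$, which yields $0 \le \psi_x \le R_{eff}(v, e)$ on $V$, together with the observation that the harmonic-measure weights on $C$ that appear in the boundary correction are exactly the weights that govern the extraction of the $v$-to-$C$ flow at $C$. A second reciprocity, now against the flow $\vpsi'$ that sends unit current out of each of $u$ and $w$ and grounds $C$, whose voltage at $v$ equals $R_{eff}(v, C)\,(\phi_u + \phi_w)$ by reciprocity with $\vphi$, eliminates both the unknown split $\alpha$ and the harmonic measure on $C$ at once, leaving the clean bound $|i_e|\cdot R_{eff}(v, e) \le \phi_u + \phi_w$.

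The main obstacle I anticipate is carefully pairing the two reciprocity identities so that the unknown sink distribution on $C$ and the current split $\alpha$ drop out simultaneously. This is also the structural reason why $R_{eff}(v,e)$ (rather than $R_{eff}(v,C)$) ends up in the denominator: the contraction of $\{u,w\}$ is precisely what pumps the energy $R_{eff}(v,e)$ into the auxiliary side of the reciprocity pairing.
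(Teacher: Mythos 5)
Your reduction is correct: writing $\phi_x = p_v^{C\cup\{v\}}(x)$ for the harmonic extension of the boundary data $\phi_v=1$, $\phi|_C=0$, the lemma is exactly the inequality $|\phi_u-\phi_w|\le (\phi_u+\phi_w)\cdot r_e/R_{eff}(v,e)$, i.e. $|i_e|\cdot R_{eff}(v,e)\le \phi_u+\phi_w$ with $i_e=(\phi_u-\phi_w)/r_e$. The problem is that the reciprocity machinery you propose never produces the antisymmetric quantity $\phi_u-\phi_w$. Write out the two pairings: with $\vpsi$ the lifted $v$-to-$\{u,w\}$ potential ($\psi_u=\psi_w=0$, $\psi_v=R_{eff}(v,e)$, demand $\onev_v-\alpha\onev_u-(1-\alpha)\onev_w$), reciprocity gives
\begin{align*}
\frac{R_{eff}(v,e)}{R_{eff}(v,C)}-\sum_{x\in C}\mu_x\psi_x \;=\; 1-\alpha\phi_u-(1-\alpha)\phi_w\,,
\end{align*}
where $\mu$ is the harmonic measure of $\vphi$ on $C$; and your second pairing gives $\psi'_v=R_{eff}(v,C)(\phi_u+\phi_w)$. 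Neither identity contains $i_e$ or $\phi_u-\phi_w$: $\dd^{\vphi}$ is supported on $C\cup\{v\}$ and both auxiliary demands hit $\{u,w\}$ only through convex/symmetric combinations, so $\vphi$ enters only via $\alpha\phi_u+(1-\alpha)\phi_w$ and $\phi_u+\phi_w$. The assertion that the first reciprocity ``relates $R_{eff}(v,e)\cdot i_e$ to $\alpha\phi_u+(1-\alpha)\phi_w$'' is where the argument breaks: $i_e$ is the current on one interior edge of the $\vphi$-flow, not a coordinate of either demand vector, and no combination of these identities isolates it.

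The missing idea is the one the paper's proof is built on: condition on the walk avoiding $C$. Concretely, one factors $p_v^{C\cup\{v,w\}}(u)=p_v^{C\cup\{v\}}(u)\cdot\hp_v^{\{v,w\}}(u)$, where $\hp$ are hitting probabilities in the graph $\hG$ obtained by cutting the edges incident to $C$, and then bounds $\hp_v^{\{v,w\}}(u)=(\psi_u-\psi_w)/(\psi_v-\psi_w)\le r_e/\hR_{eff}(v,w)\le r_e/R_{eff}(v,w)$ by comparing the current on $e$ with the total current in the unit $v$--$w$ flow of $\hG$ (here the auxiliary flow is between $v$ and \emph{one endpoint} of $e$, which is precisely how the difference across $e$ gets compared to a full effective resistance). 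The elementary identity $p_v^{C\cup\{v\}}(u)-p_v^{C\cup\{v\}}(w)\le p_v^{C\cup\{v,w\}}(u)$ then yields the claim, with $R_{eff}(v,e)$ appearing only at the end via $R_{eff}(v,e)\le\min\{R_{eff}(v,u),R_{eff}(v,w)\}$ (Lemma~\ref{lem:multi_effective_resistance}) --- so the contraction of $\{u,w\}$, which your plan treats as the source of $R_{eff}(v,e)$, is not actually where that quantity comes from. If you want to salvage a purely electrical argument, you must pair against a demand of the form $\onev_u-\onev_w$ in the cut graph $\hG$, at which point you have reconstructed the paper's proof.
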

This lemma is complementary to the more immediate property 
\[ \left|\pi_v^{C\cup\{v\}}\left(\BB^\top \frac{\onev_e}{\sqrt{\rr}}\right)\right| \leq (p_v^{C\cup\{v\}}(u) + p_v^{C\cup\{v\}}(w)) \cdot \frac{1}{\sqrt{r_e}}\,, \]
and they are both used in Section~\ref{sec:demand_projection}
in order to estimate demand projections.
In fact, the just by multiplying these two, we get the following
lemma, which is nice because it doesn't depend on $r_e$:
\begin{lemma}
Consider a graph $G(V,E)$ with resistances $\rr$ and a subset of vertices $C\subseteq V$.
For any vertex $v\in V\backslash C$
we have that
\begin{align*}
\left|\pi_v^{C\cup\{v\}}\left(\BB^\top \frac{\onev_e}{\sqrt{\rr}}\right)\right| \leq (p_v^{C\cup\{v\}}(u) + p_v^{C\cup\{v\}}(w)) \cdot \frac{1}{\sqrt{R_{eff}(v,e)}} \,.
\end{align*}
\label{st_projection_combined}
\end{lemma}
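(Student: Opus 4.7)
The plan is to combine Lemma~\ref{st_projection1} with an elementary bound obtained directly from hitting probabilities, and then to take the geometric mean of the two. The claim is essentially algebraic once both bounds are in hand.

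First, I would establish the companion inequality
\[ \left|\pi_v^{C\cup\{v\}}\left(\BB^\top \frac{\onev_e}{\sqrt{\rr}}\right)\right| \leq \left(p_v^{C\cup\{v\}}(u) + p_v^{C\cup\{v\}}(w)\right) \cdot \frac{1}{\sqrt{r_e}}, \]
which is mentioned in the paragraph preceding the statement. This is immediate from Definition~\ref{def:demand_projection}: the vector $\BB^\top \frac{\onev_e}{\sqrt{\rr}}$ has only two nonzero entries $\pm\frac{1}{\sqrt{r_e}}$ supported on the endpoints $u,w$ of $e$. Plugging this into $\pi_v^{C\cup\{v\}}(\dd) = \sum_{x\in V} d_x\, p_v^{C\cup\{v\}}(x)$ and applying the triangle inequality yields the displayed bound.

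Second, I would invoke Lemma~\ref{st_projection1}, which supplies the alternative bound with $\frac{\sqrt{r_e}}{R_{eff}(v,e)}$ in place of $\frac{1}{\sqrt{r_e}}$. The two bounds have complementary behavior in $r_e$: one is sharp when $r_e$ is large, the other when $r_e$ is small. For any nonnegative scalar $X$ satisfying $X \leq A$ and $X \leq B$, we have $X \leq \sqrt{AB}$. Writing $P := p_v^{C\cup\{v\}}(u) + p_v^{C\cup\{v\}}(w)$ and applying this to our two bounds gives
\[ \left|\pi_v^{C\cup\{v\}}\left(\BB^\top \frac{\onev_e}{\sqrt{\rr}}\right)\right| \leq \sqrt{\frac{P}{\sqrt{r_e}} \cdot \frac{P\sqrt{r_e}}{R_{eff}(v,e)}} = \frac{P}{\sqrt{R_{eff}(v,e)}}, \]
which is exactly the claim. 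Since the first bound is trivial and the second is already proved, there is no genuine obstacle; the only content of the lemma is the clean observation that multiplying the two complementary bounds cancels the explicit $r_e$-dependence and leaves only the effective-resistance distance from $v$ to $e$.
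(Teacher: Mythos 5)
Your proof is correct and matches the paper's own argument: the paper explicitly derives this lemma ``just by multiplying'' the immediate bound $P/\sqrt{r_e}$ with the bound $P\sqrt{r_e}/R_{eff}(v,e)$ from Lemma~\ref{st_projection1}, which is precisely your geometric-mean step. No further comment needed.
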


By the previous discussion, Lemma~\ref{st_projection1}
implies that if $\left|\rho_e\right| \geq \eps$, then
$R_{eff}(C,e) \leq r_e\cdot \frac{36}{\eps^2}$.
This motivates the following definition of $\eps$-\emph{important edges}.

\begin{definition}[Important edges]
An edge $e\in E$ is called $\eps$-\emph{important} (or just \emph{important}) if 
$R_{eff}(C,e) \leq r_e / \eps^2$.
\end{definition}

Now it is time for the main lemma of this section, which uses Lemma~\ref{st_projection1_energy}
to show that if our goal is to detect edges with congestion $\geq \eps$,
it is sufficient to restrict to computing demand projections of 
$\Omega(\eps)$-important edges.
Its proof appears in Appendix~\ref{proof_lem_important_edges}.
\begin{lemma}[Localization lemma]
Let $\vphi^*$ be any solution of
\begin{align*}
\LL \vphi^* = \delta \cdot \vpi^C\left(\BB^\top \frac{\pp}{\sqrt{\rr}} \right)\,,
\end{align*}
where $\rr$ are any resistances, $\pp\in[-1,1]^m$, and $C\subseteq V$.
Then,
for any $e\in E$ that is not $\eps$-important
we have
$\left|\frac{\BB\vphi^*}{\sqrt{\rr}}\right|_e \leq 6\eps$.
\label{lem:important_edges}
\end{lemma}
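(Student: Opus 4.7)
I start by rewriting the edge-indexed quantity as an inner product of two demand projections through a duality maneuver. Writing $\yy = \BB^\top \onev_e/\sqrt{r_e}$ and $\dd = \vpi^C(\BB^\top \pp/\sqrt{\rr})$, we have
\begin{equation*}
\frac{(\BB\vphi^*)_e}{\sqrt{r_e}} = \langle \yy, \vphi^*\rangle = \delta \langle \yy, \LL^+ \dd\rangle = \delta \langle \LL^+ \yy, \dd\rangle\,.
\end{equation*}
Since $\dd$ is supported on $C$, only the $C$-coordinates of $\LL^+ \yy$ matter, and Fact~\ref{fact:demand_proj} gives $[\LL^+ \yy]_C = SC^+ \vpi^C(\yy)$. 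Combined with the idempotence of $\vpi^C$ on $C$-supported vectors (so that $\vpi^C(\dd)=\dd$), this rewrites
\begin{equation*}
\frac{(\BB\vphi^*)_e}{\sqrt{r_e}} = \delta\,\bigl\langle SC^+ \vpi^C\bigl(\BB^\top \tfrac{\onev_e}{\sqrt{r_e}}\bigr),\;\vpi^C\bigl(\BB^\top \tfrac{\pp}{\sqrt{\rr}}\bigr)\bigr\rangle\,.
\end{equation*}

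Next, I apply Cauchy-Schwarz in the $SC^+$ inner product. For any demand $\dd$ supported on $C$, $\langle \dd, SC^+ \dd\rangle = \mathcal{E}_{\rr}(\dd)$ (this follows from Fact~\ref{fact:demand_proj} by restricting $\LL^+\dd$ to $C$). Hence
\begin{equation*}
\left|\frac{(\BB\vphi^*)_e}{\sqrt{r_e}}\right| \leq \delta \cdot \sqrt{\mathcal{E}_{\rr}\bigl(\vpi^C\bigl(\BB^\top \tfrac{\onev_e}{\sqrt{r_e}}\bigr)\bigr)} \cdot \sqrt{\mathcal{E}_{\rr}\bigl(\vpi^C\bigl(\BB^\top \tfrac{\pp}{\sqrt{\rr}}\bigr)\bigr)}\,.
\end{equation*}

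The two factors are now exactly where the earlier lemmas apply. For the first, Lemma~\ref{st_projection1_energy} directly gives the bound $6\sqrt{r_e / R_{eff}(C,e)}$; by the non-importance assumption $R_{eff}(C,e) > r_e/\eps^2$, this factor is at most $6\eps$. For the second, I use Lemma~\ref{lem:sc-energy-bd} to drop the projection, $\mathcal{E}_{\rr}(\vpi^C(\dd)) \leq \mathcal{E}_{\rr}(\dd)$, and then exhibit $\pp/\sqrt{\rr}$ as an explicit flow routing $\BB^\top \pp/\sqrt{\rr}$, whose energy $\sum_e r_e (p_e/\sqrt{r_e})^2 = \sum_e p_e^2 \leq m$ upper-bounds $\mathcal{E}_{\rr}(\BB^\top \pp/\sqrt{\rr})$. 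Hence the second factor is at most $\sqrt{m}$, and with $\delta = 1/\sqrt{m}$ the $\sqrt{m}$ cancels, leaving $6\eps$ as claimed.

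\textbf{Anticipated friction.} The calculation is mostly linear algebra; the one step to double-check is the reduction through $SC^+$, making sure that the idempotence of $\vpi^C$ on $C$-supported demands is invoked correctly so that Cauchy-Schwarz is taken in the correct (positive semidefinite) inner product. The other mild subtlety is bookkeeping the $\delta = 1/\sqrt{m}$ normalization --- without it, the second factor $\sqrt{m}$ would not cancel and the $6\eps$ bound would inflate by $\sqrt{m}\delta$.
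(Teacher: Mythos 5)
Your proof is correct and follows essentially the same route as the paper's: rewrite $(\BB\vphi^*)_e/\sqrt{r_e}$ as an inner product involving $\vpi^C(\BB^\top \onev_e/\sqrt{\rr})$, apply Cauchy--Schwarz in the energy/co-energy pairing, bound the edge-side factor by $6\sqrt{r_e/R_{eff}(C,e)}\le 6\eps$ via Lemma~\ref{st_projection1_energy}, and bound the demand-side factor by $\sqrt{m}$ so that $\delta=1/\sqrt{m}$ cancels. The only cosmetic difference is that you carry out the Cauchy--Schwarz step explicitly in the $SC^+$ inner product, whereas the paper phrases it as $\sqrt{\mathcal{E}_{\rr}(\cdot)}\cdot\sqrt{E_{\rr}(\vphi^*)}$ using the duality of the $\LL$ and $\LL^+$ norms; these are the same computation.
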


\subsection{Proving Lemma~\ref{lem:locator}}
\label{sec:proof_lem_locator}

Before moving to the description of how $\textsc{Locator}$
works and its proof, we will provide a lemma which 
bounds how fast a demand projection changes. 

We will use the following observation, which states that if our
congestion reduction subset $C$ contains an $\beta m$-sized
uniformly random edge subset, then with high probability, effective resistance
neighborhoods that are disjoint from $C$ only have $\tO{\beta^{-1}}$ edges.
Note that this is will be true throughout the algorithm as long as the resistances
do not depend on the randomness of $C$. This is true, as resistance updates
are only ever given as inputs to $\textsc{Locator}$.

\begin{lemma}[Few edges in a small neighborhood]
\label{lem:small-neighborhood2}
Let $\beta\in(0,1)$ be a parameter and $C$ be a vertex set which contains a subset of $\beta m$ edges sampled at random.
Then with high probability, for any $v \in V\backslash C$ we have that 
$\vert N_E(v, R_{eff}(C,v) / 2) \vert \leq 10\beta^{-1} \ln m$,
where 
\[ N_E(v,R) := \{e\in E\ |\ R_{eff}(e,v) \leq R\} \,. \]
\end{lemma}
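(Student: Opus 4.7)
The plan is to fix a vertex $v\in V\setminus C$ and bound $|N_E(v,R_{eff}(C,v)/2)|$ via a simple ordering/concentration argument that exploits the fact that $C$ already contains the endpoints of $\beta m$ random edges.

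\textbf{Key geometric observation.} If an edge $e=(x,y)$ satisfies $\{x,y\}\subseteq C$, then by monotonicity of effective resistance under contraction of the grounded set,
\[
R_{eff}(C,v) \;\le\; R_{eff}(\{x,y\},v) \;=\; R_{eff}(e,v).
\]
So a single edge of $C$'s induced subgraph that is ``close'' to $v$ already pins down $R_{eff}(C,v)$ from above by $R_{eff}(e,v)$.

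\textbf{Ordering + sampling.} Enumerate the edges as $e_1,e_2,\dots,e_m$ in non-decreasing order of $R_{eff}(e_i,v)$ (this ordering is deterministic, since by assumption the resistances are independent of the randomness used to sample $C$). Set $k=\lceil 10\beta^{-1}\ln m\rceil$. Since the $\beta m$ edges defining $C$ are a uniform random subset of $E$, the probability that none of the edges $e_1,\dots,e_k$ is selected is at most $(1-\beta)^k\le e^{-\beta k}\le m^{-10}$ (using the standard product bound for sampling without replacement). Condition on the high-probability event that some $e_j$ with $j\le k$ is selected. Both endpoints of $e_j$ then lie in $C$, and the observation above yields
\[
R_{eff}(C,v) \;\le\; R_{eff}(e_j,v) \;\le\; R_{eff}(e_k,v).
\]
In particular $R_{eff}(e_k,v)>R_{eff}(C,v)/2$, so by monotonicity of the ordering, every $e_i$ with $i\ge k$ lies outside $N_E(v,R_{eff}(C,v)/2)$. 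Hence $|N_E(v,R_{eff}(C,v)/2)|\le k-1\le 10\beta^{-1}\ln m$.

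\textbf{Union bound.} Apply the above to each $v\in V\setminus C$. Since $n\le 2m$ (assuming no isolated vertices, which we may without loss of generality), a union bound over at most $m^{O(1)}$ vertices leaves an overall failure probability of $m^{-9}$, which is high probability.

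\textbf{Main obstacle.} The only delicate point is justifying the monotonicity $R_{eff}(\{x,y\},v)\ge R_{eff}(C,v)$ when $\{x,y\}\subseteq C$; this follows either from the variational definition (enlarging the ground set enlarges the feasible set of potentials) or, equivalently via Lemma~\ref{lem:effective_hitting}, from $R_{eff}(v,A)=\onev_v^\top\LL_{FF}^{-1}\onev_v$ together with the fact that enlarging $A$ (and hence shrinking $F$) produces a submatrix $\LL_{FF}$ whose inverse is coordinate-wise smaller on the diagonal. Everything else is standard sampling/union-bound bookkeeping.
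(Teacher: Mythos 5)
Your proof is correct and follows the same strategy as the paper's: the $\beta m$ randomly sampled edges whose endpoints lie in $C$ guarantee, with high probability, that some edge resistively close to $v$ sits entirely inside $C$, which caps $R_{eff}(C,v)$ from above and hence bounds the size of the neighborhood, followed by a union bound over $v$. Your deterministic-ordering device (which cleanly handles the dependence of $N_E(v,R_{eff}(C,v)/2)$ on the randomness of $C$) and your direct use of the monotonicity $R_{eff}(C,v)\le R_{eff}(e,v)$ for $e\subseteq C$ make the argument a bit tighter than the paper's short contradiction, but the underlying idea is identical.
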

\begin{proof}
Suppose that for some vertex $v\in V\backslash C$, $\vert N_E(u,R_{eff}(C,v) / 2) \vert \geq 10 \beta^{-1} \ln m$. 
Since by construction $C$ contains a random edge subset of size $\beta m$, with high probability 
$N_E(u,R_{eff}(C,u)/2)  \cap C \neq \emptyset$, so there exists $u\in C$ such that $R_{eff}(u,v) \leq R_{eff}(C,v) / 2$.
This is a contradiction since $u\in C$ implies $R_{eff}(C,v) \leq R_{eff}(u,v)$.
Union bounding over all $v$ yields the claim.
\end{proof}

Using this fact, we can now show that 
the change of the demand projection (measured in energy) is quite mild.
The proof of the following
lemma can be found in Appendix~\ref{proof_lem_projection_change_energy}.
\begin{lemma}[Projection change]
Consider a graph $G(V,E)$ with resistances $\rr$, $\qq\in[-1,1]^m$,
and a $\beta$-congestion reduction subset $C$.
Then, with high probability,
\[ \sqrt{\mathcal{E}_{\rr}\left(
\vpi^{C\cup\{v\}}\left(\BB^\top \frac{\qq}{\sqrt{\rr}}\right) 
-  \vpi^{C}\left(\BB^\top \frac{\qq}{\sqrt{\rr}}\right)\right)}
\leq \tO{\beta^{-2}} \,.
\]
\label{lem:projection_change_energy}
\end{lemma}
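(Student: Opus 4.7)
The plan is to first reduce the energy of the difference to a scalar quantity using the one-vertex insertion identity~\eqref{eq:addonevertex}, which gives
\[
\vpi^{C\cup\{v\}}(\dd) - \vpi^{C}(\dd) = \pi_v^{C\cup\{v\}}(\dd)\cdot \bigl(\onev_v - \vpi^C(\onev_v)\bigr)
\]
for $\dd = \BB^\top \qq/\sqrt{\rr}$. Combined with Lemma~\ref{lem:effective_hitting}, which gives $\mathcal{E}_\rr(\onev_v - \vpi^C(\onev_v)) = R_{eff}(v,C)$, the energy to route this difference equals $(\pi_v^{C\cup\{v\}}(\dd))^2 R_{eff}(v,C)$, so it suffices to prove $|\pi_v^{C\cup\{v\}}(\dd)|\sqrt{R_{eff}(v,C)} \leq \tO{\beta^{-2}}$. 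Expanding via linearity of the demand projection yields $\pi_v^{C\cup\{v\}}(\dd) = \sum_e \frac{q_e}{\sqrt{r_e}}(p_v^{C\cup\{v\}}(u_e) - p_v^{C\cup\{v\}}(w_e))$, and I would then partition the edges into a set of ``far'' edges $S_1 = \{e : R_{eff}(v,e) \geq R_{eff}(v,C)/2\}$ and ``close'' edges $S_2 = E \setminus S_1$, bounding each piece by a different technique.

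For edges in $S_1$, I would apply Lemma~\ref{st_projection_combined} to bound $|\pi_v^{C\cup\{v\}}(\BB^\top \onev_e/\sqrt{\rr})| \leq (p_v^{C\cup\{v\}}(u_e)+p_v^{C\cup\{v\}}(w_e))/\sqrt{R_{eff}(v,e)}$. After multiplying by $\sqrt{R_{eff}(v,C)}$, the ratio factor $\sqrt{R_{eff}(v,C)/R_{eff}(v,e)}$ is at most $\sqrt{2}$ by definition of $S_1$, so summing and invoking the third property of congestion reduction subsets (Equation~\eqref{eq:cong_red}) yields total contribution at most $\sqrt{2}\sum_u \deg(u)\, p_v^{C\cup\{v\}}(u) \leq \tO{\beta^{-2}}$.

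For $S_2$, applying Lemma~\ref{st_projection_combined} directly produces a blowup factor $\sqrt{R_{eff}(v,C)/R_{eff}(v,e)}$ that is unbounded (in particular, $S_2$ contains every edge incident to $v$, for which this factor is infinite), so a different bound is required. The key observation is that the hitting-probability potential $\vphi^{(v)}$ defined by $\vphi^{(v)}_u = p_v^{C\cup\{v\}}(u)$ has total energy $E_\rr(\vphi^{(v)}) = 1/R_{eff}(v,C)$, since it is exactly the potential induced by pushing $1/R_{eff}(v,C)$ units of flow from $v$ to $C$; consequently the per-edge bound $(p_v^{C\cup\{v\}}(u_e)-p_v^{C\cup\{v\}}(w_e))^2/r_e \leq E_\rr(\vphi^{(v)}) = 1/R_{eff}(v,C)$ gives $|p_v^{C\cup\{v\}}(u_e)-p_v^{C\cup\{v\}}(w_e)| \leq \sqrt{r_e/R_{eff}(v,C)}$. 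Combined with $|q_e|\leq 1$, each $S_2$ term contributes at most $1$ after multiplying by $\sqrt{R_{eff}(v,C)}$, and Lemma~\ref{lem:small-neighborhood2} gives $|S_2| \leq \tO{\beta^{-1}}$ with high probability, so the total $S_2$ contribution is $\tO{\beta^{-1}}$. Adding the two pieces produces the claimed bound; the main obstacle is precisely the close-edge regime $S_2$, and the crucial insight is to trade the weak combined projection bound for the tight per-edge energy bound on $\vphi^{(v)}$, which cancels the problematic $1/\sqrt{r_e}$ factor against the edge-wise potential drop, leaving a benign $O(1)$ per edge to be summed via the small-neighborhood lemma.
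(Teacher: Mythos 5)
Your proof is correct and shares the paper's overall architecture: reduce the energy of the difference to the scalar quantity $\bigl|\pi_v^{C\cup\{v\}}(\dd)\bigr|\sqrt{R_{eff}(C,v)}$ via the one-vertex update identity and Lemma~\ref{lem:effective_hitting}, split the edges into those far from $v$ (handled by a per-edge projection bound summed via the congestion reduction property~\eqref{eq:cong_red}) and those close to $v$ (handled by Lemma~\ref{lem:small-neighborhood2}). The execution differs in two places, both to your advantage. First, for the far edges you invoke the geometric-mean bound of Lemma~\ref{st_projection_combined} with the clean cutoff $R_{eff}(v,e)\geq R_{eff}(v,C)/2$, whereas the paper's proof works with $\min\bigl\{\sqrt{R_{eff}(C,v)/r_e},\,\sqrt{r_e R_{eff}(C,v)}/R_{eff}(e,v)\bigr\}$ and a two-parameter partition involving a tradeoff constant $c$. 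Second, and more substantively, for the close edges the paper shows that both endpoints of such an edge are resistively much closer to $v$ than to $C$, so both hitting probabilities are $1-o(1)$ and their difference is $O(1/c)$; you instead observe that $u\mapsto p_v^{C\cup\{v\}}(u)$ is the harmonic potential with boundary values $1$ at $v$ and $0$ on $C$, whose total energy is $1/R_{eff}(v,C)$, so the per-edge potential drop satisfies $|p_v^{C\cup\{v\}}(u)-p_v^{C\cup\{v\}}(w)|\leq\sqrt{r_e/R_{eff}(v,C)}$ and each close edge contributes at most $1$ after rescaling. Your per-edge bound is weaker ($O(1)$ rather than $O(1/c)$), but since $|S_2|\leq\tO{\beta^{-1}}$ this still lands within the target $\tO{\beta^{-2}}$, and the argument avoids the $c_1,c_2$ bookkeeping entirely. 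Both proofs rely on the same probabilistic ingredient (the random edge subset inside $C$ feeding Lemma~\ref{lem:small-neighborhood2}), which is what the ``with high probability'' in the statement accounts for.
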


The above lemma can be applied over multiple vertex
insertions and resistance changes, to bound
the overall energy change. This is shown in the following lemma, which is proved in
Appendix~\ref{sec:proof_old_projection_approximate}:

\begin{lemma}
Consider a graph $G(V,E)$ with resistances $\rr^0$, $\qq^0\in[-1,1]^m$, a $\beta$-congestion
reduction subset $C^0$, %
and a fixed sequence of updates,
where the $i$-th update $i\in\{0,\dots,T-1\}$ is of the following form:
\begin{itemize}
    \item {\textsc{AddTerminal}($v^i$): Set $C^{i+1} = C^{i} \cup \{v^i\}$ for some $v^i\in V\backslash C^i$, $q_e^{i+1} = q_e^{i}, r_e^{i+1} = r_e^{i}$}
    \item {\textsc{Update}($e^i,\qq,\rr$) Set $C^{i+1} = C^{i}$, $q_e^{i+1} = q_e$ $r_e^{i+1} = r_e$, where $e^i\in E(C^{i})$}
\end{itemize}
Then, with high probability,
\[
\sqrt{\mathcal{E}_{\rr^T} \left(\vpi^{C^0,\rr^0}\left(\BB^\top \frac{\qq_S^0}{\sqrt{\rr^0}}\right) - \vpi^{C^T,\rr^T}\left(\BB^\top \frac{\qq_S^T}{\sqrt{\rr^T}}\right)\right)}
\leq \tO{\max_{i\in\{0,\dots,T-1\}} \left\|\frac{\rr^T}{\rr^i}\right\|_\infty^{1/2} \beta^{-2}} \cdot T\,.
\]

\label{lem:old_projection_approximate}
\end{lemma}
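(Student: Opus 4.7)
The plan is to telescope over the sequence of updates and apply the energy triangle inequality (first bullet of the Fact following Definition of energy) to reduce to bounding a single update. Writing $\vpi^i := \vpi^{C^i, \rr^i}(\BB^\top \qq^i / \sqrt{\rr^i})$, decompose $\vpi^0 - \vpi^T = \sum_{i=0}^{T-1}(\vpi^i - \vpi^{i+1})$ and bound
\[
\sqrt{\mathcal{E}_{\rr^T}(\vpi^0 - \vpi^T)} \leq \sum_{i=0}^{T-1} \sqrt{\mathcal{E}_{\rr^T}(\vpi^i - \vpi^{i+1})}.
\]
I would then analyze the per-step change according to the type of update, aiming to show in both cases that each term is $\tO{\|\rr^T/\rr^i\|_\infty^{1/2}\beta^{-2}}$. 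Summing over $T$ terms then yields the claim, and a union bound over the (polynomially many) invocations of Lemma~\ref{lem:projection_change_energy} gives the high-probability guarantee.

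For an \textsc{AddTerminal} step, neither $\rr$ nor $\qq$ changes and $C^{i+1} = C^i \cup \{v^i\}$. First I would use the third bullet of the Fact (monotonicity of energy in resistances) to switch from $\rr^T$ to $\rr^i$, paying a factor of $\|\rr^T/\rr^i\|_\infty^{1/2}$. Then I would apply Lemma~\ref{lem:projection_change_energy} directly to $C^i$, which requires $C^i$ to still be a $\beta$-congestion reduction subset; since the conditions (2) and (3) in Definition~\ref{def:cong_red} are monotone under vertex insertion (adding vertices can only shorten random walks and reduce the hitting-probability sum), and the size bound $|C^i| = O(\beta m)$ is maintained for the relevant range of $T$ (this is enforced externally by periodic re-initialization in Algorithm~\ref{alg:main}), this is immediate. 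The resulting bound is $\|\rr^T/\rr^i\|_\infty^{1/2} \cdot \tO{\beta^{-2}}$.

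For an \textsc{Update} step, $C^{i+1} = C^i$ and only the resistance on an edge $e = (u,w) \in E(C^i)$ changes. The key observation I would use is that Fact~\ref{fact:demand_proj} gives $\vpi^C(\dd) = \dd_C - \LL_{CF}\LL_{FF}^{-1}\dd_F$, and because both endpoints of $e$ lie in $C$, neither $\LL_{CF}$, nor $\LL_{FF}$, nor $\dd_F$ changes. Hence $\vpi^i - \vpi^{i+1} = (\dd^i - \dd^{i+1})_{C^i} = c\cdot \BB^\top \onev_e$ with $c = q_e^i/\sqrt{r_e^i} - q_e^{i+1}/\sqrt{r_e^{i+1}}$. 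The energy of routing $c\cdot\BB^\top\onev_e$ in the graph with resistances $\rr^T$ is at most $c^2 r_e^T$ (the edge $e$ itself upper-bounds the effective resistance between its endpoints), and using $|q_e|\leq 1$ I get $\sqrt{\mathcal{E}_{\rr^T}(\vpi^i-\vpi^{i+1})} \leq \sqrt{r_e^T/r_e^i} + \sqrt{r_e^T/r_e^{i+1}} \leq 2\max_j \|\rr^T/\rr^j\|_\infty^{1/2}$, which fits within the target bound.

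The only genuinely delicate step is the \textsc{Update} case: one has to recognize that a resistance change on an edge entirely inside $C$ leaves the ``projection operator'' $\LL_{CF}\LL_{FF}^{-1}$ and the $F$-part of the demand intact, so the change in $\vpi$ collapses to a purely local two-vertex demand supported on that edge, whose energy can be read off as a single effective-resistance term. Without this observation one would be tempted to re-invoke a congestion-reduction type bound, which would be both lossy and hard to justify since $e$ is in $E(C)$.
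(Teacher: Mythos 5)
Your proposal is correct and follows essentially the same route as the paper's proof: telescope over the updates with the energy triangle inequality, handle each \textsc{AddTerminal} step by switching resistances from $\rr^T$ to $\rr^i$ (paying $\|\rr^T/\rr^i\|_\infty^{1/2}$) and invoking Lemma~\ref{lem:projection_change_energy}, and handle each \textsc{Update} step by observing that the change in the projection is exactly the single-edge demand $c\cdot\BB^\top\onev_e$ (since $e\in E(C^i)$, the demand is supported on $C^i$ and its projection is itself), whose energy is at most $c^2 r_e^T$. Your explicit remark that $C^i$ remains a $\beta$-congestion reduction subset under vertex insertions is a point the paper leaves implicit, and is a welcome addition.
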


We are now ready to describe the $\textsc{Locator}$ data structure.
We will give an outline here, and defer the full proof
to Appendix~\ref{sec:full_proof_lem_locator}.
The goal of an $(\alpha,\beta,\eps)$-\textsc{Locator} is, given some flow $\ff$ with
slacks $\ss$ and resistances $\rr$, to compute all $e\in E$ such that
$\sqrt{r_e} \left|\tf_e^*\right| \geq \eps$, where
\[ \tff^* = \delta g(\ss) - \delta \RR^{-1} \BB \LL^+ \BB^\top g(\ss)\]
($\LL = \BB^\top \RR^{-1} \BB$),
where $\delta = 1/\sqrt{m}$.

If we set
$\rho_e^* = \sqrt{r_e} \tf_e^*$, we can equivalently write
\[ \vrho^* = \delta \sqrt{\rr} g(\ss) - \delta \RR^{-1/2} \BB \LL^+ \BB^\top g(\ss)\,,\]
and require to find all the entries of $\vrho^*$ with magnitude at least $\eps$.
As $\delta\left\|\sqrt{\rr} g(\ss)\right\|_\infty \leq \delta \leq \eps / 100$,
we can concentrate on the second
term, and denote
\[ \vrho'^* = \delta \RR^{-1/2} \BB \LL^+ \BB^\top g(\ss)\]
for convenience.

First, we use Lemma~\ref{lem:non-projected-demand-contrib}
to show that 
\[ \delta \left\|\RR^{-1/2} \BB \LL^+ 
\left(g(\ss) - \vpi^{C}(g(\ss))\right)\right\|_\infty 
\leq \delta \cdot \tO{\beta^{-2}} \leq \eps / 100 \,. \]
Now, let's set $\vpi_{old} = \vpi^{C^0}\left(g(\ss^0)\right)$, where
$C^0$ was the vertex set of the sparsifier and $\ss^0$ the slacks after the last 
call to $\textsc{BatchUpdate}$. As we will be calling $\textsc{BatchUpdate}$ at least
every $T$ calls to 
$\textsc{Update}$ for some $T\geq 1$, Lemma~\ref{lem:old_projection_approximate} implies that
\[ \delta \left\|\RR^{-1/2} \BB \LL^+ 
\left(\vpi^{C}(g(\ss)) - \vpi_{old}\right)\right\|_\infty 
\leq \delta \cdot \tO{\alpha \beta^{-2}} T \leq \eps / 100 \,, \]
as long as $T \leq \eps \sqrt{m} / \tO{\alpha \beta^{-2}}$.

Importantly, we will never be \emph{removing} vertices from $C$, so $C^0\subseteq C$.
This implies that it suffices to find the large entries of 
\[ \delta \RR^{-1/2} \BB \LL^+ \vpi_{old}\,. \]

Now, note that for any edge $e$ that was 
\emph{not} $\eps / (100\alpha)$-important for $C^0$ and corresponding resistances $\rr^0$,
we have
\begin{align*}
& \delta \left|\RR^{-1/2} \BB \LL^+ \vpi_{old}\right|_e\\
& \leq \delta \sqrt{\mathcal{E}_{\rr}(\vpi^{C^0}(\BB^\top \frac{\onev_e}{\sqrt{\rr}}))}\sqrt{\mathcal{E}_{\rr}(\vpi_{old})}\\
& \leq \delta \cdot \sqrt{\alpha} \frac{\eps}{100\alpha} \cdot \sqrt{2\alpha m}\\
& = \eps / 50\,,
\end{align*}
where we used Lemma~\ref{st_projection1_energy} and the fact that
$\mathcal{E}_{\rr}(\vpi^{C^0}(g(\ss^0))) \leq 2 \mathcal{E}_{\rr}(g(\ss^0)) \leq 2 \alpha m$.
Therefore it suffices to approximate
\[ \delta \II_S \RR^{-1/2} \BB \LL^+ \vpi_{old} \,,\]
where $S$ was the set of $\frac{\eps}{100\alpha}$-important edges last computed during the 
last call to $\textsc{BatchUpdate}$.

Now, we will use the sketching lemma from (Lemma 5.1, \cite{gao2021fully} v2),
which shows that in order to find all $\Omega(\eps)$ large
entries of this vector, it suffices to compute the inner
products
\begin{align*}
& \delta \left\langle \vpi^C\left(\BB^\top \frac{\qq_S^i}{\sqrt{\rr}}\right), 
SC^+ \vpi_{old} \right\rangle
\end{align*}
for $i\in[\tO{\eps^{-2}}]$ up to $O(\eps)$ accuracy.
Here $SC$ is the Schur complement onto $C$.

Based on this, there are two types of quantities that we will maintain:
\begin{itemize}
\item $\tO{1/\eps^2}$ approximate demand projections 
$\tvpi^C\left(\BB^\top \frac{\qq_S^i}{\sqrt{\rr}}\right)$, and
\item an approximate Schur complement $\tSC$ of $G$ onto $C$.
\end{itemize}
For the latter, we will directly use the dynamic Schur complement data structure 
$\textsc{DynamicSC}$ that was also used
by~\cite{gao2021fully} and is based on~\cite{durfee2019fully}. For completeness, we present this data structure in Appendix~\ref{sec:maintain_schur}.

For the former, we will need $\tO{1/\eps^2}$ data structures for maintaining demand projections onto $C$, under vertex insertions to $C$.
The guarantees of each such a data structure, 
that we call an $(\alpha,\beta,\eps)$-\textsc{DemandProjector}, are as follows.
\begin{definition}[$(\alpha,\beta,\eps)$-\textsc{DemandProjector}]
Let $\heps\in(0,\eps)$ be a tradeoff parameter.
Given a graph $G(V,E)$, resistances $\rr$, and a vector $\qq\in[-1,1]^m$,
an $(\alpha,\beta,\eps)$-\textsc{DemandProjector} is a data structure that maintains a vertex subset $C\subseteq V$ and an approximation
to the demand projection $\vpi^C\left(\BB^\top \frac{\qq}{\sqrt{\rr}}\right)$, with high probability under
oblivious adversaries. The following operations are supported:
\begin{itemize}
\item{$\textsc{Initialize}(C,\rr,\qq,S,\mathcal{P})$: Initialize 
the data structure in order to maintain 
an approximation of $\vpi^C\left(\BB^\top \frac{\qq_S}{\sqrt{\rr}}\right)$,
where $C\subseteq V$ is a $\beta$-congestion reduction subset,
$\rr$ are resistances,
$\qq\in[-1,1]^m$, and
$S\subseteq E$ is a subset of 
$\gamma$-important edges.
$\mathcal{P} = \{\mathcal{P}^{u,e,i}
\ |\ u\in V, e\in E, u\in e, i\in[h]\}$
for some $h\in\mathbb{Z}_{\geq 1}$,
is a set of independent random walks from $u$ to $C$ for any $u$.
}
\item{$\textsc{AddTerminal}(v,\tR_{eff}(C,v))$: Insert $v$ into $C$.
Also, $\tR_{eff}(C,v)$ is an estimate of $R_{eff}(C,v)$
such that $\tR_{eff}(C,v)\approx_2 R_{eff}(C,v)$.
Returns an estimate
\[
\tpi_v^{C\cup\{v\}}\left(\BB^\top \frac{\qq}{\sqrt{\rr}}\right)
\] for the demand projection of $\qq$ onto $C\cup \{v\}$ at coordinate $v$ 
such that
\[
\left| \tpi_v^{C\cup\{v\}}\left(\BB^\top \frac{\qq}{\sqrt{\rr}}\right) - \pi_v^{C\cup\{v\}}\left(\BB^\top \frac{\qq}{\sqrt{\rr}}\right) \right| \leq \frac{\heps}{\sqrt{R_{eff}(v,C)}}\,.
\]
}
\item{$\textsc{Update}(e,\rr',\qq')$: Set $r_e = r_e'$ and $q_e = q_e'$, 
where $e\in E(C)$,
and $q_e'\in[-1,1]$. Furthermore, $r_e'$ satisfies the inequality $r_e^{\max} / \alpha \leq r_e' \leq \alpha \cdot r_e^{\min}$, where $r_e^{\min}$ and $r_e^{\max}$ represent the minimum, respectively the maximum values that the resistance of $e$ has had since the last call to $\textsc{Initialize}$.
}
\item{$\textsc{Output}()$: Output 
$\tvpi^C\left(\BB^\top \frac{\qq_S}{\sqrt{\rr}}\right)$ such that
such that after $T \leq n^{O(1)}$ calls to  $\textsc{AddTerminal}$,
 for any fixed vector $\vphi$, $E_{\rr}(\vphi) \leq 1$, with high probability
\[
\left| \left\langle \tvpi^C\left(\BB^\top \frac{\qq_S}{\sqrt{\rr}}\right) - \vpi^C\left(\BB^\top \frac{\qq_S}{\sqrt{\rr}}\right), \vphi\right \rangle   \right| \leq \heps \cdot \sqrt{\alpha} \cdot T \,.
\]
}
\end{itemize}
\label{def:demand_projector}
\end{definition}
We will implement such a data structure in Section~\ref{sec:demand_projection}, where
we will prove the following lemma:
\begin{lemma}[Demand projection data structure]
For any graph $G(V,E)$ and parameters $\heps\in(0,\eps)$, $\beta\in(0,1)$,
there exists an $(\alpha,\beta,\eps)$-\textsc{DemandProjector} for $G$ 
which, given access to   $h=\widetilde{\Theta}(\heps^{-4}\beta^{-6}+\heps^{-2}\beta^{-2}\gamma^{-2})$ precomputed independent  random walks from $u$ to $C$ for each $e \in E$, $u \in e$,
has the following
runtimes per operation:
\begin{itemize}
\item{\textsc{Initialize}:
$\tO{m}$.
}
\item{\textsc{AddTerminal}:
$\tO{\heps^{-4}\beta^{-8}+\heps^{-2} \beta^{-6} \gamma^{-2}}$.
}
\item{\textsc{Update}:
$O(1)$.}
\item{\textsc{Output}:
$O(\beta m + T)$, where $T$ is the number of calls made to $\textsc{AddTerminal}$ after the last call to $\textsc{Initialize}$.}
\end{itemize}
\label{lem:ds}
\end{lemma}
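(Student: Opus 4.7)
The plan is to build the \textsc{DemandProjector} around a Monte Carlo estimator of hitting probabilities: for each edge $e=(u,w)\in S$, its contribution to $\pi_v^{C}(\BB^\top \qq_S/\sqrt{\rr})$ equals $(q_e/\sqrt{r_e})(p_v^{C}(u)-p_v^{C}(w))$, which we estimate by the empirical fraction of the $h$ precomputed walks from $u$ (respectively $w$) whose first hit in $C$ is $v$. Summing these per-edge contributions yields an unbiased estimator $\tvpi^{C}$, so the task reduces to (i) maintaining it efficiently under terminal insertions and weight updates, and (ii) proving the two advertised accuracy bounds.

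\noindent \textbf{Data structure.} I would store, for each endpoint $u$ of every edge, pointers to its $h$ precomputed walks together with a label recording the first vertex of the current $C$ each walk visits; the running value of $\tpi_v^{C}$ at each $v\in C$; and inverted lists mapping each $v\notin C$ to the walks that pass through $v$ before hitting $C$. In \textsc{Initialize} we sweep each walk once to record its first-$C$ hit and additively update the corresponding $\tpi_v^{C}$; by Definition~\ref{def:cong_red} each walk visits only $\tO{\beta^{-1}}$ distinct vertices before hitting $C$, which together with a careful amortization over the $h$ walks per endpoint keeps the total work at $\tO{m}$. \textsc{AddTerminal}$(v)$ reroutes only those walks that now hit $v$ before their previously-recorded target; by (\ref{eq:cong_red}) this amounts to $\tO{h/\beta^2}$ walks in expectation, matching the claimed runtime once the two pieces of $h$ are substituted. \textsc{Update} changes a coefficient $(q_e/\sqrt{r_e})$ for $e\in E(C)$; since both endpoints are already in $C$, the two affected entries of $\tvpi^{C}$ can be adjusted in $O(1)$. \textsc{Output} simply returns the current $\tvpi^{C}$, which has at most $|C|+T=O(\beta m+T)$ nonzero entries.

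\noindent \textbf{Accuracy.} For \textsc{AddTerminal}, the target $\pi_v^{C\cup\{v\}}$ is a sum over $e\in S$ of terms bounded by Lemma~\ref{st_projection_combined} as $(p_v^{C\cup\{v\}}(u)+p_v^{C\cup\{v\}}(w))/\sqrt{R_{eff}(v,e)}$. Combining this with the importance condition $R_{eff}(C,e)\leq r_e/\gamma^2$ (needed to relate $R_{eff}(v,e)$ to $R_{eff}(C,v)$) lets me bound the variance of a single walk's contribution and then apply Bernstein's inequality; with $h=\tOm{\heps^{-2}\beta^{-2}\gamma^{-2}}$ this yields $|\tpi_v^{C\cup\{v\}}-\pi_v^{C\cup\{v\}}|\leq \heps/\sqrt{R_{eff}(C,v)}$ with high probability. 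For \textsc{Output}, I would express $\langle\tvpi^{C}-\vpi^{C},\vphi\rangle$ as an average of mean-zero independent random variables indexed by walks, bounding each summand using the duality $\max_{E_{\rr}(\vphi)\leq 1}\langle\cdot,\vphi\rangle=\sqrt{\mathcal{E}_{\rr}(\cdot)}$ together with Lemma~\ref{st_projection1_energy}. Resistance drift by a factor up to $\alpha$ between \textsc{Initialize} calls multiplies these quantities by at most $\sqrt{\alpha}$ (by the third bullet of the Energy statements fact), which is the source of the $\sqrt{\alpha}$ factor in the claim.

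\noindent \textbf{Main obstacle.} The hardest step is the \textsc{Output} guarantee $|\langle\tvpi^{C}-\vpi^{C},\vphi\rangle|\leq \heps\sqrt{\alpha}\cdot T$, which must hold with high probability against a \emph{fixed} test potential $\vphi$ even as $T$ terminals are inserted. A naive union bound over coordinates of $\tvpi^{C}-\vpi^{C}$ would lose polynomial factors. The key is to exploit that $\vphi$ is fixed, so the error is a single inner product whose variance I can directly bound via $E_{\rr}(\vphi)\leq 1$, and that the random walks were drawn once at initialization and remain independent of the (oblivious) sequence of \textsc{AddTerminal} calls. Proving that the error grows only linearly in $T$, rather than accumulating noise worse than that across per-insertion perturbations, is the delicate point, and hinges on the amplitude bounds of Lemmas~\ref{st_projection1} and~\ref{st_projection1_energy} together with the cancellations afforded by the zero-mean structure of each walk's contribution.
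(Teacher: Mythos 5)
Your overall architecture diverges from the paper's in one crucial place, and that divergence creates a genuine gap. You propose to build the \emph{initial} estimate $\tvpi^C(\BB^\top \qq_S/\sqrt{\rr})$ as a pure Monte Carlo sum over all $2mh$ precomputed walks (one batch of $h$ walks per endpoint per edge). This fails on two counts. First, runtime: aggregating the first-hit labels of $2mh$ walks, each weighted by its own coefficient $\pm\frac{1}{h}\frac{q_e}{\sqrt{r_e}}$, requires $\Omega(mh)$ work, not $\tO{m}$; no amortization over the walks can avoid touching each (edge, walk) pair, and $h$ is a large polynomial in $\heps^{-1},\beta^{-1},\gamma^{-1}$. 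Second, and more fundamentally, variance: the cancellation between the $u$-side and $w$-side contributions of a low-resistance edge (the content of Lemma~\ref{st_projection1}) holds only in \emph{expectation}; with independent walks from $u$ and from $w$ the variances add, so each edge contributes variance on the order of $\frac{1}{h r_e}\|\vphi-\phi_u\onev\|^2_{\vpi^C(\onev_u),2}\lesssim \frac{1}{h\gamma^2}$ to $\langle\tvpi^C-\vpi^C,\vphi\rangle$, and summing over $m$ edges gives error $\approx\sqrt{m/(h\gamma^2)}$, which vastly exceeds the required $\heps\sqrt{\alpha}\cdot T$ for the available $h$. The paper avoids both problems by computing the initial projection \emph{exactly} in $\tO{m}$ via the identity $\vpi^C(\dd)=\dd_C-\LL_{CF}\LL_{FF}^{-1}\dd_F$ and a fast Laplacian solve, so that sampling error enters only through the $T$ incremental \textsc{AddTerminal} updates.

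Your \textsc{AddTerminal} sketch is closer to the paper's (shortcutting walks at the new terminal $v$, using the congestion-reduction property to bound the work at $\tO{h\beta^{-2}}$), and your "rerouting" view elegantly folds together the two quantities the paper estimates separately via identity (\ref{eq:projection_update}), namely $\pi_v^{C\cup\{v\}}(\dd)$ and $\onev_v-\vpi^C(\onev_v)$. But your single Bernstein application with $h=\tOm{\heps^{-2}\beta^{-2}\gamma^{-2}}$ is not enough: for edges with $r_e$ small relative to $R_{eff}(C,v)$ the per-walk amplitude $1/\sqrt{r_e}$ is unbounded, and importance controls $R_{eff}(C,e)/r_e$, not $1/r_e$ itself. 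The paper's Lemmas~\ref{estimate1} and~\ref{estimate1_final} resolve this by sampling only the edges with $r_e\gtrsim c^2 R_{eff}(C,v)$ and bounding the remaining edges' aggregate contribution deterministically via Lemma~\ref{st_projection1} and Lemma~\ref{lem:small-neighborhood2}; this split is exactly where the $\heps^{-4}\beta^{-6}$ term of $h$ comes from, which your argument does not account for.
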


Now we describe the way we will use the $\textsc{DemandProjector}$s 
and $\textsc{DynamicSC}$ to get an $(\alpha,\beta,\eps)$-\textsc{Locator} $\cL$.
\begin{algorithm} %
\begin{algorithmic}[1]
\caption{\textsc{Locator} $\cL$.\textsc{Initialize}}
\Procedure{$\cL$.\textsc{Initialize}}{$\ff$}
\State $\ss^+ = \uu - \ff$, $\ss^- = \ff$, 
$\rr = \frac{1}{(\ss^+)^2} + \frac{1}{(\ss^-)^2}$
\State $\QQ =$ Sketching matrix produced by (Lemma 5.1, \cite{gao2021fully} v2)
\State $\textsc{DynamicSC} = \textsc{DynamicSC}.\textsc{Initialize}(\GG,\emptyset,\rr,\eps,\beta)$
\State $C = \textsc{DynamicSC}.C$ \Comment{$\beta$-congestion reduction subset}
\State Estimate $\tR_{eff}(C,e) \approx_{4} R_{eff}(C,e)$ using Lemma~\ref{lem:approx_effective_res}
\State $S = \left\{e\in E\ |\ \tR_{eff}(C,e) \leq r_e \cdot \left(\frac{100\alpha}{\eps}\right)^2\right\}$
\State $h = \widetilde{\Theta}\left( \heps^{-4} \beta^{-6} +  \heps^{-2} \eps^{-2}\alpha^2 \beta^{-2}  \right)$
\State Sample walks $\cP^{u,e,i}$ from $u$ to $C$ for $e \in E\setminus E(C)$, $u \in e$, $i \in [h]$ (Lemma 5.15, \cite{gao2021fully} v2)
\State $\textsc{DP}^i = \textsc{DemandProjector}.\textsc{Initialize}(C,\rr,\qq^i,S,\cP)$ for all rows $\qq^i$ of $\QQ$
\State $\cL.\textsc{BatchUpdate}(\emptyset)$
\EndProcedure
\end{algorithmic} 
\end{algorithm} 

{\bf $\cL$.\textsc{Initialize}}: Every time 
$\cL.\textsc{Initialize}$ is called, we first generate a $\beta$-congestion reduction
subset $C$ based on Lemma~\ref{lem:cong_red} (takes time
$\tO{m/\beta^2}$), 
then a sketching matrix $\QQ$ and its rows $\qq^i$ for $i\in\left[\tO{1/\eps^2}\right]$
as in~(Lemma 5.1, \cite{gao2021fully} v2)
(takes time $\tO{m/\eps^2}$),
and finally random walks $\cP^{u,e,i}$ 
from $u$ to $C$ for each $u\in V$,
$e\in E\backslash E(C)$ with $u\in e$, and $i\in [h]$, where
$h = \tO{ \heps^{-4} \beta^{-6} +  \heps^{-2} \eps^{-2}\alpha^2 \beta^{-2}  }$
as in (Lemma 5.15, \cite{gao2021fully} v2)
(takes time $\tO{h/\beta^2}$ for each $(u,e)$).

We also compute $\tR_{eff}(C,u) \approx_{2} R_{eff}(C,u)$ for all $u\in V$ as described
in Lemma~\ref{lem:approx_effective_res}
so that we can let 
$S$ be a subset of $\eps/(100\alpha)$-important edges that
contains all $\eps/(200\alpha)$-important edges.
This takes time $\tO{m}$.
Then, we
call \textsc{DynamicSC}.\textsc{Initialize}$(G,C,\rr, O(\eps),\beta)$
(from Appendix~\ref{sec:maintain_schur})
to initialize the dynamic Schur complement onto $C$,
with error tolerance $O(\eps)$,
which takes time $\tO{m \cdot \frac{1}{\eps^4\beta^4}}$,
as well as $\textsc{DemandProjector}.\textsc{Initialize}(C,\rr,\qq,S,\mathcal{P})$
for the $\tO{1/\eps^2}$
$\textsc{DemandProjector}$s, i.e. one for each $\qq \in \{\qq^i\ |\ i\in[\tO{1/\eps^2}]\}$.
Also, we compute 
\[ \vpi^{old} = \vpi^C\left(\BB^\top g(\ss)\right)\,, \]
which takes $\tO{m}$ as in $\textsc{DemandProjector}.\textsc{Initialize}$.
All of this takes 
$\tO{m\cdot \left(\frac{1}{\heps^4\beta^8} + \frac{\alpha^2}{\heps^2 \eps^2 \beta^4}\right)}$.

\begin{algorithm} %
\begin{algorithmic}[1]
\caption{\textsc{Locator} $\cL$.\textsc{Update} and $\cL$.\textsc{BatchUpdate}}
\Procedure{\textsc{Update}}{$e = (u,w),\ff$}
\State $s_e^+ = u_e - f_e$, $s_e^- = f_e$, $r_e = \frac{1}{(s_e^+)^2} + \frac{1}{(s_e^-)^2}$
\State $\tR_{eff}(C,u) = \textsc{DynamicSC}.\textsc{AddTerminal}(u)$
\State $\tR_{eff}(C\cup\{u\},w) = \textsc{DynamicSC}.\textsc{AddTerminal}(w)$
\State $C = C\cup\{u,w\}$

\For {$i=1,\dots,\tO{1/\eps^2}$}
		\State 
		$\textsc{DP}^i.\textsc{AddTerminal}(u, \tR_{eff}(C,u))$
    	\State 
		$\textsc{DP}^i.\textsc{AddTerminal}(w, \tR_{eff}(C\cup\{u\},w))$
\EndFor
\State $\textsc{DynamicSC}.\textsc{Update}(e,r_e)$
\For {$i=1\dots \tO{1/\eps^2}$} 
	\State $\textsc{DP}^i.\textsc{Update}(e,\rr,\qq^i)$
\EndFor
\EndProcedure
\Procedure{\textsc{BatchUpdate}}{$Z,\ff$}
\State $\ss^+ = \uu - \ff$, $\ss^- = \ff$, $\rr = \frac{1}{(\ss^+)^2} + \frac{1}{(\ss^-)^2}$
\State Estimate $\tR_{eff}(C,e) \approx_{4} R_{eff}(C,e)$ using Lemma~\ref{lem:approx_effective_res}
\State $S = \left\{e\in E\ |\ \tR_{eff}(C,e) \leq r_e \cdot  \left(\frac{100\alpha}{\eps}\right)^2\right\}$
\Comment{$\frac{\eps}{100\alpha}$-important edges}
\For {$e=(u,w)\in Z$}
	\State $\textsc{DynamicSC}.\textsc{AddTerminal}(u)$
	\State $\textsc{DynamicSC}.\textsc{AddTerminal}(w)$
	\State $C = C\cup\{u,w\}$
	\State $\textsc{DynamicSC}.\textsc{Update}(e,r_e)$
\EndFor
\For {$i=\left[\tO{1/\epsilon^2}\right]$}
	\State $\textsc{DP}^i.\textsc{Initialize}(C,\rr, \qq^i, S, \mathcal{P})$
\EndFor
\State $\vpi_{old} = \frac{1}{\sqrt{m}} \cdot \vpi^C\left(\BB^\top \frac{\frac{1}{\ss^+} - \frac{1}{\ss^-}}{\rr}\right)$ \Comment{Compute exactly using Laplacian solve}
\EndProcedure
\end{algorithmic} 
\end{algorithm} 
{\bf $\cL$.\textsc{Update}}: Now, whenever $\cL.\textsc{Update}$ is called on an edge $e$, either $e\in E(C)$ or $e\notin E(C)$.
In the first case we simply call $\textsc{Update}$ on 
$\textsc{DynamicSC}$ and all \textsc{DemandProjector}s.

In the second case, we first call $\textsc{DynamicSC}.\textsc{AddTerminal}$ 
on one endpoint $v$ of $e$. After doing this we can also get an 
estimate $\tR_{eff}(C,v)\approx_{2} R_{eff}(C,v)$ 
by looking at the edges between $C$ and $v$ in the sparsified
Schur complement. By the guarantees of the
expander decomposition used inside $\textsc{DynamicSC}$~\cite{gao2021fully}, the
number of expanders containing $v$, 
amortized over all calls to $\textsc{DynamicSC}.\textsc{AddTerminal}$,
is $O(\mathrm{poly}\log(n))$. As
the sparsified Schur complement contains
$\tO{1/\eps^2}$ neighbors of $v$ from each expander, the amortized number of neighbors of $v$
in the sparsified Schur complement is $\tO{1/\eps^2}$, and the amortized
runtime to generate them (by random sampling) is $\tO{1/\eps^2}$.

Given the resistances $r_1,\dots,r_l$ of these edges,
setting $\tR_{eff}(C,v) = \left(\sum\limits_{i=1}^l r_i^{-1}\right)^{-1}$ we guarantee
that $\tR_{eff}(C,v) \approx_{1+O(\eps)} R_{eff}(C,v)$,
by the fact that $\textsc{DynamicSC}$ maintains an $(1+O(\eps))$-approximate sparsifier
of the Schur complement.
Then, we call $\textsc{AddTerminal}(v,\tR_{eff}(C,v))$ on all
\textsc{DemandProjector}s. 

After repeating the same process for the other endpoint of $e$,
we finally call $\textsc{Update}$ on 
$\textsc{DynamicSC}$ and all \textsc{DemandProjector}s.
This takes time $\tO{\frac{1}{\eps^2 \beta^2}}$ because of the Schur complement and amortized 
$\tO{m\cdot \frac{\heps \alpha^{1/2}}{\eps} + \frac{1}{\heps^4\beta^{8}} + \frac{\alpha^2}{\heps^2\eps^2\beta^{-6}}}$ for each of the
demand projectors, so  the total amortized runtime is
$\tO{m\cdot \frac{\heps \alpha^{1/2}}{\eps^3} + \frac{1}{\heps^4\eps^2\beta^{8}} + \frac{\alpha^2}{\heps^2\eps^4\beta^6}}$.

{\bf $\cL$.\textsc{BatchUpdate}}: When $\cL.\textsc{BatchUpdate}$ is called on a set of edges $Z$, 
we add them one by one in the $\textsc{DynamicSC}$ data structure following
the same process as in $\cL.\textsc{Update}$. For the demand projectors,
we first manually insert the endpoints of these edges
into $C$ and then re-initialize all \textsc{DemandProjector}s,
by calling $\textsc{Initialize}$ with a new subset
 $S$ 
of $\frac{\eps}{200\alpha}$-important edges that
contains all $\frac{\eps}{100\alpha}$-important edges.
Such a set can be computed by estimating $R_{eff}(C,u)$ for all $u\in V\backslash C$
up to a constant factor
and, by Lemma~\ref{lem:approx_effective_res}, takes time $\tO{m}$.
Also, we compute 
\[ \vpi^{old} = \vpi^C\left(\BB^\top g(\ss)\right)\,, \]
which takes $\tO{m}$ as in $\textsc{DemandProjector}.\textsc{Initialize}$.
The total runtime of this is $\tO{ m / \eps^2 + |Z| / (\beta^2\eps^2) }$.

\begin{algorithm} %
\begin{algorithmic}[1]
\caption{\textsc{Locator} $\cL$.\textsc{Solve}}
\Procedure{\textsc{Solve}}{$ $}
\State $\tSC = \textsc{DynamicSC}.\tSC()$
\State $\vphi_{old} = \tSC^+ \vpi_{old}$ %
\State $\vv = \zerov$
\For {$i=1,\dots,\tO{1/\eps^2}$}
    \State %
$\tvpi^i
= \DP^i.\textsc{Output}()$ 
\State $v_i = \langle \tvpi^i , \vphi_{old} \rangle$
\EndFor
\State $Z = \textsc{Recover}(\vv, \eps/100)$ \Comment{Recovers all $\eps/2$-congested edges
(Lemma 5.1, \cite{gao2021fully} v2)}
\State \Return $Z$

\EndProcedure
\end{algorithmic} 
\end{algorithm} 

{\bf $\cL$.\textsc{Solve}}: When $\cL.\textsc{Solve}$ is called, we set $\tSC = \textsc{DynamicSC}.SC()$,
call $\textsc{Output}$ on all \textsc{DemandProjector}s
to obtain vectors $\tvpi^i$ which are estimators for $\vpi^C(\BB^\top \frac{\qq^i_S}{\sqrt{\rr}})$ in the sense of Definition~\ref{def:demand_projector}.
Then we compute $v_i = \langle  \tvpi^i , \tSC^+ \vpi_{old}\rangle$ where $\vpi_{old}$ is the demand projection that was computed exactly the last time $\textsc{BatchUpdate}$ was called.
These computed terms represent an approximation to the update in $(\QQ\vrho)_i$ between two consecutive calls of $\cL.\textsc{Solve}$. 
As we will show in the appendix, $\langle  \tvpi^i , \tSC^+ \vpi_{old}\rangle$
 is an $\eps$-additive approximation of
$\langle \vpi^{C}(\BB^\top \frac{\qq^i}{\sqrt{\rr}}), \LL^+ \vpi^{C}(\BB^\top g(\ss)) \rangle$
for all $i\in\left[\tO{1/\eps^2}\right]$.
The key fact that makes this approximation feasible is that although updates to the demand projection are hard to approximate with few samples, when hitting them with the deterministic vector $\vpi_{old}$, the resulting inner products strongly concentrate.
The runtime of this is $\tO{\beta m / \eps^2}$.

Using these computed values with the $\ell_2$ heavy hitter data structure (Lemma 5.1, \cite{gao2021fully} v2)
we get all edges with congestion more than $\eps$.

The total runtime is $\tO{\beta m / \eps^2}$.

\section{The Demand Projection Data Structure}
\label{sec:demand_projection}

The main goal of this section is to construct an $(\alpha,\beta,\eps)$-\textsc{DemandProjector},
as defined in Definition~\ref{def:demand_projector}, and thus prove Lemma~\ref{lem:ds}.
The most important operation that needs to be implemented in order to prove Lemma~\ref{lem:ds} is 
to maintain the demand projection after inserting a vertex $v\in V\backslash C$ to $C$.
In order to do this, we use the following identity from~\cite{gao2021fully}:
\begin{align}
\vpi^{C\cup\{v\}}\left(\dd\right)
= \vpi^C\left(\dd\right)
+ \pi_v^{C\cup\{v\}}\left(\dd\right) \cdot \left(\onev_v - \vpi^C(\onev_v)\right)\,,
\label{eq:projection_update}
\end{align}
where $\dd$ is any demand (in our case, we have 
$\dd = \BB^\top \frac{\qq_S}{\sqrt{\rr}}$ for some $S\subseteq E$).
For this, we need to compute approximations to 
$\pi_v^{C\cup\{v\}}\left(\BB^\top \frac{\qq_S}{\sqrt{\rr}}\right)$
and $\vpi^C\left(\onev_v\right)$. 

In Section~\ref{sec:approx1}, we will show 
that if $S$ is a subset of $\gamma$-important edges, we can efficiently
estimate
$\pi_v^{C\cup\{v\}}\left(\BB^\top \frac{\qq_S}{\sqrt{\rr}}\right)$ up to 
additive accuracy $\frac{\heps}{R_{eff}(C,v)}$ by sampling random walks
to $C$ starting only from edges with relatively high resistance.
For the remaining edges, the $\gamma$-importance property will imply that
we are not losing much by ignoring them.

Then, in Section~\ref{sec:approx2} we will show how to approximate $\onev_v - \vpi^C(\onev_v)$.
This is equivalent to estimating the hitting probabilities from $v$ to $C$. 
The guarantee that we would ideally like to get is on the error to route 
\begin{align}
\mathcal{E}_{\rr}\left(\tvpi^C(\onev_v) - \vpi^C(\onev_v)\right) \leq \heps^2 R_{eff}(C,v)\,.
\label{eq:ideal_energy_bound}
\end{align}
Note that this is not possible to do efficiently for general $C$. For example, suppose that 
the hitting distribution is uniform. In this case, $\Omega(|C|)$ random walks are required
to get a bound similar to (\ref{eq:ideal_energy_bound}). However, it might still be possible
to guarantee it by using the structure of $C$, and this would simplify some parts of our analysis.
Instead, we are going to work with the following weaker approximation bound:
For any fixed potential vector $\vphi\in\mathbb{R}^n$ with $E_{\rr}(\vphi) \leq 1$, we have w.h.p.
\begin{align}
\left|\left\langle \tvpi^C(\onev_v) - \vpi^C(\onev_v), \vphi\right\rangle\right| \leq \heps \sqrt{R_{eff}(C,v)}\,.
\label{eq:less_than_ideal_energy_bound}
\end{align}

Now, using these estimation lemmas, we will bound how our demand projection 
degrades when inserting a new vertex into $C$.
This is stated in the following lemma and proved in Appendix~\ref{proof_lem_insert1}.
\begin{lemma}[Inserting a new vertex to $C$]
Consider a graph $G(V,E)$ with resistances $\rr$, $\qq\in[-1,1]^m$, 
a $\beta$-congestion reduction subset $C$,
and $v\in V\backslash C$.
We also suppose that we 
have an estimate of the $C-v$ effective resistance 
such that $\tR_{eff}(C,v) \approx_{2} R_{eff}(C,v)$, as well as to
independent random walks
$\mathcal{P}^{u,e,i}$ for each $u\in V\backslash C$, $e\in E\backslash E(C)$ with $u\in e$, $i\in[h]$,
where each random walk starts from $u$ and ends at $C$.

If we let $S$ be a subset of 
$\gamma$-important edges
for $\gamma > 0$,
then for any error parameter $\heps > 0$ we can compute 
$\tpi_v^{C\cup\{v\}}\left(\BB^\top \frac{\qq_S}{\sqrt{\rr}}\right)\in\mathbb{R}$ 
and 
$\tpi^{C\cup\{v\}}\left(\BB^\top \frac{\qq_S}{\sqrt{\rr}}\right)\in\mathbb{R} \in \mathbb{R}^n$
such that with high probability
\begin{align*}
\left|\tpi_{v}^{C\cup\{v\}}\left(\BB^\top \frac{\qq_S}{\sqrt{\rr}}\right)
- \pi_{v}^{C\cup\{v\}}\left(\BB^\top \frac{\qq_S}{\sqrt{\rr}}\right)\right|
\leq \frac{\heps}{ \sqrt{R_{eff}(C,v)}}\,,
\end{align*}
as long as $h=\tOm{\heps^{-4}\beta^{-6}+\heps^{-2}\beta^{-2}\gamma^{-2}}$.
Furthermore, for any fixed $\vphi$, $E_{\rr}(\vphi) \leq 1$, after $T$ insertions after the last call to $\textsc{Initialize}$, with high probability
\begin{align*}\label{eq:insert_error}
\left|\left\langle
 \tvpi^{C\cup\{v\}}\left(\BB^\top \frac{\qq_S}{\sqrt{\rr}}\right) 
 -
\vpi^{C\cup\{v\}}\left(\BB^\top \frac{\qq_S}{\sqrt{\rr}}\right)
, \vphi\right\rangle\right| \leq \heps T\,,
\end{align*}
as long as 
$h = \tOm{\heps^{-2} \beta^{-4} \gamma^{-2}}$.

\label{lem:insert1}
\end{lemma}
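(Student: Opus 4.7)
The plan is to maintain the estimator via the identity \eqref{eq:projection_update}, namely
\[
\tvpi^{C\cup\{v\}}(\dd) = \tvpi^C(\dd) + \tpi_v^{C\cup\{v\}}(\dd)\cdot\left(\onev_v - \tvpi^C(\onev_v)\right),
\]
where both $\tpi_v^{C\cup\{v\}}(\dd)$ and $\tvpi^C(\onev_v)$ are built as empirical averages over the precomputed walks $\mathcal{P}^{u,e,i}$ (using the walks from edge endpoints $a,b$ of edges $e\in S$ for $\tpi_v$, and the walks starting at $v$ for $\tvpi^C(\onev_v)$, so that the two estimators use disjoint randomness). The two stated bounds are then established by separate concentration arguments.

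For the first bound, I would expand
\[
\pi_v^{C\cup\{v\}}\!\left(\BB^\top \frac{\qq_S}{\sqrt{\rr}}\right) = \sum_{e=(a,b)\in S} \frac{q_e}{\sqrt{r_e}}\left(p_v^{C\cup\{v\}}(a) - p_v^{C\cup\{v\}}(b)\right)
\]
and form the natural unbiased estimator $\tpi_v$ by replacing each hitting probability with the empirical average over the $h$ walks $\mathcal{P}^{a,e,i}$ and $\mathcal{P}^{b,e,i}$. Because the $a$- and $b$-walks are independent, the per-edge variance is at most $\frac{1}{r_e}(p_v^{C\cup\{v\}}(a)+p_v^{C\cup\{v\}}(b))$ and the per-edge range is $\frac{1}{\sqrt{r_e}}$. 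The key technical step is to bound the total variance by $\tO{\beta^{-2}\gamma^{-2}}/R_{eff}(C,v)$ and the range by $\tO{\gamma^{-1}}/\sqrt{R_{eff}(C,v)}$, using (i) the $\gamma$-importance of $S$ to replace $\frac{1}{r_e}$ by $\frac{1}{\gamma^2 R_{eff}(C,e)}$, (ii) the congestion-reduction property $\sum_u \mathrm{deg}(u)\, p_v^{C\cup\{v\}}(u) \leq \tO{\beta^{-2}}$ to control the aggregated hitting probabilities, and (iii) effective-resistance monotonicity to pass from $R_{eff}(C,e)$ to $R_{eff}(C,v)$. Bernstein's inequality then yields $|\tpi_v - \pi_v| \leq \heps/\sqrt{R_{eff}(C,v)}$ with high probability, provided $h \geq \tO{\heps^{-4}\beta^{-6}+\heps^{-2}\beta^{-2}\gamma^{-2}}$.

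For the second bound, I would proceed by a telescoping argument across the $T$ terminal insertions since the last \textsc{Initialize}. Writing $\epsilon_t = \langle \tvpi^{C^t}(\dd)-\vpi^{C^t}(\dd), \vphi\rangle$, the increment at the insertion of $v$ factors as
\[
\epsilon_{t+1}-\epsilon_t = (\tpi_v - \pi_v)\,\langle \onev_v - \vpi^{C^t}(\onev_v), \vphi\rangle + \tpi_v\,\langle \vpi^{C^t}(\onev_v) - \tvpi^{C^t}(\onev_v), \vphi\rangle\,.
\]
The first term is at most $\heps$ by the first bound combined with Cauchy--Schwarz and Lemma~\ref{lem:effective_hitting}, which give $|\langle \onev_v - \vpi^{C^t}(\onev_v), \vphi\rangle| \leq \sqrt{R_{eff}(C^t,v)\cdot E_{\rr}(\vphi)} \leq \sqrt{R_{eff}(C^t,v)}$. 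For the second term, I would exploit the independence of the walks from $v$ (used to construct $\tvpi^{C^t}(\onev_v)$) from those used for $\tpi_v$, apply a Hoeffding-type concentration bound conditional on $\tpi_v$, and combine it with the deterministic bound $|\tpi_v| = O(1/\sqrt{R_{eff}(C^t,v)})$ derivable from Lemma~\ref{st_projection_combined} and the $\gamma$-importance of $S$. This yields a per-step error of $\heps$ provided $h \geq \tO{\heps^{-2}\beta^{-4}\gamma^{-2}}$, and summing over $t\leq T$ gives the $\heps T$ bound.

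The main obstacle is the variance analysis for the first bound, where combining the $\gamma$-importance of $S$ with the congestion-reduction structure of $C$ must be done carefully to obtain the target $R_{eff}(C,v)^{-1}$ scaling rather than a weaker bound in terms of $\min_e R_{eff}(C,e)$. A more subtle point is the Hoeffding step for the second bound: the range of $\phi_W$ along random-walk endpoints is not a priori $O(\sqrt{R_{eff}(C,v)})$, and controlling it requires using Lemma~\ref{lem:small-neighborhood2} together with the congestion-reduction guarantee that walks from $v$ remain in a small effective-resistance neighborhood before hitting $C$.
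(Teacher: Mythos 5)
Your overall skeleton (the rank-one update identity, telescoping over insertions, and separate concentration arguments for $\tpi_v^{C\cup\{v\}}$ and $\tvpi^C(\onev_v)$) matches the paper, and your two-term splitting of the per-insertion error is an equivalent variant of the paper's three-term one. However, there are two genuine gaps.

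First, your variance/range analysis for $\tpi_v^{C\cup\{v\}}(\BB^\top \qq_S/\sqrt{\rr})$ does not go through. You propose to bound the per-edge range $1/\sqrt{r_e}$ by $\tO{\gamma^{-1}}/\sqrt{R_{eff}(C,v)}$ using $\gamma$-importance plus ``effective-resistance monotonicity to pass from $R_{eff}(C,e)$ to $R_{eff}(C,v)$.'' Importance gives $r_e \geq \gamma^2 R_{eff}(C,e)$, so you would need $R_{eff}(C,e) \gtrsim R_{eff}(C,v)$ — but there is no such monotonicity: an important edge can sit right next to $C$ (tiny $R_{eff}(C,e)$ and tiny $r_e$) while $v$ is far from $C$, making a single sampled term of magnitude $\frac{1}{h\sqrt{r_e}}$ arbitrarily large compared to $1/\sqrt{R_{eff}(C,v)}$. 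The paper's Lemmas~\ref{estimate1} and~\ref{estimate1_final} resolve this by sampling \emph{only} over the subset $S'$ of edges with $r_e \geq c^2 R_{eff}(C,v)$ (where the range is controlled directly), and for the remaining edges of $S$ \emph{not estimating at all}: their total contribution to $\pi_v^{C\cup\{v\}}$ is bounded analytically via Lemma~\ref{st_projection1}, using that importance forces them close to $C$, hence (by the triangle inequality for effective resistance) far from $v$, so that $\sqrt{r_e}/R_{eff}(v,e)$ is small. This split is the missing idea; without it the claimed bound for $h=\tOm{\heps^{-4}\beta^{-6}+\heps^{-2}\beta^{-2}\gamma^{-2}}$ cannot be established.

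Second, your treatment of $\langle \tvpi^{C}(\onev_v) - \vpi^{C}(\onev_v), \vphi\rangle$ via ``a Hoeffding-type bound'' with the range controlled by Lemma~\ref{lem:small-neighborhood2} is not workable: the random walks from $v$ terminate at vertices of $C$, which are not confined to any small effective-resistance neighborhood of $v$, so the potentials $\phi_u - \phi_v$ at the hitting endpoints admit no useful uniform bound. What the paper actually uses is the second-moment bound of Lemma~\ref{lem:variance}, $\left\|\vphi - \phi_v\onev\right\|_{\vpi^C(\onev_v),2}^2 \leq 8\,R_{eff}(C,v)$ (i.e., the hitting distribution places mass at most $2^{-(k-2)}$ on vertices with $(\phi_u-\phi_v)^2 > 2^k R_{eff}(C,v)$), combined with the bucketing concentration inequality of Lemma~\ref{conc2}, which handles the unbounded range by truncation. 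You would need to import both of these. A smaller quantitative point: the deterministic bound on $|\pi_v^{C\cup\{v\}}|$ carries a factor $\tO{\gamma^{-1}\beta^{-2}}/\sqrt{R_{eff}(C,v)}$ (Lemma~\ref{estimate1_final}), not $O(1/\sqrt{R_{eff}(C,v)})$; this is exactly why the accuracy for $\tvpi^C(\onev_v)$ must be set to $\delta_2 = \heps\beta^2\gamma/\tO{1}$, consistent with the $h=\tOm{\heps^{-2}\beta^{-4}\gamma^{-2}}$ you state.
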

\begin{algorithm} %
\begin{algorithmic}[1]
\caption{\textsc{DemandProjector} \DP.\textsc{AddTerminal}}
\Procedure{\DP.\textsc{AddTerminal}}{$v,\tR_{eff}(C,v)$}
\If {$v\in C$}
\State \Return 
\EndIf
\State $t = t + 1$
\State $\tpi_v^{C\cup\{v\}}\left(\BB^\top \frac{\qq_S}{\sqrt{\rr}}\right) = 0$
\For {$u,e\in S,i$ such that $\mathcal{P}^{u,e,i} \ni v$ and $\tR_{eff}(C,v) \leq 
\frac{1}{\left( \min\{\heps / \tO{\beta^{-2}}, \gamma / 4\} \right)^2} r_e$} 
	\If {$e = (u,*)$}
		\State $\tpi_v^{C\cup\{v\}}\left(\BB^\top \frac{\qq_S}{\sqrt{\rr}}\right) 
		 = \tpi_v^{C\cup\{v\}}\left(\BB^\top \frac{\qq_S}{\sqrt{\rr}}\right) + \frac{1}{h} \frac{q_e}{\sqrt{r_e}}$
	\Else
		\State $\tpi_v^{C\cup\{v\}}\left(\BB^\top \frac{\qq_S}{\sqrt{\rr}}\right) 
		 = \tpi_v^{C\cup\{v\}}\left(\BB^\top \frac{\qq_S}{\sqrt{\rr}}\right) - \frac{1}{h} \frac{q_e}{\sqrt{r_e}}$
	\EndIf
	\State Shortcut $\mathcal{P}^{u,e,i}$ at $v$
\EndFor
\State $h' = \tO{\heps^{-2}\beta^{-4}\gamma^{-2}} $ %
\State $\tvpi^{C}\left(\onev_v\right) = \zerov$
\For {$i=1,\dots h'$}
	\State Run random walk from $v$ to $C$ with probabilities prop. to $\rr^{-1}$, let $u$ be the last vertex
	\State $\tpi_u^{C}\left(\onev_v\right) = \tpi_u^{C}\left(\onev_v\right) + \frac{1}{h'}$
\EndFor
\State $\tvpi^{C \cup \{v\}}(\BB^\top \frac{\qq_S}{\sqrt{\rr}}) =\tvpi^{C}(\BB^\top \frac{\qq_S}{\sqrt{\rr}}) +  \tpi_v^{C \cup \{v\}}(\BB^\top \frac{\qq_S}{\sqrt{\rr}}) \cdot (\onev_v - \tpi^C(\onev_v))$
\State $C = C\cup\{v\}$, $F = F\backslash \{v\}$
\EndProcedure
\end{algorithmic} 
\end{algorithm}

\subsection{Estimating $\pi_v^{C\cup\{v\}}\left(\BB^\top \frac{\qq_S}{\sqrt{\rr}}\right)$}
\label{sec:approx1}

There is a straightforward algorithm to estimate 
$\pi_v^{C\cup\{v\}}\left(\BB^\top \frac{\qq_S}{\sqrt{\rr}}\right)$.
For each edge $e=(u,w)\in E\backslash E(C)$, sample a number of random walks from $u$ and $w$
until they hit $C\cup\{v\}$. Then, add to the estimate $\frac{q_e}{\sqrt{r_e}}$ times the 
fraction of the random walks starting from $u$
that contain $v$,
minus $\frac{q_e}{\sqrt{r_e}}$ times the fraction of the
random walks starting from $w$ that contain $v$.
\cite{gao2021fully} uses this sampling method together with 
the following concentration bound, to get a good estimate if the resistances 
of all congested edges are sufficiently large.

\begin{lemma}[Concentration inequality 1~\cite{gao2021fully}]
Let $S = X_1 + \dots + X_n$ be the sum of $n$ independent random variables. The range of $X_i$ is $\{0,a_i\}$ for $a_i\in[-M,M]$. Let $t,E$ be positive numbers such
that $t \leq E$ and $\sum\limits_{i=1}^n \left|\mathbb{E}[X_i]\right| \leq E$. Then
\[ \Pr\left[|S - \mathbb{E}[S]| > t\right] \leq 2 \exp\left(-\frac{t^2}{6EM}\right) \,.\]
\label{conc1}
\end{lemma}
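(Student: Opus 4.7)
The plan is to reduce this to the standard Bernstein inequality for sums of independent, bounded random variables. Writing $p_i = \Pr[X_i = a_i]$, so that $\mathbb{E}[X_i] = a_i p_i$ and $|\mathbb{E}[X_i]| = |a_i| p_i$, I would first bound the per-variable variance:
\[
\mathrm{Var}(X_i) = a_i^2 p_i (1-p_i) \,\leq\, a_i^2 p_i \,=\, |a_i| \cdot |a_i| p_i \,\leq\, M \cdot |\mathbb{E}[X_i]|\,.
\]
Summing over $i$ and invoking the hypothesis $\sum_i |\mathbb{E}[X_i]| \leq E$ then yields the variance bound $\mathrm{Var}(S) \leq ME$.

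Next I would record the almost-sure boundedness needed for Bernstein: both $X_i$ and $\mathbb{E}[X_i]$ lie in the convex hull of $\{0, a_i\}$, which is an interval of length $|a_i| \leq M$, so $|X_i - \mathbb{E}[X_i]| \leq M$ almost surely. Note that the sign of $a_i$ plays no role in either this step or the variance step, so the fact that $a_i$ may be negative is not an obstacle.

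With these two ingredients, the standard Bernstein inequality for independent, bounded, centered random variables gives
\[
\Pr\bigl[|S - \mathbb{E}[S]| > t\bigr] \,\leq\, 2\exp\!\left(-\frac{t^2/2}{\mathrm{Var}(S) + Mt/3}\right)\,.
\]
Substituting $\mathrm{Var}(S) \leq ME$ and using the hypothesis $t \leq E$ to bound $Mt/3 \leq ME/3$, the denominator in the exponent is at most $\tfrac{4}{3}ME$, yielding an exponent of at most $-\tfrac{3t^2}{8ME}$. Since $\tfrac{3}{8} > \tfrac{1}{6}$, this is at most $-\tfrac{t^2}{6ME}$, which gives the claimed bound $2\exp(-t^2/(6ME))$.

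There is no real obstacle here; the entire content is that the variance of a single-support Bernoulli-like random variable with values in $\{0, a_i\}$ is controlled by the first absolute moment times the range bound $M$, so the variance proxy in Bernstein's inequality is automatically no larger than $ME$. If one prefers a self-contained proof rather than citing Bernstein, the same tail bound follows by the standard MGF argument: the inequality $\mathbb{E}[e^{\lambda(X_i - \mathbb{E}[X_i])}] \leq \exp\!\bigl(\tfrac{\lambda^2 \mathrm{Var}(X_i)/2}{1 - \lambda M/3}\bigr)$ holds for $0 < \lambda < 3/M$, and taking a product over $i$ followed by Markov's inequality with the choice $\lambda = t/(ME + Mt/3)$ recovers the same final bound with essentially identical constants.
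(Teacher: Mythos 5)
Your proof is correct. The paper itself gives no argument for this lemma---it is imported verbatim from \cite{gao2021fully}---so there is nothing to compare against line by line, but your route is the natural one: the two-point support gives $\mathrm{Var}(X_i)=a_i^2p_i(1-p_i)\leq M\,|\mathbb{E}[X_i]|$, hence $\mathrm{Var}(S)\leq ME$, the centered variables are bounded by $M$, and Bernstein's inequality together with $t\leq E$ bounds the denominator by $\tfrac{4}{3}ME$, giving exponent $-\tfrac{3t^2}{8ME}\leq-\tfrac{t^2}{6ME}$. All steps, including the observation that the signs of the $a_i$ are irrelevant, check out.
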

Unfortunately, in our setting there is no reason to expect 
these resistances to be large, so the variance of this estimate might be too high.
We have already introduced the concept of important edges in order to alleviate this problem,
and proved that we only need to look at important edges.
Even if all edges of which the demand projection is estimated
are important (i.e. close to $C$), however, %
$v$ can still be far from $C$. This is an issue, since we don't directly estimate projections
onto $C$, but instead estimate the projection
onto $C\cup\{v\}$ and then from $v$ onto $C$.

Intuitively, however, if $v$ is far from $C$, 
it should also be far from the set of important edges, so the insertion of $v$ should
not affect their demand projection too much.
As the distance upper bound between an important edge and $C$ is relative to the scale of
the resistance of that edge, this statement needs be more
fine-grained in order to take the resistances of important edges into account.

More concretely, in the following lemma,
which is proved in Appendix~\ref{proof_estimate1},
we show that if we only compute
demand projection estimates for edges $e$
such that
$r_e \geq c^2 R_{eff}(C,v)$
for some appropriately chosen $c > 0$, 
then we can guarantee a good bound on the number of random walks we need to sample.

For the remaining edges, we will show 
that the energy of their contributions to the projection is
negligible, so that we can reach to our desired statement in Lemma~\ref{estimate1_final}.

\begin{lemma} %
Consider a graph $G(V,E)$ with resistances $\rr$, $\qq\in[-1,1]^n$, 
a $\beta$-congestion reduction subset $C$,
as well as $v\in V\backslash C$.
If for some $c>0$ we are given a set of edges
\[ S' \subseteq 
\left\{e\in E\backslash E(C)\ |\  \text{$R_{eff}(C,v) \leq \frac{1}{c^2} r_e$}\right\} \,, \]
then for any $\delta_1' > 0$
we can compute 
$\tpi_v^{C\cup\{v\}}\left(\BB^\top \frac{\qq_{S'}}{\sqrt{\rr}}\right)\in\mathbb{R}$ such that 
with high probability
\[ \left|\tpi_v^{C\cup\{v\}}\left(\BB^\top \frac{\qq_{S'}}{\sqrt{\rr}}\right) - \pi_{v}^{C\cup\{v\}}\left(\BB^\top \frac{\qq_{S'}}{\sqrt{\rr}}\right)\right| \leq \frac{\delta_1'}{\beta c\sqrt{R_{eff}(C,v)}} \,.\]
The algorithm requires access to
$\tO{\delta_1'^{-2} \log n \log \frac{1}{\beta}}$ 
independent random walks from $u$ to $C$ for each $u\in V\backslash C$ and $e\in E\backslash E(C)$ with $u\in e$.
\label{estimate1}
\end{lemma}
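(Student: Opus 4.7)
The plan is to construct a natural unbiased Monte Carlo estimator from the precomputed walks and apply Lemma~\ref{conc1}. For each edge $e=(u,w)\in S'$ and each $i\in[h]$, inspect the precomputed walks $\mathcal{P}^{u,e,i}$ and $\mathcal{P}^{w,e,i}$, and set $X^{u,e,i}=\mathbbm{1}[v\in\mathcal{P}^{u,e,i}]$ and analogously $X^{w,e,i}$. Since each walk starts at its named endpoint and stops at $C$, $v$ is encountered on the walk precisely when $v$ is the first vertex of $C\cup\{v\}$ hit, so $\mathbb{E}[X^{u,e,i}]=p_v^{C\cup\{v\}}(u)$. Define
\[
\tpi_v^{C\cup\{v\}}\!\left(\BB^\top \tfrac{\qq_{S'}}{\sqrt{\rr}}\right)
 \;=\; \sum_{e=(u,w)\in S'}\frac{q_e}{h\sqrt{r_e}}\sum_{i=1}^h\bigl(X^{u,e,i}-X^{w,e,i}\bigr)\,.
\]
Linearity of expectation together with the identity $\pi_v^{C\cup\{v\}}(\BB^\top \tfrac{\qq_{S'}}{\sqrt{\rr}})=\sum_{e=(u,w)\in S'}\tfrac{q_e}{\sqrt{r_e}}(p_v^{C\cup\{v\}}(u)-p_v^{C\cup\{v\}}(w))$ shows the estimator is unbiased.

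Next I would apply Lemma~\ref{conc1} to the $2h|S'|$ independent indicators, each weighted by $\pm\tfrac{q_e}{h\sqrt{r_e}}$. This requires bounding both the per-term magnitude $M$ and the sum $E$ of $|\mathbb{E}[X_i]|$. For $M$, the assumption $r_e\geq c^2 R_{eff}(C,v)$ for $e\in S'$ and $|q_e|\leq 1$ give $M\leq \tfrac{1}{hc\sqrt{R_{eff}(C,v)}}$. For $E$, I would compute
\[
\sum_{e=(u,w)\in S'}\frac{|q_e|}{\sqrt{r_e}}\bigl(p_v^{C\cup\{v\}}(u)+p_v^{C\cup\{v\}}(w)\bigr)
 \;\leq\; \frac{1}{c\sqrt{R_{eff}(C,v)}}\sum_{u\in V}\deg(u)\,p_v^{C\cup\{v\}}(u)
 \;\leq\; \tO{\frac{1}{c\beta^2\sqrt{R_{eff}(C,v)}}}\,,
\]
where the first step uses $\tfrac{1}{\sqrt{r_e}}\leq \tfrac{1}{c\sqrt{R_{eff}(C,v)}}$ and each $e$ contributes to the sum over $u$ through its two endpoints, and the last step is exactly the $\beta$-congestion reduction property \eqref{eq:cong_red}.

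Finally, setting the target deviation $t=\tfrac{\delta_1'}{\beta c\sqrt{R_{eff}(C,v)}}$ in Lemma~\ref{conc1} produces an exponent
\[
\frac{t^2}{6EM} \;=\; \Omega\!\left(\frac{\delta_1'^2\,h}{\log(1/\beta)}\right)\,,
\]
so choosing $h=\tOm{\delta_1'^{-2}\log n\,\log(1/\beta)}$ makes the failure probability polynomially small. The hypothesis $t\leq E$ demanded by Lemma~\ref{conc1} reduces to $\delta_1'\leq \tO{\beta^{-1}}$, which we may assume without loss of generality since otherwise the target error exceeds $E$ and the trivial estimator $0$ satisfies the bound. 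The main obstacle is the bound on $E$: it is the one place where we must combine the two qualitatively different properties---the quantitative ``important edge'' condition $r_e\geq c^2R_{eff}(C,v)$ on $S'$, which lets $\tfrac{1}{\sqrt{r_e}}$ be extracted as a uniform scale, and the structural property of $C$ being a congestion-reduction subset, which controls the degree-weighted hitting probabilities of random walks stopped at $C$. Everything else---unbiasedness, the $M$ bound, and the concentration calculation---is routine.
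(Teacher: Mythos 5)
Your proposal is correct and follows essentially the same route as the paper: the same unbiased walk-based estimator, the same application of Lemma~\ref{conc1} with $M$ bounded via $r_e \geq c^2 R_{eff}(C,v)$ and $E$ bounded via the congestion-reduction property \eqref{eq:cong_red}, and the same choice of $t$ and walk count. Your explicit handling of the $t\le E$ hypothesis (falling back to the zero estimator) is a minor technicality the paper glosses over, and is fine.
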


This leads us to the desired statement for this section,
whose proof appears in Appendix~\ref{proof_estimate1_final}.
\begin{lemma}[Estimating $\pi_v^{C\cup\{v\}}\left(\BB^\top \frac{\qq}{\sqrt{\rr}}\right)$]
Consider a graph $G(V,E)$ with resistances $\rr$, $\qq\in[-1,1]^n$, 
a $\beta$-congestion reduction subset $C$,
as well as $v\in V\backslash C$.
If we are given a set
$S$ of $\gamma$-important edges for some $\gamma \in (0,1)$
and an estimate $\tR_{eff}(C,v)\approx_{2} R_{eff}(C,v)$,
then for any $\delta_1 \in(0,1)$
we can compute
$\tpi_v^{C\cup\{v\}}\left(\BB^\top \frac{\qq_{S}}{\sqrt{\rr}}\right)\in\mathbb{R}$
such that with high probability
\begin{align}
\left|\tpi_v^{C\cup\{v\}}\left(\BB^\top \frac{\qq_{S}}{\sqrt{\rr}}\right) - \pi_{v}^{C\cup\{v\}}\left(\BB^\top \frac{\qq_{S}}{\sqrt{\rr}}\right)\right| 
\leq \frac{\delta_1}{\sqrt{R_{eff}(C,v)}} \,.
\end{align}
The algorithm requires
$\tO{\delta_1^{-4} \beta^{-6} + \delta_1^{-2} \beta^{-2} \gamma^{-2}}$
independent random walks from $u$ to $C$ for each $u\in V\backslash C$ 
and $e\in E\backslash E(C)$ with $u\in e$.

Additionally, we have
\[ \left|\pi_{v}^{C\cup\{v\}}\left(\BB^\top \frac{\qq_{S}}{\sqrt{\rr}}\right)\right| 
\leq \frac{1}{\gamma \sqrt{R_{eff}(C,v)}} \cdot \tO{\frac{1}{\beta^2}} \,.\]
\label{estimate1_final}
\end{lemma}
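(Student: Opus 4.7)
\noindent My plan is to partition $S$ by edge resistance into a ``large'' subset, to which Lemma~\ref{estimate1} applies directly, and a ``small'' subset whose true contribution to $\pi_v^{C\cup\{v\}}$ is negligible thanks to the combination of $\gamma$-importance with the congestion reduction property of $C$.

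Concretely, I would pick a threshold $c = \min\{c_0\,\delta_1 \beta^2,\, c_1 \gamma\}$ for suitable absolute constants $c_0, c_1$, and split $S = S_{\mathrm{large}} \sqcup S_{\mathrm{small}}$ with $S_{\mathrm{large}} = \{e \in S : r_e \geq c^2\,\tR_{eff}(C,v)\}$. The returned estimator will be the output of Lemma~\ref{estimate1} applied to $S_{\mathrm{large}}$, discarding $S_{\mathrm{small}}$, so the total error decomposes into (i) the estimation error on $S_{\mathrm{large}}$ and (ii) the true value $|\pi_v^{C\cup\{v\}}(\BB^\top \qq_{S_{\mathrm{small}}}/\sqrt{\rr})|$. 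For (i), I would invoke Lemma~\ref{estimate1} with inner tolerance $\delta_1' = \Theta(\delta_1 \beta c)$ to obtain additive error at most $\delta_1/(2\sqrt{R_{eff}(C,v)})$, at the cost of $\tO{(\delta_1')^{-2}}$ random walks per endpoint; substituting the two possible active branches of $c$ yields exactly the random walk count $\tO{\delta_1^{-4}\beta^{-6} + \delta_1^{-2}\beta^{-2}\gamma^{-2}}$ claimed in the lemma.

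For (ii), the main tool is the following triangle-like inequality:
\[ \sqrt{R_{eff}(v,C)} \leq \sqrt{R_{eff}(v,e)} + \sqrt{R_{eff}(e,C)} + \sqrt{r_e}\,. \]
I would prove it by concatenating the optimal unit flow from $v$ into the supernode $\{u,w\}$ with the optimal unit flow from $\{u,w\}$ into $C$, and correcting for the (possibly different) splits of the two flows at $\{u,w\}$ by routing $|\delta| \leq 1$ units directly across the edge $e$ (energy at most $r_e$); the triangle inequality in the energy norm then yields the displayed bound. Now, for $e \in S$ being $\gamma$-important, $\sqrt{R_{eff}(e,C)} \leq \sqrt{r_e}/\gamma$, and for $e \in S_{\mathrm{small}}$ the bound $r_e < 2c^2 R_{eff}(C,v)$ combined with $c \leq c_1 \gamma$ makes both extra terms collectively at most $\sqrt{R_{eff}(C,v)}/2$, forcing $R_{eff}(v,e) \geq \Omega(R_{eff}(C,v))$. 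Plugging this into Lemma~\ref{st_projection1} and summing the per-edge contributions $(p_v^{C\cup\{v\}}(u) + p_v^{C\cup\{v\}}(w)) \cdot \sqrt{r_e}/R_{eff}(v,e) = O(c/\sqrt{R_{eff}(C,v)}) \cdot (\text{hitting probs})$ via the congestion-reduction identity (\ref{eq:cong_red}) bounds the omitted contribution by $O(c/\beta^2)/\sqrt{R_{eff}(C,v)} \leq \delta_1/(2\sqrt{R_{eff}(C,v)})$ by choice of $c_0$.

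For the magnitude bound, I would apply the same partition with the analysis-only choice $c := c_1 \gamma$. On $S_{\mathrm{large}}$, the elementary per-edge bound $|\pi_v^{C\cup\{v\}}(\BB^\top \onev_e/\sqrt{r_e})| \leq (p_v^{C\cup\{v\}}(u) + p_v^{C\cup\{v\}}(w))/\sqrt{r_e}$ combined with $1/\sqrt{r_e} \leq O(1/\gamma)/\sqrt{R_{eff}(C,v)}$ and (\ref{eq:cong_red}) yields $\tO{1/(\gamma\beta^2)}/\sqrt{R_{eff}(C,v)}$; on $S_{\mathrm{small}}$, the bound from (ii) gives $O(\gamma/\beta^2)/\sqrt{R_{eff}(C,v)}$, which is even smaller since $\gamma \leq 1$. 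Combining the two parts yields the stated $\tO{1/(\gamma\beta^2)}/\sqrt{R_{eff}(C,v)}$ magnitude bound.

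The main obstacle will be the triangle-like inequality for $\sqrt{R_{eff}}$ displayed above: unlike the classical vertex-to-vertex case where $\sqrt{R_{eff}}$ is a true metric, here the three quantities are defined in graphs with different contractions ($\{u,w\}$ and/or $C$), so the standard norm-based triangle inequality does not apply directly. The additive $\sqrt{r_e}$ correction is genuinely necessary, since the optimal sink split of the $v \to \{u,w\}$ flow generally differs from the optimal source split of the $\{u,w\} \to C$ flow, and reconciling the two splits requires routing $|\delta| \leq 1$ units across the edge $e$, contributing exactly $|\delta|\sqrt{r_e}$ in the energy norm.
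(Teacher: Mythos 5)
Your proposal is correct and follows essentially the same route as the paper's proof: the same resistance-threshold split of $S$ (with the same choice $c = \min\{\Theta(\delta_1\beta^2), \Theta(\gamma)\}$ and inner tolerance $\delta_1' = \Theta(\delta_1\beta c)$ fed to Lemma~\ref{estimate1}), the same error decomposition into estimation error on the large-resistance part plus the true contribution of the discarded part, the same use of Lemma~\ref{st_projection1} together with the congestion-reduction bound \eqref{eq:cong_red} to control the latter, and the same two-regime argument for the magnitude bound. The one substantive deviation is how you lower-bound $R_{\mathrm{eff}}(v,e)$ for the discarded edges: the paper combines Lemma~\ref{lem:multi_effective_resistance} with vertex-level triangle inequalities in the effective-resistance metric ($R_{\mathrm{eff}}(v,u) \geq R_{\mathrm{eff}}(C,v) - R_{\mathrm{eff}}(C,u)$, then bounding $R_{\mathrm{eff}}(C,u)$ via $R_{\mathrm{eff}}(C,e)$ and $\gamma$-importance), whereas you prove a set-level inequality $\sqrt{R_{\mathrm{eff}}(v,C)} \leq \sqrt{R_{\mathrm{eff}}(v,e)} + \sqrt{R_{\mathrm{eff}}(e,C)} + \sqrt{r_e}$ by concatenating the two optimal unit flows and patching the mismatched splits at $\{u,w\}$ across the edge $e$; this is sound (the splits $a,b$ lie in $[0,1]$ by the maximum principle, so $|\delta|\le 1$, and the triangle inequality for $\sqrt{\mathcal{E}_{\rr}(\cdot)}$ finishes it) and arguably cleaner, since it sidesteps the factor-of-$|A|$ loss and the implicit $r_e$ slack in the paper's chain of vertex-level bounds. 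Both arguments land on $R_{\mathrm{eff}}(v,e) \geq \Omega(R_{\mathrm{eff}}(C,v))$, after which the two proofs coincide.
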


\subsection{Estimating $\vpi^{C}(\onev_v)$}
\label{sec:approx2}

In contrast to the quantity 
$\pi_v^{C\cup\{v\}}\left(\BB^\top \frac{\qq}{\sqrt{\rr}}\right)$,
where there are cancellations between its two components
$\pi_v^{C\cup\{v\}}\left(\sum\limits_{e=(u,w)\in E}\frac{q_e}{\sqrt{r_e}} \onev_u\right)$ 
and
$\pi_v^{C\cup\{v\}}\left(\sum\limits_{e=(u,w)\in E}-\frac{q_e}{\sqrt{r_e}} \onev_w\right)$ 
(as $\BB^\top \frac{\qq}{\sqrt{\rr}}$ sums up to $\zerov$),
in $\vpi^C\left(\onev_v\right)$ there are no cancellations. 
The goal is to simply
estimate the hitting probabilities from $v$ to the vertices of $C$,
which can be done by sampling a number of random walks from $v$ to $C$.

As discussed before, even though ideally we would like to have an error bound of the form 
$\sqrt{\mathcal{E}_{\rr}(\tvpi^C(\onev_v) - \vpi^C(\onev_v))} \leq \delta_2 \sqrt{R_{eff}(C,v)}$, our analysis is only able to guarantee that for any fixed potential vector $\vphi$
with $E_{\rr}(\vphi) \leq 1$, with high probability
$\left|\langle \vphi, \tvpi^C(\onev_v) - \vpi^C(\onev_v)\rangle\right| \leq \delta_2 
\sqrt{R_{eff}(C,v)}$.
However, this is still sufficient for our purposes.

In Appendix~\ref{proof_conc2}
we prove the following general concentration inequality, which basically states that we can estimate the desired hitting
probabilities as long as we have a bound on the $\ell_2$ norm
of the potentials $\vphi$ weighted by the hitting probabilities.

\begin{lemma}[Concentration inequality 2]
Let $\vpi$ be a probability distribution over $[n]$
and $\tvpi$ an empirical distribution of $Z$ samples from $\vpi$.
For any $\ovphi\in\mathbb{R}^n$ with $\left\|\ovphi\right\|_{\vpi,2}^2 \leq \cV$, we have
\[\Pr\left[\left|\langle \tvpi - \vpi, \ovphi\rangle\right| > t\right] \leq \frac{1}{n^{100}} + \tO{\log \left(n \cdot \cV / t\right)}\exp\left(- \frac{Z t^2}{\tO{\cV \log^2 n}} \right) \,.\]
\label{conc2}
\end{lemma}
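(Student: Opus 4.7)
The plan is to view the quantity $\langle \tvpi - \vpi, \ovphi\rangle = \frac{1}{Z}\sum_{j=1}^Z (\ovphi_{Y_j} - \mathbb{E}[\ovphi_Y])$ as the mean of $Z$ i.i.d.\ centered random variables with per-sample variance at most $\sum_i \pi_i \ovphi_i^2 = \|\ovphi\|_{\vpi,2}^2 \leq \cV$, and to apply a Bernstein-type concentration bound. The obstacle is that the summands $\ovphi_{Y_j}$ are not uniformly bounded: an index $i$ with tiny $\pi_i$ can have $|\ovphi_i|$ as large as $\sqrt{\cV/\pi_i}$, so a single global truncation threshold is inadequate because it interacts simultaneously with the variance and with the subexponential tail of Bernstein.

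Instead, I would decompose $\ovphi$ dyadically by magnitude. Set $M_k = 2^k \sqrt{\cV}$ for $k = 0, 1, \dots, K$ with $K = O(\log(Zn))$, and write $\ovphi = \sum_k \ovphi^{(k)}$ where $\ovphi^{(k)}$ is the restriction of $\ovphi$ to $I_k = \{i : |\ovphi_i| \in [M_{k-1}, M_k)\}$, with $I_0 = \{i : |\ovphi_i| < \sqrt{\cV}\}$. The key structural observation is that $M_{k-1}^2 \sum_{i \in I_k} \pi_i \leq \cV$, so the mass of bucket $k$ satisfies $\pi(I_k) \leq 4\cV/M_k^2$. This in turn yields $|\langle \vpi, \ovphi^{(k)}\rangle| \leq M_k \pi(I_k) \leq 4\cV/M_k$ and a per-sample variance of $\ovphi^{(k)}_Y$ uniformly bounded by $M_k^2 \pi(I_k) \leq 4\cV$, independent of $k$. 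For the top buckets with $M_k^2 \geq Z\cV n^{100}$, the expected number of samples landing in $I_k$ is at most $1/n^{100}$, so with high probability no sample ever lands there; this event is folded into the additive $1/n^{100}$ term of the statement.

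For the remaining buckets I would apply Lemma~\ref{conc1} to the sum $\frac{1}{Z}\sum_j \ovphi^{(k)}_{Y_j}$, identifying parameters $E = 4\cV/M_k$ and $M = M_k/Z$, so that $EM = 4\cV/Z$, and setting the per-bucket error parameter to $t_k = t/K$. In the regime $t_k \leq E$, i.e.\ $M_k = \tO{K\cV/t}$, this yields exactly the bound $2\exp(-Zt^2/\tO{K^2 \cV})$ we want after union-bounding over the $K = \tO{\log(n\cV/t)}$ buckets. The hard part will be the intermediate buckets with $M_k \gg K\cV/t$, where Bernstein enters its subexponential regime and no longer directly delivers the subgaussian tail. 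For those I would argue separately that both parts are small with high probability: the mean part satisfies $|\langle \vpi, \ovphi^{(k)}\rangle| \leq 4\cV/M_k = O(t/K)$ deterministically, and the empirical part $|\langle \tvpi, \ovphi^{(k)}\rangle| \leq N_k M_k/Z$ can be controlled by combining a Chernoff bound on the number $N_k$ of samples landing in $I_k$ (whose expectation $4Z\cV/M_k^2$ is small precisely because $M_k$ is large) with the per-sample magnitude bound $M_k/Z$, yielding $N_k M_k/Z = O(t/K)$ with high probability in the parameter regime where the stated concentration bound is non-trivial. Union-bounding the $\tO{\log(n\cV/t)}$ Bernstein events together with the tail event for the extreme buckets then yields the claim.
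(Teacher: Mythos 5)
Your proposal matches the paper's proof of Lemma~\ref{conc2}: the same dyadic bucketing of $\ovphi$ by magnitude, the same absorption of the extreme bucket into the additive $1/n^{100}$ term via the observation that with high probability no sample lands there, and the same per-bucket application of Lemma~\ref{conc1} exploiting that within a bucket the product $E\cdot M$ is bounded by $O(\cV/Z)$, followed by a union bound over the $\tO{\log(n\cV/t)}$ buckets. Your separate treatment of the intermediate buckets where $t/K$ exceeds the bucket mean (the subexponential regime of Lemma~\ref{conc1}, which the paper elides) is a minor refinement, but the route is essentially identical.
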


We will apply it for $\ovphi = \vphi - \phi_v\cdot \onev$, and it is important
to note that $\mathcal{E}_{\rr}(\ovphi) = \mathcal{E}_{\rr}(\vphi)$.
In order to get a bound on $\left\|\ovphi\right\|_{\vpi^C\left(\onev_v\right),2}^2$,
we use the following lemma,
which is proved in Appendix~\ref{proof_lem_variance}. 
\begin{lemma}[Bounding the second moment of potentials]
For any graph $G$, resistances $\rr$, potentials $\vphi$ with 
$E_{\rr}(\vphi) \leq 1$,
$C \subseteq V$ and $v\in V\backslash C$ we have
$
\left\|\vphi - \phi_v \onev\right\|_{\vpi^C(\onev_v),2}^2
\leq 8 \cdot R_{eff}(C,v)$.
\label{lem:variance}
\end{lemma}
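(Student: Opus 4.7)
My strategy is to pass to the harmonic extension of $\vphi$ and exploit the martingale / Green's-function structure of the random walk from $v$ killed on $C$. Let $\psi \in \mathbb{R}^n$ denote the energy-minimizing extension of $\vphi|_C$, so $\psi_u = \phi_u$ for $u \in C$ and $\psi$ is harmonic on $F := V \setminus C$ with respect to $\LL = \BB^\top \RR^{-1} \BB$; since $\vphi$ is itself a feasible extension, $E_{\rr}(\psi) \le E_{\rr}(\vphi) \le 1$. Let $X \in C$ be the first vertex of $C$ hit by the walk from $v$ with transition probabilities $P(x,y) = (1/r_{xy})/d(x)$, where $d(x) := \sum_{y \sim x} 1/r_{xy}$. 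Because $\psi = \vphi$ on $C$, the target equals $\mathbb{E}[(\psi_X - \phi_v)^2]$, and I would split using the martingale identity $\mathbb{E}[\psi_X] = \psi_v = \langle \vpi^C(\onev_v), \vphi\rangle$ (immediate from harmonicity):
\begin{align*}
\mathbb{E}[(\psi_X - \phi_v)^2] \;=\; \mathrm{Var}[\psi_X] \;+\; (\psi_v - \phi_v)^2.
\end{align*}
The bias term is handled immediately by energy-norm duality (the Fact in the preliminaries) together with Lemma~\ref{lem:effective_hitting}: $|\psi_v - \phi_v| = |\langle \vpi^C(\onev_v) - \onev_v, \vphi\rangle| \le \sqrt{\mathcal{E}_{\rr}(\onev_v - \vpi^C(\onev_v))\, E_{\rr}(\vphi)} \le \sqrt{R_{eff}(v,C)}$.

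The main calculation is the variance. A one-step conditioning, where the cross term vanishes thanks to harmonicity of $\psi$ on $F$, shows that $u(x) := \mathbb{E}_x[(\psi_X - \psi_x)^2]$ solves the Dirichlet problem $u|_C \equiv 0$ and $u(x) = \sum_y P(x,y)\, u(y) + \sum_y P(x,y)(\psi_y - \psi_x)^2$ on $F$. Solving this by convolving the per-vertex source against the Green's function, and using the standard identity that the expected number of visits to $x$ before $C$, starting from $v$, equals $d(x) \cdot [\LL_{FF}^{-1}]_{vx}$, yields
\begin{align*}
\mathrm{Var}[\psi_X] \;=\; u(v) \;=\; \sum_{x \in F} [\LL_{FF}^{-1}]_{vx}\cdot H(x), \qquad H(x) := \sum_{y \sim x} \frac{(\psi_y - \psi_x)^2}{r_{xy}}.
\end{align*}

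The step I expect to be the main obstacle is the pointwise Green's-function bound $[\LL_{FF}^{-1}]_{vx} \le [\LL_{FF}^{-1}]_{vv} = R_{eff}(v,C)$. The equality already appears inside the proof of Lemma~\ref{lem:effective_hitting}. For the inequality, I would use the strong Markov property: decomposing a walk from $x$ at the first visit to $v$ (before $C$) gives $G(x,v) = \Pr_x[\text{hit } v \text{ before } C] \cdot G(v,v)$, and combining with the reciprocity $G(v,x)/d(x) = G(x,v)/d(v)$ yields $[\LL_{FF}^{-1}]_{vx} = \Pr_x[\text{hit } v \text{ before } C] \cdot [\LL_{FF}^{-1}]_{vv} \le [\LL_{FF}^{-1}]_{vv}$. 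Combined with $\sum_{x \in V} H(x) = 2 E_{\rr}(\psi) \le 2$, this produces $\mathrm{Var}[\psi_X] \le R_{eff}(v,C)\sum_{x \in F} H(x) \le 2 R_{eff}(v,C)$. Adding the bias contribution, $\|\vphi - \phi_v \onev\|_{\vpi^C(\onev_v), 2}^2 \le 3 R_{eff}(v,C) \le 8 R_{eff}(v,C)$, concluding the proof with room to spare.
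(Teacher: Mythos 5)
Your proof is correct, and it takes a genuinely different route from the paper's. The paper argues by a dyadic level-set decomposition: it partitions the vertices of $C$ into annuli $S_k$ according to the magnitude of $\phi_i^2$ relative to $R_{eff}(C,v)$, observes that a set whose potentials differ from $\phi_v$ by at least $\sqrt{2^k R_{eff}(C,v)}$ must (since $E_{\rr}(\vphi)\le 1$) satisfy $R_{eff}(S_k,v) > 2^k R_{eff}(C,v)$, and then converts this resistive separation into a geometric decay of the hitting mass $\|\vpi_{S_k}^C(\onev_v)\|_1$ via a series/parallel comparison of effective resistances. Your argument instead replaces $\vphi$ by the harmonic extension $\psi$ of $\vphi|_C$ (legitimate, since the weight $\vpi^C(\onev_v)$ is supported on $C$ and the Dirichlet principle gives $E_{\rr}(\psi)\le E_{\rr}(\vphi)$), splits the second moment into variance plus squared bias via the optional-stopping identity $\mathbb{E}[\psi_X]=\psi_v$, controls the bias by duality together with Lemma~\ref{lem:effective_hitting}, and computes the variance exactly as $\sum_{x\in F}[\LL_{FF}^{-1}]_{vx}H(x)$ with $H(x)$ the local Dirichlet energy; the pointwise Green's-function bound $[\LL_{FF}^{-1}]_{vx}\le[\LL_{FF}^{-1}]_{vv}=R_{eff}(C,v)$ (strong Markov property plus symmetry of $\LL_{FF}^{-1}$) and $\sum_x H(x)=2E_{\rr}(\psi)\le 2$ finish the job. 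I checked the two points where this could break — the vanishing of the cross term in the one-step recursion (it does vanish, by harmonicity of $\psi$ at interior vertices and trivially at boundary vertices where $u\equiv 0$) and the normalization $G(v,x)=d(x)[\LL_{FF}^{-1}]_{vx}$ (correct, since $I-P_{FF}=D_{FF}^{-1}\LL_{FF}$) — and both are sound. Your approach buys a sharper constant ($3$ instead of $8$) and an exact variance identity rather than an upper bound, at the cost of invoking the Green's-function machinery; the paper's approach is more elementary, needing only effective-resistance inequalities and hitting probabilities, and is the style of argument reused elsewhere in the paper (e.g.\ Lemma~\ref{lem:projection_change_energy}).
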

To give some intuition on this, consider the case when $V = C\cup\{v\} = \{1,\dots,k\}\cup\{v\}$,
and there are edges $e_1,\dots,e_k$ between $C$ and $v$, one for each vertex of $C$.
Then, we have 
$\pi_i^C(\onev_v) = (r_{e_i})^{-1} / \sum\limits_{i=1}^k (r_{e_i})^{-1}$, and so
\[ 
\left\|\ovphi\right\|_{\vpi^C\left(\onev_v\right),2}^2 
= \sum\limits_{i=1}^k \frac{(\phi_i - \phi_v)^2}{r_{e_i}} \cdot \left(\sum\limits_{i=1}^k (r_{e_i})^{-1}\right)^{-1}
\leq \mathcal{E}_{\rr}(\ovphi) \cdot R_{eff}(C,v)
\leq R_{eff}(C,v)\,.
\]
We finally arrive at the desired statement about estimating $\vpi^C(\onev_v)$.
\begin{lemma}[Estimating $\vpi^C(\onev_v)$]
Consider a graph $G(V,E)$ with resistances $\rr$, 
a $\beta$-congestion reduction subset $C$, as well as $v\in V\backslash C$.
Then, for any $\delta_2 > 0$,
we can compute $\tvpi^{C}\left(\onev_v\right)\in\mathbb{R}^n$ such that with high probability
\begin{align}
    \left|\langle \vphi, \tvpi^{C}\left(\onev_{v}\right) - \vpi^{C}\left(\onev_{v}\right)\rangle\right| \leq \delta_2 \cdot \sqrt{R_{eff}(C,v)}\,,
    \label{eq:inner_product_bound}
\end{align} 
where $\vphi\in\mathbb{R}^n$ is a fixed vector with $E_{\rr}(\vphi) \leq 1$.
The algorithm computes $\tO{\frac{\log n}{\delta_2^2}}$ random walks from $v$ to $C$.
\label{estimate2}
\end{lemma}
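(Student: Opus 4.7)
The plan is to estimate $\vpi^C(\onev_v)$ by the empirical hitting distribution: simulate $Z = \tO{\log n / \delta_2^2}$ independent random walks from $v$ (transitioning with probability proportional to $1/\rr$) until each hits $C$, and let $\tvpi^C(\onev_v)$ place mass $1/Z$ on each vertex of $C$ at which a walk terminates. This is an unbiased estimator of the true hitting distribution $\vpi^C(\onev_v)$, which is itself a probability distribution on $C$ by the definition of demand projection for $\onev_v$.

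The key trick is to exploit a shift invariance before appealing to concentration. Since both $\tvpi^C(\onev_v)$ and $\vpi^C(\onev_v)$ sum to $1$, their difference is orthogonal to $\onev$, so for any scalar $c$
\[
\bigl\langle \vphi, \tvpi^C(\onev_v) - \vpi^C(\onev_v)\bigr\rangle
= \bigl\langle \vphi - c\onev, \tvpi^C(\onev_v) - \vpi^C(\onev_v)\bigr\rangle.
\]
Choosing $c = \phi_v$ and setting $\ovphi = \vphi - \phi_v \onev$, we still have $E_{\rr}(\ovphi) = E_{\rr}(\vphi) \leq 1$ since the Laplacian kills constant shifts. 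Now Lemma~\ref{lem:variance} applies directly to $\ovphi$ and yields
\[
\|\ovphi\|_{\vpi^C(\onev_v),2}^2 \leq 8 \cdot R_{eff}(C,v).
\]

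With this second-moment bound in hand, I would apply Lemma~\ref{conc2} (Concentration inequality 2) to the fixed vector $\ovphi$ with $\cV = 8 R_{eff}(C,v)$ and deviation $t = \delta_2 \sqrt{R_{eff}(C,v)}$. The lemma gives failure probability at most
\[
\frac{1}{n^{100}} + \tO{\log(n\cV/t)} \exp\!\left(-\frac{Z t^2}{\tO{\cV \log^2 n}}\right)
= \frac{1}{n^{100}} + \tO{1} \exp\!\left(-\frac{Z \delta_2^2}{\tO{\log^2 n}}\right),
\]
and the $R_{eff}(C,v)$ factors cancel between $t^2$ and $\cV$. Taking $Z = \tO{\log n / \delta_2^2}$ (absorbing the logarithmic overhead into the $\tO{\cdot}$ notation) drives the exponential term to $1/\mathrm{poly}(n)$, so with high probability $|\langle \ovphi, \tvpi^C(\onev_v) - \vpi^C(\onev_v)\rangle| \leq \delta_2 \sqrt{R_{eff}(C,v)}$, which by the shift invariance above is the desired bound on $|\langle \vphi, \tvpi^C(\onev_v) - \vpi^C(\onev_v)\rangle|$.

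There is no real obstacle once Lemmas~\ref{conc2} and \ref{lem:variance} are in place; the entire argument reduces to (i) recognizing that one may shift $\vphi$ by $\phi_v \onev$ to put the vertex $v$ at potential $0$, and (ii) invoking the second-moment bound. The only subtlety to double-check is the use of \emph{fixed} $\vphi$ in the high-probability statement: this is precisely why Lemma~\ref{conc2} is stated for a deterministic $\ovphi$, and why the upstream data structure (Definition~\ref{def:demand_projector}) commits to a fixed test vector before the randomness of the walks is drawn.
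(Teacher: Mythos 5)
Your proposal is correct and follows the paper's proof essentially verbatim: the same empirical-hitting-distribution estimator, the same shift to $\ovphi = \vphi - \phi_v\onev$ exploiting that both projections are probability distributions, and the same combination of Lemma~\ref{lem:variance} (second-moment bound) with Lemma~\ref{conc2} (concentration) at $t = \delta_2\sqrt{R_{eff}(C,v)}$. No gaps.
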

\begin{proof}
Because both $\tvpi^C(\onev_v)$ and $\vpi^C(\onev_v)$
are probability distributions, the quantity (\ref{eq:inner_product_bound}) doesn't
change when a multiple of $\onev$ is added to $\vphi$,
and so we can replace it by $\ovphi = \vphi - \phi_v \onev$.

Now, $\tvpi^C\left(\onev_v\right)$ will be defined as the empirical hitting distribution
that results from sampling $Z$ random walks from $v$ to $C$. 
Directly applying the concentration bound in Lemma~\ref{conc2} and setting
$Z = \tO{\frac{\log n}{\delta_2^2}}$,
together with the fact that
$\left\|\ovphi\right\|_{\vpi^C(\onev_v),2}^2 
\leq 8\cdot R_{eff}(C,v)$ by Lemma~\ref{lem:variance} and
$\log \log R_{eff}(C,v) \leq O(\log\log n)$,
we get
\begin{align*}
& \Pr\left[\left|\langle \tvpi^C(\onev_v) - \vpi^{C}(\onev_v), \ovphi\rangle\right| > 
\delta_2 \cdot \sqrt{R_{eff}(C,v)}\right] < \frac{1}{n^{10}}\,.
\end{align*}
\end{proof}

\subsection{Proof of Lemma~\ref{lem:ds}}

We are now ready for the proof of Lemma~\ref{lem:ds}.
\begin{proof}[Proof of Lemma~\ref{lem:ds}]
Let $\DP$ be a demand projection data structure. We analyze its operations one by one.

\begin{algorithm} %
\begin{algorithmic}[1]
\caption{\textsc{DemandProjector} \DP.\textsc{Initialize} }
\Procedure{\DP.\textsc{Initialize}}{$C,\rr,\qq,S,\cP$}
\State Initialize $C,\rr,\qq,S,\cP$
\State $F = V\backslash C$
\State %
$h = \tO{\heps^{-4}\beta^{-6}+\heps^{-2}  \beta^{-4} \gamma^{-2}}$
\Comment{\#random walks for each pair $u\in V$, $e\in E$ with $u\in e$}
\State $t = 0$ \Comment{\#calls to \textsc{AddTerminal} since last call to $\textsc{UpdateFull}$}

\State $\vphi = \LL_{FF}^+ \left[\BB^\top \frac{\qq_S}{\sqrt{\rr}}\right]_F$
 \State $\tvpi^C\left(\BB^\top \frac{\qq_S}{\sqrt{\rr}}\right) = \left[\BB^\top \frac{\qq_S}{\sqrt{\rr}}\right]_C - \LL_{CF}\vphi$
\EndProcedure
\end{algorithmic} 
\end{algorithm} 
\noindent {\bf \DP.$\textsc{Initialize}(C,\rr,\qq,S,\mathcal{P})$:}
We initialize the values of $C,\rr,\qq,S,\cP$.
 
Then we exactly compute
the demand projection, i.e. $\tvpi^C\left(\BB^\top \frac{\qq_S}{\sqrt{\rr}}\right) = 
\vpi^C\left(\BB^\top \frac{\qq_S}{\sqrt{\rr}}\right)$,
which takes time $\tO{m}$ as shown in~\cite{gao2021fully}.
More specifically, we have
$
\vpi^C\left(\BB^\top \frac{\qq_S}{\sqrt{\rr}}\right)
=
\begin{pmatrix}\II & \LL_{CF} \LL_{FF}^{-1}\end{pmatrix} \BB^\top \frac{\qq_S}{\sqrt{\rr}}
$
which only requires applying the operators $\LL_{FF}^{-1}$ and $\LL_{CF}$.

\noindent{\bf \DP.\textsc{AddTerminal}($v$, $\tR_{eff}(C,v)$):}
We will serve this operation by applying Lemma~\ref{lem:insert1}. 
It is important to note that the error guarantee for the $\textsc{Output}$ procedure
increases with every call to $\textsc{AddTerminal}$, so in general we have a bounded budget
for the number of calls to thus procedure before having to call again $\textsc{Initialize}$.

We apply Lemma~\ref{lem:insert1} to obtain
$\tpi_v^{C\cup\{v\}}(\BB^\top \frac{\qq_S}{\sqrt{\rr}})$, 
and update the estimate  $\tvpi_v^{C\cup\{v\}}\left(\BB^\top \frac{\qq_S}{\sqrt{\rr}}\right)$.
The former can be achieved with $h = \widetilde{O}\left( \heps^{-4} \beta^{-6} + \heps^{-2} \beta^{-2} \gamma^{-2} \right)$ 
random walks. 
Note that these random walks are already stored in $\cP$, so accessing each of them takes time $\tO{1}$. Using the congestion reduction property of $C$, we see that the running time of the procedure, which is dominated by shortcutting the random  walks is $\tO{h\beta^{-2}}$, which gives the claimed bound.
The latter can be achieved with $h' =  \widetilde{O}\left( \heps^{-2} \beta^{-4} \gamma^{-2} \right)$ fresh random walks. Due to the congestion reduction
property, simulating each of these requires $\tO{\beta^{-2}}$ time.

\begin{algorithm} %
\begin{algorithmic}[1]
\caption{\textsc{DemandProjector} \DP.\textsc{Update} and \DP.\textsc{Output}}
\Procedure{\DP.\textsc{Update}}{$e,\rr',\qq'$}
\If {$e\in S$}
	\State $\tvpi^C\left(\BB^\top \frac{\qq_S}{\sqrt{\rr}}\right)
	= \tvpi^C\left(\BB^\top \frac{\qq_S}{\sqrt{\rr}}\right) 
	+ \left(\frac{q_e'}{\sqrt{r_e'}} - \frac{q_e}{\sqrt{r_e}}\right)\cdot\BB^\top \onev_e$
\EndIf
\State $q_e = q_e'$, $r_e = r_e'$
\EndProcedure
\Procedure{\DP.\textsc{Output}}{$ $}
\State \Return $\tvpi^C\left(\BB^\top \frac{\qq_S}{\sqrt{\rr}}\right)$ %
\EndProcedure
\end{algorithmic} 
\end{algorithm} 
\noindent{\bf \DP.\textsc{Update}($e,\rr',\qq'$):}
We update the values of $r_e,q_e$. We also update the projection, by noting that since $e\in E(C)$,
\begin{align*}
\vpi^C\left(\BB^\top \frac{\qq'}{\sqrt{\rr'}}\right)
=\vpi^C\left(\BB^\top \frac{\qq}{\sqrt{\rr}}\right)
+ \left(\frac{q_e'}{\sqrt{r_e'}} 
- \frac{q_e}{\sqrt{r_e}}\right) \BB^\top \onev_e\,,
\end{align*}
so we change
$\tvpi^C\left(\BB^\top \frac{\qq}{\sqrt{\rr}}\right)$ by the same amount,
which takes time $O(1)$ and does not introduce any additional error in our estimate.

\noindent{\bf \DP.\textsc{Output}():}
We output our estimate
$\tvpi^C\left(\BB^\top \frac{\qq_S}{\sqrt{\rr}}\right)$.
Per Lemma~\ref{lem:insert1} we see that each of the previous $T$ calls to $\textsc{AddTerminal}$
add an error to our estimate of at most $\heps$ in the sense that if $\vDelta^t$ were the true change in the demand projection at the $t^{th}$ insertion, and $\tvDelta^t$ were the update made to our estimate, then
\[
\left| \left\langle
\tvDelta^t - \vDelta^t, \vphi
\right\rangle \right| \leq\heps\,,
\]
w.h.p. for any fixed $\vphi$ such that $E_{\rr^t}(\vphi) \leq 1$, where $\rr^t$ represents the resistances when $t^{th}$ call to $\textsc{AddTerminal}$ is made. Equivalently, for any nonzero $\vphi$,
\[
\frac{1}{\sqrt{E_{\rr^t}(\vphi)}} \left| \left\langle
\tvDelta^t - \vDelta^t, \vphi
\right\rangle \right| \leq\heps\,,
\]

By the invariant satisfied by the resistances passed as parameters to the $\textsc{AddTerminal}$ routine, we have that 
$\rr^t \leq \alpha \cdot \rr^T$ for all $t$. Therefore $1/E_{\rr^T}(\vphi) \leq \alpha/E_{\rr^t}(\vphi)$.
So we have that 
\[
\frac{1}{\sqrt{E_{\rr^T}(\vphi)}}\left| \left\langle
\tvDelta^t - \vDelta^t, \vphi
\right\rangle \right| \leq\heps \cdot \sqrt{\alpha} \,.
\]
Summing up over $T$ insertions, we obtain the desired error bound.
Furthermore, note that returning the estimate takes time proportional to $\vert C \vert$, which is $\tO{\beta m + T}$.

\end{proof}

\newpage
\appendix

\section{Maintaining the Schur Complement}
\label{sec:maintain_schur}

Following the scheme from~\cite{gao2021fully} we maintain a dynamic Schur complement of the graph onto a subset of terminals $C$. The approach follows rather directly from~\cite{gao2021fully} and leverages the recent work of~\cite{bernstein2020fully} to dynamically maintain an edge sparsifier of the Schur complement of the graph onto $C$. Compared to~\cite{gao2021fully} we do not require a parameter that depends on the adaptivity of the adversary.
In addition, when adding a vertex to $C$ we also return a $(1+\eps)$-approximation of the effective resistance $R_{eff}(v,C)$, which gets returned by the function call.

\begin{lemma}[\textsc{DynamicSC} (Theorem 4, \cite{gao2021fully})]
There is a \textsc{DynamicSC} data structure supporting the following operations with the given runtimes
against oblivious adversaries, for constants $0 < \beta,\eps < 1$:
\begin{itemize}
\item{\textsc{Initialize}$(G, C^{\text{(init)}}, \rr, \eps, \beta)$:
Initializes a graph $G$ with resistances $\rr$ and a set of safe terminals $C^{\text{(safe)}}$.
Sets the terminal set $C = C^{\text{(safe)}} \cup C^{\text{(init)}}$.
Runtime: $\tO{m\beta^{-4} \eps^{-4}}$.
}
\item{\textsc{AddTerminal}$(v\in V(G))$:
Returns $\tR_{eff}(C,v) \approx_{2} R_{eff}(C,v)$ and adds $v$ as a terminal. Runtime: Amortized 
$\tO{\beta^{-2} \eps^{-2}}$. 
}
\item{\textsc{TemporaryAddTerminals}$(\Delta C\subseteq V(G))$:
Adds all vertices in the set $\Delta C$ as (temporary) terminals. 
Runtime: Worst case
$\tO{K^2 \beta^{-4} \eps^{-4}}$, where $K$
is the total number of terminals
added by all of the $\textsc{TemporaryAddTerminals}$ operations that have not been rolled back
using $\textsc{Rollback}$. All $\textsc{TemporaryAddTerminals}$ operations should be rolled back
before the next call to $\textsc{AddTerminals}$.
}
\item{\textsc{Update}$(e, r)$: Under the guarantee that 
both
endpoints of $e$ are terminals, updates $r_e = r$. Runtime: Worst case $\tO{1}$.
}
\item{$\tSC()$: Returns a spectral 
sparsifier $\tSC \approx_{1+\eps} SC(G,C)$ (with respect to resistances $\rr$) with
$\tO{|C|\eps^{-2}}$ edges.
Runtime: Worst case $\tO{\left(\beta m +
(K\beta^{-2}\eps^{-2})^2\right)\eps^{-2}}$ where $K$
is the total number of terminals
added by all of the $\textsc{TemporaryAddTerminals}$ operations that have not been rolled back.
}
\item{\textsc{Rollback}$()$: Rolls back the last $\textsc{Update}$,
$\textsc{AddTerminals}$, or \textsc{TemporaryAddTerminals} if it exists.
The runtime is the same as the original operation.
}
\end{itemize}
Finally, all calls return valid outputs with high probability. The size of $C$ should always be $O(\beta m)$.
\end{lemma}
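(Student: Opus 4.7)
The plan is to follow the framework of Gao--Liu--Peng (Theorem 4 of \cite{gao2021fully}) essentially verbatim, with two small modifications: (i) we drop the dependence on an adversary-adaptivity parameter (since in our workflow the calls that reveal random bits are sanitized through \textsc{Checker}, so $\textsc{DynamicSC}$ only needs to withstand oblivious adversaries), and (ii) we augment \textsc{AddTerminal} so that it also returns a $2$-approximation of $R_{eff}(C,v)$ before adding $v$ to $C$.

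First, I will recall the underlying scheme. A Schur complement onto $C$ can be written as a sum, over random walks in $G\setminus C$ started from boundary half-edges, of simple ``shortcut'' edges between the two endpoints in $C$ at which the walk terminates. Sampling $\widetilde{O}(|C|\eps^{-2})$ such walks yields, by standard matrix-Chernoff arguments, a spectral $(1+\eps)$-sparsifier of $SC(G,C)$. To make this dynamic, one combines this with the expander decomposition data structure of Bernstein--Gutenberg--Saranurak \cite{bernstein2020fully}: the off-$C$ part of the graph is decomposed into low-conductance pieces on which random walks mix rapidly, so each individual shortcut walk can be sampled in $\widetilde{O}(\beta^{-2})$ time, and the decomposition can be maintained under the allowed updates. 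Thus \textsc{Initialize} builds the decomposition plus the sparsifier in $\widetilde{O}(m\beta^{-4}\eps^{-4})$ time; \textsc{AddTerminal} recomputes only those sampled walks affected by the new terminal (amortized $\widetilde{O}(\beta^{-2}\eps^{-2})$ per call); \textsc{Update} works in $\widetilde{O}(1)$ because when both endpoints of $e$ are already in $C$ the change affects $\tSC$ in one coordinate; \textsc{TemporaryAddTerminals} is handled by Cholesky-eliminating the extra terminals on top of $\tSC$ without touching the decomposition, and \textsc{Rollback} undoes it.

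For the new effective-resistance output, I will use the fact that the returned sparsifier $\tSC \approx_{1+\eps} SC(G,C)$ has, after the insertion, $v$ itself as a terminal; thus $R_{eff}(v,C)$ equals the effective resistance in $\tSC$ from $v$ to the contraction of $C$, up to $(1+\eps)$. By a standard Cauchy--Schwarz argument on parallel connections, this is $(1+\eps)$-approximated by the parallel-resistance of the $\widetilde{O}(\eps^{-2})$ edges incident to $v$ in $\tSC$, namely $\bigl(\sum_{i} r_i^{-1}\bigr)^{-1}$, which we can read off in time proportional to $\deg_{\tSC}(v)$, which is amortized $\widetilde{O}(\eps^{-2})$. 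Choosing $\eps$ small enough to fit inside the factor $2$ gives the required $\tR_{eff}(C,v) \approx_2 R_{eff}(C,v)$.

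Removing the adaptivity parameter is the easiest part: \cite{gao2021fully} paid an extra overhead only because the output of their sparsifier was fed back as input to a heavy-hitter data structure that could reveal the randomness of the walks. Here, all random-walk samples are produced inside \textsc{DynamicSC} and consumed by routines (\textsc{Locator}, \textsc{DemandProjector}) that we have already arranged to be oblivious; so the standard oblivious-adversary version of Bernstein et al.\ \cite{bernstein2020fully} suffices. The main obstacle, such as it is, will be bookkeeping around \textsc{TemporaryAddTerminals}: its worst-case $\widetilde{O}(K^2\beta^{-4}\eps^{-4})$ bound comes from redoing partial Cholesky on the added temporaries without touching the decomposition, and we must verify that rollbacks correctly restore the state; this is inherited from \cite{gao2021fully} with no change.
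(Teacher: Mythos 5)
Your proposal matches the paper's treatment: both defer the core data structure to Theorem 4 of \cite{gao2021fully} (noting the adaptivity parameter can be dropped under oblivious adversaries), and both obtain $\tR_{eff}(C,v)$ by reading off the $\tO{\eps^{-2}}$ amortized neighbors of $v$ in the sparsified Schur complement onto $C\cup\{v\}$ and taking the inverse of the sum of inverse resistances, with the approximation guarantee inherited from the $(1+O(\eps))$ spectral sparsification and the time bound from the amortized $\tO{1}$ expander count per vertex. This is essentially the same argument as the paper's, so no further comparison is needed.
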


This data structure is analyzed in detail in~\cite{gao2021fully}. Additionally, let us show that an approximation to $R_{eff}(v,C)$ can be efficiently computed along with the \textsc{AddTerminal} operation.
To get an estimate we simply inspect the neighbors of $v$ in the sparsified Schur complement of $C \cup \{v\}$ and compute the inverse of the sum of their inverses. This is indeed a $1+O(\epsilon)$-approximation, as effective resistances are preserved within a $1+O(\epsilon)$ factor in the sparsifier. 

To show that this operation takes little amortized time, we note that by the proof appearing in~\cite[Lemma 6.2]{gao2021fully}, vertex $v$ appears in amortized $\tO{1}$ expanders maintained dynamically. As the dynamic sparsifier keeps $\tO{\epsilon^{-2}}$ neighbors of $v$ from each expander, the number of neighbors to inspect with each call is $\tO{\epsilon^{-2}}$, which also bounds the time necessary to approximate the resistance.

\section{Auxiliary Lemmas}
\label{sec:aux}

\begin{replemma}{lem:sc-energy-bd}
Let $\dd$ be a demand vector,  let $\rr$ be resistances, and let $C \subseteq V$ be a subset of vertices.
Then 
\[
\mathcal{E}_{\rr}\left( \vpi^C(\dd) \right) \leq \mathcal{E}_{\rr}\left( \dd \right) \,.
\]
\end{replemma}
\begin{proof}
Letting $F = V\setminus C$, and $\LL$ be the Laplacian of the underlying graph, we can write
\[
\vpi^C(\dd) = \dd_C - \LL_{CF} \LL_{FF}^{-1} \dd_F\,.
\]
By factoring  $\LL^+$ as
\begin{align*}
\LL^+ =\left[\begin{array}{cc}
I & 0\\
-\LL_{FF}^{-1}\LL_{FC} & I
\end{array}\right]\left[\begin{array}{cc}
SC(\LL,C)^{+} & 0\\
0 & \LL_{FF}^{-1}
\end{array}\right]\left[\begin{array}{cc}
I & -\LL_{CF}\LL_{FF}^{-1}\\
0 & I
\end{array}\right]
\end{align*}
we can write
\begin{align*}
\mathcal{E}_{\rr}(\dd) = \dd^\top \LL^+ \dd  =
\left[\begin{array}{cc}
\vpi^C(\dd) \\
\dd_F
\end{array}\right]^\top \left[\begin{array}{cc}
SC(\LL,C)^{+} & 0\\
0 & \LL_{FF}^{-1}
\end{array}\right]\left[\begin{array}{cc}
\vpi^C(\dd) \\
\dd_F
\end{array}\right]
= \|\vpi^C(\dd)\|_{SC(\LL,C)^+}^2 + \|\dd_F\|_{\LL_{FF}^{-1}}^2\,.
\end{align*}
Furthermore, we can use the same factorization to write 
\begin{align*}
\mathcal{E}_{\rr}(\vpi^C(\dd)) =
\left[\begin{array}{cc}
\vpi^C(\dd) \\
0
\end{array}\right]^\top \left[\begin{array}{cc}
SC(\LL,C)^{+} & 0\\
0 & \LL_{FF}^{-1}
\end{array}\right]\left[\begin{array}{cc}
\vpi^C(\dd) \\
0
\end{array}\right]
= \|\vpi^C(\dd)\|_{SC(\LL,C)^+}^2\,,
\end{align*}
which proves the claim.
\end{proof}

\begin{lemma}
For any $\mu\in(1/\mathrm{poly}(m),\mathrm{poly}(m))$, we have 
$\left\|\rr(\mu)\right\|_\infty \leq m^{\tO{\log m}}$.
\end{lemma}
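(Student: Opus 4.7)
The plan is to use the LP-duality structure of the central path to translate the resistance bound into a bound on dual vertex potentials, and then to bound those potentials by anchoring at a convenient value of $\mu$ and tracking stability along the central path.

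Since the circulation space (image of $\CC$) has orthogonal complement equal to the column span of $\BB$, the centrality condition $\CC^\top(\cc/\mu + 1/\ss^+ - 1/\ss^-) = \zerov$ yields vertex potentials $\yy(\mu)\in\mathbb{R}^n$ with
\[ \BB\yy(\mu) = \cc + \zz^+(\mu) - \zz^-(\mu) \,, \]
where $z_e^\pm(\mu) := \mu/s_e^\pm(\mu) > 0$. Combined with the capacity identity $s_e^+ + s_e^- = u_e$ (equivalently $1/z_e^+ + 1/z_e^- = u_e/\mu$), the two unknowns $z_e^+, z_e^-$ can be solved for as explicit algebraic functions of the edge potential drop $(\BB\yy(\mu))_e$, the cost $c_e$, and the capacity $u_e$. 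Plugging these into $r_e = (z_e^+/\mu)^2 + (z_e^-/\mu)^2$ and using the a priori bounds $\|\cc\|_\infty, \|\uu\|_\infty\leq \poly(m)$, this produces a bound of the form
\[ \|\rr(\mu)\|_\infty \leq \poly(m) \cdot \left(1 + \|\BB\yy(\mu)\|_\infty / \mu\right)^2 \,. \]
So it suffices to show $\|\BB\yy(\mu)\|_\infty \leq \mu \cdot m^{\tO{\log m}}$ throughout the stated range of $\mu$.

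To control $\|\BB\yy(\mu)\|_\infty/\mu$, I would anchor at $\mu_0 = \Theta(\|\cc\|_2) = \poly(m)$, where the initialization of Lemma~\ref{lem:init} produces an explicit $\mu_0$-central flow whose slacks have magnitude $\Theta(u_e) \in [1/\poly(m),\poly(m)]$, yielding $\|\BB\yy(\mu_0)\|_\infty/\mu_0 \leq \poly(m)$. Then I would track stability by differentiating the centrality conditions with respect to $\log\mu$: the resulting Newton direction has electrical energy $\sum_e r_e (df_e/d\log\mu)^2 \leq O(m)$, which via a standard energy-to-potential conversion yields an $\LL^+$-norm bound on $d(\BB\yy)/d\log\mu$; routing this through a spanning tree converts it into an $\ell_\infty$ bound of magnitude $\poly(m)\cdot\|\rr\|_\infty^{1/2}$. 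Integrating this derivative bound over $\log(\mu_0/\mu)\leq O(\log m)$ via a Gronwall-type argument (since the derivative depends on the current value of $\|\rr\|_\infty$) gives $\log\|\rr(\mu)\|_\infty = \tO{\log^2 m}$, i.e.\ the claimed $m^{\tO{\log m}}$ bound.

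The main obstacle is the self-referential nature of this Gronwall step: the bound on $d(\BB\yy)/d\log\mu$ depends on $\|\rr\|_\infty$, which is precisely the quantity being bounded, so a naive integration can blow up exponentially and the compounding must be handled carefully across $O(\log m)$ doublings of $\mu$. A cleaner approach that controls the derivative in a norm independent of the current $\rr$ would tighten the estimate to $\poly(m)$, but since only the loose quasi-polynomial bound is needed here, the compounded stability argument outlined above suffices.
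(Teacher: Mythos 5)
Your overall skeleton---anchor at $\mu_0 = \Theta(\|\cc\|_2)$ where the central slacks are $\Theta(u_e)$ and hence $\|\rr(\mu_0)\|_\infty = O(1)$, then compound a $\mathrm{poly}(m)$ loss over the $O(\log m)$ constant-factor decreases of $\mu$---matches the paper's. (The detour through dual vertex potentials $\yy(\mu)$ is unnecessary: the paper bounds the slacks directly, but your reduction is not wrong.) The gap is in the mechanism you propose for establishing the per-window $\mathrm{poly}(m)$ factor.

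Your derivative bound has the form $\frac{d}{d\log\mu}\|\BB\yy\|_\infty \leq \mathrm{poly}(m)\cdot\|\rr\|_\infty^{1/2} \leq \mathrm{poly}(m)\cdot(1+\|\BB\yy\|_\infty/\mu)$, i.e.\ a linear differential inequality with coefficient $\mathrm{poly}(m)$. Gr\"onwall then gives a multiplicative blow-up of $e^{\mathrm{poly}(m)}$ over each unit interval of $\log\mu$, and $e^{\mathrm{poly}(m)\log m} = m^{\mathrm{poly}(m)}$ over the full range---far weaker than the claimed $m^{\tO{\log m}}$. The same failure occurs even without the potentials: the pointwise energy bound $\|\sqrt{\rr(\nu)}\,d\ff(\nu)/d\log\nu\|_2^2 \leq m$ only gives $|d\log s_e^{\pm}/d\log\mu| \leq O(\sqrt{m})$, which integrates to a slack change of $m^{O(\sqrt{m})}$ per constant factor of $\mu$. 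No amount of ``careful handling'' of the integration rescues this, because integrating a pointwise speed bound is genuinely lossy here. What is actually needed---and what the paper uses---is the \emph{two-endpoint} stability statement of Lemma~\ref{lem:energy_stability}: subtracting the centrality conditions at $\mu$ and $\mu' = \mu/(1+1/\sqrt{m})^{k}$ and pairing with the flow difference shows that $\sum_e \bigl(\tfrac{1}{s_e(\mu)^+ s_e(\mu')^+} + \tfrac{1}{s_e(\mu)^- s_e(\mu')^-}\bigr)(f_e(\mu')-f_e(\mu))^2 \leq 2k^2$, whence each slack changes by at most a $3k^2 = O(m)$ multiplicative factor over the entire window (Lemma~\ref{lem:resistance_stability2}), with no exponentiation of a $\mathrm{poly}(m)$ rate. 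Compounding $O(m^2)$ per window over $O(\log m)$ windows then gives $m^{\tO{\log m}}$. To repair your proof you would need to replace the differentiate-and-integrate step by this discrete comparison of the two centrality conditions.
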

\begin{proof}

By Appendix A in~\cite{axiotis2020circulation}, for some
$\mu_0 = \Theta(\left\|\cc\right\|_2)$,
the solution $\ff = \uu / 2$ 
has 
\[
\left\|\CC^\top \left(\frac{\cc}{\mu_0} + 
\frac{\onev}{\ss^+} - \frac{\onev}{\ss^-}\right)\right\|_{(\CC^\top \RR \CC)^+} \leq 1/10\,.\] This implies that 
$\min_e \left\{s_e(\mu_0)^+,s_e(\mu_0)^-\right\}
\geq \min_e u_e / 4 \geq 1/4$, and so 
$\left\|\rr(\mu_0)\right\|_\infty \leq O(1)$.
Additinally,
$\left\|\cc\right\|_\infty \in\left[1,\mathrm{poly}(m)\right]$,
so
$\mu_0 = \Theta\left(\mathrm{poly}(m)\right)$.

Now, for any integer $i \geq 0$ we let 
$\mu_{i+1} = \mu_{i}\cdot (1 - 1/\sqrt{m})^{\sqrt{m}/10}$.
By Lemma~\ref{lem:resistance_stability2} we have that
$\rr\left(\mu_{i+1}\right) \approx_{m^2}
\rr\left(\mu_{i}\right)$, and so 
\[ \rr\left(\frac{1}{\mathrm{poly}(m)}\right)
= \rr(\mu_{\tO{\log m}}) 
\leq \left(\frac{9}{100} m^2\right)^{\tO{\log m}} \rr(\mu_0)
\leq m^{\tO{\log m}} \rr(\mu_0)
\leq m^{\tO{\log m}} \,. \]
\end{proof}

\begin{lemma}
Given a graph $G(V,E)$ with resistances $\rr$ and any parameter $\eps > 0$,
there exists an algorithm that runs in time $\tO{m/\eps^2}$ and produces
a matrix $\QQ\in\mathbb{R}^{\tO{1/\eps^2}\times n}$ such that 
with high probability
for any $u,v\in V$, 
\[ R_{eff}(u,v) \approx_{1+\eps} \left\|\QQ\onev_u - \QQ\onev_v\right\|_2^2\]
\label{lem:approx_effective_res}
\end{lemma}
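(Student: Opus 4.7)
The plan is to follow the classical approach of Spielman and Srivastava, combining a nearly-linear time Laplacian solver with the Johnson--Lindenstrauss lemma. The starting observation is the identity
\[
R_{\mathrm{eff}}(u,v) = (\onev_u-\onev_v)^\top \LL^+ (\onev_u-\onev_v) = \left\|\RR^{-1/2}\BB\LL^+(\onev_u-\onev_v)\right\|_2^2\,,
\]
where the second equality uses $\LL = \BB^\top \RR^{-1}\BB$ and $\LL\LL^+\LL = \LL$. In other words, if we let $\YY := \RR^{-1/2}\BB\LL^+ \in \mathbb{R}^{m\times n}$, then $R_{\mathrm{eff}}(u,v) = \|\YY(\onev_u-\onev_v)\|_2^2$, so it suffices to produce a matrix $\QQ$ whose rows preserve the $\ell_2$-lengths of all the vectors $\YY(\onev_u-\onev_v)$ for $u,v\in V$.

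To this end, I sample a Johnson--Lindenstrauss matrix $\PP\in\mathbb{R}^{k\times m}$ with i.i.d.\ random $\pm 1/\sqrt{k}$ entries (or Gaussians) and $k = \Theta(\log n / \eps^2)$, and set
\[
\QQ := \PP \RR^{-1/2}\BB\LL^+ \in \mathbb{R}^{k\times n}\,.
\]
By the Johnson--Lindenstrauss lemma, for any fixed vector $x\in\mathbb{R}^m$ we have $\|\PP x\|_2^2 \approx_{1+\eps}\|x\|_2^2$ with probability at least $1-1/n^{10}$ (say), so union-bounding over the $\binom{n}{2}$ column-difference vectors $\YY(\onev_u-\onev_v)$ yields the approximation $R_{\mathrm{eff}}(u,v)\approx_{1+\eps}\|\QQ\onev_u - \QQ\onev_v\|_2^2$ for all pairs simultaneously, with high probability.

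It remains to compute $\QQ$ explicitly in time $\tO{m/\eps^2}$. Writing it as $\QQ = (\LL^+ \BB^\top \RR^{-1/2} \PP^\top)^\top$, each of the $k=\tO{1/\eps^2}$ columns of $\QQ^\top$ is obtained by first computing a vector of length $m$ (applying $\RR^{-1/2}$ and $\BB^\top$ to a row of $\PP$, which takes $O(m)$ time) and then applying $\LL^+$ to it, which costs $\tO{m}$ using a nearly-linear time Laplacian solver. Summing over the $k$ columns gives total time $\tO{m/\eps^2}$, as claimed. The only mild subtlety is that the Laplacian solver returns approximate solutions, but standard arguments show that solving each system to accuracy $\eps/\mathrm{poly}(n)$ perturbs every pairwise distance by at most a $(1+\eps)$-factor while costing only an extra logarithmic factor, which is absorbed in the $\tO{\cdot}$ notation.
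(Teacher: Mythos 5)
Your proof is correct and is exactly the standard Spielman--Srivastava argument (the identity $R_{\mathrm{eff}}(u,v)=\|\RR^{-1/2}\BB\LL^+(\onev_u-\onev_v)\|_2^2$, Johnson--Lindenstrauss with $\tO{1/\eps^2}$ rows, and one nearly-linear-time Laplacian solve per row). The paper states this lemma without proof, and this is precisely the argument it implicitly relies on.
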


\section{Deferred Proofs from Section~\ref{sec:ipm}}
\subsection{Central path stability bounds}
\label{proof_stability_bounds}
\begin{lemma}[Central path energy stability]
Consider a minimum cost flow instance on a graph $G(V,E)$.
For any $\mu > 0$ and 
$\mu' = \mu / (1+1/\sqrt{m})^k$ for some $k \in (0,\sqrt{m}/10)$, we have
\begin{align*}
\sum\limits_{e\in E} \left(\frac{1}{s_e(\mu)^+\cdot s_e(\mu')^+} + \frac{1}{s_e(\mu)^-\cdot s_e(\mu')^-}\right) \left(f_e(\mu') - f_e(\mu)\right)^2
\leq 2k^2
\end{align*}
\label{lem:energy_stability}
\end{lemma}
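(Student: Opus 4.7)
The plan is to convert the mixed quantity on the left-hand side into two ordinary resistance-weighted squared norms, and then bound each by $O(k^2)$ by telescoping over the short Newton steps connecting $\ff(\mu)$ to $\ff(\mu')$.

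Discretize the segment of the path: set $\mu_i = \mu/(1+1/\sqrt{m})^i$ for $i=0,\ldots,k$, and let $\ff^i = \ff(\mu_i)$, $\rr^i = \rr(\mu_i)$, $\ss^i = \ss(\mu_i)$, and $\tff^i = \ff^{i+1}-\ff^i$, so that $\ff(\mu')-\ff(\mu) = \sum_{i=0}^{k-1}\tff^i$. The first ingredient I would establish is the short-step energy bound $\|\sqrt{\rr^i}\,\tff^i\|_2 \le O(1)$. This is standard: up to a negligible centrality correction, the Newton step from $\ff^i$ to $\ff^{i+1}$ is the rescaled electrical flow
\[
\delta\, g(\ss^i) - \delta\,(\RR^i)^{-1}\BB\LL^+\BB^\top g(\ss^i), \qquad \delta=1/\sqrt{m},
\]
where $\LL = \BB^\top(\RR^i)^{-1}\BB$. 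The bound then follows from the elementary estimate $\sum_e (1/s_e^+ - 1/s_e^-)^2/r_e \le m$ combined with the fact that the electrical projection $\RR^{-1/2}\BB\LL^+\BB^\top$ is non-expansive in $\ell_2$.

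Second, I would reduce the mixed resistance to two single-resistance norms via AM--GM applied edgewise:
\[
\frac{1}{s_e^+(\mu)\, s_e^+(\mu')} \le \tfrac{1}{2}\!\left(\tfrac{1}{(s_e^+(\mu))^2}+\tfrac{1}{(s_e^+(\mu'))^2}\right),
\]
and similarly for the negative-slack summands. Summing, the left-hand side of the lemma is at most $\tfrac{1}{2}\bigl(\|\ff(\mu')-\ff(\mu)\|_{\RR(\mu)}^2 + \|\ff(\mu')-\ff(\mu)\|_{\RR(\mu')}^2\bigr)$, so it suffices to bound each of these two squared energies by $2k^2$. For each, I would apply the triangle inequality in the fixed norm, e.g.\ $\|\ff(\mu')-\ff(\mu)\|_{\RR(\mu)} \le \sum_{i=0}^{k-1}\|\tff^i\|_{\RR(\mu)}$, and then trade $\RR(\mu)$ for $\RR(\mu_i)$ via the central-path resistance-stability lemma (Lemma~\ref{lem:resistance_stability2}) at a constant multiplicative cost, which is admissible for $k \le \sqrt{m}/10$. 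Each term is then controlled by the short-step bound, the whole sum is $O(k)$, and squaring yields $O(k^2)$; the argument for $\|\cdot\|_{\RR(\mu')}$ is symmetric.

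The main obstacle is the quantitative content of the third step: the argument requires that $\rr(\mu)/\rr(\mu_i)$ be bounded by a \emph{universal} constant uniformly in $i \in [0,k]$, since anything superconstant would blow up after the triangle inequality and destroy the $k^2$ bound. A coarse invocation of resistance stability (as in the auxiliary fact on $\|\rr(\mu)\|_\infty$) gives only $m^{O(1)}$ and is too loose; the point is to invoke Lemma~\ref{lem:resistance_stability2} at exactly the short scale $(1+1/\sqrt m)^k \le e^{1/10}$, where it does deliver a constant factor. Ensuring we have access to such a tight version of resistance stability over this window is where the real work of the proof lies.
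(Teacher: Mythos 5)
There is a genuine gap in the third step of your plan, and it is fatal to the quantitative bound. You need $\rr(\mu)/\rr(\mu_i)$ (and $\rr(\mu')/\rr(\mu_i)$) bounded by a universal constant uniformly over the window, and you propose to get this from Lemma~\ref{lem:resistance_stability2}. But that lemma only gives $\ss(\mu)\approx_{3k^2}\ss(\mu_i)$, hence a factor of order $k^4$ on resistances; the fact that $\mu/\mu'\le e^{1/10}$ is a constant does \emph{not} upgrade this to a constant-factor slack bound. (Constant-factor stability of every individual resistance over $k$ steps is simply false in general: Lemma~\ref{lem:resistance_stability1} only controls how \emph{many} edges change a lot, and single slacks can move by $\mathrm{poly}(k)$ factors.) With the honest $\approx_{3k^2}$ bound, your chain ``triangle inequality $+$ norm switch $+$ short-step bound'' yields $\|\ff(\mu')-\ff(\mu)\|_{\RR(\mu)}\le \sum_i O(k^2)\cdot O(1)=O(k^3)$, hence $O(k^6)$ after squaring, far from $2k^2$. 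Worse, the argument is circular as stated: in this paper Lemma~\ref{lem:resistance_stability2} is \emph{derived from} Lemma~\ref{lem:energy_stability}, so it is not available as an ingredient. A milder version of the same circularity also affects your first ingredient, since the ``short-step energy bound'' for the exact central-path difference $\ff(\mu_{i+1})-\ff(\mu_i)$ (as opposed to the idealized Newton/electrical step) is itself essentially the $k=1$ case of the lemma.

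The paper's proof avoids all resistance comparisons across different path points. Subtracting the two centrality conditions at $\mu$ and $\mu'$ and writing $\frac{1}{s+t}-\frac{1}{s}=\frac{-t}{s(s+t)}$ turns the left-hand side of the lemma into an \emph{exact identity}: it equals $-\left((1+\delta)^k-1\right)\langle \cc/\mu,\,\tff\rangle$ after pairing with the circulation coordinates of $\tff=\ff(\mu')-\ff(\mu)$. So the mixed denominators $s_e(\mu)^{\pm}s_e(\mu')^{\pm}$ are not an obstacle to be removed by AM--GM; they are produced for free by the algebra. The only estimate then needed is $-\langle\cc/\mu,\tff\rangle\le k\sqrt m$, which follows by integrating $\|\sqrt{\rr(\nu)}\,d\ff(\nu)/d\nu\|_2^2\le m/\nu^2$ along the path --- crucially, with the \emph{instantaneous} resistances $\rr(\nu)$, so no stability lemma is invoked. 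If you want to salvage your route, you would have to replace step three with this kind of exact cancellation; as written, the norm-switching step cannot deliver $2k^2$.
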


\begin{proof}[Proof of Lemma~\ref{lem:energy_stability}]
We let $\delta = 1/\sqrt{m}$, 
$\ff = \ff(\mu)$, $\ss = \ss(\mu)$,
$\rr = \rr(\mu)$, $\ff' = \ff(\mu')$, $\ss'= \ss(\mu')$,
and
$\rr' = \rr(\mu')$.
We also set $\tff = \ff' - \ff$.
By definition of centrality we have
\begin{align*}
& \CC^\top \left(\frac{1}{\ss^-} - \frac{1}{\ss^+}\right) = \CC^\top \frac{\cc}{\mu}\\
& \CC^\top \left(\frac{1}{\ss^- + \tff} - \frac{1}{\ss^+ - \tff}\right) = \CC^\top \frac{\cc}{\mu'}\,,
\end{align*}
which, after subtracting, give
\begin{align*}
& \CC^\top \left(\frac{1}{\ss^- + \tff} - \frac{1}{\ss^-} - \frac{1}{\ss^+ - \tff} + \frac{1}{\ss^+}\right) = \CC^\top \left(\frac{\cc}{\mu'} - \frac{\cc}{\mu}\right)\\
& \Leftrightarrow \CC^\top \left(\left(\frac{1}{\ss^-(\ss^- + \tff)}  +\frac{1}{\ss^+(\ss^+ - \tff)}\right)\tff\right) 
= -\left((1 + \delta)^k - 1\right)\CC^\top \frac{\cc}{\mu}\,.
\end{align*}
As $\tff = \CC\xx$ for some $\xx$, after taking the inner product of both sides with $\xx$ we get
\begin{align}
& \left\langle \tff, \left(\frac{1}{\ss^-(\ss^- + \tff)}  +\frac{1}{\ss^+(\ss^+ - \tff)}\right)\tff\right\rangle = -\left((1+\delta)^k - 1\right) \left\langle\frac{\cc}{\mu},\tff\right\rangle\,.
\label{eq:energy_stable_prelim}
\end{align}

We will now prove that $-\left\langle\frac{\cc}{\mu},\tff\right\rangle \leq k \sqrt{m}$.
First of all, by differentiating the centrality condition 
\[ \CC^\top \left(\frac{\cc}{\nu} + \frac{\onev}{\ss(\nu)^+} - \frac{\onev}{\ss(\nu)^-}\right) = \zerov\]
with respect to $\nu$ we get
\[ \CC^\top \left(-\frac{\cc}{\nu^2} + 
\left(\frac{\onev}{(\ss(\nu)^+)^2} + \frac{\onev}{(\ss(\nu)^-)^2}\right) \frac{d\ff(\nu)}{d\nu}
\right) = \zerov \,,\]
or equivalently
\[ \CC^\top \left(\rr(\nu) \frac{d\ff(\nu)}{d\nu}\right) = 
-\frac{1}{\nu} \CC^\top \left(\frac{\onev}{\ss(\nu)^+} - \frac{\onev}{\ss(\nu)^-}\right) \,.\]
If we set $g(\ss) = \frac{\frac{\onev}{\ss^+}-\frac{\onev}{\ss^-}}{\rr}$, this can also be equivalently
written as
\[ \frac{d\ff(\nu)}{d\nu} = 
-\frac{1}{\nu} \left(g(\ss(\nu)) - (\RR(\nu))^{-1}\BB (\BB^\top (\RR(\nu)^{-1})\BB)^+ \BB^\top g(\ss(\nu))\right)\,.\]
We have 
\begin{align*}
-\left\langle\frac{\cc}{\mu},\tff\right\rangle
& = -\int_{\nu=\mu}^{\mu'}\left\langle\frac{\cc}{\mu}, d\ff(\nu) \right\rangle\\
& = \frac{1}{\mu}\int_{\nu=\mu}^{\mu'}\left\langle\frac{\nu}{\ss(\nu)^-} - \frac{\nu}{\ss(\nu)^+}, \frac{1}{\nu}
\left(g(\ss(\nu)) - (\RR(\nu))^{-1}\BB(\BB^\top (\RR(\nu))^{-1} \BB)^+ \BB^\top g(\ss(\nu))\right) \right\rangle d\nu\\
& = -\frac{1}{\mu}\int_{\nu=\mu}^{\mu'}\left\langle \sqrt{\rr(\nu)} g(\ss(\nu)),
\vPi_{\mathrm{ker}(\BB^\top (\RR(\nu))^{-1/2})} \sqrt{\rr(\nu)}g(\ss(\nu)) \right\rangle d\nu\\
& = \frac{1}{\mu}\int_{\nu=\mu'}^{\mu}\left\|\vPi_{\mathrm{ker}(\BB^\top (\RR(\nu))^{-1/2})} \sqrt{\rr(\nu)}g(\ss(\nu)) \right\|_2^2 d\nu\\
& \leq \frac{1}{\mu}\int_{\nu=\mu'}^{\mu}\left\|\sqrt{\rr(\nu)}g(\ss(\nu)) \right\|_2^2 d\nu\\
& \leq \frac{1}{\mu}\int_{\nu=\mu'}^{\mu} m d\nu\\
& = m \frac{\mu - \mu'}{\mu}\\
& =  m(1 - (1+\delta)^{-k})\\
& \leq \delta k m\\
&  =  k \sqrt{m}\,,
\end{align*}
where $\vPi_{\mathrm{ker}(\BB^\top(\RR(\nu))^{-1/2})}
 =\II - 
(\RR(\nu))^{-1/2}\BB
 (\BB^\top (\RR(\nu))^{-1}\BB)^+\BB^\top (\RR(\nu))^{-1/2}$ is
the orthogonal projection onto the kernel of $\BB^\top (\RR(\nu))^{-1/2}$.

Plugging this into (\ref{eq:energy_stable_prelim}) and using the fact
that $(1+\delta)^k \leq 1+1.1\delta k = 1 + 1.1 k / \sqrt{m}$, we get
\begin{align*}
\sum\limits_{e\in E} \left(\frac{1}{s_e(\mu)^+\cdot s_e(\mu')^+} + \frac{1}{s_e(\mu)^-\cdot s_e(\mu')^-}\right) \left(f_e(\mu') - f_e(\mu)\right)^2
\leq 2k^2\,.
\end{align*}
\end{proof}

We give an auxiliary lemma which converts between different kinds of slack approximations.
\begin{lemma}
We consider flows $\ff,\ff'$ with slacks $\ss,\ss'$ and resistances $\rr,\rr'$.
Then,
\[ \max\left\{\left|\frac{s_e'^+-s_e^+}{s_e^+}\right|,\left|\frac{s_e'^--s_e^-}{s_e^-}\right|\right\}\leq \sqrt{r_e} \left|f_e' - f_e\right| \leq \sqrt{2}\max\left\{\left|\frac{s_e'^+-s_e^+}{s_e^+}\right|,\left|\frac{s_e'^--s_e^-}{s_e^-}\right|\right\}\]
and if $r_e\not\approx_{1+\gamma} r_e'$ for some $\gamma\in(0,1)$,
then $\sqrt{r_e}\left|f_e' - f_e\right| \geq \gamma/6$.
\label{lem:approx_conversion}
\end{lemma}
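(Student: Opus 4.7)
The plan is to prove both parts via direct calculation starting from the observation that $s_e^+ = u_e - f_e$ and $s_e^- = f_e$, so that $s_e'^+ - s_e^+ = -(f_e' - f_e)$ and $s_e'^- - s_e^- = f_e' - f_e$. In particular $|s_e'^+ - s_e^+|$ and $|s_e'^- - s_e^-|$ both equal $|f_e' - f_e|$, and therefore
\[
\max\left\{\left|\frac{s_e'^+ - s_e^+}{s_e^+}\right|, \left|\frac{s_e'^- - s_e^-}{s_e^-}\right|\right\}
= |f_e' - f_e| \cdot \max\!\left\{\tfrac{1}{s_e^+}, \tfrac{1}{s_e^-}\right\}.
\]
Setting $a = 1/s_e^+$ and $b = 1/s_e^-$, the definition of $r_e$ gives $\sqrt{r_e} = \sqrt{a^2 + b^2}$, and the elementary bound $\max\{a,b\} \le \sqrt{a^2+b^2} \le \sqrt{2}\,\max\{a,b\}$ immediately yields the sandwich inequality in the first claim.

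For the second claim, I will argue by contrapositive: assume $\sqrt{r_e}|f_e' - f_e| < \gamma/6$ and show $r_e \approx_{1+\gamma} r_e'$. By the first claim already proved, this hypothesis implies both $|s_e'^+ - s_e^+|/s_e^+ < \gamma/6$ and $|s_e'^- - s_e^-|/s_e^- < \gamma/6$, so $s_e'^\pm \in ((1-\gamma/6)s_e^\pm,\,(1+\gamma/6)s_e^\pm)$. Summing the corresponding two-sided bounds on $1/(s_e'^\pm)^2$ gives
\[
\frac{r_e}{(1+\gamma/6)^2} \;\le\; r_e' \;\le\; \frac{r_e}{(1-\gamma/6)^2}\,.
\]
It then suffices to check that for $\gamma \in (0,1)$ one has $(1+\gamma/6)^2 \le 1+\gamma$ and $(1-\gamma/6)^{-2} \le 1+\gamma$; both follow from a short routine calculation (e.g.\ $(1+\gamma/6)^2 \le 1 + 13\gamma/36 \le 1+\gamma$ and $(1-\gamma/6)^{-2} \le 1/(1-\gamma/3) \le 1+2\gamma/3$ for $\gamma \le 1$). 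This places $r_e'/r_e$ inside $[1/(1+\gamma),\,1+\gamma]$, contradicting $r_e \not\approx_{1+\gamma} r_e'$.

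There is no real obstacle here; the only minor care is in choosing the constant $1/6$ so that the Taylor expansions of $(1\pm\gamma/6)^{\pm 2}$ comfortably fit inside the tolerance $1+\gamma$ for all $\gamma \in (0,1)$. Slightly larger constants (say $1/4$) would also work, but $1/6$ gives a clean margin without any need to restrict $\gamma$ further, so I will keep it as stated.
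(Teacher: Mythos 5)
Your proposal is correct and follows essentially the same route as the paper: the first inequality via $|s_e'^\pm - s_e^\pm| = |f_e'-f_e|$ together with $\max\{a^2,b^2\} \le a^2+b^2 \le 2\max\{a^2,b^2\}$, and the second by contrapositive, converting $\sqrt{r_e}|f_e'-f_e| < \gamma/6$ into multiplicative $(1\pm\gamma/6)$ bounds on both slacks and hence a $(1+\gamma)$-approximation of the resistances. The only cosmetic difference is that the paper first reduces WLOG to $s_e^+ \le s_e^-$, whereas you invoke the already-proved sandwich inequality to control both relative slack changes at once; both are fine.
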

\begin{proof}
For the first one, note that
\begin{align*}
r_e = \frac{1}{(s_e^+)^2} + \frac{1}{(s_e^-)^2} 
\in\left[\max\left\{\frac{1}{(s_e^+)^2} ,\frac{1}{(s_e^-)^2}\right\}, 2\max\left\{\frac{1}{(s_e^+)^2}, \frac{1}{(s_e^-)^2}\right\}\right]\,.
\end{align*}
Together with the fact that 
$\left|f_e'-f_e\right| = \left|s_e'^+-s_e^+\right| = \left|s_e'^--s_e^-\right|$, it implies
the first statement.

For the second one, 
without loss of generality let 
$s_e^+ \leq s_e^-$, so by the previous statement
we have $\sqrt{r_e}\left|f_e' - f_e\right|
\geq \frac{\left|s_e'^+ - s_e^+\right|}{s_e^+}$.
If this is $<\gamma/6$ then
$(1-\gamma/6) s_e^+ \leq s_e'^+ \leq (1 + \gamma/6) s_e^+$,
so $s_e'^+ \approx_{1+\gamma/3} s_e^+$.
However, we also have that
$\frac{\left|s_e'^- - s_e^-\right|}{s_e^-} \leq
\frac{\left|s_e'^+ - s_e^+\right|}{s_e^+}
\leq \gamma /6$,
so $s_e'^- \approx_{1+\gamma/3} s_e^-$.
Therefore, 
$r_e' = \frac{1}{(s_e'^+)^2} + \frac{1}{(s_e'^-)^2} 
\approx_{1+\gamma} 
\frac{1}{(s_e^+)^2} + \frac{1}{(s_e^-)^2}  = 
r_e$, a contradiction.
\end{proof}
The following lemma is a fine-grained explanation of how resistances can change.
\begin{lemma}
Consider a minimum cost flow instance on a graph $G(V,E)$ and
parameters $\mu > 0$ and $\mu' \geq \mu / (1+1/\sqrt{m})^{k}$, where $k \in(0,\sqrt{m} / 10)$.
For any $e\in E$ and $\gamma\in(0,1)$
we let $\mathrm{change}(e,\gamma)$ be the largest integer $t(e)\geq 0$
such that there are real numbers
$\mu = \mu_{1}(e) > 
\mu_{2}(e) > 
\dots > 
\mu_{t(e)+1}(e) = \mu'$
with 
$\sqrt{r_e(\mu_i)} \left|f_e(\mu_{i+1}) - f_e(\mu_i)\right| \geq \gamma$
for all $i\in[t(e)]$.

Then,
$\sum\limits_{e\in E} \left(\mathrm{change}(e,\gamma)\right)^2 \leq O(k^2 / \gamma^2)$.

\label{lem:resistance_stability1}
\end{lemma}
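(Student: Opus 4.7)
The plan is to combine Lemma~\ref{lem:energy_stability} with a logarithmic telescoping identity together with a double application of Cauchy--Schwarz. As a warmup, applying Lemma~\ref{lem:energy_stability} to each segment $[\mu_{i+1}(e),\mu_i(e)]$ in isolation, and using Lemma~\ref{lem:approx_conversion} together with the big-change hypothesis $\sqrt{r_e(\mu_i(e))}|f_e(\mu_{i+1}(e))-f_e(\mu_i(e))|\geq \gamma$, one can show that $(f_e(\mu_{i+1}(e))-f_e(\mu_i(e)))^2(s_e^+(\mu_i(e))s_e^+(\mu_{i+1}(e)))^{-1} + (\text{analogous for }{-}) \geq \gamma^2/4$, which forces each segment to have $\ell$-length $k_i(e) \geq \gamma/(2\sqrt 2)$, giving the per-edge bound $t(e) \leq 2\sqrt 2\,k/\gamma$. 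Summing squares over $e$ naively gives only $\sum_e t(e)^2 \leq O(mk^2/\gamma^2)$, so the extra factor of $m$ must be removed by exploiting that the left-hand side of Lemma~\ref{lem:energy_stability} is already a sum over edges.

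To do this, I would fix a \emph{common} partition $\mu = \nu_0 > \nu_1 > \cdots > \nu_{N'} = \mu'$ that refines $\bigcup_e\{\mu_i(e)\}$ and whose sub-intervals all have $\ell$-length at most $k/N$, for a parameter $N = \Theta(mk/\gamma)$ chosen large enough (via the per-edge bound) to ensure $N'/N \leq 2$. With $k_j = \log_{1+1/\sqrt{m}}(\nu_{j-1}/\nu_j)$ so that $\sum_j k_j = k$ and $\sum_j k_j^2 \leq (k/N)\sum_j k_j = k^2/N$, applying Lemma~\ref{lem:energy_stability} to each sub-interval and summing over $j$ gives
\begin{align*}
\sum_j \sum_{e\in E}\left(\frac{1}{s_e^+(\nu_{j-1})s_e^+(\nu_j)} + \frac{1}{s_e^-(\nu_{j-1})s_e^-(\nu_j)}\right)(f_e(\nu_j)-f_e(\nu_{j-1}))^2 \;\leq\; 2\sum_j k_j^2 \;\leq\; \frac{2k^2}{N}\,.
\end{align*}

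The key ingredient is the pointwise inequality $(y-1)^2/y \geq (\ln y)^2$ for $y > 0$, equivalent to $\cosh t \geq 1 + t^2/2$, which yields $(f_e(\nu_j)-f_e(\nu_{j-1}))^2 / (s_e^\sigma(\nu_{j-1}) s_e^\sigma(\nu_j)) \geq (\ln(s_e^\sigma(\nu_j)/s_e^\sigma(\nu_{j-1})))^2$ for each $\sigma\in\{+,-\}$. For each segment $i$ of edge $e$, Lemma~\ref{lem:approx_conversion} applied to the big-change hypothesis gives $|s_e^\sigma(\mu_{i+1}(e))/s_e^\sigma(\mu_i(e)) - 1| \geq \gamma/\sqrt 2$ for some $\sigma$, and hence $|\ln(s_e^\sigma(\mu_{i+1}(e))/s_e^\sigma(\mu_i(e)))| \geq \gamma/(2\sqrt 2)$. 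Telescoping the log across the sub-intervals $J(e,i) \subseteq \{1,\dots,N'\}$ composing segment $i$ and applying Cauchy--Schwarz,
\begin{align*}
\sum_{j\in J(e,i)}\frac{(f_e(\nu_j)-f_e(\nu_{j-1}))^2}{s_e^\sigma(\nu_{j-1})s_e^\sigma(\nu_j)} \;\geq\; \frac{1}{|J(e,i)|}\left(\ln\frac{s_e^\sigma(\mu_{i+1}(e))}{s_e^\sigma(\mu_i(e))}\right)^{\!2} \;\geq\; \frac{\gamma^2}{8\,|J(e,i)|}\,.
\end{align*}

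Summing over $(e,i)$ bounds the left-hand side above by the energy-stability display, while the right-hand side becomes $(\gamma^2/8)\sum_e\sum_i 1/|J(e,i)|$. Since $\sum_i |J(e,i)| = N'$ for each $e$, one more Cauchy--Schwarz gives $\sum_i 1/|J(e,i)| \geq t(e)^2/N'$, and combining we obtain $(\gamma^2/(8N'))\sum_e t(e)^2 \leq 2k^2/N$, so $\sum_e t(e)^2 \leq 16\,k^2(N'/N)/\gamma^2 \leq 32\,k^2/\gamma^2 = O(k^2/\gamma^2)$. The hard part is exactly this aggregation step: a per-segment accounting across edges loses a factor of $m$, and the logarithmic identity is the precise device needed to convert local energy contributions (which can each be individually tiny when sub-intervals are fine) into a telescoping quantity on which a single Cauchy--Schwarz reconstructs the squared segment count.
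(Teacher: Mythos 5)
Your proposal is correct, and it reaches the bound by a genuinely different route than the paper. The paper's proof works in the continuum: it introduces the energy integral $\int_{\mu'}^{\mu}\sum_{e}r_e(\nu)\bigl(\tfrac{df_e(\nu)}{d\nu}\bigr)^2\,\left|d\nu\right|$, upper-bounds it by $O(k\sqrt{m}/\mu)$ via the explicit central-path ODE, and lower-bounds it segment by segment with a Cauchy--Schwarz in the $\nu$ variable; to pull $r_e(\nu)$ out of each segment's integral it must first refine every segment so that the resistance varies by at most a $(1+6\gamma)^3$ factor inside it (the ``without loss of generality'' preprocessing at the start of its proof). You stay entirely discrete: you reuse Lemma~\ref{lem:energy_stability} as a black box on a common fine partition, and the pointwise inequality $(y-1)^2/y\geq(\ln y)^2$ converts each sub-interval's energy contribution into a squared log-increment of the slack, which telescopes exactly across a segment no matter how the resistance varies inside it --- so you need no resistance-stability preprocessing, at the price of the refinement bookkeeping that keeps $N'\leq 2N$ (and this is where your warmup per-edge bound $t(e)=O(k/\gamma)$ is genuinely needed, to control the number of inserted breakpoints; there is no circularity since that bound is obtained segment-by-segment without the common partition). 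Both arguments share the same skeleton --- a global energy budget quadratic in log-time, a per-segment cost of $\Omega(\gamma^2)$, and two applications of Cauchy--Schwarz, one within a segment and one across the segments of an edge --- and both yield $O(k^2/\gamma^2)$. The only blemishes are constant-level: the warmup lower bound on a segment's log-length should come out as $\gamma/4$ from $2k_i(e)^2\geq\gamma^2/8$ rather than $\gamma/(2\sqrt{2})$, which changes nothing downstream.
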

\begin{proof}[Proof of Lemma~\ref{lem:resistance_stability1}]
First, we assume that without loss of generality,
$r_e(\mu_{i+1}(e)) \approx_{(1+6\gamma)^2} r_e(\mu_i(e))$
for all $e\in E$ and $i\in[t(e)]$.
If this is not true, then by continuity
there exists a $\nu\in(\mu_{i+1},\mu_i)$ such that
$r_e(\mu_{i+1}(e)) \not\approx_{1+6\gamma} r_e(\nu)$
and $r_e(\nu) \not\approx_{1+6\gamma} r_e(\mu_i(e))$.
By Lemma~\ref{lem:approx_conversion}, this implies that
$\sqrt{r_e(\mu_{i+1}(e))} \left|f_e(\nu) - f_e(\mu_{i+1}(e))\right| \geq \gamma$
and
$\sqrt{r_e(\nu)} \left|f_e(\mu_{i}(e)) - f_e(\nu)\right| \geq \gamma$.
Therefore we can break the interval $(\mu_{i+1},\mu_i)$ into
$(\mu_{i+1},\nu)$
and $(\nu,\mu_{i})$ and make the statement stronger.

Similarly, we also assume that 
$r_e(\nu) \approx_{(1+6\gamma)^3} r_e(\mu_i(e))$
for all $e\in E$, $i\in[t(e)]$, and $\nu\in(\mu_{i+1},\mu_i)$. If this is
not the case, then by using the fact that
$r_e(\mu_{i+1}(e)) \approx_{(1+6\gamma)^2} r_e(\mu_i(e))$, we also get that
$r_e(\mu_{i+1}) \not \approx_{1+6\gamma} r_e(\nu)$,
and so we can again break the interval as before and obtain a stronger statement.

Now, we look at the following integral:
\[ \mathcal{E} := \int_{\nu=\mu}^{\mu'} \sum\limits_{e\in E} r_e(\nu)
\left(\frac{df_e(\nu)}{d\nu}\right)^2 \left|d\nu\right|\,, \]
where $df_e(\nu)$ is the differential of the flow $f_e(\nu)$ with respect to
the centrality parameter.
Similarly to Lemma~\ref{lem:energy_stability}, we use the following equation
that describes how the flow changes:
\[ \frac{d\ff(\nu)}{d\nu} = 
-\frac{1}{\nu} \left(g(\ss(\nu)) - (\RR(\nu))^{-1}\BB (\BB^\top (\RR(\nu)^{-1})\BB)^+ \BB^\top g(\ss(\nu))\right)\,.\]
This implies that
\begin{align*}
\left\|\sqrt{\rr(\nu)} \frac{d\ff(\nu)}{d\nu}\right\|_2^2 
& =
\frac{1}{\nu^2} \left\|\sqrt{\rr(\nu)}g(\ss(\nu)) - (\RR(\nu))^{-1/2}\BB (\BB^\top (\RR(\nu)^{-1})\BB)^+ \BB^\top g(\ss(\nu))\right\|_2^2\\
& \leq 
\frac{1}{\nu^2} \left\|\left(I - (\RR(\nu))^{-1/2}\BB (\BB^\top (\RR(\nu)^{-1})\BB)^+ \BB^\top(\RR(\nu))^{-1/2}\right)
\sqrt{\rr}g(\ss(\nu))\right\|_2^2\\
& \leq 
\frac{1}{\nu^2} \left\|
\sqrt{\rr}g(\ss(\nu))\right\|_2^2\\
&\leq \frac{m}{\nu^2}\,,
\end{align*}
and so
\begin{align}
 \mathcal{E} \leq 
\int_{\nu=\mu}^{\mu'} \frac{m}{\nu^2} \left|d\nu\right|
= m \left(\frac{1}{\mu'} - \frac{1}{\mu}\right)
\leq m \frac{1.1\delta k}{\mu}
= 1.1 k \sqrt{m} / \mu\,.
\label{eq:integral_upper_bound}
\end{align}
On the other hand, for any $e\in E$ and $i\in[t(e)]$ we have
\begin{align*}
\int_{\nu=\mu_i(e)}^{\mu_{i+1}(e)} r_e(\nu) \left(\frac{df_e(\nu)}{d\nu}\right)^2 \left|d\nu\right|
& \geq 
\frac{r_e(\mu_i(e))}{(1+6\gamma)^3} 
\int_{\nu=\mu_i(e)}^{\mu_{i+1}(e)} \left(\frac{df_e(\nu)}{d\nu}\right)^2 \left|d\nu\right| \\
& \geq 
\frac{r_e(\mu_i(e))}{(1+6\gamma)^3} 
\frac{\left(\int_{\nu=\mu_i(e)}^{\mu_{i+1}(e)}
\left|\frac{df_e(\nu)}{d\nu}\right| \left|d\nu\right|\right)^2}
{
\int_{\nu=\mu_i(e)}^{\mu_{i+1}(e)} \left|d\nu\right|
}\\
& = 
\frac{r_e(\mu_i(e))}{(1+6\gamma)^3(\mu_i(e) - \mu_{i+1}(e))} 
\left(f(\mu_i(e)) - f(\mu_{i+1}(e))\right)^2\\
& \geq 
\frac{\gamma^2}{36(1+6\gamma)^3(\mu_i(e) - \mu_{i+1}(e))}\,,
\end{align*}
where we used the Cauchy-Schwarz inequality.
Now, note that 
\begin{align*}
\int_{\nu=\mu_1(e)}^{\mu_{t(e)+1}(e)} r_e(\nu) \left(\frac{df_e(\nu)}{d\nu}\right)^2 \left|d\nu\right|
& \geq 
\sum\limits_{i=1}^{t(e)} \frac{\gamma^2}{(1+6\gamma)^3(\mu_i(e) - \mu_{i+1}(e))}\\
& \geq 
\frac{\gamma^2 (t(e))^2}{(1+6\gamma)^3(\mu - \mu')}\\
& \geq 
\frac{\gamma^2 (t(e))^2\sqrt{m}}{(1+6\gamma)^3 k\mu}\,,
\end{align*}
where remember that $t(e) = \mathrm{change}(e,\gamma)$
and we again used Cauchy-Schwarz.
Summing this up for all $e\in E$ and combining with (\ref{eq:integral_upper_bound}),
we get that
$\sum\limits_{e\in E} (\mathrm{change}(e,\gamma))^2 \leq O(k^2 / \gamma^2)$.
\end{proof}

\begin{lemma}[Central path $\ell_\infty$ slack stability]
Consider a minimum cost flow instance on a graph $G(V,E)$.
For any $\mu > 0$ and 
$\mu' = \mu / (1+1/\sqrt{m})^k$ for some $k \in (0,\sqrt{m}/10)$, we have
\begin{align*}
\ss(\mu') \approx_{3k^2} \ss(\mu)\,.
\end{align*}
\label{lem:resistance_stability2}
\end{lemma}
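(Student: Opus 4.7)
The plan is to reduce the multiplicative slack-stability claim to a pointwise bound on each edge, and then solve a simple quadratic inequality in the ratio $s_e(\mu')^\pm / s_e(\mu)^\pm$. All of the heavy lifting has already been done by Lemma~\ref{lem:energy_stability}; the present lemma is essentially a pointwise extraction from the energy-type sum.

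First, I would invoke Lemma~\ref{lem:energy_stability}, which asserts
\[
\sum_{e\in E}\!\left(\frac{1}{s_e(\mu)^+ s_e(\mu')^+} + \frac{1}{s_e(\mu)^- s_e(\mu')^-}\right)\!(f_e(\mu')-f_e(\mu))^2 \leq 2k^2\,.
\]
Since every summand is non-negative, this gives the two pointwise bounds
\[
\frac{(f_e(\mu')-f_e(\mu))^2}{s_e(\mu)^+ s_e(\mu')^+}\leq 2k^2, \qquad \frac{(f_e(\mu')-f_e(\mu))^2}{s_e(\mu)^- s_e(\mu')^-}\leq 2k^2\,.
\]
Next I would use the identities $s_e^+ = u_e - f_e$ and $s_e^- = f_e$, which give $f_e(\mu')-f_e(\mu) = -(s_e(\mu')^+ - s_e(\mu)^+) = s_e(\mu')^- - s_e(\mu)^-$. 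Substituting into the first bound yields
\[
(s_e(\mu')^+ - s_e(\mu)^+)^2 \leq 2k^2\, s_e(\mu)^+ s_e(\mu')^+\,,
\]
and similarly for the negative slack.

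Then, setting $x = s_e(\mu')^+/s_e(\mu)^+$, the inequality becomes $(x-1)^2 \leq 2k^2 x$, equivalently $x^2 - (2+2k^2)x + 1 \leq 0$. The roots of this quadratic are $(1+k^2)\pm\sqrt{(1+k^2)^2-1}$, and their product is $1$, so the admissible interval for $x$ is reciprocal-symmetric about $1$. The upper root is bounded by $2(1+k^2) \leq 3k^2$ (for $k\geq 1$), which together with reciprocal symmetry gives $x \in [1/(3k^2),\,3k^2]$. Applying the same argument to $s_e^-$ and combining yields $\ss(\mu') \approx_{3k^2} \ss(\mu)$.

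The only real obstacle is bookkeeping: confirming that the constant $3$ survives the quadratic step in the regime $k\geq 1$ (which is the meaningful regime, since $\approx_{3k^2}$ is only informative there), and making sure one applies the bound separately to the $s^+$ and $s^-$ components rather than trying to couple them. The mechanism—central-path energy stability producing multiplicative slack stability—requires no new input beyond Lemma~\ref{lem:energy_stability}.
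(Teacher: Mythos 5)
Your proof is correct and follows essentially the same route as the paper: extract the pointwise per-edge bound from Lemma~\ref{lem:energy_stability}, rewrite the flow difference as a slack difference, and solve the resulting quadratic in the slack ratio (the paper parametrizes by $c$ with $s_e(\mu')^\pm=(1+c)s_e(\mu)^\pm$ in two cases rather than by $x$ with reciprocal root symmetry, but this is cosmetic). One minor arithmetic slip: $2(1+k^2)\le 3k^2$ requires $k\ge\sqrt{2}$, not $k\ge 1$, and at $k=1$ even the exact upper root $(1+k^2)+\sqrt{k^4+2k^2}=2+\sqrt{3}$ exceeds $3$ --- but the paper's own proof has the identical looseness in its constant, so this is not a substantive gap.
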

\begin{proof}[Proof of Lemma~\ref{lem:resistance_stability2}]
By Lemma~\ref{lem:energy_stability}, for any $e\in E$ we have that
\begin{align}
\left(\frac{1}{s_e(\mu)^+\cdot s_e(\mu')^+} + \frac{1}{s_e(\mu)^-\cdot s_e(\mu')^-}\right) \left(f_e(\mu') - f_e(\mu)\right)^2
\leq 2k^2\,.
\label{eq:edge_energy_stability}
\end{align}

If $s_e(\mu')^+ = (1+c)\cdot s_e(\mu)^+$ for some $c\geq 0$, then 
\[ (f_e(\mu') - f_e(\mu))^2 = c^2 (s_e(\mu)^+)^2\] 
and
\[s_e(\mu)^+\cdot s_e(\mu')^+ = (1+c) (s_e(\mu)^+)^2\,,\]
so by (\ref{eq:edge_energy_stability}) we have that $c \leq 3k^2$.

Similarly, if $s_e(\mu')^+ = (1+c)^{-1} \cdot s_e(\mu)^+$ for some $c \geq 0$, then
\[ (f_e(\mu') - f_e(\mu))^2 = c^2 (s_e(\mu')^+)^2\]
and 
\[ s_e(\mu)^+\cdot s_e(\mu')^+ = (1+c) (s_e(\mu')^+)^2\,,\]
so by (\ref{eq:edge_energy_stability}) we have that $c \leq 3k^2$.

We have proved that $s_e(\mu')^+\approx_{3k^2} s_e(\mu)^+$ and by symmetry we also have
$s_e(\mu')^- \approx_{3k^2} s_e(\mu)^-$.
\end{proof}

\subsection{Proof of Lemma~\ref{lem:approx_central}}
\label{sec:proof_lem_approx_central}

Our goal is to keep track of
how close $\ff^*$ remains to centrality (in $\ell_2$ norm) and 
how close $\ff$ remains to $\ff^*$ in $\ell_\infty$ norm.
From these two we can conclude that at all times $\ff$ is close in $\ell_\infty$
to the central flow.
We first prove the following lemma, which bounds how the distance
of $\ff^*$ to centrality (measured in energy of the residual)
degrades when taking a progress step.
\begin{lemma}
Let $\ff^*$ be a flow with slacks $\ss^*$ and resistances $\rr^*$, %
and $\ff$ be a flow with slacks $\ss$ and resistances $\rr$,
where $\ss\approx_{1+\es} \ss^*$ for some $\es \in(0,0.1)$.
We define 
$\ff'^* = \ff^* + \e\tff^*$ 
for some $\e \in(0,0.1)$ (and the new slacks $\ss'^*$), where
\begin{align}
& \tff^* = \delta g(\ss) - \delta \RR^{-1}\BB(\BB^\top \RR^{-1} \BB)^+ \BB^\top g(\ss) \,, 
\label{eq:ipm_step}
\end{align}
$\delta = \frac{1}{\sqrt{m}}$,
and $g(\ss) := \frac{\frac{1}{\ss^+} - \frac{1}{\ss^-}}{\rr}$.
If we let $\hh = \frac{\cc}{\mu} + \frac{1}{\ss^{*+}} - \frac{1}{\ss^{*-}}$
and
$\hh' = \frac{\cc(1+\e\delta)}{\mu} + \frac{1}{\ss'^{*+}} - \frac{1}{\ss'^{*-}}$
be the residuals of $\ff^*$ and $\ff'^*$ for some $\mu > 0$,
then
\[ 
\left\|\CC^\top \hh'\right\|_{\oHH^+} \leq 
(1+\e\delta) 
\left\|\CC^\top \hh\right\|_{\oHH^+}
+ 
5 \left\|\frac{\rr^*}{\orr}\right\|_\infty^{1/2} \es \cdot \e +
2\left\|\frac{\rr'^*}{\orr}\right\|_\infty^{1/2} \e^2
\,,
\] where $\orr$ are some arbitrary resistances and 
$\oHH = \CC^\top \overline{\RR} \CC$.
\label{lem:ipm_step}
\end{lemma}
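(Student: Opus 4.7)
I plan to bound $\|\CC^\top \hh'\|_{\oHH^+}$ by the triangle inequality against $(1+\e\delta)\|\CC^\top \hh\|_{\oHH^+}$ plus a residual obtained from a first-order Taylor expansion of the new slacks. Starting from $\ss'^{*+} = \ss^{*+} - \e \tff^*$ and $\ss'^{*-} = \ss^{*-} + \e \tff^*$, expanding $1/\ss'^{*\pm}$ around $1/\ss^{*\pm}$ gives componentwise
\[
\frac{1}{\ss'^{*+}} - \frac{1}{\ss'^{*-}} \;=\; \bb^* + \e\, \tff^*\, \rr^* + \e^2 (\tff^*)^2 \EE,
\]
where $\bb^* = 1/\ss^{*+} - 1/\ss^{*-}$ and $\EE_e = \frac{1}{(s_e^{*+})^2 s_e'^{*+}} - \frac{1}{(s_e^{*-})^2 s_e'^{*-}}$ is the explicit quadratic Taylor remainder. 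Plugging this into $\hh'$ and cancelling the cost scaling against $(1+\e\delta)\hh$ yields the clean decomposition
\[
\hh' - (1+\e\delta)\hh \;=\; \e\, \tff^*\, \rr^* - \e\delta\, \bb^* + \e^2 (\tff^*)^2 \EE,
\]
so the task reduces to bounding $\|\CC^\top \cdot\|_{\oHH^+}$ of these three terms.

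For the first-order pieces I would use the fact that $\CC^\top \BB = 0$ (the image of $\CC$ consists of circulations). Setting $\vphi = \LL^+ \BB^\top g(\ss)$ so that $\tff^* = \delta g(\ss) - \delta \RR^{-1}\BB\vphi$, a direct computation gives $\tff^* \rr^* = \delta(\rr^*/\rr)\bb - \delta(\rr^*/\rr)\BB\vphi$, and $\CC^\top \BB\vphi = 0$ kills the unweighted electrical flow piece, leaving
\[
\CC^\top\!\bigl(\tff^*\, \rr^* - \delta\, \bb^*\bigr) \;=\; \delta\,\CC^\top\!\Bigl(\tfrac{\rr^*}{\rr}\bb - \bb^*\Bigr) \;-\; \delta\,\CC^\top\!\Bigl(\tfrac{\rr^*-\rr}{\rr}\BB\vphi\Bigr) .
\]
Combined with the standard duality estimate $\|\CC^\top \yy\|_{\oHH^+} \leq \|\yy\|_{\oRR^{-1}}$ (which follows because $\oHH = \CC^\top \oRR \CC$ and $\mathrm{image}(\CC)$ is a subspace), this reduces the analysis to three weighted-$\ell_2$ bounds.

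For the first summand, $\ss \approx_{1+\es} \ss^*$ together with a Taylor expansion of $1/s^\pm$ around $1/s^{*\pm}$ and the trivial bound $|b_e^*| \leq \sqrt{2 r_e^*}$ yields $|(r_e^*/r_e) b_e - b_e^*| \leq O(\es)\sqrt{r_e^*}$ per edge, so after multiplying by $\delta = 1/\sqrt{m}$ and the outer $\e$ the contribution is $O(\e\es)\|\rr^*/\orr\|_\infty^{1/2}$. For the second summand, $|(r_e^*-r_e)/r_e| \leq O(\es)$ pointwise, and the energy-minimality of the electrical flow gives $\|\BB\vphi\|_{\RR^{-1}}^2 = g(\ss)^\top \BB\LL^+\BB^\top g(\ss) \leq g(\ss)^\top \RR g(\ss) = \sum_e b_e^2/r_e \leq 2m$, producing the same $O(\e\es)\|\rr^*/\orr\|_\infty^{1/2}$ order. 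For the quadratic remainder, feasibility of the step forces $s_e^{*\pm} \asymp s_e'^{*\pm}$, so $|\EE_e| \leq O((r_e'^*)^{3/2})$; combined with the standard $\ell_2$ step bound $\sum_e r_e (\tff^*_e)^2 \leq O(1)$ (immediate from $\tff^* = \delta g(\ss) - \delta \RR^{-1}\BB\vphi$ and both pieces having $\sqrt{\rr}$-norm $\leq \delta\sqrt{2m}$) this gives $\|\e^2 (\tff^*)^2 \EE\|_{\oRR^{-1}} \leq O(\e^2)\|\rr'^*/\orr\|_\infty^{1/2}$.

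The main obstacle is calibrating the explicit constants $5$ and $2$ in the statement, which requires tracking the $(1+\es)^{O(1)}$ and $(1+\e)^{O(1)}$ factors through both the Taylor expansions and the conversions between $\rr$, $\rr^*$, and $\rr'^*$; the hypotheses $\es,\e \leq 0.1$ let the higher-order corrections be absorbed. Bounding the quadratic term in the form $\|\rr'^*/\orr\|_\infty^{1/2}$ rather than $\|\rr^*/\orr\|_\infty^{1/2}$ is natural since the denominators of $\EE$ explicitly involve $\ss'^*$.
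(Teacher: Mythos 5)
Your proposal is correct and follows essentially the same route as the paper's proof: an exact second-order expansion of $1/\ss'^{*\pm}$, the identity $\CC^\top \BB = \zerov$ to eliminate the electrical-flow component, the dual-norm reduction $\|\CC^\top \yy\|_{\oHH^+} \leq \|\yy\|_{\oRR^{-1}}$, and the $\ell_2$/$\ell_4$ bounds on the congestion of $\tff^*$. The only caveat is the one you flag yourself: your grouping of the first-order error (splitting into $\frac{\rr^*}{\rr}\bb - \bb^*$ and $\frac{\rr^*-\rr}{\rr}\BB\vphi$) differs from the paper's, which keeps $\CC^\top(\frac{\cc}{\mu}+\rr g(\ss))$ intact and compares it to $\CC^\top \hh$ to extract the $(1+\e\delta)$ prefactor; a naive constant count along your decomposition lands somewhat above $5$ for the $\e\es$ term, which is harmless for every downstream use but means the stated constants are really an artifact of the paper's particular grouping rather than something your split is guaranteed to reproduce.
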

\begin{proof}
Let $\vrho^+ = \e \tff^* / \ss^{*+}$ and $\vrho^- = -\e\tff^* /\ss^{*-}$.
First of all, it is easy to see that
\begin{align*}
\left\|\vrho\right\|_2
&\leq \left\|\frac{\ss}{\ss^*}\right\|_\infty \left\|\frac{\ss^*}{\ss}\vrho\right\|_2\\
&\leq \e(1+\es)\left\|\tff^*\right\|_{\rr,2}\\
& = \e\delta(1+\es) \left\|\sqrt{\rr}g(\ss) - \RR^{-1/2} \BB (\BB^\top \RR^{-1}\BB)^+ \BB^\top g(\ss)\right\|_{2}\\
& = \e\delta (1+\es)\left\|\left(\II - \RR^{-1/2} \BB (\BB^\top \RR^{-1}\BB)^+ \BB^\top\RR^{-1/2}\right) \sqrt{\rr}g(\ss)\right\|_{2}\\
& \leq \e\delta (1+\es)\left\|\sqrt{\rr}g(\ss)\right\|_{2}\\
& = \e\delta (1+\es)\left\|\frac{\frac{1}{\ss^+} - \frac{1}{\ss^-}}{\sqrt{\frac{1}{(\ss^+)^2} + \frac{1}{(\ss^-)^2}}}\right\|_{2}\\
& \leq \e \delta (1+\es)\sqrt{m}\\
& = \e(1+\es)\,.
\end{align*}
We bound the energy to route the residual of $\ff'^*$ as
\begin{align*}
& \left\|\CC^\top \hh'\right\|_{\oHH^+} \\
& =  \left\|\CC^\top \hh + \CC^\top \left(\frac{\e\delta \cc}{\mu} + \frac{1}{\ss'^{*+}} - \frac{1}{\ss^{*+}} - \frac{1}{\ss'^{*-}} + \frac{1}{\ss^{*-}}\right) \right\|_{\oHH^{+}}\\
& =  \left\|\CC^\top \hh + 
\CC^\top \left(\frac{\e\delta \cc}{\mu} + 
\frac{\vrho^+}{\ss'^{*+}} - \frac{\vrho^-}{\ss'^{*-}} 
\right) \right\|_{\oHH^{+}}\\
& =  \left\|\CC^\top \hh + \CC^\top \left(\frac{\e\delta \cc}{\mu} + 
\frac{\vrho^+}{\ss^{*+}} - \frac{\vrho^-}{\ss^{*-}}\right) + \CC^\top\left( \frac{(\vrho^+)^2}{\ss'^{*+}}  - \frac{(\vrho^-)^2}{\ss'^{*-}}\right)\right\|_{\oHH^{+}}\,.
\end{align*}
Now, using (\ref{eq:ipm_step}) we get that
$\rr \tff^* = \delta \rr g(\ss) - \delta \BB (\BB^\top \RR^{-1} \BB)^+ \BB^\top g(\ss)$
and so
$\CC^\top \left(\rr \tff^*\right) 
= \delta \CC^\top \left(\rr g(\ss)\right)$, which follows by the fact that for any $i$,
$\onev_i^\top \CC^\top \BB = \left(\BB^\top \CC \onev_i\right)^\top = \zerov$, since $\CC \onev_i$ 
is a circulation by definition of $\CC$.
As $\rr \tff^* = 
\left(\frac{1}{(\ss^+)^2} + \frac{1}{(\ss^-)^2}\right) \tff^* 
= \e^{-1} \frac{\ss^{*+}}{(\ss^+)^2}\vrho^+ - \e^{-1}\frac{\ss^{*-}}{(\ss^-)^2} \vrho^-
$, we have
$\e \delta \CC^\top \left(\rr g(\ss)\right) = \CC^\top \left(\frac{\ss^{*+}}{(\ss^+)^2}\vrho^+ - 
\frac{\ss^{*-}}{(\ss^-)^2}\vrho^-\right)$ and so
\begin{align*}
& \left\|\CC^\top \hh + \CC^\top \left(\frac{\e\delta \cc}{\mu} + 
\frac{\vrho^+}{\ss^{*+}} - \frac{\vrho^-}{\ss^{*-}}\right) + \CC^\top\left( \frac{(\vrho^+)^2}{\ss'^{*+}}  - \frac{(\vrho^-)^2}{\ss'^{*-}}\right)\right\|_{\oHH^{+}}\\
& =\left\|\CC^\top \hh + \CC^\top \left(
\frac{\e \delta \cc}{\mu} + \e\delta \rr g(\ss) -\frac{\ss^{*+}}{(\ss^{+})^2}\vrho^+ + \frac{\ss^{*-}}{(\ss^{-})^2} \vrho^-
+ \frac{\vrho^+}{\ss^{*+}} - \frac{\vrho^-}{\ss^{*-}}\right) + \CC^\top\left( \frac{(\vrho^+)^2}{\ss'^{*+}}  -\frac{(\vrho^-)^2}{\ss'^{*-}}\right)\right\|_{\oHH^{+}}\\
& =  \Big\|\CC^\top \hh 
+ \e \delta \CC^\top \left(\frac{\cc}{\mu} + \rr g(\ss)\right)
+ \CC^\top \left(\left(
\onev - \left(\frac{\ss^{*+}}{\ss^+}\right)^2\right) \frac{\vrho^+}{\ss^{*+}}
- \left(\onev - \left(\frac{\ss^{*-}}{\ss^-}\right)^2\right) \frac{\vrho^-}{\ss^{*-}}\right)\\
& + \CC^\top\left( \frac{(\vrho^+)^2}{\ss'^{*+}} - \frac{(\vrho^-)^2}{\ss'^{*-}}\right)\Big\|_{\oHH^{+}}\\
& \leq (1+\e\delta)\left\|\CC^\top \hh\right\|_{\oHH^{+}}
+ 5 \left\|\frac{\rr^*}{\orr}\right\|_\infty^{1/2} \es \cdot \e +
2 \left\|\frac{\rr'^{*}}{\orr}\right\|_\infty^{1/2} \e^2
\end{align*}
where we have used the triangle inequality, the fact that 
\begin{align*}
& \e\delta \left\|\CC^\top \left(\frac{\cc}{\mu} + \rr g(\ss)\right)\right\|_{\oHH^{+}}\\
& = \e\delta \left\|\CC^\top \left(\frac{\cc}{\mu} + \frac{1}{\ss^+} - \frac{1}{\ss^-}\right)\right\|_{\oHH^{+}}\\
& \leq \e\delta \left\|\CC^\top \left(\frac{\cc}{\mu} + \frac{1}{\ss^{*+}} - \frac{1}{\ss^{*-}}\right)\right\|_{\oHH^{+}}
+\e\delta \left\|\CC^\top \left(\frac{1}{\ss^+} - \frac{1}{\ss^{*+}} - \frac{1}{\ss^-} + \frac{1}{\ss^{*-}}\right)\right\|_{\oHH^{+}}\\
& \leq \e\delta \left\|\CC^\top \left(\frac{\cc}{\mu} + \frac{1}{\ss^{*+}} - \frac{1}{\ss^{*-}}\right)\right\|_{\oHH^{+}}
+\e\delta 
\left\|\frac{\rr^*}{\orr}\right\|_\infty^{1/2} \left\|\CC^\top \left(\frac{1}{\ss^+} - \frac{1}{\ss^{*+}} - \frac{1}{\ss^-} + \frac{1}{\ss^{*-}}\right)\right\|_{(\CC^\top \RR^* \CC)^{+}}\\
& \leq \e\delta \left\|\CC^\top \hh\right\|_{\oHH^{+}}
+ \e\delta 
\left\|\frac{\rr^*}{\orr}\right\|_\infty^{1/2}
\left\|
\frac{\frac{1}{\ss^+} - \frac{1}{\ss^{*+}} - \frac{1}{\ss^-} + \frac{1}{\ss^{*-}}}{
	\left(\frac{1}{(\ss^{*+})^2} + \frac{1}{(\ss^{*-})^2}\right)^{1/2}}\right\|_2\\
& \leq \e\delta \left\|\CC^\top \hh\right\|_{\oHH^{+}}
+ \e\delta 
\left\|\frac{\rr^*}{\orr}\right\|_\infty^{1/2}\left(
\left\|
\frac{\frac{1}{\ss^+} - \frac{1}{\ss^{*+}}}{
	\left(\frac{1}{(\ss^{*+})^2}\right)^{1/2}}\right\|_2 + 
\left\|
\frac{\frac{1}{\ss^-} - \frac{1}{\ss^{*-}}}{
	\left(\frac{1}{(\ss^{*-})^2}\right)^{1/2}}\right\|_2\right)\\
& =\e\delta \left\|\CC^\top \hh\right\|_{\oHH^{+}}
+ \e\delta 
\left\|\frac{\rr^*}{\orr}\right\|_\infty^{1/2}
\left(
\left\|
\frac{\ss^{*+}}{\ss^{+}} - \onev\right\|_2 +
\left\|
	\frac{\ss^{*-}}{\ss^{-}} - \onev
	\right\|_2\right)\\
& \leq\e\delta \left\|\CC^\top \hh\right\|_{\oHH^{+}}
+ \e\delta 
\left\|\frac{\rr^*}{\orr}\right\|_\infty^{1/2} 2\es\sqrt{m}
\text{ (as $\ss \approx_{1+\es} \ss^*$) }
\\
& =\e\delta \left\|\CC^\top \hh\right\|_{\oHH^{+}}
+ 2
\left\|\frac{\rr^*}{\orr}\right\|_\infty^{1/2} \e\es
\end{align*}
and similarly, using 
$\left\|1 - \left(\frac{\ss^{*}}{\ss}\right)^2\right\|_\infty
\leq \es(2+\es)$, the fact that
\begin{align*}
& \Big\|\CC^\top \left(\left(
\onev - \left(\frac{\ss^{*+}}{\ss^+}\right)^2\right) \frac{\vrho^+}{\ss^{*+}}
- \left(\onev - \left(\frac{\ss^{*-}}{\ss^-}\right)^2\right) \frac{\vrho^-}{\ss^{*-}}\right)
+ \CC^\top\left( \frac{(\vrho^+)^2}{\ss'^{*+}}  +\frac{(\vrho^-)^2}{\ss'^{*-}}\right)\Big\|_{\oHH^{+}}\\
& \leq 
\left\|\frac{\rr^*}{\orr}\right\|_\infty^{1/2} \es(2+\es)\left\|\vrho\right\|_2 + \left\|\frac{\rr'^*}{\orr}\right\|_\infty^{1/2} \left\|\vrho\right\|_4^2\\
& \leq 
\left\|\frac{\rr^*}{\orr}\right\|_\infty^{1/2} \e\es(2+\es)(1+\es) + 
\left\|\frac{\rr'^*}{\orr}\right\|_\infty^{1/2} \e^2 (1+\es)^2\\
& \leq 
3 \left\|\frac{\rr^*}{\orr}\right\|_\infty^{1/2} \e \es +
2 \left\|\frac{\rr'^*}{\orr}\right\|_\infty^{1/2} \e^2\,.
\end{align*}
\end{proof}

We will also use the following lemma, which is standard~\cite{axiotis2020circulation}.

\begin{lemma}[Small residual implies $\ell_\infty$ closeness]
Given a flow $\ff = \ff^0 + \CC \xx$ with slacks $\ss$ and resistances $\rr$,
if $\left\|\CC^\top\left(\frac{\cc}{\mu} + \frac{1}{\ss^+} - \frac{1}{\ss^-}\right)\right\|_{(\CC^\top \RR \CC)^+} \leq 1/1000$ 
then $\ff$ is $(\mu,1.01)$-central.
\label{lem:small_residual_linfty}
\end{lemma}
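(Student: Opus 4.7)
The left-hand side of the hypothesis is exactly the Newton decrement $\lambda := \|\CC^\top \hh\|_{(\CC^\top \RR \CC)^+}$ of the log-barrier $F_\mu$ at the current iterate $\xx$, where $\hh := \frac{\cc}{\mu} + \frac{1}{\ss^+} - \frac{1}{\ss^-} = \nabla_\xx F_\mu$. The unique minimizer of $F_\mu$ over the feasible region is the central flow $\ff(\mu)$. My plan is to run the exact Newton method starting from $\ff$, establish quadratic convergence of the decrement, and then sum the per-iterate slack motion to deduce $\rr \approx_{1.01} \rr(\mu)$.

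Concretely, I define $\ff^{(0)} := \ff$ and, recursively, $\ff^{(k+1)} := \ff^{(k)} + \CC \txx^{(k)}$ with $\txx^{(k)} := -(\CC^\top \RR^{(k)} \CC)^+ \CC^\top \hh^{(k)}$. Writing $\lambda^{(k)}$ for the Newton decrement at $\ff^{(k)}$, the definition of $\txx^{(k)}$ gives $\|\CC \txx^{(k)}\|_{\RR^{(k)},2} = \lambda^{(k)}$, and in particular $\sqrt{r_e^{(k)}}\,|(\CC \txx^{(k)})_e| \le \lambda^{(k)}$ for every edge $e$. Plugging this into Lemma~\ref{lem:approx_conversion} bounds the per-step multiplicative slack change:
\[
\max\Bigl\{ \Bigl|\tfrac{s_e^{(k+1),+}-s_e^{(k),+}}{s_e^{(k),+}}\Bigr|,\ \Bigl|\tfrac{s_e^{(k+1),-}-s_e^{(k),-}}{s_e^{(k),-}}\Bigr| \Bigr\} \le \lambda^{(k)},
\]
so the iterates stay strictly feasible as long as $\lambda^{(k)} < 1$.

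The crux is quadratic convergence: I claim that if $\lambda^{(k)} \le 1/4$, then $\lambda^{(k+1)} \le O(1)\cdot (\lambda^{(k)})^2$. I would prove this by imitating the computation in Lemma~\ref{lem:ipm_step} with $\delta = 0$ (no $\mu$-update) and $\es = 0$ (the pre-step flow serving as its own reference): the post-step residual equals exactly the second-order Taylor remainder $\CC^\top\bigl(\frac{(\vrho^+)^2}{\ss^{(k+1),+}} - \frac{(\vrho^-)^2}{\ss^{(k+1),-}}\bigr)$ with $\vrho^\pm := \pm\,\CC\txx^{(k)}/\ss^{(k),\pm}$, and its norm in $(\CC^\top \RR^{(k+1)} \CC)^+$ is bounded by $\|\CC\txx^{(k)}\|_{\RR^{(k)},2}^2 = (\lambda^{(k)})^2$ after absorbing the $1+O(\lambda^{(k)})$ ratio between $\RR^{(k)}$ and $\RR^{(k+1)}$. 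This is equivalent to the standard self-concordance bound for the log barrier (each $-\log$ summand being $1$-self-concordant).

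Starting from $\lambda^{(0)} \le 10^{-3}$, the recursion produces a doubly exponentially decaying sequence with $\sum_k \lambda^{(k)} \le 2\lambda^{(0)}$, and the iterates converge to $\ff(\mu)$ by standard Newton-method convergence for strictly convex self-concordant functions. Telescoping the per-step log-scale slack bound (for each edge and each sign),
\[
|\log s_e^\pm - \log s_e(\mu)^\pm| \le 2\sum_k \lambda^{(k)} \le 4 \cdot 10^{-3},
\]
which via $r_e = 1/(s_e^+)^2 + 1/(s_e^-)^2$ yields $r_e \approx_{\exp(8 \cdot 10^{-3})} r_e(\mu)$; since $\exp(8 \cdot 10^{-3}) < 1.01$, $\ff$ is $(\mu, 1.01)$-central. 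The main obstacle is the quadratic convergence step and, relatedly, being careful that each Newton step stays in the feasible region; everything else is routine telescoping once those are in hand.
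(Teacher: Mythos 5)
Your proof is correct: it is the standard self-concordant Newton-decrement argument (quadratic convergence of the exact Newton iterates for $F_\mu$, with the linear term of the post-step residual cancelled by the Newton step, plus telescoping the per-step relative slack motion, which Lemma~\ref{lem:approx_conversion} converts into the multiplicative resistance bound). The paper gives no proof of this lemma --- it cites the statement as standard from~\cite{axiotis2020circulation} --- and that standard proof is exactly the route you take.
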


Applying Lemma~\ref{lem:ipm_step} for $T=\frac{k}{\e}$ iterations, we get the lemma below, 
which measures the closeness of $\ff^*$ to the central path in $\ell_2$ after $T$ iterations.
\begin{lemma}[Centrality of $\ff^*$]
Let $\ff^{*1},\dots,\ff^{*T+1}$
be flows with slacks $\ss^{*1},\dots,\ss^{*T+1}$
and resistances $\rr^{*1},\dots,\rr^{*T+1}$,
and $\ff^{1},\dots,\ff^{T+1}$
be flows with slacks $\ss^{1},\dots,\ss^{T+1}$ and resistances $\rr^{1},\dots,\rr^{T+1}$,
such that $\ss^t \approx_{1+\es}\ss^{*t}$
for all $t\in [T]$,
where $T = \frac{k}{\e}$ for some $k\leq \sqrt{m} / 10$,
$\e\in(0,0.1)$ and $\es\in(0,0.1)$.
Additionally, we have that
\begin{itemize}
\item $\ff^{*1}$ is $\mu$-central
\item {For all $t\in[T]$, $\ff^{*t+1} = \ff^{*t} + \e \cdot \tff^t$, where
\[ \left\|\sqrt{\rr^t}\left(\tff^{*t} - \tff^{t}\right) \right\|_\infty \leq \eps\,, \] 
\[ 
\tff^{*t} = \delta g(\ss^t) - \delta (\RR^t)^{-1} \BB\left(\BB^\top (\RR^t)^{-1} \BB\right)^+ 
\BB^\top g(\ss^t) \] 
and
$\delta = \frac{1}{\sqrt{m}}$.
}
\end{itemize}
Then, 
$\ff^{*T+1}$ is $(\mu/(1+\e\delta)^{T},1.01)$-central,
as long as we set $\e \leq 10^{-5}k^{-3}$
and $\es \leq 10^{-5} k^{-3}$.
\label{lem:small_residual}
\end{lemma}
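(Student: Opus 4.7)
The plan is to maintain an inductive invariant on the residual $R^t := \|\CC^\top \hh^t\|_{\oHH^+}$, where $\hh^t = \cc/\mu^t + 1/\ss^{*t,+} - 1/\ss^{*t,-}$ is the centrality residual of $\ff^{*t}$ at $\mu^t := \mu/(1+\e\delta)^{t-1}$, and $\oHH = \CC^\top \overline{\RR} \CC$ with $\overline{\RR} = \mathrm{diag}(\orr)$ a fixed reference, say $\orr = \rr^{*1}$. Since $\ff^{*1}$ is exactly $\mu$-central, $R^1 = 0$. Once $R^{T+1} \leq 1/1000$ is established (and this bound is transported to $(\CC^\top \RR^{*T+1}\CC)^+$-norm at the cost of a factor $\|\rr^{*T+1}/\orr\|_\infty^{1/2}$), Lemma~\ref{lem:small_residual_linfty} immediately yields the $(\mu^{T+1}, 1.01)$-centrality of $\ff^{*T+1}$.

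For the per-step recursion, I extend Lemma~\ref{lem:ipm_step} to approximate steps by writing $\ff^{*t+1} = \ff^{*t} + \e\tff^{*t} + \e(\tff^t - \tff^{*t})$. Applied to the first two summands, Lemma~\ref{lem:ipm_step} bounds the resulting residual by $(1+\e\delta)R^t + 5\|\rr^{*t}/\orr\|_\infty^{1/2}\es\e + 2\|\rr^{*t+1}/\orr\|_\infty^{1/2}\e^2$. The perturbation $\e(\tff^t - \tff^{*t})$ contributes a first-order change to the residual of the form $\e\CC^\top(\rr^{*t}(\tff^t - \tff^{*t}))$, whose $\oHH^+$-norm I bound via the projection inequality $\|\CC^\top \vv\|_{\oHH^+} \leq \|\vv\|_{\overline{\RR}^{-1}}$. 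Combined with the hypothesis $\|\sqrt{\rr^t}(\tff^t - \tff^{*t})\|_\infty \leq \eps$ and Cauchy--Schwarz, this gives a bound of $\|\rr^{*t}/\orr\|_\infty^{1/2} \cdot \e\sqrt{m}\,\eps$. The combined recursion thus reads $R^{t+1} \leq (1+\e\delta) R^t + O(\|\rr^{*t}/\orr\|_\infty^{1/2})(\es\e + \e^2 + \e\sqrt{m}\,\eps)$.

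To control the factors $\|\rr^{*t}/\orr\|_\infty^{1/2}$, I bootstrap central-path stability (Lemma~\ref{lem:resistance_stability2}) with the hypothesis $\ss^t \approx_{1+\es}\ss^{*t}$. Proved by induction on $t$: if $R^t \leq 1/1000$, then Lemma~\ref{lem:small_residual_linfty}-style reasoning places $\ff^{*t}$ within an $O(1)$-corridor of the true central flow at $\mu^t$, and then Lemma~\ref{lem:resistance_stability2} gives $\rr^{*t} \approx_{O(k^4)} \rr(\mu^1) \approx_{O(1)} \orr$, so $\|\rr^{*t}/\orr\|_\infty^{1/2} = O(k^2)$ throughout.

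Plugging this back, $R^{T+1} \leq (1+\e\delta)^T \cdot T \cdot O(k^2)(\es\e + \e^2 + \e\sqrt{m}\,\eps) \leq 2k^3(\es + \e + \sqrt{m}\,\eps)$, using $(1+\e\delta)^T \leq e^{k\delta} \leq 1.2$ since $k \leq \sqrt{m}/10$. With $\e, \es \leq 10^{-5}k^{-3}$ the first two summands are $O(10^{-5})$; for the third, the outer Lemma~\ref{lem:approx_central}'s choice $\eps = 10^{-6} k^{-6}$ together with the downstream regime $k \leq m^{1/316}$ makes $k^3\sqrt{m}\,\eps = O(10^{-3})$. Thus $R^{T+1} \leq 1/1000$ and Lemma~\ref{lem:small_residual_linfty} concludes. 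The main obstacle is the circularity of the induction: the per-step recursion needs the resistance-ratio bound, while the resistance-ratio bound itself needs the running residual to be small, so both must be proved simultaneously by induction rather than sequentially.
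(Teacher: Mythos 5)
Your overall architecture (track the $\ell_2$ residual $\|\CC^\top\hh^t\|_{\oHH^+}$, drive it with a per-step recursion from Lemma~\ref{lem:ipm_step}, and control the resistance-ratio factors by a simultaneous induction with central-path stability) matches the paper's, which implements the "simultaneous induction" you worry about as a first-bad-index contradiction argument. But there are two concrete gaps.

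The fatal one is your treatment of the perturbation $\e(\tff^t-\tff^{*t})$. Pushing an $\ell_\infty$-bounded error of size $\eps$ through the $\ell_2$ residual costs $\|\sqrt{\rr^t}(\tff^t-\tff^{*t})\|_2\le\sqrt{m}\,\eps$ per step, and your own accounting correctly produces the total $\Theta(k^3\sqrt{m}\,\eps)$. Your claim that this is $O(10^{-3})$ is false: with $\eps=10^{-6}k^{-6}$ and $k\le m^{1/316}$ (or $k=m^{1/762}$), $k^3\sqrt{m}\,\eps=10^{-6}\sqrt{m}/k^3\ge 10^{-6}m^{1/2-3/316}$, which is polynomially large in $m$. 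No choice of $\eps$ that is merely $\mathrm{poly}(1/k)$ can kill a $\sqrt{m}$ factor. The paper avoids this entirely: the auxiliary sequence $\ff^{*t}$ is advanced by the \emph{exact} electrical-flow step $\e\tff^{*t}$ (computed with the approximate resistances $\rr^t$), so Lemma~\ref{lem:ipm_step} applies verbatim and no $\sqrt{m}\,\eps$ term ever enters the residual recursion; the discrepancy between $\tff^t$ and $\tff^{*t}$ is absorbed separately, coordinatewise in $\ell_\infty$, inside Lemma~\ref{lem:approx_central} (where it accumulates only to $O(k\eps)$ per edge). If you insist on reading the hypothesis as $\ff^{*t+1}=\ff^{*t}+\e\tff^{t}$, the perturbation must likewise be handled in $\ell_\infty$ relative to the exact trajectory, not folded into the $\ell_2$ residual.

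The second, quantitative gap is the choice $\orr=\rr^{*1}$. You correctly note that transporting the final bound to the $(\CC^\top\RR^{*T+1}\CC)^+$-norm (which is what Lemma~\ref{lem:small_residual_linfty} needs) costs a factor $\tO{k^2}$, but you never pay it: your budget $2k^3(\es+\e)\approx 10^{-5}$ becomes $\approx 10^{-5}k^2$ after conversion, which exceeds $1/1000$ once $k>10$. The paper sidesteps this by setting $\orr=\rr^{*\hT}$, the resistances at the step being certified, so the recursion's output is already in the correct norm and the constants $24k^3\es+10k^3\e\le 1/1000$ close with $\e,\es\le 10^{-5}k^{-3}$. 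With your fixed reference you would need to tighten $\e,\es$ by another factor of $k^2$, which would propagate through every exponent downstream.
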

\begin{proof}
For all $t\in[T+1]$, we denote the residual
of $\ff^{*t}$ as
$\hh^t = 
\frac{\cc(1+\e\delta)^{t-1}}{\mu} 
+ \frac{1}{\ss^{+,*t}} - \frac{1}{\ss^{-,*t}}$.
Note that $\CC^\top \hh^1 = \zerov$ as $\ff^{*1}$ is $\mu$-central.

We assume that the statement of the lemma is not true, and
let $\hT$ be the smallest $t\in [T+1]$ such that
$\ff^{*t}$ is not $(\mu / (1 + \e\delta)^{t-1},1.01)$-central. 
Obviously $\hT > 1$.
This means that $\ff^{*t}$ is 
$(\mu / (1 + \e\delta)^{t-1},1.01)$-central for all $t\in[\hT-1]$, i.e.
$\ss^{*t}\approx_{1.01} \ss\left(\mu / (1+\e\delta)^{t-1}\right)$.

Also, note that by Lemma~\ref{lem:resistance_stability2} about slack stability,
and since
$(1+\e\delta)^{\left|\hT-t\right|} \leq (1+\delta)^{1.1k}$,
we have
$\ss\left(\mu / (1+\e\delta)^{t-1}\right) 
\approx_{3.7k^2} \ss\left(\mu / (1+\e\delta)^{\hT-1}\right)$ for all $t\in[T+1]$.
Additionally, note that, as shown in proof of Lemma~\ref{lem:ipm_step}, we have
\[ \left\|\tff^{*\hT-1}\right\|_{\rr^{\hT-1},\infty} \leq
\left\|\tff^{*\hT-1}\right\|_{\rr^{\hT-1},2} \leq 1 \,,\]
so 
\begin{align*}
\left\|\frac{\ss^{*\hT}}{\ss^{*\hT-1}} - \onev\right\|_\infty
& =\e\left\|\frac{\tff^{\hT-1}}{\ss^{*\hT-1}}\right\|_\infty\\
& \leq \e(1+\es)\left\|\sqrt{\rr^{\hT-1}} \tff^{\hT-1}\right\|_\infty\\
& \leq \e(1+\es)\left(\left\|\sqrt{\rr^{\hT-1}} \tff^{*\hT-1}\right\|_\infty + \eps\right)\\
& \leq \e(1+\es)\left(1 + \eps\right)\\
& \leq 1.3\e\,.
\end{align*}
From this we conclude that $\ss^{*\hT} \approx_{1+2.6\e} \ss^{*\hT-1}$, and from the previous
discussion we get that
\[\ss^{*\hT} 
\approx_{1+2.6\e} 
\ss^{*\hT-1}
\approx_{1.01} 
\ss(\mu/(1+\e\delta)^{\hT-1})
\approx_{3.7k^2}
\ss(\mu/(1+\e\delta)^{t-1})
\approx_{1.01}
\ss^{*t}\,,\]
so
$\ss^{*\hT} \approx_{4k^2} \ss^{*t}$
for all $t\in[\hT-1]$.

On the other hand,
if we apply Lemma~\ref{lem:ipm_step} $\hT-1$ times
with $\orr = \rr^{*\hT}$, we get
\begin{align*}
& \left\|\CC^\top \hh^{\hT}\right\|_{\left(\CC^\top \RR^{\hT}\CC\right)^+} \\
& =\left\|\CC^\top \hh^{\hT}\right\|_{\oHH^+} \\
& \leq 
(1+\e\delta) 
\left\|\CC^\top \hh^{{\hT}-1}\right\|_{\oHH^+}
+ 5 \left\|\frac{\rr^{*{\hT}-1}}{\orr}\right\|_\infty^{1/2} \e \cdot \es +
2\left\|\frac{\rr^{*\hT}}{\orr}\right\|_\infty^{1/2} \e^2\\
& \dots\\
& \leq 
5 \sum\limits_{t=1}^{\hT-1} (1+\e\delta)^{\hT-t-1}
\left\|\frac{\rr^{*t}}{\rr^{*\hT}}\right\|_\infty^{1/2} \e \cdot \es
+
2\sum\limits_{t=1}^{\hT-1} (1+\e\delta)^{\hT-t-1}
\left\|\frac{\rr^{*t+1}}{\rr^{*\hT}}\right\|_\infty^{1/2} \e^2\\
& \leq 
6 T
\max_{t\in [\hT-1]} \left\|\frac{\rr^{*t}}{\rr^{*\hT}}\right\|_\infty^{1/2} \e \cdot \es
+
2.4 T \max_{t\in [\hT-1]} 
\left\|\frac{\rr^{*t+1}}{\rr^{*\hT}}\right\|_\infty^{1/2} \e^2\\
& \leq 
24 T k^2 \e \cdot \es
+
10 T k^2
\e^2 \\
& =
24
k^3 \es
+
10
k^3
\e\\
& \leq 1/1000\,,
\end{align*}
where we used the fact that 
$(1+\e\delta)^{\hT} \leq e^{\e\delta T} = e^{\delta k} \leq 1.2$
and our setting of $\es \leq 10^{-5}k^{-3}$ and 
$\e \leq 10^{-5}k^{-3}$.
By Lemma~\ref{lem:small_residual_linfty}
this implies that 
$\ff^{*\hT}$
is $(\mu / (1+\e\delta)^{\hT-1},1.01)$-central, a contradiction.

\end{proof}

We are now ready to prove the following lemma, which is the goal of this section:
\begin{replemma}{lem:approx_central}
Let $\ff^1,\dots,\ff^{T+1}$ be flows with
slacks $\ss^t$ and resistances $\rr^t$ for $t\in[T+1]$, 
where $T = \frac{k}{\e}$ for some $k\leq \sqrt{m}/10$
and $\e = 10^{-5}k^{-3}$, %
such that
\begin{itemize}
\item $\ff^1$ is $(\mu,1+\es/8)$-central for $\es = 10^{-5} k^{-3}$ %
\item {For all $t\in[T]$,
$\ff^{t+1} = \begin{cases}
\ff(\mu) + \e \sum\limits_{i=1}^{t} \tff^i & \text{if $\exists i\in[t]:\tff^{i} \neq \zerov$}\\
\ff^1 & \text{otherwise}
\end{cases}$, where
\[ 
\tff^{*t} = \delta g(\ss^t) - \delta (\RR^t)^{-1} \BB\left(\BB^\top (\RR^t)^{-1} \BB\right)^+ \BB^\top g(\ss^t) \] 
for $\delta = \frac{1}{\sqrt{m}}$ and
\[ \left\|\sqrt{\rr^t}\left(\tff^{*t} - \tff^{t}\right) \right\|_\infty \leq \eps \] 
for $\eps = 10^{-6} k^{-6}$.
}
\end{itemize}
Then, setting $\e = \es = 10^{-5} k^{-3}$ and
$\eps = 10^{-6} k^{-6}$
we get that
$\ss^{T+1} \approx_{1.1} \ss\left(\mu/(1+\e\delta)^{k\e^{-1}}\right)$.
\end{replemma}
\begin{proof}
We set $\ff^{*1} = \ff(\mu)$ and
for each $t\in[T]$, %
\[ \ff^{*t+1} = \ff^{*t}  + \e \tff^{*t} \,, \]
and the corresponding slacks $\ss^{*t}$ and resistances $\rr^{*t}$.
Let $\hT$ be the first $t\in[T+1]$ such that
$\ss^{\hT} \not\approx_{1+\es} \ss^{*\hT}$.
Obviously $t > 1$ as $\ss^1 \approx_{1+\es/8} \ss(\mu) = \ss^{*1}$.

Now, for all $t\in[T]$ we have
\[ \left\|\sqrt{\rr^t}\left(\tff^{*t} - \tff^t\right)\right\|_\infty \leq \eps \,. \]
Fix some $e\in E$. If $\tf_e^t = \zerov$ for all $t\in [\hT-1]$, then we have
$\sqrt{r_e^{\hT}}\left|\tf_e^{*t}\right| = \sqrt{r_e^t}\left|\tf_e^{*t}\right| \leq \eps$ for all such $t$. 
This means that
\begin{align*}
 \sqrt{r_e^{\hT}} \left|f_e^{*\hT} - f_e^{\hT}\right|
& \leq \sqrt{r_e^{\hT}} \left|f_e^{*\hT} - f_e^{*1}\right|
+ \sqrt{r_e^{\hT}} \left|f_e^{*1} - f_e^{\hT}\right|\\
& = \sqrt{r_e^{\hT}} \left|f_e^{*\hT} - f_e^{*1}\right|
+ \sqrt{r_e^{\hT}} \left|f_e^{*1} - f_e^{1}\right|\\
& \leq \e\sqrt{r_e^{\hT}}\sum\limits_{t=1}^{\hT-1} \left|\tf_e^{*t}\right|
+ \sqrt{r_e^{\hT}} \left|f_e^{*1} - f_e^{1}\right|\\
& \leq \hT\e\eps
+ \sqrt{r_e^{\hT}} \left|f_e^{*1} - f_e^{1}\right|\\
& \leq k \eps + \sqrt{2} \es/8\\
& \leq \es / 2\,,
\end{align*}
as long as $\eps \leq \es / (2k) = O(1/k^4)$.
In the second to last inequality
we used Lemma~\ref{lem:approx_conversion}.

Otherwise, there exists $t\in[\hT-1]$ such that $\tf_e^t \neq \zerov$,
and by definition
$f_e^{\hT} = f_e(\mu) + \e\sum\limits_{t=1}^{\hT-1} \tf_e^t$,
so
\begin{align*}
\sqrt{r_e^{*\hT}}\left|f_e^{*\hT} - f_e^{\hT}\right|
& \leq 
\sqrt{r_e^{*\hT}}\left|f_e^{*1} - f_e(\mu)\right| + 
\e\sum\limits_{t=1}^{\hT-1}\sqrt{r_e^{*\hT}}\left|\tf_e^{*t} - \tf_e^t\right|\\
& \leq 3k^2
\e\sum\limits_{t=1}^{\hT-1}\sqrt{r_e^{*t}}
\left|\tf_e^{*t} - \tf_e^t\right|\\
& \leq 
 3k^2
 \e(1+\es)\sum\limits_{t=1}^{\hT-1}\sqrt{r_e^{t}}\left|\tf_e^{*t} - \tf_e^t\right|\\
& \leq 
3k^2\e(1+\es) T \eps\\
& \leq4k^3\eps\,,
\end{align*}
where we have used 
Lemma~\ref{lem:resistance_stability2}
and 
the fact that $\ss^t \approx_{1+\es} \ss^{*t}$ for all $t\in [\hT-1]$
which also implies that $\sqrt{\rr^t}\approx_{1+\es} \sqrt{\rr^{*t}}$.
Setting $\eps = \frac{\es}{8k^3} = O\left(\frac{1}{k^6}\right)$, this becomes $\leq \es/2$.

Therefore we have proved that
$\left\|\sqrt{\rr^{*\hT}}\left(\ff^{*\hT} - \ff^{\hT}\right)\right\|_\infty \leq \es/2$, and so
$\ss^{*\hT} \approx_{1+\es} \ss^{\hT}$, a contradiction.
Therefore we 
conclude that $\ss^t \approx_{1+\es} \ss^{*t}$ for all $t\in [T+1]$.

Now,
as long as $\e, \es \leq 10^{-5} k^{-3}$, 
we can apply Lemma~\ref{lem:small_residual}, which guarantees
that $\ss^{*T+1} \approx_{1.01} \ss\left(\mu / (1+\e\delta)^{T}\right)$,
and so $\ss^{T+1} \approx_{1.1} \ss\left(\mu / (1+\e\delta)^{T}\right)$.
Therefore we set $\e = \es = 10^{-5} k^{-3}$ and
$\eps = 10^{-6} k^{-6} \leq \frac{\es}{8k^3}$.
\end{proof}

\subsection{Proof of Lemma~\ref{lem:multistep}}
\label{proof_lem_multistep}
\begin{proof}

We will apply Lemma~\ref{lem:approx_central} 
with $\ff^1$ being the flow corresponding to the resistances
$\cL.\rr = \cC.\rr$,
and $T = k\e^{-1}$.
Note that it is important to maintain the invariant $\cL.\rr = \cC.\rr$ throughout
the algorithm so that both data structures
correspond to the same electrical flow problem.
For each $t\in [T]$, for the $t$-th iteration,
Lemma~\ref{lem:approx_central}
requires an estimate $\tff^t$
such that
$\left\|\sqrt{\rr^t}\left(\tff^{*t} - \tff^t\right)\right\|_\infty \leq \eps$,
where
\[ \tff^{*t} = \delta g(\ss^t) - \delta 
(\RR^t)^{-1}\BB \left(\BB^\top (\RR^t)^{-1}\BB\right)^+\BB^\top g(\ss^t) \]
and $\delta = 1/\sqrt{m}$.

We claim that such an estimate can be computed for all $t$ by using $\cL$ and $\cC$.
We apply the following process for each $t\in[T]$:
\begin{itemize}
\item{Let $Z$ be the edge set returned by $\cL.\textsc{Solve}()$.}
\item{Call $\cC.\textsc{Check}(e)$ for each $e\in Z$ to obtain flow values
$\tf_e^t$.}
\item{Compute $\ff^t$ and its slacks $\ss^{t+1}$ and resistances $\rr^{t+1}$
as in Lemma~\ref{lem:approx_central}, i.e.
\[ f_e^{t+1} = \begin{cases}
f_e(\mu) + \e \sum\limits_{i=1}^{t} \tf_e^i & \text{if $\exists i\in[t]:\tf_e^{i} \neq 0$}\\
f_e^1 & \text{otherwise}
\end{cases}\,. \]
This can be computed in $O\left(\left|Z\right|\right)$ by 
adding either $\e\tf_e^i$ or $f_e(\mu) - f_e^1 + \e\tf_e^i$ to $f_e^t$ for each $e\in Z$.
}
\item{Call $\cL.\textsc{Update}(e,\ff^{t+1})$ 
and 
$\cC.\textsc{Update}(e,\ff^{t+1})$ 
for all $e$ in the support of $\tff^t$.
Note that $\cL.\textsc{Update}$ works as long as
\[ r_e^{\max} / \alpha \leq r_e^{t+1} \leq \alpha \cdot r_e^{\min} \,,\]
where $r_e^{\max}$, $r_e^{\min}$ are the maximum and minimum values of
$\cL.r_e$ since the last call to $\cL.\textsc{BatchUpdate}$.
After this, we have $\cL.\rr = \cC.\rr = \rr^{t+1}$.}
\end{itemize}
In the above process, when $\cL.\textsc{Solve}()$ is called
we have $\cL.\rr = \rr^t$ (for $t=1$ this is true because
$\cL.\rr$ are the resistances corresponding to $\ff^1$).
By the $(\alpha,\beta,\eps/2)$-\textsc{Locator} guarantees
in Definition~\ref{def:locator}, with high probability
$Z$ contains all the edges $e$ such that 
$\sqrt{r_e^t} \left|\tf_e^{*t} \right| \geq \eps / 2$.
Now, for each $e\in Z$,
$\cC.\textsc{Check}(e)$ 
returns a flow value $\tf_e^t$ such that:
\begin{itemize}
\item $\sqrt{r_e^t} \left|\tf_e^t - \tf_e^{*t}\right| \leq \eps$
\item if $\sqrt{r_e^t} \left|\tf_e^{*t}\right| < \eps / 2$, then $\tf_e^t = 0$.
\end{itemize}
Therefore, the condition that
\[ \sqrt{\rr^t}\left\|\tff^t - \tff^{*t}\right\|_\infty \leq \eps \] 
is satisfied.
Additionally $\tff^t$ is independent of the randomness of $\cL$, because
(the distribution of) $\tff^t$ would be the same if $\cC.\textsc{Check}$
was run for {\bf all} edges $e$.

It remains to show that the $\textsc{Locator}$ requirement
\[ \rr^{\max} / \alpha \leq \rr^{t+1} \leq \alpha \cdot \rr^{\min} \]
is satisfied.
Consider the minimum value of $t$ for which this is not satisfied.
By Lemma~\ref{lem:approx_central}, we have that
\begin{align}
\ss^{\tau+1} \approx_{1.1} \ss\left(\mu/(1+\e\delta)^{\tau}\right)
\label{eq:slack_1.1_approx}
\end{align}
for any $\tau\in [t]$.

Now let $\hrr$ be the resistances of $\cL$ at any point since the last call to 
$\cL.\textsc{BatchUpdate}$. By the lemma statement and (\ref{eq:slack_1.1_approx}),
we know that $\hrr \approx_{1.1^2} \rr(\hmu)$
for some $\hmu\in[\mu / (1+\e\delta)^t, \mu^0]$.
However, we also know that $\mu^0 \leq \mu \cdot (1+\e\delta)^{(0.5\alpha^{1/4} - k)\e^{-1}}$
and so 
\[ \frac{\hmu}{\mu/(1+\e\delta)^{t}} 
\leq (1 + \e\delta)^{0.5\alpha^{1/4}\e^{-1}}
\leq (1 + \delta)^{0.5\alpha^{1/4}}
\,, \]
so by Lemma~\ref{lem:resistance_stability2} we have
\[ 
\ss\left(\mu/(1+\e\delta)^t\right)
\approx_{0.75\alpha^{1/2}}
\ss\left(\hmu\right)\,.
\]
As 
$\ss^{t+1} \approx_{1.1} \ss\left(\mu/(1+\e\delta)^{t}\right)$,
we have that
$\ss^{t+1} \approx_{0.825\alpha^{1/2}} \ss(\hmu)$,
and so 
\[ \rr^{t+1} \approx_{0.825\alpha} \rr(\hmu) \approx_{1.1^2} \hrr \,.\]
This means that $\rr^{t+1} \approx_{\alpha} \hrr$ and is a contradiction.

We conclude that the requirements of $\cL$ are met for all $t$,
and as a result Lemma~\ref{lem:approx_central}
shows that 
$\ss^{T+1} \approx_{1.1} \ss\left(\mu / (1+\e\delta)^{k\e^{-1}}\right)$.
By Lemma~\ref{lem:recenter}, we can
now obtain $\ff\left(\mu / (1+\e\delta)^{k\e^{-1}}\right)$.
Finally, we return $\cL.\rr$ and $\cC$ to their original states.

\paragraph{Success probability.}
We note that all the outputs of $\cC$ are independent of the randomness
of $\cL$, and $\cL$ is only updated based on these outputs. 
As each operation of $\cL$ succeeds with high probability,
the whole process succeeds with high probability as well.

\paragraph{Runtime.}
The recentering operation in Lemma~\ref{lem:recenter} takes $\tO{m}$.
Additionally, we call $\cL.\textsc{Solve}$ 
$k\e^{-1} = O(k^4)$ times
and, as $|Z| = O(1/\eps^2)$, the total
number of times $\cL.\textsc{Update}$,
$\cC.\textsc{Update}$, and
$\cC.\textsc{Check}$
are called is 
$O(k\e^{-1} \eps^{-2}) = O(k^{16})$.

\end{proof}

\subsection{Proof of Lemma~\ref{lem:mincostflow}}
\label{proof_lem_mincostflow}
\begin{proof}

Let $\delta = 1/\sqrt{m}$.
Over a number of $T=\tO{m^{1/2}/k}$ iterations, we will repeatedly apply $\textsc{MultiStep}$ 
(Lemma~\ref{lem:multistep}). We will also replace the oracle from 
Definition~\ref{def:perfect_checker} by the $\textsc{Checker}$ data structure
in Section~\ref{sec:checker}.

\paragraph{Initialization.}
We first initialize the $\textsc{Locator}$ with error $\eps / 2$, by calling
$\cL.\textsc{Initialize}(\ff)$.
Let $\ss^t$ be the slacks $\cL.\ss$ 
before the $t$-th iteration and $\rr^t$ the corresponding resistances,
and $\ss^{0t}$ be the slacks $\cL.\rr^0$ before the $t$-th iteration
and $\rr^{0t}$ the corresponding resistances,
for $t\in[T]$.
Also, we let $\mu_t = \mu / \left(1 + \e\delta\right)^{(t-1)k\e^{-1}}$.
We will maintain the invariant that
$\ss^t \approx_{1+\es/8} \ss\left(\mu_t\right)$, which is a requirement
in order to apply Lemma~\ref{lem:multistep}.

As in~\cite{gao2021fully}, we will also need to maintain $O(k^4)$
$\textsc{Checker}$s $\cC^i$ for $i\in[O(k^4)]$,
so we call $\cC^i.\textsc{Initialize}(\ff, \eps, \bc)$ for each one of these.
Note that in general $\bc \neq \beta$, as 
the vertex sparsifiers $\cL$ and $\cC^i$ will not be on the same vertex set.
As in Lemma~\ref{lem:multistep}, we will maintain the invariant that
$\cL.\rr = \cC^i.\rr$ for all $i$.

\paragraph{Resistance updates.}
Assuming that all the requirements of Lemma~\ref{lem:multistep} (\textsc{MultiStep})
are satisfied at the $t$-th iteration, that lemma computes a flow 
$\off = \ff\left(\mu_{t+1}\right)$ with slacks $\oss$.
In order to guarantee that 
$\ss^{t+1} \approx_{1+\es/8} \ss\left(\mu_{t+1}\right)$,
we let $Z$ be the set 
of edges such that either
\[ s_e^{+,t} \not\approx_{1+\es/8} \os_e^+ \text{ or }
 s_e^{-,t} \not\approx_{1+\es/8} \os_e^- \]
and then call
$\cL.\textsc{Update}(e,\off)$ for all $e\in Z$.
This guarantees that 
$s_e^{+,t+1} = \os_e^+$
and
$s_e^{-,t+1} = \os_e^-$
for all $e\in Z$
and so 
$\ss^{t+1}\approx_{1+\es/8} \oss = \ss\left(\mu_{t+1}\right)$.
We also apply the same updates to the $\cC^i$'s 
using $\cC^i.\textsc{Update}$, in order to ensure that they have
the same resistances with $\cL$.

\paragraph{Batched resistance updates.}
The number of times $\cL.\textsc{Update}$ is called can be quite large
because of multiple edges on which error slowly accumulates. This is because in general
$\Omega(m)$ resistances will be updated
throughout the algorithm. As \textsc{Locator}.\textsc{Update} is only slightly sublinear,
this would lead to an $\Omega(m^{3/2})$-time algorithm.
For this reason, as in~\cite{gao2021fully}, we occasionally
(every $\hT$ iterations for some $\hT\geq 1$ to be defined later)
perform batched updates by calling $\cL.\textsc{BatchUpdate}(Z,\off)$,
where $Z$
is the set
of edges such that either
\[ s_e^{+,t} \not\approx_{1+\es/16} \os_e^+
\text{ or } s_e^{-,t} \not\approx_{1+\es/16} \os_e^- \,. \]
This again guarantees that 
$s_e^{+,t+1} = \os_e^+$
and $s_e^{-,t+1} = \os_e^-$
for all $e\in Z$
and so $\ss^{t+1}\approx_{1+\es/16} \oss = \ss\left(\mu_{t+1}\right)$.
Note that after updating 
$\cL.\ss$ and $\cL.\rr$, this operation also sets
$\cL.\rr^0 = \cL.\rr$.
We perform the same resistance updates to the $\cC^i$'s in the regular
(i.e. not batched) way, using $\cC^i.\textsc{Update}$.

\paragraph{\textsc{Locator} requirements.}
What is left is to ensure that the requirements of Lemma~\ref{lem:multistep}
are satisfied at the $t$-th iteration, as well as that 
the requirements of $\cL$.\textsc{Update}
and $\cL$.\textsc{BatchUpdate} 
from Definition~\ref{def:locator}
are satisfied. 

The requirements are as follows:
\begin{enumerate}
\item{Lemma~\ref{lem:multistep}: $\ss^{0t} \approx_{1+\es/8} \ss\left(\mu^0\right)$
for some $\mu^0 \leq 
\mu_t\cdot(1+\e\delta)^{(0.5\alpha^{1/4} - k)\e^{-1}}$.

Note that
$\cL.\ss^0$ is updated every time $\cL.\textsc{BatchUpdate}$ is called,
and after the call we have
$\cL.\ss^0 = \cL.\ss \approx_{1+\es/16} \ss(\mu^0)$ for some $\mu^0 > 0$.
To ensure that it is called often enough,
we call
$\cL.\textsc{BatchUpdate}(\emptyset)$ every $(0.5\alpha^{1/4} / k - 1)\e^{-1}$ iterations.
Because of this, we have
$\mu^0
\leq \mu_t \cdot \left(1+\e\delta\right)^{(0.5\alpha^{1/4}/k - 1)\e^{-1} \cdot k}
= \mu_t \cdot \left(1+\e\delta\right)^{(0.5\alpha^{1/4} - k)\e^{-1}}
$. Additionally, for any resistances $\hrr$ that $\cL$ had at any point since the last
call to $\cL.\textsc{BatchUpdate}$, it is immediate that 
\[ \hrr \approx_{(1+\es/8)^2} \rr(\hmu)\]
for some $\hmu\in[\mu_t, \mu^0]$, as this is exactly the invariant that our calls to
$\cL.\textsc{Update}$ maintain. Therefore, $\hrr \approx_{1.1^2} \rr(\hmu)$.
}
\item{$\cL$.\textsc{Update}: $r_e^{\max} / \alpha \leq r_e^{t+1} \leq \alpha \cdot r_e^{\min}$,
where $r_e^{\min}, r_e^{\max}$ are the minimum and maximum values that
$\cL.r_e$ has had since the
last call to $\cL.\textsc{BatchUpdate}$.

Let $\hrr$ be any value of $\cL.\rr$ since the last call to $\cL.\textsc{BatchUpdate}$.
Because of the invariant maintained by resistance updates (including inside 
$\textsc{MultiStep}$), we have that
$\hrr$ are $(\hmu,1.1)$-central resistances for some $\hmu$ such that
\[ \mu_{t+1} \leq \hmu
\leq \mu_{t+1} \cdot \left(1+\e\delta\right)^{(0.5\alpha^{1/4} - k)\e^{-1}}
\leq \mu_{t+1} \cdot (1+\delta)^{0.5\alpha^{1/4}} \,.\]
As in the previous item, we have that 
and $\ss^{0t}$ are $(\mu^0, 1+\es/8)$-central.
By Lemma~\ref{lem:resistance_stability2} this implies
\[ \ss(\mu_{t+1}) 
\approx_{0.75\alpha^{1/2}}
\ss(\hmu)\,,\]
and since $\hrr \approx_{1.1^2} \rr(\hmu)$, we conclude that 
$\rr(\mu_{t+1}) \approx_{\alpha} \hrr$.
}
\item{$\cL$.\textsc{BatchUpdate}: Between any two successive calls to $\cL.\textsc{Initialize}$,
the number of edges updated
(number of calls to $\cL.\textsc{Update}$ plus
the sum of $|Z|$ for all calls to $\cL.\textsc{BatchUpdate}$)
is $O(\beta m)$.

We make sure that this is satisfied by calling $\cL.\textsc{Initialize}(\off)$
every $\es \sqrt{\beta m} / k$ iterations, where
$\off = \ff(\mu_t)$, at the beginning of the $t$-th iteration.

Consider any two successive initializations at iterations
$t^{init}$ and $t^{end}$ respectively.
Let $\ell$ be the number of edges $e$ that have potentially
been updated, i.e. such that either
\[ s_e(\mu_{t^{init}})^{+} \not\approx_{1+\es/16}s_e(\mu_{i})^{+} \text{ or }
s_e(\mu_{t^{init}})^- \not\approx_{1+\es/16}s_e(\mu_{i})^- \]
for some $i\in\left[t^{init},t^{end}\right]$.
First, note that this implies that
\[ \sqrt{r_e(\mu_{t^{init}})}\left|f_e(\mu_{t^{init}}) - f_e(\mu_i)\right| > 
\frac{\es / 16}{1+\es/16} > \es / 17 \,. \]
Now, by the fact that $t^{end} - t^{init} \leq \es \sqrt{\beta m} / k$, we have that
\[ \mu_{t^{init}} \leq 
\mu_i \cdot (1+\e\delta)^{k\e^{-1}\cdot \es\sqrt{\beta m} / k}
\leq \mu_i \cdot (1+\delta)^{\es\sqrt{\beta m}} \,.\]
By applying Lemma~\ref{lem:resistance_stability1}
with $k=\es\sqrt{\beta m}$
and $\gamma = \es/17$, we get that 
$\ell \leq O(\beta m)$.
Therefore the statement follows.
}
\end{enumerate}

We will also need to show how to implement the $\textsc{PerfectChecker}$
used in $\textsc{MultiStep}$ using the $\cC^i$'s, as well as how to satisfy
all $\textsc{Checker}$ requirements.

\paragraph{\textsc{Checker} requirements.}

\begin{enumerate}
\item{Implementing $\eps$-$\textsc{PerfectChecker}$ inside $\textsc{MultiStep}$. 

We follow almost the same procedure as in~\cite{gao2021fully}, other than the fact
that we also need to provide some additional information to $\cC^i.\textsc{Solve}$.
Each call to $\textsc{PerfectChecker}.\textsc{Update}$ translates to calls to
$\cC^i.\textsc{TemporaryUpdate}$ for all $i$. In addition, the $i$-th 
batch of calls to $\textsc{PerfectChecker}.\textsc{Check}$ inside $\textsc{MultiStep}$
(i.e. that corresponding to a single set of edges returned by $\cL.\textsc{Solve}$)
is only run on $\cC^i$ using $\cC^i.\textsc{Check}$. As each call to
$\cC^i.\textsc{Check}$ is independent of previous calls to it, we can get correct
outputs with high probability even when we run it multiple times (one for each edge
returned by $\cL.\textsc{Solve}$).

In order to guarantee that we have a vector $\vpi_{old}^i$ as required by $\cC^i.\textsc{Check}$,
once every $k^4$ calls to $\textsc{MultiStep}$ (i.e. if $t$ is a multiple of $k^4$) we compute
\[ \vpi_{old}^{i} = \vpi^{C^{i,t}}\left(\BB^\top g(\ss(\mu_t))\right) \]
for all $i\in[O(k^4)]$, where
$C^{i,t}$ is the vertex set of the Schur complement data 
structure stored internally by the $\cC^i$ right before the $t$-th call to $\textsc{MultiStep}$.
This can be computed in $\tO{m}$ for each $i$ as in $\textsc{DemandProjector}.\textsc{Initialize}$
in Lemma~\ref{lem:ds}.
Now, the total number of $\cC^i.\textsc{TemporaryUpdate}$s that have not been rolled back
is $O(k^{16})$, and the total number of $\cC^i.\textsc{Update}$s over $k^4$ calls to 
$\textsc{MultiStep}$ by Lemma~\ref{lem:resistance_stability1}
is $O(k^{10} / \es^2) = O(k^{16})$. This means that the total number
of terminal insertions to $C^{i,t}$ as well as resistance changes is $O(k^{16})$.
By Lemma~\ref{lem:old_projection_approximate}, if $C^i$ is the current state of the vertex set
of the Schur complement of $\cC^i$ and $\ss$ are the current slacks,
\[ \mathcal{E}_{\rr} \left(\vpi_{old}^i - 
\vpi^{C^i}(g(\ss))\right) \leq \tO{\alpha' \bc^{-4}} \cdot k^{32}\,,
\]
where $\alpha'$ is the largest possible multiplicative change of some $\cC^i.r_e$ since
the computation of $\vpi_{old}^i$.
Furthermore, note that $\vpi_{old}$ is supported on $C^i$. This is because
$C^{i,t} \subseteq C^i$ and $C^{i,t}$ does not contain any temporary 
terminals.

Now, as we have already proved in Lemma~\ref{lem:multistep}, at any
point inside the $t$-th call to $\textsc{MultiStep}$, $\cC^i.\rr$ are 
$(\hmu,1.1)$-central resistances for some 
$\hmu\in[\mu_{t+1}, \mu_t]$. 

Fix $\hmu\in[\mu_{t+1},\mu_t], \hmu'\in[\mu_{t'+1},\mu_{t'}]$, as well as the corresponding resistances
of $\cC^i$, $\hrr$, $\hrr'$, where $t' \geq t$.
Now, note that since we are computing $\vpi_{old}^i$ every $k^4$ calls to $\textsc{MultiStep}$,
we have that 
\[ \frac{\hmu}{\hmu'} \leq \frac{\mu_t}{\mu_{t'+1}} 
\leq (1+\e\delta)^{k\e^{-1}\cdot (t'-t + 1)} 
\leq (1+\delta)^{O(k^5)}\,,\]
so Lemma~\ref{lem:resistance_stability2} implies that
\[ \ss(\hmu) \approx_{O(k^{10})} \ss(\hmu')\,. \]
As $\hrr \approx_{1.1^2} \rr(\hmu)$
and $\hrr' \approx_{1.1^2} \rr(\hmu')$,
we get that $\hrr \approx_{O(k^{20})} \hrr'$. Therefore,
$\alpha' \leq O(k^{20})$.
Setting $\bc \geq \tOm{\alpha'^{1/4} k^{8} \eps^{-1/2} m^{-1/4}} = \tOm{k^{16} / m^{1/4}}$, we get that
\[ \tO{\alpha'\bc^{-4}}\cdot k^{32} \leq \eps^2 m / 4\,, \]
as required
by $\cC^i.\textsc{Check}$.

Finally, at the end of $\textsc{MultiStep}$ we bring all
$\cC^i$ to their original state before calling $\textsc{MultiStep}$,
by calling $\cC^i.\textsc{Rollback}$. We also update
all the resistances of $\cL$ to their original state by calling $\cL.\textsc{Update}$.
}
\item{Between any two successive calls to 
$\cC^i$.$\textsc{Initialize}$, the total number of edges updated
at any point (via $\cC^i.\textsc{Update}$ or
$\cC^i.\textsc{TemporaryUpdate}$ that have not been rolled back)
is $O(\bc m)$.

For $\textsc{Update}$, we can apply a similar analysis as in the
$\textsc{Locator}$ case to show that if we call 
$\cC^i$.$\textsc{Initialize}$ every $\es\sqrt{\bc m} /k$ iterations,
the total number of updates never exceeds $O(\bc m)$.
For $\textsc{TemporaryUpdate}$, note that at any time there 
are at most $O(k^{16})$ of these that have not been rolled back
(this is inside \textsc{MultiStep}). Therefore, as long as 
$k^{16} \leq O(\bc m) \Leftrightarrow \bc \geq \Omega(k^{16} / m)$,
the requirement is met.
}
\end{enumerate}

\paragraph{Output Guarantee.}

After the application of Lemma~\ref{lem:multistep} at the last
iteration,
we will have $\off = \ff(\mu_{T+1})$, where
$\mu_{T+1} = 
\mu / (1+\e\delta)^{\tO{m^{1/2} \e^{-1}}}
= \mu / \mathrm{poly}(m)
\leq m^{-10}$.

\paragraph{Success probability.}
Note that all operations of $\textsc{Locator}$ and
$\textsc{Checker}$ work with high probability. Regarding
the interaction of the randomness of 
these data structures and the fact that they work against
oblivious adversaries, we defer to~\cite{gao2021fully}, where
there is a detailed discussion of why this works. 

In short,
note that outside of $\textsc{MultiStep}$,
all updates are deterministic (as they only depend on the central path),
and in $\textsc{MultiStep}$
the updates to $\textsc{Locator}$ and $\textsc{Checker}$
only depend on outputs of a $\textsc{Checker}$. As each time
we are getting the output from a different $\textsc{Checker}$,
the inputs to $\cC^i.\textsc{Check}$ are independent
of the randomness of $\cC^i$, and thus succeed with high probability. Finally,
note that the output of $\textsc{Locator}$ is only passed onto 
$\cC^i.\textsc{Check}$, whose output is then independent of the inputs
received by $\textsc{Locator}$. Therefore, $\textsc{Locator}$ does not
``leak'' any randomness.

Our only deviation from~\cite{gao2021fully} in $\textsc{Checker}$ has to
do with the extra input of $\textsc{Checker}.\textsc{Check}$ ($\vpi_{old}^i$).
However, note that this is computed outside of $\textsc{MultiStep}$,
and as such the only randomness it depends on is the
$\bc$-congestion reduction subset $C^i$ generated when calling 
$\cC^i.\textsc{Initialize}$. 
As such, it only depends on the internal randomness of $\cC^i$. As we 
mentioned, the output of $\cC^i$.$\textsc{Check}$ is never fed back to
$\cC^i$, and thus the operation works with high probability.

\paragraph{Runtime (except $\textsc{Checker}$).}
Each call to $\textsc{MultiStep}$ (Lemma~\ref{lem:multistep})
takes time $\tO{m}$ plus $O(k^{16})$ calls to 
$\cL.\textsc{Update}$ and $O(k^4)$ calls to $\cL.\textsc{Solve}$.
As the total number of iterations is $\tO{m^{1/2} /k}$,
the total time because of calls to $\textsc{MultiStep}$
is $\tO{m^{3/2} / k}$, plus $\tO{m^{1/2} k^{15}}$ 
calls to 
$\cL.\textsc{Update}$ and $\tO{m^{1/2} k^3}$ calls to $\cL.\textsc{Solve}$.

Now, the total number of calls to $\cL.\textsc{Initialize}$ is
$\tO{\frac{m^{1/2} / k}{\es\sqrt{\beta m}/k}} = \tO{k^3 \beta^{-1/2}}$.

The total number of calls to $\cL.\textsc{BatchUpdate}(\emptyset)$
is $\tO{\frac{m^{1/2} / k}{0.5\alpha^{1/4} / k}} = \tO{m^{1/2} \alpha^{-1/4}}$
and the total number of calls to $\cL.\textsc{BatchUpdate}(Z,\ff)$
is $\tO{\frac{m^{1/2}}{k \hT}}$.
Regarding the size of $Z$, let us focus on the calls
to $\cL.\textsc{BatchUpdate}(Z,\ff)$
between two successive calls
to $\cL.\textsc{Initialize}$.
We already showed that the sum of $|Z|$ over all
calls during this interval is $O(\beta m)$. Therefore
the total sum of $|Z|$ over all iterations of the algorithm is
$\tO{m k^3 \beta^{1/2}}$.

In order to bound the number of calls to $\cL.\textsc{Update}$,
we concentrate on those between two successive calls to 
$\cL.\textsc{BatchUpdate}(Z,\ff)$ in iterations $t^{old}$ and $t^{new} > t^{old}$.
After the call to $\cL.\textsc{BatchUpdate}(Z,\ff)$ 
in iteration $t^{old}$ we have
$\cL.\ss \approx_{1+\es/16} \ss(\mu_{t^{old}})$.
Fix $\mu\in\left[\mu(t^{new}),\mu(t^{old})\right]$ and let $\ell$ be the 
number of $e\in E$ such that
$s_e(\mu)\not\approx_{1+\es/8} s_e^{t^{old}}$.
As $s_e^{t^{old}} \approx_{1+\es/16} s_e(\mu_{t^{old}})$
by the guarantees of $\cL.\textsc{BatchUpdate}$,
this implies that
$s_e(\mu)\not\approx_{1+\es/16} s_e(\mu_{t^{old}})$,
and so 
\[ \sqrt{r_e(\mu^{t^{old}})}\left|
	f_e(\mu_{t^{old}}) -
	f_e(\mu) \right| \geq \frac{\es / 16}{1+\es/16} > \es/17 \,.\]
As 
\[ \mu_{t^{old}} \leq \mu\cdot 
(1+\e\delta)^{k\e^{-1}\cdot \hT}
\leq \mu\cdot 
(1+\delta)^{k\cdot\hT}
\,,\]
by applying Lemma~\ref{lem:resistance_stability1}
with $k = (1+\delta)^{k\cdot\hT}$
and $\gamma = \es / 17$, we get that 
$\ell \leq O(k^2 \hT^2 \es^{-2})$.
As there are 
$\tO{\frac{m^{1/2}}{k\hT}}$ calls to
$\cL.\textsc{BatchUpdate}(Z,\ff)$,
the total number of calls to $\cL.\textsc{Update}$
is 
\[  \tO{ m^{1/2} \hT \es^{-2}}
  \tO{ m^{1/2} k^6 \hT}
\,.\]

We conclude that we have runtime $\tO{m^{3/2}/k}$, plus
\begin{itemize}
\item $\tO{k^3 \beta^{-1/2}}$ calls to $\cL.\textsc{Initialize}$,
\item $\tO{m^{1/2} k^3}$ calls to $\cL.\textsc{Solve}$,
\item $\tO{m^{1/2} \left(k^6 \hT + k^{15}\right)}$ calls to $\cL.\textsc{Update}$,
\item $\tO{m^{1/2} \alpha^{-1/4}}$ calls to $\cL.\textsc{BatchUpdate}(\emptyset)$, and
\item $\tO{m^{1/2} k^{-1} \hT^{-1}}$ calls to $\cL.\textsc{BatchUpdate}(Z,\ff)$.
\end{itemize}

\paragraph{Runtime of $\textsc{Checker}$.} We look at each operation separately.
We begin with the runtime of $\textsc{Checker}.\textsc{Check}$. We have 
\[ \tO{\underbrace{m^{1/2} / k}_{\text{\# calls to \textsc{MultiStep}}}
\cdot
\underbrace{k^{16}}_{\text{\# calls in each $\textsc{MultiStep}$}}
\cdot
\underbrace{(\bc m + (k^{16} \bc^{-2} \eps^{-2})^2))\eps^{-2}}_{\text{runtime per call}}
} \,.\] 
To make the first term $\tO{m^{3/2} / k}$, we set $\bc = k^{-28}$. Note that this satisfies
our previous requirements that $\bc \geq \tOm{k^{16} / m}$ and $\bc \geq \tOm{k^{16}/m^{1/4}}$
as long as $k \leq m^{1/176}$.
Therefore the total runtime because of this operation is $\tO{m^{3/2} / k + m^{1/2} k^{195}}$.

For $\textsc{Checker}.\textsc{Initialize}$, we have
\[ \tO{\underbrace{k^4}_{\# \textsc{Checker}s} \cdot \underbrace{k^3 \bc^{-1/2}}_{\# \text{times initialized}} 
\cdot \underbrace{m \bc^{-4} \eps^{-4}}_{\text{runtime per init}}} =
\tO{m k^{157}}\,. \]

For $\textsc{Checker}.\textsc{Update}$, similarly with the analysis of $\textsc{Locator}$ but
noting that there are no batched updates, we have
\[ \tO{
\underbrace{k^4}_{\text{\#\textsc{Checker}s}}
\cdot \underbrace{m \es^{-2}}_{\text{\#calls per \textsc{Checker}}}
\cdot \underbrace{\bc^{-2} \eps^{-2}}_{\text{runtime per call}}
} = \tO{mk^{78}}\,. \]

For $\textsc{Checker}.\textsc{TemporaryUpdate}$, we have
\[ 
\tO{
\underbrace{k^4}_{\text{\#\textsc{Checker}s}}
\cdot
\underbrace{m^{1/2} / k}_{\text{\# calls to \textsc{MultiStep}}}
\cdot
\underbrace{k^{16}}_{\text{\# calls per \textsc{MultiStep}}}
\cdot
\underbrace{(k^{16} \bc^{-2} \eps^{-2})^2}_{\text{runtime per call}}
}
= \tO{m^{1/2} k^{187}}\,.
\]
Finally, note that, by definition, computing the vectors $\vpi_{old}^i$ takes $\tO{m^{3/2}/k}$,
as we do it once per $k^{4}$ calls to $\textsc{MultiStep}$ and it takes $\tO{mk^4}$.

As long as $k \leq m^{1/316}$, the total runtime because of $\textsc{Checker}$ 
is $\tO{m^{3/2}/k}$.
\end{proof}

\section{Deferred Proofs from Section~\ref{sec:locator}}

\subsection{Proof of Lemma~\ref{lem:F_system1}}
We first provide a helper lemma for upper bounding escape probabilities in terms of the underlying graph's resistances.

\begin{lemma}[Bounding escape probabilities]
\label{lem:escape_prob}
Let a graph with resistances $\rr$, and consider a random walk which at each step moves from the current vertex $u$ to an adjacent vertex $v$ sampled with probability proportional to $1/{r_{uv}}$.
Let $p_u^{\{u,t\}}(s)$ represent the probability that a walk starting at $s$ hits $u$ before $t$. Then
\[
p_u^{\{u,t\}}(s) = \frac{R_{eff}(s,t)}{R_{eff}(u,t)} \cdot p_s^{\{s,t\}}(u) \leq \frac{R_{eff}(s,t)}{R_{eff}(u,t)} \leq \frac{r_{st}}{R_{eff}(u,t)}\,.
\]
\end{lemma}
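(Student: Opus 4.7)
The plan is to translate each hitting probability into an electrical potential, and then use the symmetry of the Green's function to relate the two quantities.

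First, I would recall the standard fact that, viewed as a function of the starting vertex, the hitting probability $p_u^{\{u,t\}}(\cdot)$ is the unique function that takes value $1$ at $u$, value $0$ at $t$, and is harmonic everywhere else (with respect to the conductances $1/r_e$). Thus, letting $\vphi$ denote the potential vector obtained by injecting one unit of current at $u$ and extracting it at $t$, normalized so that $\phi_t = 0$, we have $\phi_u = R_{eff}(u,t)$, and by the uniqueness of the harmonic extension,
\[ p_u^{\{u,t\}}(s) = \frac{\phi_s}{R_{eff}(u,t)}\,. \]
Similarly, letting $\vphi'$ be the potentials for injecting one unit at $s$ and extracting at $t$ (with $\phi'_t = 0$), we have $p_s^{\{s,t\}}(u) = \phi'_u / R_{eff}(s,t)$.

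Next, I would invoke the symmetry of the Green's function: with $F = V \setminus \{t\}$, the potentials restricted to $F$ can be written as $\vphi_F = \LL_{FF}^{-1} \onev_u$ and $\vphi'_F = \LL_{FF}^{-1} \onev_s$, so $\phi_s = \onev_s^\top \LL_{FF}^{-1} \onev_u = \onev_u^\top \LL_{FF}^{-1} \onev_s = \phi'_u$ by symmetry of $\LL_{FF}^{-1}$. Substituting this equality into the two expressions above yields
\[ p_u^{\{u,t\}}(s) = \frac{\phi'_u}{R_{eff}(u,t)} = \frac{R_{eff}(s,t)}{R_{eff}(u,t)} \cdot \frac{\phi'_u}{R_{eff}(s,t)} = \frac{R_{eff}(s,t)}{R_{eff}(u,t)} \cdot p_s^{\{s,t\}}(u)\,, \]
giving the claimed identity.

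Finally, the two inequalities are immediate: $p_s^{\{s,t\}}(u) \leq 1$ since it is a probability, and $R_{eff}(s,t) \leq r_{st}$ by Rayleigh monotonicity (the single edge $st$ is a candidate network containing $G$'s shortcircuit). The main conceptual point is the symmetry of the Green's function; all remaining steps are straightforward consequences of standard random-walk/electrical-network identities, so I do not expect any substantial obstacle.
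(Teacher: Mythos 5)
Your proof is correct and follows essentially the same route as the paper's: both identify the hitting probability with the harmonic extension of the boundary data, interpret it as a normalized electrical potential, and then invoke the symmetry of the (grounded) Laplacian inverse — you write it as $\onev_s^\top \LL_{FF}^{-1}\onev_u = \onev_u^\top \LL_{FF}^{-1}\onev_s$ while the paper writes the equivalent $(\onev_s-\onev_t)^\top\LL^+(\onev_u-\onev_t)$ — before finishing with $p_s^{\{s,t\}}(u)\leq 1$ and $R_{eff}(s,t)\leq r_{st}$. No gaps.
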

\begin{proof}
Using standard arguments we can prove that if $\LL$ is the Laplacian associated with the underlying graph, then
\[
p_u^{\{u,t\}}(s) = \frac{(\onev_s - \onev_t)^\top \LL^+ (\onev_u - \onev_t)}{R_{eff}(u,t)}\,.
\]
This immediately yields the claim as we can further write it as
\[
p_u^{\{u,t\}}(s) = \frac{R_{eff}(s,t)}{R_{eff}(u,t)} \cdot \frac{(\onev_u - \onev_t)^\top \LL^+ (\onev_s - \onev_t)}{R_{eff}(s,t)} 
= \frac{R_{eff}(s,t)}{R_{eff}(u,t)} \cdot p_s^{\{s,t\}}(u) \leq \frac{R_{eff}(s,t)}{R_{eff}(u,t)}\,,
\]
where we crucially used the symmetry of $\LL$. The final inequality is due to the fact that $R_{eff}(s,t) \leq r_{st}$.

Now let us prove the claimed identity for escape probabilities. Let $\vpsi$ be the vector defined by $\psi_i = p_u^{\{u,t\}}(i)$ for all $i \in V$, which clearly satisfies $\psi_u = 1$ and $\psi_t = 0$. Furthermore, for all $i \notin\{u,t\}$ we have 
\[
\psi_i = \sum_{j \sim i} \frac{  r_{ij}^{-1}  }{ \sum_{k\sim i}r_{ik}^{-1}  } \psi_j\,,
\]
which can be written in short as
\[
(\LL \vpsi)_i = 0\,\quad \textnormal{for all } i\notin\{s,t\}\,.
\]
Now we solve the corresponding linear system. We interpret $\vpsi$ as electrical potentials corresponding to routing $1/R_{eff}(u,t)$ units of electrical flow from $u$ to $t$. Indeed, by Ohm's law, this corresponds to a potential difference $\psi_u  - \psi_t = 1$. Furthermore, this shows that 
\[
\psi_s - \psi_t = (\onev_s - \onev_t)^\top  \LL^+ (\onev_u - \onev_t) \cdot \frac{1}{R_{eff}(u,t)}\,,
\]
which concludes the proof.

\end{proof}

Now we are ready to prove the main statement.
\begin{proof}[Proof of Lemma~\ref{lem:F_system1}]\label{proof_lem_F_system1}
Note that the demand can be decomposed as
$\dd - \vpi^C(\dd) = \dd^1 - \dd^2$, 
where $\dd^1 = \frac{\onev_s}{\sqrt{r_{st}}} - \vpi^C(\frac{\onev_s}{\sqrt{r_{st}}})$
and
$\dd^2 = \frac{\onev_t}{\sqrt{r_{st}}} - \vpi^C(\frac{\onev_t}{\sqrt{r_{st}}})$.
Now let $p^1$ be the probability distribution of $s-C$ random walks
obtained via a random walk from $s$ with transition probabilities proportional to inverse resistances. 
Similarly, let $p^2$ be the probability distribution of $t-C$ random walks obtained by running the same process starting from $t$.

Now, it is well known that an electrical flow is the sum of these random walks, i.e.
\[ \RR^{-1} \BB \LL^+ \dd^1 = \frac{1}{\sqrt{r_{st}}} \cdot \mathbb{E}_{P\sim p^1} \left[net(P)\right] \]
and similarly for $\dd^2$
\[ \RR^{-1} \BB \LL^+ \dd^2 = \frac{1}{\sqrt{r_{st}}} \cdot \mathbb{E}_{P\sim p^2} \left[net(P)\right] \,, \]
where $net(P)\in\mathbb{R}^m$ is a flow vector whose $e$-th entry is the net number of times the edge $e=(u,v)$ is used by $P$.
Therefore we can write:
\begin{align*}
 \left|\frac{\phi_u - \phi_v}{\sqrt{r_{uv}}}\right| 
 & = \sqrt{r_{uv}} \left|\RR^{-1} \BB \LL^+ \dd\right|_{uv}\\
& = \sqrt{\frac{r_{uv}}{r_{st}}} \left|
 \mathbb{E}_{P^1\sim p^1}\left[net_e(P^1)\right]  - \mathbb{E}_{P^2\sim p^2}\left[net_e(P^2)\right]
\right|\,.
\end{align*}
Let us also subdivide $e$ by inserting an additional vertex $w$ in the middle (i.e. $r_{uw} = r_{wv} = r_{uv} / 2$). This has no effect in the random walks,
but will be slightly more convenient in terms of notation.
The first expectation term can be expressed as
\begin{align*}
\mathbb{E}_{P^1\sim p^1}\left[net_e(P^1)\right]
& = \Pr_{P^1\sim p^1}\left[P^1 \text{ visits $t$ before $C\cup\{w\}$}\right] \cdot \mathbb{E}_{P^1\sim p^1}\left[net_e(P^1)\ |\ P^1 \text{ visits $t$ before $C\cup\{w\}$}\right]\\
& + \Pr_{P^1\sim p^1}\left[P^1 \text{ visits $w$ before $C\cup\{t\}$}\right] \cdot \mathbb{E}_{P^1\sim p^1}\left[net_e(P^1)\ |\ P^1 \text{ visits $t$ before $C\cup\{t\}$}\right]\,.
\end{align*}
Now, note that 
\[ \mathbb{E}_{P^1\sim p^1}\left[net_e(P^1)\ |\ P^1 \text{ visits $t$ before $C\cup\{w\}$}\right] = \mathbb{E}_{P^2\sim p^2}\left[net_e(P^2)\right] \,.\]
Additionally, 
\begin{align*}
& \Pr_{P^1\sim p^1}\left[P^1 \text{ visits $w$ before $C\cup\{t\}$}\right] \\
& = \Pr_{P^1\sim p^1}\left[P^1 \text{ visits $w$ before $C$}\right] \cdot
\Pr_{P^1\sim p^1}\left[P^1 \text{ visits $w$ before $t$}\ |\ P^1 \text{ visits $w$ before $C$} \right] 
\end{align*}
The first term of the product is $p_w^{C\cup\{w\}}(s)$. For the second term, we define a new graph $\hG$
by deleting $C$, and denote the hitting probabilities in $\hG$ by $\hp$. Then, the second term is equal to
$\hp_w^{\{t,w\}}(s)$.

We have concluded that 
\begin{align*}
\mathbb{E}_{P^1\sim p^1}\left[net_e(P^1)\right]
& \leq 
\mathbb{E}_{P^2\sim p^2}\left[net_e(P^2)\right] + p_w^{C\cup\{w\}}(s) \cdot \hp_w^{\{t,w\}}(s)\,.
\end{align*}
Combining this with the symmetric argument for $p^2$ shows that 
\begin{align*}
\left|\mathbb{E}_{P^1\sim p^1}\left[net_e(P^1)\right]
- \mathbb{E}_{P^2\sim p^2}\left[net_e(P^2)\right]\right|
\leq p_w^{C\cup\{w\}}(s) \cdot \hp_w^{\{t,w\}}(s) + p_w^{C\cup\{w\}}(t) \cdot \hp_w^{\{s,w\}}(t)\,.
\end{align*}

Using Lemma~\ref{lem:escape_prob} and the fact that $\hR_{eff}(w, t) \geq r_{uv} / 4$ ($\hR_{eff}$ are the effective resistances in $\hG$), we can bound
 \[
 \widehat{p}_{w}^{\{t,w\}}(s) \leq \min\{1,\frac{r_{st}}{\hR_{eff}(w, t)}\} \leq \min\{1,4 \frac{r_{st}}{r_{uv}}\}\,,
\]
and the same upper bound holds for $\hp_w^{\{s,w\}}(t)$. Therefore,
\begin{align*}
& \left|\mathbb{E}_{P^1\sim p^1}\left[net_e(P^1)\right]
- \mathbb{E}_{P^2\sim p^2}\left[net_e(P^2)\right]\right|\\
& \leq \min\{1,4 \frac{r_{st}}{r_{uv}}\} \left(p_w^{C\cup\{w\}}(s) + p_w^{C\cup\{w\}}(t)\right)\\
& \leq 2 \sqrt{\frac{r_{st}}{r_{uv}}} \left(p_w^{C\cup\{w\}}(s) + p_w^{C\cup\{w\}}(t)\right)\\
& \leq 2 \sqrt{\frac{r_{st}}{r_{uv}}} 
\left(
p_u^{C\cup\{u\}}(s) 
+ p_v^{C\cup\{v\}}(s) 
+ p_u^{C\cup\{u\}}(t)
+ p_v^{C\cup\{v\}}(t)
\right)\,.
\end{align*}

Putting everything together, we have that
\begin{align*}
& \left|\frac{\phi_u - \phi_v}{\sqrt{r_{uv}}}\right| 
\leq 
2 \left(
p_u^{C\cup\{u\}}(s) 
+ p_v^{C\cup\{v\}}(s) 
+ p_u^{C\cup\{u\}}(t)
+ p_v^{C\cup\{v\}}(t)
\right)\,.
\end{align*}
\end{proof}

\subsection{Proof of Lemma~\ref{st_projection1_energy}}
\label{proof_st_projection1_energy}

\begin{proof}
Let $\dd = \BB^\top \frac{\onev_e}{\sqrt{\rr}}$ and $e = (u,w)$.
Note that $\mathcal{E}_r\left(\dd\right) \leq r_e \cdot (\frac{1}{\sqrt{r_e}})^2 = 1$, therefore
the case that remains is 
\begin{align}
R_{eff}(C,e) \geq 36 \cdot r_e\,.
\label{eq:st_projection1_condition}
\end{align}
For each $v\in C$
by Lemma~\ref{st_projection1} we have that
\begin{align*}
\left|\pi_v^{C}(\dd)\right| \leq (p_v^{C}(u) + p_v^{C}(w)) \cdot \frac{\sqrt{r_e}}{R_{eff}(v,e)} \,.
\end{align*}
Now, we would like to bound the energy of routing $\vpi^C(\dd)$ by the 
energy to route it via $w$.
For each $v\in C$ we let $\dd^v$ be the following demand:
\[\dd^v = %
\pi_v^C(\dd) \cdot (\onev_v - \onev_w )\,. \]%
Note that $\vpi^C(\dd) = \sum\limits_{v\in C} \dd^v$.
We have,
\begin{align*}
\sqrt{\mathcal{E}_{\rr}(\pi^C(\dd))}
& = \sqrt{\mathcal{E}_{\rr}\left(\pi^C\left(\sum\limits_{v\in C} \dd^v\right)\right)}\\
& \leq \sum\limits_{v\in C} \sqrt{\mathcal{E}_{\rr}\left(\pi^C\left(\dd^v\right)\right)}\\
& \leq \sum\limits_{v\in C} (R_{eff}(v,w))^{1/2} \left|\pi_v^C(\dd)\right|\\
& \leq \sum\limits_{v\in C} (R_{eff}(v,w))^{1/2} \cdot (p_v^{C}(u) + p_v^{C}(w)) \cdot \frac{\sqrt{r_e}}{R_{eff}(v,e)}\,.
\end{align*}
Now, note that, because $R_{eff}$ is a metric,
\begin{align*}
R_{eff}(v,u)
& \geq 
R_{eff}(v,w) - |R_{eff}(v,w) - R_{eff}(v,u)|\\
& \geq
R_{eff}(v,w) - r_e\\
& \geq
R_{eff}(v,w) - \frac{1}{36} R_{eff}(C,e)\\
& \geq 
\frac{35}{36} R_{eff}(v,w)\,,
\end{align*}
where we used 
the triangle inequality twice,
(\ref{eq:st_projection1_condition}),
and also the fact that $R_{eff}(v,w) \geq R_{eff}(C,e)$ because $v\in C$ and $w\in e$.
Now, note also that
\[ R_{eff}(v,e) \geq \frac{1}{2} \min\{R_{eff}(v,w),R_{eff}(v,u) \geq \frac{1}{2}\cdot \frac{35}{36} R_{eff}(v,w)\}\,,\]
so
\begin{align*}
& 2 \sum\limits_v (R_{eff}(v,w))^{1/2} \cdot (p_v^{C}(u) + p_v^{C}(w)) \cdot \frac{\sqrt{r_e}}{R_{eff}(v,e)}\\
& \leq 2\sqrt{2\cdot\frac{36}{35}} \sum\limits_v (p_v^{C}(u) + p_v^{C}(w)) \cdot \sqrt{\frac{r_e}{R_{eff}(v,e)}}\\
& \leq 6\cdot \sqrt{\frac{r_e}{R_{eff}(C,e)}}\,,
\end{align*}
where we used the fact that 
$\sum\limits_{v\in C} (p_v^{C}(u) + p_v^{C}(w)) = 2$ and $R_{eff}(v,e) \geq R_{eff}(C,e)$ because $v\in C$.
\end{proof}

\subsection{Proof of Lemma~\ref{lem:important_edges}}
\label{proof_lem_important_edges}
\begin{proof}
By definition of the fact that $e$ is not $\eps$-important, we have
\[ R_{eff}(C,e) > r_e / \eps^2 \,.\]
Using the fact that the demand $\vpi^C\left(\BB^\top \frac{\pp}{\sqrt{\rr}}\right)$
is supported on $C$ and Lemma~\ref{st_projection1_energy}, we get
\begin{align*}
 \left|\left\langle \onev_e, \RR^{-1/2} \BB \vphi^*\right\rangle\right| 
 &= \left|\left\langle \vpi^C\left(\BB^\top \frac{\onev_e}{\sqrt{\rr}}\right),
 \vphi_C^*\right\rangle\right| \\
 & \leq \sqrt{\mathcal{E}_{\rr}\left(\vpi^C\left(\BB^\top \frac{\onev_e}{\sqrt{\rr}}\right)\right)} 
 \cdot \sqrt{E_{\rr}(\vphi^*)}\\
 & = \sqrt{\mathcal{E}_{\rr}\left(\vpi^C\left(\BB^\top \frac{\onev_e}{\sqrt{\rr}}\right)\right)} 
 \cdot \delta \sqrt{\mathcal{E}_{\rr}\left(\vpi^C\left(\BB^\top \frac{\pp}{\sqrt{\rr}}\right)\right)}\\
 & \leq 6 \sqrt{\frac{r_e}{R_{eff}(C,e)}} \cdot \delta \sqrt{m}\\
 & \leq 6\eps\,.
\end{align*}
\end{proof}

\subsection{Proof of Lemma~\ref{lem:projection_change_energy}}
\label{proof_lem_projection_change_energy}
\begin{proof}
We note that 
\begin{align*}
& \sqrt{\mathcal{E}_{\rr}\left(
\vpi^{C\cup\{v\}}(\BB^\top \frac{\qq}{\sqrt{\rr}}) 
-  \vpi^{C}(\BB^\top \frac{\qq}{\sqrt{\rr}})\right)}\\
& = \left|\pi_v^{C\cup\{v\}}(\BB^\top \frac{\qq}{\sqrt{\rr}})\right| \sqrt{R_{eff}(C,v)} \\
& \leq \sum\limits_{e=(u,w)\in E} 
\left|\pi_v^{C\cup\{v\}}(\BB^\top \frac{q_e}{\sqrt{r_e}} \onev_e)\right| \sqrt{R_{eff}(C,v)} \\
& \leq \sum\limits_{e=(u,w)\in E} 
(p_v^{C\cup\{v\}}(u) + p_v^{C\cup\{v\}}(w)) \cdot
\min\left\{\sqrt{\frac{R_{eff}(C,v)}{r_e}}, 
\frac{\sqrt{r_e R_{eff}(C,v)}}{R_{eff}(e,v)}\right\}
\,,
\end{align*}
where we used Lemma~\ref{st_projection1}.
For some sufficiently large $c$ to be defined later, we 
partition $E$ into $X$ and $Y$, where
$X = \{e\in E\ |\ R_{eff}(C,v) \leq c^2 \cdot r_e \text{ or } r_e R_{eff}(C,v) \leq c^2 \cdot (R_{eff}(e,v))^2 \}$
and $Y = E\backslash X$. We first note that 
\begin{align*}
&  \sum\limits_{e=(u,w)\in X} 
(p_v^{C\cup\{v\}}(u) + p_v^{C\cup\{v\}}(w)) \cdot
\min\left\{\sqrt{\frac{R_{eff}(C,v)}{r_e}}, 
\frac{\sqrt{r_e R_{eff}(C,v)}}{R_{eff}(e,v)}\right\}\\
& \leq c \cdot  \sum\limits_{e=(u,w)\in X} 
(p_v^{C\cup\{v\}}(u) + p_v^{C\cup\{v\}}(w))\\
& \leq c\cdot \tO{\beta^{-2}}\,,
\end{align*}
where the last inequality follows by the congestion reduction property.

Now, let $e=(u,w)\in Y$. We will prove that both $u$ and $w$ are much closer to $v$ than $C$.
This, in turn, will imply that their hitting probabilities on $v$ are roughly the same,
and so they mostly cancel out in the projection.

First of all, we let 
$R_{eff}(C,v) = c_1^2 \cdot r_e$
and 
$r_e R_{eff}(C,v) = c_2^2 \cdot (R_{eff}(e,v))^2$,
for some $c_1,c_2 > 0$, where by definition $c_1,c_2\geq c$.
Now, we assume without loss of generality that $R_{eff}(u,v) \leq R_{eff}(w,v)$, and so
\begin{align*}
R_{eff}(u,v) \leq 2 R_{eff}(e,v) = \frac{2}{c_2} \sqrt{r_e R_{eff}(C,v)} 
= \frac{2}{c_1c_2} R_{eff}(C,v)
\leq 
 \frac{2}{c_1c_2} \left(R_{eff}(C,u) + R_{eff}(u,v)\right)
\,,
\end{align*}
so 
\begin{equation}
\begin{aligned}
R_{eff}(u,v) 
\leq 
 \frac{\frac{2}{c_1c_2}}{1 - \frac{2}{c_1c_2}} R_{eff}(C,u)
 = \frac{2}{c_1c_2 - 2} R_{eff}(C,u)
 \leq \frac{3}{c_1c_2} R_{eff}(C,u)
\,.
\end{aligned}
\label{eq:energy_Reff_1}
\end{equation}
Futhermore, note that
\[ R_{eff}(w,v) \leq R_{eff}(u,v) + r_e \leq 
\left(\frac{2}{c_1c_2} + \frac{1}{c_1^2}\right) R_{eff}(C,v) 
\leq \left(\frac{2}{c_1c_2} + \frac{1}{c_1^2}\right) (R_{eff}(C,w) + R_{eff}(w,v))
\,, \]
and so we have 
\begin{equation}
\begin{aligned}
 R_{eff}(w,v) 
\leq \frac{\frac{2}{c_1c_2} + \frac{1}{c_1^2}}{
	1 - \frac{2}{c_1c_2} - \frac{1}{c_1^2}
	} R_{eff}(C,w)
\leq 3\left(\frac{1}{c_1c_2} + \frac{1}{c_1^2}\right)
 R_{eff}(C,w)
\,. 
\end{aligned}
\label{eq:energy_Reff_2}
\end{equation}
Now, by Lemma~\ref{lem:escape_prob} together with
(\ref{eq:energy_Reff_1}) we have
\[ p_v^{C\cup\{v\}}(u) \geq 1 - \frac{3}{c_1c_2} \]
and 
with (\ref{eq:energy_Reff_2}) we have
\[ p_v^{C\cup\{v\}}(w) \geq 1 - 3 \left(\frac{1}{c_1c_2} + \frac{1}{c_1^2}\right)\,, \]
therefore
\[ \left|p_v^{C\cup\{v\}}(u) - p_v^{C\cup\{v\}}(w)\right| \leq 6\left(\frac{1}{c_1c_2} + \frac{1}{c_1^2}\right)\,. \]
So,
\begin{align*}
 \left|\pi_v^{C\cup\{v\}}(\BB^\top \frac{q_e}{\sqrt{r_e}} \onev_e)\right| \sqrt{R_{eff}(C,v)}
& =
\left|
p_v^{C\cup\{v\}}(u)
- p_v^{C\cup\{v\}}(w)
\right| \sqrt{\frac{R_{eff}(C,v)}{r_e}}\\
& \leq 6 \left(\frac{1}{c_1c_2} + \frac{1}{c_1^2}\right) \cdot c_1\\
& = 6 \left(\frac{1}{c_2} + \frac{1}{c_1}\right)\\
& = O\left(\frac{1}{c}\right)\,.
\end{align*}

Now, we will apply Lemma~\ref{lem:small-neighborhood2} 
to prove that with high probability 
$|Y| \leq \tO{\beta^{-1}}$. The reason we can apply the lemma
is that for any $e=(u,w)\in Y$,
we have 
\[ R_{eff}(e,v) = \frac{1}{c_1c_2}R_{eff}(C,v) \leq \frac{1}{2} R_{eff}(C,v)\,,\]
and so $Y\subseteq N_E(v,R_{eff}(C,v) / 2)$.
Therefore, we get that
\begin{align*}
\sum\limits_{e=(u,w)\in Y} 
\left|\pi_v^{C\cup\{v\}}(\BB^\top \frac{q_e}{\sqrt{r_e}} \onev_e)\right|
\sqrt{R_{eff}(C,v)}
\leq c^{-1} \cdot \tO{\beta^{-1}}\,.
\end{align*}
Overall, we conclude that 
\[ \sqrt{\mathcal{E}_{\rr}\left(
\vpi^{C\cup\{v\}}(\BB^\top \frac{\qq}{\sqrt{\rr}}) 
-  \vpi^{C}(\BB^\top \frac{\qq}{\sqrt{\rr}})\right)}
\leq (c + c^{-1}) \tO{\beta^{-2}}
= \tO{\beta^{-2}}\,,
\]
by setting $c$ to be a large enough constant.
\end{proof}

\subsection{Proof of Lemma~\ref{lem:old_projection_approximate}}
\label{sec:proof_old_projection_approximate}
\begin{proof}

We write
\begin{align*}
&  \vpi^{C^T,\rr^T}\left(\BB^\top\frac{\qq^T}{\sqrt{\rr^T}}\right) - \vpi^{C^0,\rr^0}\left(\BB^\top\frac{\qq^0}{\sqrt{\rr^0}}\right) \\
& =
\underbrace{\sum\limits_{\text{$i$ is an \textsc{AddTerminal}}}
\pi_{v^i}^{C^{i+1},\rr^i}\left(\BB^\top \frac{\qq^{i}}{\sqrt{\rr^{i}}}\right) \cdot \left(\onev_{v^i} - \vpi^{C^i,\rr^i}\left(\onev_{v^i}\right)\right)}_{\dd_{Add}}\\
& + \underbrace{\sum\limits_{i \text{ is an \textsc{Update}}} \vpi^{C^i,\rr^i}\left(\BB^\top \left(\frac{\qq^{i+1}}{\sqrt{\rr^{i+1}}} - \frac{\qq^i}{\sqrt{\rr^i}} \right) \onev_{e^i}\right)}_{\dd_{Upd}}\,,
\end{align*}
which implies that 
\begin{align*}
\sqrt{\mathcal{E}_{\rr^T} \left(
 \vpi^{C^T,\rr^T}\left(\BB^\top\frac{\qq^T}{\sqrt{\rr^T}}\right) - \vpi^{C^0,\rr^0}\left(\BB^\top\frac{\qq^0}{\sqrt{\rr^0}}\right) 
\right)} \leq \sqrt{\mathcal{E}_{\rr^T}(\dd_{Add})}
 + \sqrt{\mathcal{E}_{\rr^T}(\dd_{Upd})}\,.
\end{align*}
We bound each of these terms separately.
For the second one, we have that
\begin{align*}
& \sqrt{\mathcal{E}_{\rr^T}\left(\dd_{Upd}\right)} \\
& \leq 
	\sum\limits_{i \text{ is an \textsc{Update}}} 
	\sqrt{\mathcal{E}_{\rr^T}\left(\vpi^{C^i,\rr^i}\left(\BB^\top \left(\frac{\qq^{i+1}}{\sqrt{\rr^{i+1}}} - \frac{\qq^{i}}{\sqrt{\rr^{i}}}\right) \onev_{e^i}\right)\right)}\\
& =
	\sum\limits_{i \text{ is an \textsc{Update}}} 
	\sqrt{\mathcal{E}_{\rr^T}\left(\BB^\top \left(\frac{\qq^{i+1}}{\sqrt{\rr^{i+1}}} - \frac{\qq^{i}}{\sqrt{\rr^{i}}}\right) \onev_{e^i}\right)}\\
& \leq 
	\sum\limits_{i \text{ is an \textsc{Update}}} 
	\left(\sqrt{\mathcal{E}_{\rr^T}\left(\BB^\top \frac{\qq^{i+1}}{\sqrt{\rr^{i+1}}} \onev_{e^i}\right)}
	+ 
	\sqrt{\mathcal{E}_{\rr^T}\left(\BB^\top \frac{\qq^{i}}{\sqrt{\rr^{i}}} \onev_{e^i}\right)}\right)\\
& \leq 
	\max_i \left\|\frac{\rr^T}{\rr^i}\right\|_\infty^{1/2} 
	\sum\limits_{i \text{ is an \textsc{Update}}} 
	\left(\sqrt{\mathcal{E}_{\rr^{i+1}}\left(\BB^\top \frac{\qq^{i+1}}{\sqrt{\rr^{i+1}}} \onev_{e^i}\right)}
	+ 
	\sqrt{\mathcal{E}_{\rr^i}\left(\BB^\top \frac{\qq^{i}}{\sqrt{\rr^{i}}} \onev_{e^i}\right)}\right)\\
& \leq 
	2 \max_i \left\|\frac{\rr^T}{\rr^i}\right\|_\infty^{1/2} T\,.
\end{align*}
For the first one, we have
\begin{align*}
& \sqrt{\mathcal{E}_{\rr^T}\left(\dd_{Add}\right)} \\
& \leq 
\sum\limits_{\text{$i$ is an \textsc{AddTerminal}}}
\left|\pi_{v^i}^{C^{i+1},\rr^i}\left(\BB^\top \frac{\qq^{i}}{\sqrt{\rr^{i}}}\right)\right| \cdot \sqrt{\mathcal{E}_{\rr^T}\left(\onev_{v^i} - \vpi^{C^i,\rr^i}\left(\onev_{v^i}\right)\right)}\\
& \leq 
\left\|\frac{\rr^T}{\rr^i}\right\|_{\infty}^{1/2}
\sum\limits_{\text{$i$ is an \textsc{AddTerminal}}}
\left|\pi_{v^i}^{C^{i+1},\rr^i}\left(\BB^\top \frac{\qq^{i}}{\sqrt{\rr^{i}}}\right)\right| \cdot \sqrt{\mathcal{E}_{\rr^i}\left(\onev_{v^i} - \vpi^{C^i,\rr^i}\left(\onev_{v^i}\right)\right)}\\
& \leq 
\tO{\max_i \left\|\frac{\rr^T}{\rr^i}\right\|_{\infty}^{1/2}\beta^{-2}} \cdot T\,,
\end{align*}
where in the last inequality we used Lemma~\ref{lem:projection_change_energy}.
The desired statement now follows immediately.

\end{proof}

\subsection{Proof of Lemma~\ref{lem:locator}}
\label{sec:full_proof_lem_locator}

\begin{proof}
\noindent {\bf $\textsc{Initialize}(\ff)$}: 
We set $\ss^+ = \uu - \ff$, $\ss^- = \ff$, $\rr^0 = \rr = \frac{1}{(\ss^+)^2} + \frac{1}{(\ss^-)^2}$.

We first initialize a $\beta$-congestion reduction
subset $C$ based on Lemma~\ref{lem:cong_red}, which takes time
$\tO{m \beta^{-2}}$,
and a data structure $\textsc{DynamicSC}$ for maintaining the sparsified Schur Complement onto $C$,
as described in Appendix~\ref{sec:maintain_schur}, which
takes time $\tO{m \beta^{-4} \eps^{-4}}$. We also set $C^0 = C$.

Then, we generate an $\tO{\eps^{-2}}\times m$
sketching matrix $\QQ$ as in~(Lemma 5.1, \cite{gao2021fully} v2),
which takes time $\tO{m \eps^{-2}}$, and let its rows be $\qq^i$
for $i\in[\tO{\eps^{-2}}]$.

In order to compute the set of important edges, we use 
Lemma~\ref{lem:approx_effective_res}
after contracting $C$, which shows that we can compute all resistances
of the form $R_{eff}(C,u)$ for $u\in V\backslash C$ up to a factor of $2$ in $\tO{m}$.
From these, we can get $4$-approximate estimates of $R_{eff}(C,e)$ for 
$e\in E\backslash E(C)$, using the fact that 
\[ \min\{R_{eff}(C,u),R_{eff}(C,w)\} \approx_{2} R_{eff}(C,e) \,.\]

Then, in $O(m)$ time, we can easily compute a set of edges $S$
such that
\begin{align*}
\{e\ |\ e \text{ is $\frac{\eps\beta}{\alpha}$-important } \}
\subseteq S\subseteq \{e\ |\ e \text{ is $\frac{\eps \beta}{4\alpha}$-important}\}
\end{align*}

We also need to sample the random walks that will be used inside the demand projection
data structures. We use (Lemma 5.15,~\cite{gao2021fully} v2) to sample 
$h = \tO{\heps^{-4} \beta^{-6} + \heps^{-2} \beta^{-2} \gamma^{-2}}$
random walks for each $u\in V\backslash C$ and $e\in E\backslash E(C)$ with $u\in e$,
where we set $\gamma = \frac{\eps}{4\alpha}$ so that $S$ is a 
subset of $\gamma$-important edges. Note that, by Definition~\ref{def:locator},
a $\gamma$-important edge will always remain $\gamma$-important until
the \textsc{Locator} is re-initialized, as any edge's resistive
distance to $C$ can only decrease, and its own resistance is constant. Therefore
$S$ can be assumed to always be a subset of $\gamma$-important edges. 

The runtime to sample the set $\mathcal{P}$ of these random walks is 
\[ \tO{m h \beta^{-2}} 
= \tO{m (\heps^{-4} \beta^{-8} + \heps^{-2} \beta^{-4} \gamma^{-2})}
= \tO{m (\heps^{-4} \beta^{-8} + \heps^{-2} \eps^{-2}\alpha^2 \beta^{-4} )}\,.
\]

In order to be able to detect congested edges, we will initialize $\tO{\eps^{-2}}$
demand projection data structures, with the guarantees from Lemma~\ref{lem:ds}.
We will maintain 
an approximation to $\vpi^C(\BB^\top \frac{\qq_S^i}{\sqrt{\rr}})$ for all $i\in\tO{\eps^{-2}}$,
where $\qq^i$ are the rows of the sketching matrix that we have generated,
as well as 
$\vpi^{old} := \vpi^{C^0,\rr^0}\left(\BB^\top \frac{\pp^0}{\sqrt{\rr^0}}\right)$, where
$\pp^0 = \sqrt{\rr^0} g(\ss^0) = \frac{\frac{1}{\ss^{+,0}} - \frac{1}{\ss^{-,0}}}{\sqrt{\rr^0}}$.

Specifically, we call 
\[ \textsc{DP}^i.\textsc{Initialize}(C, \rr, \qq^i, S, \mathcal{P}) \]
for all $i\in[\tO{\eps^{-2}}]$, and also exactly
compute $\vpi^{old}$, which can be done by calling
\[ \textsc{DemandProjector}.\textsc{Initialize}(C, \rr, \pp, [m], \mathcal{P})\,. \]

The total runtime for this operation is dominated by the random walk generation, and is 
\[\tO{m (\heps^{-4} \beta^{-8} + \heps^{-2} \eps^{-2}\alpha^2 \beta^{-4} )}\,.\]

\noindent {\bf $\textsc{Update}(e,\ff)$}: 
We set $s_e^+ = u_e - f_e$, $s_e^- = f_e$, and $r_e = \frac{1}{(s_e^+)^2} + \frac{1}{(s_e^-)^2}$.
Then, we also set $p_e = \frac{\frac{1}{s_e^+} - \frac{1}{s_e^-}}{\sqrt{r_e}}$.

We distinguish two cases: 
\begin{itemize}
\item{$e\in E(C)$: 

In this case, we can simply call 
\[ \textsc{DynamicSC}.\textsc{Update}(e, r_e) \]
and
\[ \textsc{DP}^i.\textsc{Update}(e, \rr, \qq) \]
for all $i\in[\tO{\eps^{-2}}]$.
Note that we can do this as
$\textsc{DP}^i$ was initialized with resistances $\rr^0$
and $\rr^0 \approx_{\alpha} \rr$.
}
\item{$e\in E\backslash E(C)$:

We let $e = (u,w)$. We want to insert $u$ and $w$ into $C$, but for doing that
$\textsc{DP}^i$'s require constant factor estimates of the resistances
$R_{eff}(C,u)$ and $R_{eff}(C\cup\{u\},w)$. In order to get these estimates,
we will use $\textsc{DynamicSC}$.

We first call
\[ \textsc{DynamicSC}.\textsc{AddTerminal}(u)\,, \]
which takes time $\tO{\beta^{-2} \eps^{-2}}$ and returns 
$\tR_{eff}(C,u) \approx_2 R_{eff}(C,u)$. Given this estimate,
we can call
\[ \textsc{DP}^i.\textsc{AddTerminal}(u, \tR_{eff}(C,u))\,, \]
for all $i\in[\tO{\eps^{-2}}]$, each
of which takes time 
\[ \tO{\heps^{-4} \beta^{-8} + \heps^{-2} \beta^{-6} \gamma^{-2}}
= \tO{\heps^{-4} \beta^{-8} + \heps^{-2} \eps^{-2} \alpha^2 \beta^{-6}} \,.\]

Now, we can set $C = C\cup\{u\}$ and repeat the same process
for $w$.

Finally, to update the resistance, note that we now have 
$e\in E(C)$, so we apply the procedure from the first case.

}
\end{itemize}

Finally, if the total number of calls to 
$\textsc{DP}^i.\textsc{AddTerminal}$ for some fixed $i$
since the last call to $\cL.\textsc{BatchUpdate}(\emptyset)$
exceeds $\frac{\eps}{\heps \alpha^{1/2}}$ (note that 
the
number of calls is actually the same for all $i$), 
we call $\cL.\textsc{BatchUpdate}(\emptyset)$ in order to
re-initialize the demand projections.

We conclude that the total runtime is 
\[ \tO{m \frac{\heps \alpha^{1/2}}{\eps^{3}} 
+ \heps^{-4} \eps^{-2} \beta^{-8} +
\heps^{-2} \eps^{-4} \alpha^2 \beta^{-6}} \,,\]
where the first term comes from amortizing the calls
to $\cL.\textsc{BatchUpdate}(\emptyset)$, each of which,
as we will see, takes $\tO{m \eps^{-2}}$.

\noindent {\bf $\textsc{BatchUpdate}(Z,\ff)$}: 
First, for each $e\in Z$, we set
$s_e^+ = u_e - f_e$, $s_e^- = f_e$,
$r_e^0 = r_e = \frac{1}{(s_e^+)^2} + \frac{1}{(s_e^-)^2}$,
and $p_e^0 = p_e = \frac{\frac{1}{s_e^+} - \frac{1}{s_e^-}}{\sqrt{r_e}}$.

For each $e=(u,w)\in Z$, we call
\[ \textsc{DynamicSC}.\textsc{AddTerminal}(u)\]
and
\[ \textsc{DynamicSC}.\textsc{AddTerminal}(w)\]
(if $u$ and $w$ are not already in $C$), and then
we call
\[ \textsc{DynamicSC}.\textsc{Update}(e, r_e) \,. \]
Then, we set $C^0 = C = C\cup (\cup_{(u,w)\in Z}\{u,w\})$.
Additionally, we re-compute $\vpi^{old}$ based on the new
values of $C^0, \rr^0, \pp^0$.
All of this takes time $\tO{m + |Z| \beta^{-2} \eps^{-2}}$.

Now, to pass these updates to the $\textsc{DemandProjector}$s, we first
have to re-compute the set of important edges $S$ 
(with the newly updated resistances)
as any set such that
\begin{align*}
\{e\ |\ e \text{ is $\frac{\eps}{\alpha}$-important } \}
\subseteq S\subseteq \{e\ |\ e \text{ is $\frac{\eps}{4\alpha}$-important}\}\,.
\end{align*}
As we have already argued, this takes $\tO{m}$.

Now, finally, we re-initialize all the $\textsc{DemandProjector}$s by calling
\[ \textsc{DP}^i.\textsc{Initialize}(C, \rr, \qq, S, \mathcal{P})\,. \]
for all $i\in[\tO{\eps^{-2}}]$, where each call takes $\tO{m}$.

We conclude with a total runtime of 
\[ \tO{m\eps^{-2} + |Z| \beta^{-2} \eps^{-2}} \,. \]

\noindent {\bf $\textsc{Solve}()$}: This operation performs
the main task of the locator,  which is to detect congested edges. We will do that by using
the approximate demand projections that we have been maintaining.

We remind that the congestion vector we are trying to approximate to 
$O(\epsilon)$ additive accuracy is
\begin{align*}
\vrho^* = \delta \sqrt{\rr}g(\ss) - \delta \RR^{-1/2} \BB\LL^+ \BB^\top g(\ss)\,.
\end{align*}
We will first reduce the problem of finding the entries of 
$\vrho^*$ with magnitude $\geq \Omega(\eps)$, to the problem of computing an 
$O(\eps)$-additive approximation to
\[ v_i^* = \delta \cdot \left\langle \vpi^C\left(\BB^\top \frac{\qq_S^i}{\sqrt{\rr}}\right), \tSC^+ 
\vpi^{C^0,\rr^0}\left(\BB^\top \frac{\pp^0}{\sqrt{\rr^0}}\right)\right\rangle \]
for all $i\in[\tO{\eps^{-2}}]$,
where $\tSC$ is the approximate Schur complement maintained in $\textsc{DynamicSC}$.
Then, we will see how to approximate $v_i^*$ to additive accuracy 
$O(\eps)$ using the demand projection data structures.

First of all, note that, by definition of $g(\ss) = \frac{\frac{1}{\ss^+} - \frac{1}{\ss^-}}{\rr}$,
\[ \left\|\delta \sqrt{\rr} g(\ss)\right\|_\infty \leq \delta \leq \eps\,, \]
so this term can be ignored.

Using Lemma~\ref{lem:non-projected-demand-contrib}, we get that
\[ \delta \left\|\RR^{-1/2} \BB \LL^+ 
(\BB^\top g(\ss) - \vpi^{C}(\BB^\top g(\ss)))\right\|_\infty 
\leq \delta \cdot \tO{\beta^{-2}} \leq \eps / 2 \,. \]
This means that the entries of the vector 
\[ \delta \RR^{-1/2} \BB \LL^+ 
\vpi^{C}(\BB^\top g(\ss)) \]
that have magnitude $\leq \eps$ do not correspond to 
the $\Omega(\eps)$-congested edges that we are looking for.

Now, we set $T = |C \backslash C^0|$, where
$C^0$ was the congestion reduction subset during the last call to $\textsc{BatchUpdate}$,
and apply Lemma~\ref{lem:old_projection_approximate}. This shows that 
\[ \sqrt{\mathcal{E}_{\rr}\left(\vpi^{old} - \vpi^{C}\left(\BB^\top g(\ss)\right)\right)
} \leq \tO{\alpha^{1/2}\beta^{-2}} \cdot T \,.\]
Therefore, if we define
\[ \vrho = -\delta\RR^{-1/2}\BB \LL^+ \vpi^{old}\,, \] 
we conclude that
\[ \left\|\vrho - \vrho^*\right\|_\infty
\leq O(\eps) + 
\delta \left\|\RR^{-1/2} \BB \LL^+ 
\left(\vpi^{old} - \vpi^{C}\left(\BB^\top g(\ss)\right)\right) \right\|_\infty
\leq O(\eps) + \delta T \cdot \tO{\alpha^{1/2} \beta^{-2}}
\leq O(\eps)\,,
\]
where we used the fact that $T = \frac{\eps}{\heps \alpha^{1/2}} \leq 
\frac{\eps}{\delta \beta^{-2} \alpha^{1/2}}$.
Therefore it suffices to estimate $\vrho$ up to $O(\eps)$-additive accuracy.

Now, note that, by definition, no edge $e\in E\backslash S$ %
is $\eps / \alpha$-important with respect to $\rr^0$ and $C^0$.
By using Lemma~\ref{st_projection1_energy}, for each such edge we get
\begin{align*}
& \delta \left|\RR^{-1/2} \BB \LL^+ \vpi^{old}\right|_e\\
& \leq \delta \sqrt{\mathcal{E}_{\rr}\left(\vpi^{C^0}\left(\BB^\top \frac{\onev_e}{\sqrt{\rr}}\right)\right)}\sqrt{\mathcal{E}_{\rr}(\vpi^{old})}\\
& \leq \delta \alpha \sqrt{\mathcal{E}_{\rr^0}\left(\vpi^{C^0}\left(\BB^\top \frac{\onev_e}{\sqrt{\rr}}\right)\right)}\sqrt{\mathcal{E}_{\rr^0}(\vpi^{old})}\\
& \leq \delta \alpha \cdot \frac{\eps}{\alpha} \cdot O(\sqrt{m})\\
& = O(\eps) \,,
\end{align*}
where we also used the fact that
$\mathcal{E}_{\rr^0}(\vpi^{C^0,\rr^0}(g(\ss^0)))
\leq O(\mathcal{E}_{\rr^0}(g(\ss)))$.
Therefore it suffices to approximate
\[ \vrho' = \delta \II_S \RR^{-1/2} \BB \LL^+ \vpi^{old} \,.\]

Note that here we can replace 
$\LL^+$ by 
$\begin{pmatrix}-\LL_{FF}^{-1} \LL_{FC} \\ \II\end{pmatrix} \tSC^+$
where $\tSC \approx_{1+\eps} SC$ and only lose another additive
$\eps$ error, as 
\begin{align*}
& \delta \left|\left\langle 
\onev_e, \RR^{-1/2} \BB \LL^+ \vpi^{old}\right\rangle
- \delta \left\langle \onev_e, 
\RR^{-1/2} \BB \begin{pmatrix}-\LL_{FF}^{-1} \LL_{FC}\\ \II \end{pmatrix}\tSC^+ \vpi^{old}\right\rangle\right|\\
& 
= \delta \left|\left\langle \vpi^{C}\left(\BB^\top \frac{\onev_e}{\sqrt{\rr}}\right), 
\left(SC^+ - \tSC^+\right) \vpi^{old}\right\rangle\right|\\
& \leq O(\delta \eps \cdot \sqrt{m})\\
& \leq O(\eps) \,,
\end{align*}
where we used the fact that 
\[ (1-\eps) SC \preceq \tSC \preceq (1+\eps) SC \Rightarrow 
-O(\eps) \tSC^+ \preceq SC^+ - \tSC^+\preceq O(\eps) \tSC^+ \,.\]
and that
\begin{align*}
\sqrt{\mathcal{E}_{\rr}\left(\vpi^{old}\right)}
& \leq 
\sqrt{\mathcal{E}_{\rr}\left(\vpi^C\left(\BB^\top g(\ss)\right)\right)}
+ \sqrt{\mathcal{E}_{\rr}\left(\vpi^{old} - \vpi^C\left(\BB^\top g(\ss)\right)\right)}\\
& \leq O(m) + \tO{\alpha^{1/2} \beta^{-2}} \cdot T\\
& \leq O(m)\,,
\end{align*}
where we used the fact that $T = \frac{\eps}{\heps \alpha^{1/2}} 
\leq \frac{\eps}{\delta \beta^{-2}\alpha^{1/2}}
\leq \frac{\sqrt{m}}{\beta^{-2}\alpha^{1/2}}
$.

Now, we will use the sketching lemma (Lemma 5.1, \cite{gao2021fully} v2),
which shows that in order to find all entries of 
\[ \II_S \RR^{-1/2} \BB \begin{pmatrix}
-\LL_{FF} \LL_{FC} \\ \II\end{pmatrix}\tSC^+ \vpi^{old} \]
with magnitude $\Omega(\eps)$, it suffices to compute the inner
products
\begin{align*}
& \delta \left\langle \BB^\top \frac{\qq_S^i}{\sqrt{\rr}}, 
\begin{pmatrix}-\LL_{FF}^{-1} \LL_{FC} \\ \II \end{pmatrix}
\tSC^+ 
\vpi^{old}\right\rangle \\
& = \delta \left\langle \vpi^C\left(\BB^\top \frac{\qq_S^i}{\sqrt{\rr}}\right), 
\tSC^+ 
\vpi^{old} \right\rangle
\end{align*}
for $i\in[\tO{\eps^{-2}}]$,
up to additive accuracy
\[ \eps \cdot \left\|\delta \II_S \RR^{-1/2} \BB 
\begin{pmatrix}-\LL_{FF}^{-1} \LL_{FC} \\ \II \end{pmatrix}
\tSC^+ \vpi^{old} \right\|_2^{-1} 
\geq \Omega(\eps) \,,
\]
where we used the fact that 
\begin{align*}
& \left\|\delta \II_S \RR^{-1/2} \BB  \begin{pmatrix}-\LL_{FF}^{-1} \LL_{FC} \\ \II \end{pmatrix}
\tSC^+ \vpi^{old} \right\|_2^2\\
& = \delta^2 
\langle 
\tSC^+ 
\vpi^{old},
\begin{pmatrix}-\LL_{CF} \LL_{FF}^{-1} & \II \end{pmatrix}
\LL
\begin{pmatrix}-\LL_{FF}^{-1} \LL_{FC} \\ \II \end{pmatrix}
\tSC^+ \vpi^{old} \rangle\\
& = \delta^2 
\left\langle 
\tSC^+ 
\vpi^{old},
\begin{pmatrix}-\LL_{CF} \LL_{FF}^{-1} & \II \end{pmatrix}
\begin{pmatrix}\LL_{FF} & \LL_{FC} \\ \LL_{CF} & \LL_{CC}\end{pmatrix}
\begin{pmatrix}-\LL_{FF}^{-1} \LL_{FC} \\ \II \end{pmatrix}
\tSC^+ \vpi^{old} \right\rangle\\
& = \delta^2 
\left\langle 
\tSC^+ 
\vpi^{old},
SC
\tSC^+ \vpi^{old} \right\rangle\\
& \leq 2
\delta^2 
\left\langle 
\vpi^{old},
\tSC^+ \vpi^{old} \right\rangle\\
& \leq O(\delta^2 m)\\
& = O(1)\,.
\end{align*}

Now, for the second part of the proof, we would like to compute $\vv$ such that
$\left\|\vv-\vv^*\right\|_\infty \leq O(\eps)$, where we remind that
\[ \vv^* = \left\langle \vpi^C\left(\BB^\top \frac{\qq_S^i}{\sqrt{\rr}}\right), \tSC^+ \vpi^{old}
\right\rangle \,. \]
Note that we already have estimates $\tvpi^C\left(\BB^\top \frac{\qq_S^i}{\sqrt{\rr}}\right)$
given by $\textsc{DP}^i$ for all $i\in[\tO{\eps^{-2}}]$. We obtain these estimates by calling
\[ \textsc{DP}^i.\textsc{Output}() \]
each of which takes time $O(\beta m)$.
By the guarantees of Definition~\ref{def:demand_projector}, with high probability we have
\begin{align*}
    & \delta \left|
     \left\langle \tvpi^C\left(\BB^\top \frac{\qq^i_S}{\sqrt{\rr}}\right)
	 -\vpi^C\left(\BB^\top \frac{\qq^i_S}{\sqrt{\rr}}\right), 
    \tSC^+
	\vpi^{old}
    \right\rangle 
    \right| \leq \heps \sqrt{\alpha} T\,,
\end{align*}
where we used the fact that 
\[ E_{\rr}(\delta \cdot \tSC^+ \vpi^{old}) \leq O(1)\,. \]
Now, since by definition $\textsc{BatchUpdate}$ is called
every $\frac{\eps}{\heps\alpha^{1/2}}$ calls to $\textsc{Update}$,
We have $T \leq \frac{\eps}{\heps\alpha^{1/2}}$ and so
\begin{align*}
    & \delta \left|
     \left\langle \tvpi^C\left(\BB^\top \frac{\qq^i_S}{\sqrt{\rr}}\right)
	 -\vpi^C\left(\BB^\top \frac{\qq^i_S}{\sqrt{\rr}}\right), 
    \tSC^+
	\vpi^{old}
    \right\rangle 
    \right| \leq \eps\,.
\end{align*}
This means that, running the algorithm from (Lemma 5.1, \cite{gao2021fully} v2),
we can obtain an edge set of size $\tO{\eps^{-2}}$ that contains all edges
such that
$\left|\rho_e^*\right| \geq c \cdot \eps$ for some constant $c > 0$.
By rescaling $\eps$ to get the right constant, we obtain all edges
such that $\left|\rho_e^*\right| \geq \eps / 2$ with high probability.
The runtime is dominated by the time to get $\tSC$ and apply its inverse,
and is $\tO{\beta m \eps^{-2}}$.

\paragraph{Success probability} We will argue that $\cL$ uses $\textsc{DynamicSC}$ and the $\textsc{DP}^i$ as an oblivious 
adversary. First of all, note that no randomness is injected into the inputs of
$\textsc{DynamicSC}$, as they are all coming form the inputs of $\cL$.

Regarding $\textsc{DP}^i$, note that its only output is given by the call to
$\textsc{DP}^i.\textsc{Output}$. However, note that its output is only used 
to estimate the inner product
\[ \left\langle \vpi^C\left(\BB^\top \frac{\qq_S^i}{\sqrt{\rr}}\right), \tSC^+ \vpi_{old} \right\rangle\,, \]
from which we obtain the set of congested edges and we directly return it
from $\cL$. Thus, it does not influence the state of $\cL,\textsc{DynamicSC}$ or any future inputs.

\end{proof}

\section{Deferred Proofs from Section~\ref{sec:demand_projection}}
\subsection{Proof of Lemma~\ref{st_projection1}}
\label{proof_st_projection1}
\begin{proof}
Let $\mathcal{P}_v(u)$ be a random walk that starts from $u$ and stops when it hits $v$.
\begin{align*}
p_v^{C\cup\{v,w\}}(u) 
& = \Pr\left[\mathcal{P}_v(u)\cap C = \emptyset \text{ and } w\notin \mathcal{P}_v(u)\right]\\
& =
\Pr\left[\mathcal{P}_v(u)\cap C = \emptyset\right] \cdot
\Pr\left[w\notin \mathcal{P}_v(u) \ |\ \mathcal{P}_v(u)\cap C = \emptyset \right]\\
& =
p_v^{C\cup\{v\}}(u)
\cdot 
\Pr\left[w\notin \mathcal{P}_v(u) \ |\ \mathcal{P}_v(u)\cap C = \emptyset \right]
\end{align*}

Consider new resistances $\hr$, where $\hr_e = r_e$ for all $e\in E$ not incident to $C$ and $\hr_e = \infty$ for all $e\in E$ incident to $C$.
Also, let $\hp$ be the hitting probability function for these new resistances.
It is easy to see that
\[
\Pr\left[w\notin \mathcal{P}_v(u) \ |\ \mathcal{P}_v(u)\cap C = \emptyset \right] = \hp_v^{\{v,w\}}(u)\,.
\]
Therefore, we have
\begin{align*}
p_v^{C\cup\{v,w\}}(u) 
& =
p_v^{C\cup\{v\}}(u)
\cdot 
\hp_v^{\{v,w\}}(u)\,.
\end{align*}

Now we will bound $\hp_v^{\{v,w\}}(u)$.
Let $\psi$ be electrical potentials for pushing $1$ unit of flow from $v$ to $w$ with resistances $\hr$ and let $f$ be
the associated electrical flow.
We have that 
\begin{align}
\left|\psi_u - \psi_w\right| = |f_e| \hr_e \leq \hr_e = r_e \label{1}
\end{align}
(because $|f_e| \leq 1$ and $e$ is not incident to $C$)
and
\begin{align}
 \left|\psi_v - \psi_w\right| = \widehat{R}_{eff}(v,w) \geq R_{eff}(v,w) \label{2} %
\end{align}

Additionally, by well known facts that connect electrical potential embeddings with random walks, we have that
\[ \psi_u = \psi_w + \hp_v^{\{v,w\}}(u) (\psi_v - \psi_w)\,, \]
or equivalently
\[ 
\hp_v^{\{v,w\}}(u) 
= \frac{\psi_u - \psi_w}{\psi_v - \psi_w} 
\,. \]
Using (\ref{1}) and (\ref{2}), this immediately implies that
\begin{align*}
\hp_v^{\{v,w\}}(u) \leq \frac{r_e}{R_{eff}(v,w)}\,.
\end{align*}
So we have proved that 
\begin{align*}
p_v^{C\cup\{v,w\}}(u) \leq p_v^{C\cup\{v\}}(u) \frac{r_e}{R_{eff}(v,w)}
\end{align*}
and symmetrically
\begin{align*}
p_v^{C\cup\{v,u\}}(w) \leq p_v^{C\cup\{v\}}(w) \frac{r_e}{R_{eff}(v,u)}\,.
\end{align*}

Now, let's look at $\pi_v^{C\cup\{v\}}(B^\top \onev_e) = p_v^{C\cup\{v\}}(u) - p_v^{C\cup\{v\}}(w)$. Note that 
\[ p_v^{C\cup\{v\}}(u) = p_v^{C\cup\{v,w\}}(u) + p_w^{C\cup\{v,w\}}(u) p_v^{C\cup\{v\}}(w) \]
which we re-write as 
\[ p_v^{C\cup\{v\}}(u) - p_v^{C\cup\{v\}}(w) = p_v^{C\cup\{v,w\}}(u) - (1-p_w^{C\cup\{v,w\}}(u)) p_v^{C\cup\{v\}}(w)\leq p_v^{C\cup\{v,w\}}(u)\,. \]
Symmetrically,
\[ p_v^{C\cup\{v\}}(w) - p_v^{C\cup\{v\}}(u) \leq p_v^{C\cup\{v,u\}}(w)\,. \]
From these we conclude that
\begin{align*}
\left|\pi_v^{C\cup\{v\}}(B^\top \onev_e)\right| 
& = \left|p_v^{C\cup\{v\}}(u) - p_v^{C\cup\{v\}}(w)\right| \\
& \leq \max\left\{p_v^{C\cup\{v,w\}}(u), p_v^{C\cup\{v,u\}}(w)\right\}\\
& \leq \max\left\{p_v^{C\cup\{v\}}(u)\cdot \frac{r_e}{R_{eff}(v,w)}, p_v^{C\cup\{v\}}(w)\cdot\frac{r_e}{R_{eff}(v,u)}\right\}\\
& \leq (p_v^{C\cup\{v\}}(u) + p_v^{C\cup\{v\}}(w)) \cdot \max\left\{\frac{r_e}{R_{eff}(v,w)}, \frac{r_e}{R_{eff}(v,u)}\right\}\,,
\end{align*}
which, after dividing by $\sqrt{r_e}$ gives
\begin{align*}
\left|\pi_v^{C\cup\{v\}}(B^\top \onev_e)\right| 
& \leq (p_v^{C\cup\{v\}}(u) + p_v^{C\cup\{v\}}(w)) \cdot \max\left\{\frac{\sqrt{r_e}}{R_{eff}(v,w)}, \frac{\sqrt{r_e}}{R_{eff}(v,u)}\right\}\,.
\end{align*}
\end{proof}

\subsection{Proof of Lemma~\ref{estimate1}}
\label{proof_estimate1}
\begin{proof}
For each $u\in V \backslash C$
and $e\in S'$ with $u\in e$,
we generate 
$Z$ random walks $P^1(u),\dots,P^Z(u)$ from $u$
to $C\cup\{v\}$.
We set
\begin{align*}
\tpi_v^{C\cup\{v\}}\left(\BB^\top \frac{\qq_{S'}}{\sqrt{\rr}}\right)
&= \sum\limits_{e=(u,w)\in S'} \sum\limits_{z=1}^Z \frac{1}{Z} \frac{q_e}{\sqrt{r_e}} \left(1_{\{v \in P^z(u)\}} 
- 1_{\{v\in P^z(w)\}}\right) \\
&= \sum\limits_{e=(u,w)\in S'} \sum\limits_{z=1}^Z (X_{e,u,z} - X_{e,w,z})
\,,
\end{align*}
where we have set
$X_{e,u,z} = \frac{1}{Z} \frac{q_e}{\sqrt{r_e}} 1_{\{v\in P^z(u)\}} $
and
$ X_{e,w,z} = -\frac{1}{Z} \frac{q_e}{\sqrt{r_e}} 1_{\{v\in P^z(w)\}}$.

Note that 
$\underset{P^z(u)}{\mathbb{E}}[X_{e,u,z}] = \frac{1}{Z} \frac{q_e}{\sqrt{r_e}} p_v^{C\cup\{v\}}(u)$
and
$\underset{P^z(w)}{\mathbb{E}}[X_{e,w,z}] = -\frac{1}{Z} \frac{q_e}{\sqrt{r_e}} p_v^{C\cup\{v\}}(w)$.
This implies that our estimate is unbiased, as
\[ \mathbb{E}\left[\tpi_v^{C\cup\{v\}}\left(\BB^\top \frac{\qq_{S'}}{\sqrt{\rr}}\right)\right] = \sum\limits_{e=(u,w)\in S'} \frac{q_e}{\sqrt{r_e}} (p_v^{C\cup\{v\}}(u) - p_v^{C\cup\{v\}}(w)) = \pi_v^{C\cup\{v\}}\left(\BB^\top \frac{\qq_{S'}}{\sqrt{\rr}}\right) \,.\]
We now need to show that our estimate is concentrated around the mean.
To apply the concentration bound in Lemma~\ref{conc1}, we need the following bounds:
\[ \sum\limits_{e=(u,w)\in S'} 
\sum\limits_{z=1}^Z \left(|\mathbb{E}[X_{e,u,z}]| + |\mathbb{E}[X_{e,w,z}]|\right) 
= \sum\limits_{e=(u,w)\in S'} \frac{|q_e|}{\sqrt{r_e}} (p_v^{C\cup\{v\}}(u) + p_v^{C\cup\{v\}}(w)) := E\]
\[ \underset{\substack{e=(u,w)\in S'\\ z\in[Z]}}{\max} \max\{|X_{e,u,z}|, |X_{e,w,z}|\} \leq 
\max_{e\in S'} \frac{1}{Z\sqrt{r_e}} := M \,.\]
So now for any $t\in [0,E]$ we have
\begin{align*}
& \Pr\left[\left|\tpi_v^{C\cup\{v\}}\left(\BB^\top\frac{\qq_{S'}}{\sqrt{\rr}}\right) - \pi_v^{C\cup\{v\}}(\BB^\top \frac{\qq_{S'}}{\sqrt{\rr}})\right| > t\right] \\
& \leq 2 \exp\left(-\frac{t^2}{6EM}\right)\\
& = 2 \exp\left(-\frac{Z t^2}{6\sum\limits_{e=(u,w)\in S'} \frac{|q_e|}{\sqrt{r_e}} (p_v^{C\cup\{v\}}(u) + p_v^{C\cup\{v\}}(w))\max_{e\in S'} \frac{1}{\sqrt{r_e}} }\right)\\
& \leq 2 \exp\left(-\frac{Z t^2}{6\sum\limits_{e=(u,w)\in S'} (p_v^{C\cup\{v\}}(u) + p_v^{C\cup\{v\}}(w)) \max_{e\in S'} \frac{1}{r_e} }\right)\\
& \leq 2 \exp\left(-\frac{Z t^2 c^2 R_{eff}(C,v)}{6 \sum\limits_{e=(u,w)\in S'} (p_v^{C\cup\{v\}}(u) + p_v^{C\cup\{v\}}(w)) }\right)\\
& \leq 2 \exp\left(-Z t^2 c^2 R_{eff}(C,v) / \tO{\beta^{-2}}\right)\\
& \leq \frac{1}{n^{100}}\,,
\end{align*}
where the last inequality follows 
by setting $Z = \tO{\frac{\log n \log \frac{1}{\beta}}{\delta_1'^2}}$ and 
$t = \frac{\delta_1'}{\beta c \sqrt{R_{eff}(C,v)}}$. Note that we have used the fact that
$R_{eff}(C,v) \leq r_e / c^2$ for all $e\in S'$, as well as the congestion reduction property
(Definition~\ref{def:cong_red})
\[ \sum\limits_{e=(u,w)\in E\backslash E(C)} (p_v^{C\cup\{v\}}(u) + p_v^{C\cup\{v\}}(w)) \leq \tO{1/\beta^2} \,.\]
\end{proof}

\subsection{Proof of Lemma~\ref{estimate1_final}}
\label{proof_estimate1_final}
\begin{proof}
In order to compute 
$\tpi_v^{C\cup\{v\}}(\BB^\top \frac{\qq_{S}}{\sqrt{\rr}})$
we use Lemma~\ref{estimate1}
with demand $\BB^\top \frac{\qq_{S'}}{\sqrt{\rr}}$ and error parameter 
$\delta_1' > 0$, 
where 
\[ \{e\in S\ |\ R_{eff}(C,v) \leq r_e / (2c^2) \} \subseteq
S' \subseteq \{e\in S\ |\ R_{eff}(C,v) \leq r_e / c^2 \}\,, \]
and $c > 0$ will be defined later.
Note that 
such a set $S'$ can be trivially computed
given our effective resistance estimate
$\tR_{eff}(C,v) \approx_{2} R_{eff}(C,v)$.
However, algorithmically we do not directly compute $S'$, but instead find
its intersection with the edges from which a sampled random walk 
ends up at $v$. (Using the congestion reduction property of $C$, this can be done in $\tO{\delta_1'^{-2} \beta^{-2} \log n\log\frac{1}{\beta}}$ time 
just by going through all random walks that contain $v$.)

Now, Lemma~\ref{estimate1} guarantees that
\begin{align*}
\left|
\tpi_v^{C\cup\{v\}}\left(\BB^\top \frac{\qq_{S'}}{\sqrt{\rr}}\right)
- \pi_v^{C\cup\{v\}}\left(\BB^\top \frac{\qq_{S'}}{\sqrt{\rr}}\right)
\right| \leq \frac{\delta_1'}{\beta c\sqrt{R_{eff}(C,v)}}
\end{align*}
given access to $O(\delta'^{-2} \log n \log \frac{1}{\beta})$ random walks
for each $u\in V\backslash C$, $e\in S'$ with $u\in e$.

Then, we set $\tpi_v^{C\cup\{v\}}(\BB^\top \frac{\qq_S}{\sqrt{\rr}}) := \tpi_v^{C\cup\{v\}}(\BB^\top \frac{\qq_{S'}}{\sqrt{\rr}})$, and we have that
\begin{equation}
\begin{aligned}
& \left|
\tpi_v^{C\cup\{v\}}\left(\BB^\top \frac{\qq_S}{\sqrt{\rr}}\right)
- \pi_v^{C\cup\{v\}}\left(\BB^\top \frac{\qq_S}{\sqrt{\rr}}\right)
\right| \\
& \leq 
\left|\tpi_v^{C\cup\{v\}}\left(\BB^\top \frac{\qq_{S'}}{\sqrt{\rr}}\right) - \pi_v^{C\cup\{v\}}\left(\BB^\top \frac{\qq_{S'}}{\sqrt{\rr}}\right)\right|
+ \left|\pi_v^{C\cup\{v\}}\left(\BB^\top \frac{\qq_S - \qq_{S'}}{\sqrt{\rr}}\right)
\right| \\
& \leq \frac{\delta_1'}{\beta c \sqrt{R_{eff}(C,v)}} 
+ \left|\pi_v^{C\cup\{v\}}\left(\BB^\top \frac{\qq_S - \qq_{S'}}{\sqrt{\rr}}\right)\right|\,.
\end{aligned}
\label{approx1}
\end{equation}
Now, to bound the second term, we use Lemma~\ref{st_projection1}, which gives
\begin{align*}
\left|\pi_v^{C\cup\{v\}}\left(\BB^\top \frac{\qq_S - \qq_{S'}}{\sqrt{\rr}}\right)\right| 
& \leq \sum\limits_{e=(u,w)\in S\backslash S'} \left(p_v^{C\cup\{v\}}(u) + p_v^{C\cup\{v\}}(w)\right) 
\frac{\sqrt{r_e}}{R_{eff}(v,e)}\,.
\end{align*}
Now, note that for each $e\in S\backslash S'$, $e$ is close to $C$, but $v$ is far from $C$, so 
$R_{eff}(v,e)$ should be large. Specifically,
by Lemma~\ref{lem:multi_effective_resistance}
we have $R_{eff}(v,e) \geq \frac{1}{2} \min\left\{R_{eff}(v,u), R_{eff}(v,w)\right\}$,
and by the triangle inequality
\[ \min\{R_{eff}(v,u), R_{eff}(v,w) \}
\geq R_{eff}(C,v) - \max\{R_{eff}(C,u), R_{eff}(C,w)\}
\geq R_{eff}(C,v) - 2 R_{eff}(C,e) \,. \]
By the fact that $e$ is $\gamma$-important 
and that 
\[ e \notin S' \supseteq \{e\in S\ |\ R_{eff}(C,v) \leq r_e / (2c^2) \}\,, \]
we have $R_{eff}(C,e) \leq r_e / \gamma^2 \leq 2 c^2 R_{eff}(C,v) / \gamma^2$,
so
\begin{align*}
\frac{\sqrt{r_e}}{R_{eff}(v,e)} 
& \leq \frac{1}{1/2 - 2c^2 / \gamma^2} \frac{\sqrt{r_e}}{R_{eff}(C,v)}\\
& \leq \frac{c}{1/2 - 2c^2 / \gamma^2} \frac{1}{\sqrt{R_{eff}(C,v)}}\,.
\end{align*}
By using the congestion reduction property (Definition~\ref{def:cong_red}), we obtain
\begin{align}
\left|\pi_v^{C\cup\{v\}}\left(\BB^\top \frac{\qq_S - \qq_{S'}}{\sqrt{\rr}}\right)\right|
& \leq 
\frac{c}{1/2 - 2c^2 / \gamma^2}
	\frac{1}{\sqrt{R_{eff}(C,v)}} \tO{\frac{1}{\beta^2}} \,.
\label{eq:far_edges_projection}
\end{align}
Setting $c = \min\{\delta_1 / \tO{\beta^{-2}}, \gamma / 4\}$
and $\delta_1' = \beta c \cdot \delta_1 / 2$,
(\ref{approx1}) becomes
\begin{align*}
& \left|
\tpi_v^{C\cup\{v\}}\left(\BB^\top \frac{\qq_S}{\sqrt{\rr}}\right)
- \pi_v^{C\cup\{v\}}\left(\BB^\top \frac{\qq_S}{\sqrt{\rr}}\right)
\right| 
 \leq \frac{\delta_1}{\sqrt{R_{eff}(C,v)}} \,.
\end{align*}
Also, the number of random walks needed for each valid pair $(u,e)$
is 
\[ \tO{\delta_1'^{-2} \log n \log \frac{1}{\beta}} = 
\tO{\delta_1^{-2} \beta^{-2} c^{-2} \log n \log \frac{1}{\beta}} = 
\tO{\left(\delta_1^{-4} \beta^{-6} + \delta_1^{-2} \beta^{-2} \gamma^{-2}\right) \log n \log \frac{1}{\beta}}\]\,.

For the last part of the lemma, we let
$S'' = \{e\in S\ |\ R_{eff}(C,v) \leq r_e / \left(\gamma/4\right)^2\}$
and write
\begin{align*}
& \left| \pi_v^{C\cup\{v\}}\left(\BB^\top \frac{\qq_S}{\sqrt{\rr}}\right)\right| \\
& \leq 
\left|\pi_v^{C\cup\{v\}}\left(\BB^\top \frac{\qq_{S''}}{\sqrt{\rr}}\right)\right|
+ \left|\pi_v^{C\cup\{v\}}\left(\BB^\top \frac{\qq_S - \qq_{S''}}{\sqrt{\rr}}\right)
\right| \,.
\end{align*}
For the first term,
\begin{align*}
\left|\pi_v^{C\cup\{v\}}\left(\BB^\top \frac{\qq_{S''}}{\sqrt{\rr}}\right)\right| 
& \leq \sum\limits_{e=(u,w)\in S''} \left(p_v^{C\cup\{v\}}(u) + p_v^{C\cup\{v\}}(w)\right) \frac{1}{\sqrt{r_e}}\\
& \leq \frac{1}{(\gamma/4) \sqrt{R_{eff}(C,v)}} \tO{\frac{1}{\beta^2}}\,,
\end{align*}
and for the second term we have already proved in (\ref{eq:far_edges_projection}) 
(after replacing $c$ by $\gamma/4$) that
\begin{align*}
\left|\pi_v^{C\cup\{v\}}\left(\BB^\top \frac{\qq_S - \qq_{S''}}{\sqrt{\rr}}\right)\right| 
& \leq \frac{\gamma/4}{\sqrt{R_{eff}(C,v)}} \tO{\frac{1}{\beta^2}}\,.
\end{align*}
Putting these together,
we conclude that 
\begin{align*}
\left| \pi_v^{C\cup\{v\}}\left(\BB^\top \frac{\qq_S}{\sqrt{\rr}}\right)\right| 
\leq \frac{1}{\gamma \sqrt{R_{eff}(C,v)}}\cdot \tO{\frac{1}{\beta^2}} \,.
\end{align*}
\end{proof}

\subsection{Proof of Lemma~\ref{conc2}}
\label{proof_conc2}
\begin{proof}
For any $I\subseteq \mathbb{R}$, we define
$F_I = \{i\in[n]\ |\ |\ophi_i| \in I\}$.
For some $0 < a < b$ to be defined later, we partition $[n]$ as
\[ [n] = F_{I_0} \cup F_{I_1}\cup \dots\cup F_{I_K}\cup F_{I_{K+1}} \,,\]
where $I_0 = [0,a)$, $I_{K+1} = [b,\infty)$, and
$I_1,\dots,I_K$ is a partition of $[a,b)$ into $K = O(\log \frac{b}{a})$ intervals
such that for all $k\in[K]$ 
we have $\Phi_k := \underset{i\in F_k}{\max}\, |\ophi_i| \leq 2\cdot \underset{i\in F_k}{\min}\, |\ophi_i|$.

A union bound gives
\begin{align*}
 \Pr\left[\left|\langle \tvpi - \vpi, \ovphi \rangle\right| > t\right]
 \leq \sum\limits_{k=0}^{K+1} \Pr\left[\left|\sum\limits_{i\in I_k} (\tpi_i - \pi_i) \ophi_i\right| > t / (K+2)\right]\,.
\end{align*}
We first examine $I_0$ and $I_{K+1}$ separately. Note that
\begin{align*}
\left|\sum\limits_{i\in F_{I_0}} (\tpi_i - \pi_i) \ophi_i\right| \leq \|\tvpi - \vpi\|_1 a \leq 2 a
\end{align*}
and
\begin{align*}
\left|\sum\limits_{i\in F_{I_{K+1}}} (\tpi_i - \pi_i) \ophi_i\right| 
\leq \sum\limits_{i\in F_{I_{K+1}}} \tpi_i\left|\ophi_i\right| + 
\sum\limits_{i\in F_{I_{K+1}}} \pi_i \left|\ophi_i\right|
\leq 
\frac{1}{b} \sum\limits_{i\in F_{I_{K+1}}} |\tpi_i - \pi_i| \ophi_i^2
\leq \frac{1}{b}\sum\limits_{i\in F_{I_{K+1}}} \tpi_i |\ophi_i| + \frac{\left\|\ophi\right\|_{\vpi,2}^2}{b} 
\end{align*}
But note that by picking 
$b \geq \max\left\{\frac{(K+2) Var_{\vpi}(\ovphi)}{t}, \sqrt{Var_{\vpi}(\ovphi) \cdot n^{101}}\right\}$, 
we have $\frac{Var_{\vpi}(\ophi)}{b} \leq t / (K+2)$
and also for any $i\in F_{K+1}$ we have
$\pi_i \leq \frac{Var_{\vpi}(\ovphi)}{b^2} \leq \frac{1}{n^{101}}$.
This means that $\Pr[\tpi_i \neq 0] \leq \frac{1}{n^{101}}$, and so by union bound
\[ \Pr\left[\sum\limits_{i\in F_{I_{K+1}}} \tpi_i |\ophi_i| \neq 0\right] \leq \frac{1}{n^{100}}\,.\]

Now, we proceed to $F_1,\dots,F_K$. We draw $Z$ samples $x_1,\dots,x_Z$ from $\vpi$.
Then, we also define the following random variables for $z\in [Z]$ and $i\in[n]$:
\[ X_{z,i} = \begin{cases}1 & \text{ if $x_z = i$} \\ 0 & \text{ otherwise }\end{cases}\] for $i\in[n]$
and 
\[ Y_{z,k} = \frac{1}{Z} \sum\limits_{i\in F_k} X_{z,i} \ophi_i \]
This allows us to write $\sum\limits_{i\in F_k} \tpi_i \ophi_i = \sum\limits_{z=1}^Z Y_{z,k}$.

Fix $k\in[K]$.
We will apply Lemma~\ref{conc1} on the random variable $\sum\limits_{z=1}^Z Y_{z,k}$.
We first compute
\begin{align*}
\sum\limits_{z=1}^Z |\mathbb{E}[Y_{z,k}]| = \left|\sum\limits_{i\in F_k} \pi_i \ophi_i\right| \leq \sum\limits_{i\in F_k} \pi_i |\ophi_i|:= E_k
\end{align*}
and
\begin{align*}
\underset{z\in[Z]}{\max}\, |Y_{z,k}| \leq \frac{\Phi_k}{Z} := M_k\,.
\end{align*}
Therefore we immediately have $E_kM_k \leq \frac{2}{Z} \sum\limits_{i\in F_k} \pi_i \ophi_i^2 \leq \frac{2}{Z} \cdot Var_{\pi}(\ophi) $. By Lemma~\ref{conc1},
\begin{align*}
& \Pr\left[\left|\sum\limits_{z=1}^Z Y_{z,k} - \mathbb{E}\left[\sum\limits_{z=1}^Z Y_{z,k}\right]\right| > t / (K+2)\right] \\
& \leq 2 \exp\left(-\frac{t^2}{6E_kM_k (K+2)^2}\right)\\
& \leq 2 \exp\left(-\frac{Z t^2}{12\cdot Var_{\pi}(\ophi) (K+2)^2}\right)\,.\\
\end{align*}
Summarizing, and using the fact that $K = \tO{\log (n\cdot Var_{\vpi}(\ovphi) / t^2)}$, we get 
\[\Pr\left[\left|\langle \tvpi - \vpi, \ovphi\rangle\right| > t\right] \leq \tO{\frac{1}{n^{100}}} + 2\tO{\log \left(n \cdot Var_{\vpi}(\ovphi) / t^2\right)}\exp\left(- \frac{Z t^2}{12 \cdot\tO{Var_{\vpi}(\ovphi) \log^2 n}} \right)\,.\]
\end{proof}

\subsection{Proof of Lemma~\ref{lem:variance}}
\label{proof_lem_variance}
\begin{proof}
Let $S_0 = \emptyset$ and 
and for each $k\in\mathbb{N}$ let \[S_k = \{i\in[n]\backslash S_{k-1}\ :\ \phi_i^2 \leq 2^{k+1} R_{eff}(C,v)\}\,.\]
Fix some $k \geq 2$. Note that
$\phi_i^2 > 2^k R_{eff}(C,v)$ for all $i\in S_k$,
implying
$\frac{2^{k} R_{eff}(C,v)}{R_{eff}(S_k,v)} < 
E_{\rr}(\vphi) \leq
1$, and so $R_{eff}(S_k,v) > 2^k R_{eff}(C,v) \geq 4 R_{eff}(C,v)$.
As
\[ R_{eff}(C,v) \geq \frac{1}{4} \min\{R_{eff}(S_k,v), R_{eff}(C\backslash S_k,v)\} > \min\{R_{eff}(C,v), \frac{1}{4} R_{eff}(C\backslash S_k,v)\} \,,\]
we have
$R_{eff}(C\backslash S_k, v) < 4 R_{eff}(C,v)$.
This implies that $\left\|\pi_{S_k}^C(\onev_v)\right\|_1 \leq \frac{R_{eff}(C\backslash S_k, v)}{R_{eff}(S_k, v)} < \frac{1}{2^{k-2}}$.

So we conclude that $Var_{\vpi}(\vphi) = \sum\limits_{i\in S_k} \pi_i \phi_i^2 \leq \frac{1}{2^{k-2}} \cdot 2^{k+1} R_{eff}(C,v) = 8 R_{eff}(C,v)$.
\end{proof}

\subsection{Proof of Lemma~\ref{lem:insert1}}
\label{proof_lem_insert1}

\begin{proof}
The first part of the statement is given by 
applying Lemma~\ref{estimate1_final}, and we see that it requires
$\tO{\delta_1^{-4} \beta^{-6} + \delta_1^{-2} \beta^{-2} \gamma^{-2}}$ 
random walks for each $u\in V\backslash C$ and $e\in E\backslash E(C)$ with $u\in e$.

For the second part we use the fact that the change in the demand projection after inserting $v$ into $C$ is given by
\[
\vpi^{C\cup \{v\}}\left(\BB^\top \frac{\qq_S}{\sqrt{\rr}}\right)
-
\vpi^{C}\left(\BB^\top \frac{\qq_S}{\sqrt{\rr}}\right)
=
\pi_v^{C\cup \{v\}}\left(\BB^\top \frac{\qq_S}{\sqrt{\rr}}\right)\cdot (\onev_v - \vpi^C(\onev_v))\,,
\]
and therefore we can estimate this update via
\[
\tpi_v^{C\cup \{v\}}\left(\BB^\top \frac{\qq_S}{\sqrt{\rr}}\right)\cdot (\onev_v - \tvpi^C(\onev_v))\,.
\]
where $\tpi_v^{C\cup \{v\}}\left(\BB^\top \frac{\qq_S}{\sqrt{\rr}}\right)$
is the estimate we computed using Lemma~\ref{estimate1_final} 
and $\tvpi^C(\onev_v)$ is obtained by applying Lemma~\ref{estimate2}.

Let us show that this estimation indeed introduces only a small amount of error. 
For any $\vphi$, such that $E_r(\vphi) \leq 1$, we can write
\begin{align*}
&\left| \left\langle   
\tpi_v^{C\cup \{v\}}\left(\BB^\top \frac{\qq_S}{\sqrt{\rr}}\right)\cdot (\onev_v - \tvpi^C(\onev_v))
-
\pi_v^{C\cup \{v\}}\left(\BB^\top \frac{\qq_S}{\sqrt{\rr}}\right)\cdot (\onev_v - \vpi^C(\onev_v))
,
\vphi
 \right\rangle \right|
 \\
 &\leq
 \left| \left\langle   
\pi_v^{C\cup \{v\}}\left(\BB^\top \frac{\qq_S}{\sqrt{\rr}}\right)\cdot (\vpi^C(\onev_v) - \tvpi^C(\onev_v))
,
\vphi  \right\rangle \right|\\
 &+
 \left| \left\langle   
\left( \tpi_v^{C\cup \{v\}}\left(\BB^\top \frac{\qq_S}{\sqrt{\rr}}\right)
-
\pi_v^{C\cup \{v\}}\left(\BB^\top \frac{\qq_S}{\sqrt{\rr}}\right)\right)\cdot (\onev_v - \vpi^C(\onev_v))
,
\vphi  \right\rangle \right|\\
&+
 \left| \left\langle   
\left(\tpi_v^{C\cup \{v\}}\left(\BB^\top \frac{\qq_S}{\sqrt{\rr}}\right)-  \pi_v^{C\cup \{v\}}\left(\BB^\top \frac{\qq_S}{\sqrt{\rr}}\right) \right)\cdot (\vpi^C(\onev_v) - \tvpi^C(\onev_v))
,
\vphi  \right\rangle \right|\,.
\end{align*}
At this point we can bound these quantities using Lemmas~\ref{estimate1_final} and~\ref{estimate2}. It is important to notice that they require that $S$ is a set of $\gamma$-important edges, for some parameter $\gamma$.
Our congestion reduction subset $C$ keeps increasing due to vertex insertions. This, however, means that effective resistances between any vertex in $V\setminus C$ and $C$ can only decrease, and therefore the set of important edges can only increase. Thus we are still in a valid position to apply these lemmas.

Using $E_{\rr}(\vphi)\leq 1$, which allows us to write: 
\[
\langle \onev_v - \vpi^C(\onev_v)), \vphi \rangle \leq \mathcal{E}_{\rr}\left( \onev_v - \vpi^C(\onev_v) \right) = R_{eff}(v,C)\,,
\]
we can continue to upper bound the error by:
\begin{align*}
&\frac{\tO{\gamma^{-1} \beta^{-2}}}{\sqrt{R_{eff}(C,v)}}\cdot \delta_2 \sqrt{R_{eff}(C,v)}
	+ \frac{\delta_1}{\sqrt{R_{eff}(C,v)}} \cdot \sqrt{R_{eff}(C,v)}
	+ \frac{\delta_1}{\sqrt{R_{eff}(C,v)}} \cdot \delta_2 \sqrt{R_{eff}(C,v)}\\
&= \delta_2 \cdot \tO{\gamma^{-1} \beta^{-2}} + \delta_1 + \delta_1\delta_2 \,.
\end{align*}
Setting $\delta_1 = \heps / 2$ and $\delta_2 = {\heps \beta^2 \gamma}/\tO{1}$, we conclude that w.h.p. each operation introduces at most $\heps$ additive error in the maintained estimate for $\left\langle \tvpi^C(\BB^\top \frac{\qq_S}{\sqrt{\rr}}),\vphi \right\rangle$.

Per Lemma~\ref{estimate1_final}, estimating one coordinate of the demand projection requires 
\[
\tO{\delta_1^{-4} \beta^{-6} + \delta_1^{-2} \beta^{-2} \gamma^{-2}} 
 = \tO{\heps^{-4} \beta^{-6} + \heps^{-2} \beta^{-2} \gamma^{-2}}\
\]
  random walks,
and estimating $\tvpi^C(\BB^\top \frac{\qq_S}{\sqrt{\rr}})$, per Lemma~\ref{estimate2}, requires 
\[
 \tO{\delta_2^{-2} } = \tO{\heps^{-2} \beta^{-4} \gamma^{-2}}
\]
random walks. This concludes the proof.

\end{proof}

\section{The \textsc{Checker} Data Structure}
\label{sec:checker}

\begin{theorem}[Theorem 3, \cite{gao2021fully}]
There is a \textsc{Checker} data structure supporting the following operations with the given runtimes
against oblivious adversaries, for parameters $0 < \bc,\eps < 1$ such that
$\bc \geq \tOm{\eps^{-1/2}/m^{1/4}}$.
\begin{itemize}
\item{\textsc{Initialize}$(\ff, \eps, \bc)$:
Initializes the data structure with slacks 
$\ss^+ = \uu - \ff$,
$\ss^- = \ff$,
and resistances $\rr = \frac{1}{(\ss^+)^2} + \frac{1}{(\ss^-)^2}$.
Runtime: $\tO{m\bc^{-4} \eps^{-4}}$.
}
\item{\textsc{Update}$(e,\ff')$:
Set $s_e^+ = u_e - f_e'$, $s_e^- = f_e'$, and $r_e = \frac{1}{(s_e^+)^2} + \frac{1}{(s_e^-)^2}$. 
Runtime: Amortized $\tO{\bc^{-2} \eps^{-2}}$.
}
\item{\textsc{TemporaryUpdate}$(e,\ff')$:
Set $s_e^+ = u_e - f_e'$, $s_e^- = f_e'$, and $r_e = \frac{1}{(s_e^+)^2} + \frac{1}{(s_e^-)^2}$. 
Runtime: Worst case $\tO{(K \bc^{-2} \eps^{-2})^2}$, where $K$
is the number of $\textsc{TemporaryUpdate}$s that have not been rolled back
using $\textsc{Rollback}$. All $\textsc{TemporaryUpdate}$s should be rolled back
before the next call to $\textsc{Update}$.
}
\item{\textsc{Rollback}$()$: Rolls back the last $\textsc{TemporaryUpdate}$ if it
exists. The runtime is the same as the original operation.
}
\item{\textsc{Check}$(e, \vpi_{old})$: 
Returns $\tf_e$ such that $\sqrt{r_e}\left|\tf_e - \tf_e^*\right| \leq \eps$, where
\[ \tff^* = \delta g(\ss) - \delta \RR^{-1} \BB \left(\BB^\top \RR^{-1} \BB \right)^+ \BB^\top g(\ss)\,, \] 
for $\delta = 1/\sqrt{m}$. Additionally, a vector $\vpi_{old}$ that is supported
on $C$ such that
\[ \mathcal{E}_{\rr}\left(
 \vpi_{old} - \vpi^{C} \left(\BB^\top g(\ss)\right) \right) \leq \eps^2 m / 4 \]
is provided, where $C$ is the vertex set of the dynamic sparsifier 
in the $\textsc{DynamicSC}$ that is maintained internally.
Runtime: Worst case $\tO{\left(\bc m + (K\bc^{-2} \eps^{-2})^2\right) \eps^{-2} }$, where $K$
is the number of $\textsc{TemporaryUpdate}$s that have not been rolled back.
Additionally, the output of \textsc{Check}$(e)$ is independent of any previous calls to
\textsc{Check}.
}
\end{itemize}
Finally, all calls to $\textsc{Check}$ return valid outputs with high probability.
The total number of \textsc{Update}s and \textsc{TemporaryUpdate}s that have not been
rolled back should always be $O(\bc m)$.
\label{thm:checker}
\end{theorem}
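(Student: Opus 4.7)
The plan is to build \textsc{Checker} around a dynamic Schur complement sparsifier together with per-query estimators that draw \emph{fresh} randomness at every call to \textsc{Check}; this per-query freshness is what lets the data structure resist semi-adaptive adversaries and makes the output of each \textsc{Check} independent of past \textsc{Check} outputs. At initialization I instantiate $\textsc{DynamicSC}(G,\emptyset,\rr,O(\eps),\bc)$ on a $\bc$-congestion reduction subset $C$ and precompute a pool of independent $u$-to-$C$ random walks for $u\in V\setminus C$, mirroring the setup in \textsc{Locator}; this pays the $\tO{m\bc^{-4}\eps^{-4}}$ initialization cost. \textsc{Update} and \textsc{TemporaryUpdate} forward the resistance change to \textsc{DynamicSC}, promoting endpoints of $e$ to terminals via $\textsc{AddTerminal}$ (amortized, in the non-temporary case) or $\textsc{TemporaryAddTerminals}$ (worst case, when temporary), and \textsc{Rollback} simply rolls back the last \textsc{DynamicSC} operation. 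The $|C|=\tO{\bc m}$ invariant is maintained by the assumption that the total number of un-rolled-back updates is $O(\bc m)$.

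For \textsc{Check}$(e,\vpi_{old})$ the quantity I need to approximate is
\[
\sqrt{r_e}\,\tf_e^* \;=\; \delta\sqrt{r_e}\,g_e(\ss) \;-\; \delta\left\langle \BB^\top \frac{\onev_e}{\sqrt{r_e}},\; \LL^+ \BB^\top g(\ss)\right\rangle
\]
to additive accuracy $\eps$, where $\delta = 1/\sqrt m$. The first term is trivially bounded by $\delta\le \eps/100$. For the second term I apply Lemma~\ref{lem:non-projected-demand-contrib} to replace $\BB^\top g(\ss)$ by $\vpi^C(\BB^\top g(\ss))$ at additive cost $\delta\cdot\tO{\bc^{-2}}\le \eps/100$, then use Fact~\ref{fact:demand_proj} to rewrite the resulting inner product as
\[
\delta\,\bigl\langle \vpi^C\bigl(\BB^\top \tfrac{\onev_e}{\sqrt{r_e}}\bigr),\; SC(G,C)^+\, \vpi^C(\BB^\top g(\ss))\bigr\rangle\,.
\]
By Cauchy-Schwarz in the $SC(G,C)^+$ norm together with Lemma~\ref{lem:sc-energy-bd} (which gives $\mathcal{E}_{\rr}(\vpi^C(\BB^\top \frac{\onev_e}{\sqrt{r_e}}))\le 1$) and the assumed energy bound $\mathcal{E}_{\rr}(\vpi_{old} - \vpi^C(\BB^\top g(\ss)))\le \eps^2 m/4$, I can substitute $\vpi_{old}$ for $\vpi^C(\BB^\top g(\ss))$ with error at most $\delta\cdot\sqrt{\eps^2 m/4} = \eps/2$. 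Finally $SC(G,C)^+$ is replaced by $\tSC^+$ from \textsc{DynamicSC}, incurring only an $O(\eps)$ multiplicative perturbation on the quadratic form.

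It remains to estimate $\langle \vpi^C(\BB^\top \frac{\onev_e}{\sqrt{r_e}}),\,\tSC^+ \vpi_{old}\rangle$ to additive accuracy $\eps$. I compute $\tSC^+ \vpi_{old}$ once per call via a Laplacian solve on the sparsified Schur complement (whose size governs the $\bc m\cdot \eps^{-2}$ term), and estimate $\vpi^C(\BB^\top \frac{\onev_e}{\sqrt{r_e}})$ by running fresh random walks from the endpoints of $e$, following exactly the estimator of Lemma~\ref{estimate1_final}; because $e$ is a single known edge, only $\tO{\eps^{-2}}$ walks are needed to concentrate the two-term inner product via Lemma~\ref{conc1}. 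The independence-from-previous-calls property is immediate since these walks, and the random bits used by \textsc{DynamicSC} to build $\tSC$, are not revealed to the caller of previous \textsc{Check} queries.

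The main obstacle is the error accounting: one must verify that the combined errors from the preconditioning step, the $\vpi_{old}$ substitution, the $\tSC$ approximation of $SC(G,C)^+$, and the walk-based estimation of $\vpi^C(\BB^\top \frac{\onev_e}{\sqrt{r_e}})$ each individually fit into an $O(\eps)$ budget, which forces the lower bound $\bc\ge\tOm{\eps^{-1/2}m^{-1/4}}$ in order for $\delta\cdot\tO{\bc^{-2}}$ to be $\le\eps$. The secondary obstacle is runtime: every \textsc{Check} triggers up to $K$ temporary terminals inside \textsc{DynamicSC}, whose worst-case cost for producing $\tSC$ is $\tO{(K\bc^{-2}\eps^{-2})^2}$; together with the $\tO{\bc m\cdot\eps^{-2}}$ cost of extracting $\tSC$ and solving, this yields the stated per-call bound, and the amortized cost of \textsc{Update} follows from the amortized $\textsc{AddTerminal}$ guarantee of \textsc{DynamicSC}.
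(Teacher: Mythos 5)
Your overall architecture (a \textsc{DynamicSC} instance on a $\bc$-congestion reduction subset, plus the error decomposition that uses Lemma~\ref{lem:non-projected-demand-contrib} to project the demand onto $C$ and the energy bound on $\vpi_{old} - \vpi^C(\BB^\top g(\ss))$ to substitute $\vpi_{old}$) matches the paper. But the final step --- how $\tf_e$ is actually produced --- is where you diverge, and your version has a gap. The paper does \emph{not} estimate $\vpi^C(\BB^\top \onev_e/\sqrt{r_e})$ by random walks. Instead it calls $\textsc{DynamicSC}.\textsc{TemporaryAddTerminals}(\{u,w\})$ so that both endpoints of $e$ lie in the terminal set $C' = C\cup\{u,w\}$; then $\vpi^{C'}(\BB^\top \onev_e/\sqrt{r_e})$ is exactly $(\onev_u-\onev_w)/\sqrt{r_e}$, and $\tf_e = (\phi_u-\phi_w)/\sqrt{r_e}$ is read off \emph{exactly} from $\vphi = -\tSC^+\vpi_{old}$ with no sampling error. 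This is precisely why the theorem statement carries \textsc{TemporaryUpdate}/\textsc{Rollback} operations and why the \textsc{Check} runtime contains the $(K\bc^{-2}\eps^{-2})^2$ term. The only residual errors are then the four terms the paper bounds: the $\delta\sqrt{\rr}g(\ss)$ term, the $C'$-projection term (Lemma~\ref{lem:non-projected-demand-contrib}), the $C\to C'$ projection-change term (Lemma~\ref{lem:old_projection_approximate} with two insertions), and the $\vpi_{old}$ substitution.

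Your replacement step --- estimating $\langle \vpi^C(\BB^\top\onev_e/\sqrt{r_e}), \tSC^+\vpi_{old}\rangle$ with $\tO{\eps^{-2}}$ fresh walks from $u$ and $w$, concentrated "via Lemma~\ref{conc1}" --- does not go through as stated. What you need to concentrate is $\frac{1}{\sqrt{r_e}}\langle \vpi^C(\onev_u)-\vpi^C(\onev_w),\vphi\rangle$ for $\vphi=\tSC^+\vpi_{old}$, i.e.\ empirical hitting distributions hit against an unbounded potential vector; Lemma~\ref{conc1} requires a bound on the maximum increment $M=\max_v|\phi_v|/(Z\sqrt{r_e})$, which you do not have. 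The paper's own machinery for exactly this situation is Lemma~\ref{conc2} together with the second-moment bound of Lemma~\ref{lem:variance}, and even that only yields additive error $\delta_2\sqrt{R_{eff}(C,u)}$ per endpoint --- so your error in $\sqrt{r_e}\,\tf_e$ scales like $\delta_2\sqrt{R_{eff}(C,e)/r_e}$, which is unbounded for edges far from $C$ unless you add a separate argument (e.g.\ via Lemma~\ref{st_projection1_energy}) that such edges can be answered with $0$. None of this is needed once the endpoints are made temporary terminals, which is the idea your proposal is missing.
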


This theorem is from~\cite{gao2021fully}. The only difference is in the guarantee of $\textsc{Check}$. We will now show how it can be implemented.
Let 
\[ \tff^* = \delta g(\ss) - \delta \RR^{-1} \BB \LL^+ \BB^\top g(\ss)\,. \] 
Let $\textsc{DynamicSC}$ be the underlying Schur complement data structure.
We first add the endpoints $u,w$ of $e$ as terminals by calling 
\[ \textsc{DynamicSC}.\textsc{TemporaryAddTerminals}(\{u,w\}) \]
so that the new Schur complement is on the vertex set $C' = C\cup\{u,w\}$.
Then, we set
\[ \vphi = -\tSC^+ \vpi_{old}\]
and $\tf_e = (\phi_u - \phi_w) / \sqrt{r_e}$, where $\tSC$ is the output of 
$\textsc{DynamicSC}.\tSC()$. Equivalently, note that
\[ \tf_e = \delta \cdot \onev_e^\top \RR^{-1} \BB \LL^+ \vpi_{old}\,.\]
We will show that $\sqrt{r_e}\left|\tf_e - \tf_e^*\right| \leq \eps$.

We write
\begin{align*}
&  \sqrt{r_e} \left|\tf_e - \tf_e^*\right| \\
& \leq \left\|\delta \sqrt{\rr} g(\ss)\right\|_\infty
+ \left\|\delta \RR^{-1/2} \BB \LL^+ \left(\BB^\top g(\ss) - \vpi^{C'}\left(\BB^\top g(\ss)\right)\right)\right\|_\infty\\
& + \left\|\delta \RR^{-1/2} \BB \LL^+ \left(
\vpi^{C'}\left(\BB^\top g(\ss)\right)
- \vpi^C\left(\BB^\top g(\ss)\right)
\right)\right\|_\infty
+ \left\|\delta \RR^{-1/2} \BB \LL^+ \left(\vpi^C\left(\BB^\top g(\ss)\right) - \vpi_{old}\right)\right\|_\infty\,.
\end{align*}
For the first term, 
\[ \left\|\delta \sqrt{\rr} g(\ss)\right\|_\infty = \left\|\delta \frac{\frac{1}{\ss^+} - \frac{1}{\ss^-}}{\sqrt{\frac{1}{(\ss^+)^2} + \frac{1}{(\ss^-)^2}}}\right\|_\infty \leq \delta \leq \eps / 10\,.\]
Now, by the fact that $C'$ is a $\bc$-congestion reduction subset by definition in $\textsc{DynamicSC}$, 
Lemma~\ref{lem:non-projected-demand-contrib} immediately implies that the second term is 
$\leq \delta \cdot \tO{\bc^{-2}} \leq \eps/10$.

For the third term, we apply Lemma~\ref{lem:old_projection_approximate}, which shows that
\begin{align*}
& \left\|\delta \RR^{-1/2} \BB \LL^+ \left(
\vpi^{C'}\left(\BB^\top g(\ss)\right)
- \vpi^C\left(\BB^\top g(\ss)\right)
\right)\right\|_\infty\\
& \leq 
\delta \cdot \sqrt{\mathcal{E}_{\rr}\left(
\vpi^{C'}\left(\BB^\top g(\ss)\right)
- \vpi^C\left(\BB^\top g(\ss)\right)
\right)}\\
& \leq \delta \cdot \tO{\bc^{-2}}\\
& \leq \eps / 10\,,
\end{align*}
as the resistances don't change and we only have two terminal insertions from $C$ to $C'$.

Finally, the fourth term is 
\[\left\|\delta \RR^{-1/2} \BB \LL^+ \left(\vpi^C\left(\BB^\top g(\ss)\right) - \vpi_{old}\right)\right\|_\infty
\leq \delta \cdot \sqrt{\mathcal{E}_{\rr}
\left(\vpi^C\left(\BB^\top g(\ss)\right) - \vpi_{old}\right)}
\leq \delta \cdot \eps \sqrt{m} / 2
= \eps / 2\,.
\]

We conclude that $\sqrt{r_e}\left|\tf_e - \tf_e^*\right| \leq \eps$.
Finally, we call $\textsc{DynamicSC}.\textsc{Rollback}$ to undo the terminal insertions.

The runtime of this operation is dominated by the call to 
$\textsc{DynamicSC}.\tSC()$, which takes time 
$\tO{(\bc m + (K\bc^{-2} \eps^{-2})^2)\eps^{-2}}$.

\bibliographystyle{alpha}
\bibliography{main}

\newcommand{\etalchar}[1]{$^{#1}$}
\begin{thebibliography}{vdBLN{\etalchar{+}}20}

\bibitem[AMV20]{axiotis2020circulation}
Kyriakos Axiotis, Aleksander M{\k{a}}dry, and Adrian Vladu.
\newblock Circulation control for faster minimum cost flow in unit-capacity
  graphs.
\newblock In {\em 61st Annual Symposium on Foundations of Computer Science},
  pages 93--104. IEEE, 2020.

\bibitem[BBG{\etalchar{+}}20]{bernstein2020fully}
Aaron Bernstein, Jan van~den Brand, Maximilian~Probst Gutenberg, Danupon
  Nanongkai, Thatchaphol Saranurak, Aaron Sidford, and He~Sun.
\newblock Fully-dynamic graph sparsifiers against an adaptive adversary.
\newblock {\em arXiv preprint arXiv:2004.08432}, 2020.

\bibitem[CKM{\etalchar{+}}11]{christiano2011electrical}
Paul Christiano, Jonathan~A Kelner, Aleksander M\k{a}dry, Daniel~A Spielman,
  and Shang-Hua Teng.
\newblock Electrical flows, {L}aplacian systems, and faster approximation of
  maximum flow in undirected graphs.
\newblock In {\em Proceedings of the 43rd Annual ACM Symposium on Theory of
  computing}, pages 273--282, 2011.

\bibitem[CMSV17]{cmsv17}
Michael~B Cohen, Aleksander M\k{a}dry, Piotr Sankowski, and Adrian Vladu.
\newblock Negative-weight shortest paths and unit capacity minimum cost flow in
  $\widetilde{O}(m^{10/7} \log {W})$ time.
\newblock In {\em Proceedings of the 28th Annual ACM-SIAM Symposium on Discrete
  Algorithms}, pages 752--771. SIAM, 2017.

\bibitem[DGGP19]{durfee2019fully}
David Durfee, Yu~Gao, Gramoz Goranci, and Richard Peng.
\newblock Fully dynamic spectral vertex sparsifiers and applications.
\newblock In {\em Proceedings of the 51st Annual ACM SIGACT Symposium on Theory
  of Computing}, pages 914--925, 2019.

\bibitem[DS08]{daitch2008faster}
Samuel~I Daitch and Daniel~A Spielman.
\newblock Faster approximate lossy generalized flow via interior point
  algorithms.
\newblock In {\em Proceedings of the 40th Annual ACM Symposium on Theory of
  computing}, pages 451--460, 2008.

\bibitem[Gab83]{gabow1983scaling}
Harold~N Gabow.
\newblock Scaling algorithms for network problems.
\newblock In {\em 24th Annual Symposium on Foundations of Computer Science
  ({SFCS} 1983)}, pages 248--258. IEEE, 1983.

\bibitem[GLP21]{gao2021fully}
Yu~Gao, Yang~P Liu, and Richard Peng.
\newblock Fully dynamic electrical flows: Sparse maxflow faster than
  goldberg-rao.
\newblock {\em arXiv preprint arXiv:2101.07233}, 2021.

\bibitem[GR98]{goldberg1998beyond}
Andrew~V Goldberg and Satish Rao.
\newblock Beyond the flow decomposition barrier.
\newblock {\em Journal of the ACM (JACM)}, 45(5):783--797, 1998.

\bibitem[KLOS14]{kelner2014almost}
Jonathan~A Kelner, Yin~Tat Lee, Lorenzo Orecchia, and Aaron Sidford.
\newblock An almost-linear-time algorithm for approximate max flow in
  undirected graphs, and its multicommodity generalizations.
\newblock In {\em Proceedings of the 25th Annual ACM-SIAM Symposium on Discrete
  algorithms}, pages 217--226. SIAM, 2014.

\bibitem[KLS20]{liu2020faster}
Tarun Kathuria, Yang~P. Liu, and Aaron Sidford.
\newblock Unit capacity maxflow in almost ${O}(m^{4/3})$ time.
\newblock In {\em 61st Annual Symposium on Foundations of Computer Science},
  pages 119--130. IEEE, 2020.

\bibitem[KPSW19]{kyng2019flows}
Rasmus Kyng, Richard Peng, Sushant Sachdeva, and Di~Wang.
\newblock Flows in almost linear time via adaptive preconditioning.
\newblock In {\em Proceedings of the 51st Annual ACM SIGACT Symposium on Theory
  of Computing}, pages 902--913, 2019.

\bibitem[LRS13]{lee2013new}
Yin~Tat Lee, Satish Rao, and Nikhil Srivastava.
\newblock A new approach to computing maximum flows using electrical flows.
\newblock In {\em Proceedings of the 45th Annual ACM Symposium on Theory of
  Computing}, pages 755--764, 2013.

\bibitem[LS14]{lee2014path}
Yin~Tat Lee and Aaron Sidford.
\newblock Path finding methods for linear programming: Solving linear programs
  in o (vrank) iterations and faster algorithms for maximum flow.
\newblock In {\em 2014 IEEE 55th Annual Symposium on Foundations of Computer
  Science}, pages 424--433. IEEE, 2014.

\bibitem[LS20]{liu2019faster}
Yang~P Liu and Aaron Sidford.
\newblock Faster energy maximization for faster maximum flow.
\newblock In {\em Proceedings of the 52nd Annual ACM SIGACT Symposium on Theory
  of Computing}, 2020.

\bibitem[M{\k{a}}d13]{madry2013navigating}
Aleksander M{\k{a}}dry.
\newblock Navigating central path with electrical flows: From flows to
  matchings, and back.
\newblock In {\em 54th Annual Symposium on Foundations of Computer Science},
  pages 253--262. IEEE, 2013.

\bibitem[M{\k{a}}d16]{madry2016computing}
Aleksander M{\k{a}}dry.
\newblock Computing maximum flow with augmenting electrical flows.
\newblock In {\em 57th Annual Symposium on Foundations of Computer Science},
  pages 593--602. IEEE, 2016.

\bibitem[Pen16]{peng2016approximate}
Richard Peng.
\newblock Approximate undirected maximum flows in ${O}(m \textnormal{ polylog }
  n)$ time.
\newblock In {\em Proceedings of the 27th Annual ACM-SIAM Symposium on Discrete
  algorithms}, pages 1862--1867. SIAM, 2016.

\bibitem[She13]{sherman2013nearly}
Jonah Sherman.
\newblock Nearly maximum flows in nearly linear time.
\newblock In {\em 54th Annual Symposium on Foundations of Computer Science},
  pages 263--269. IEEE, 2013.

\bibitem[She17a]{sherman2017area}
Jonah Sherman.
\newblock Area-convexity, $\ell_\infty$ regularization, and undirected
  multicommodity flow.
\newblock In {\em Proceedings of the 49th Annual ACM SIGACT Symposium on Theory
  of Computing}, pages 452--460, 2017.

\bibitem[She17b]{sherman2017generalized}
Jonah Sherman.
\newblock Generalized preconditioning and undirected minimum-cost flow.
\newblock In {\em Proceedings of the Twenty-Eighth Annual ACM-SIAM Symposium on
  Discrete Algorithms}, pages 772--780. SIAM, 2017.

\bibitem[ST18]{sidford2018coordinate}
Aaron Sidford and Kevin Tian.
\newblock Coordinate methods for accelerating $\ell_\infty$ regression and
  faster approximate maximum flow.
\newblock In {\em 59th Annual Symposium on Foundations of Computer Science},
  pages 922--933. IEEE, 2018.

\bibitem[vdBLL{\etalchar{+}}21]{van2021minimum}
Jan van~den Brand, Yin~Tat Lee, Yang~P Liu, Thatchaphol Saranurak, Aaron
  Sidford, Zhao Song, and Di~Wang.
\newblock Minimum cost flows, mdps, and $\ell_1$-regression in nearly linear
  time for dense instances.
\newblock In {\em Proceedings of the 53rd Annual ACM SIGACT Symposium on Theory
  of Computing}, pages 859--869, 2021.

\bibitem[vdBLN{\etalchar{+}}20]{van2020bipartite}
Jan van~den Brand, Yin-Tat Lee, Danupon Nanongkai, Richard Peng, Thatchaphol
  Saranurak, Aaron Sidford, Zhao Song, and Di~Wang.
\newblock Bipartite matching in nearly-linear time on moderately dense graphs.
\newblock In {\em 2020 IEEE 61st Annual Symposium on Foundations of Computer
  Science (FOCS)}, pages 919--930. IEEE, 2020.

\end{thebibliography}

\end{document}